\newtheorem{theorem}{Theorem}
\newtheorem{lemma}[theorem]{Lemma}
\newtheorem{corollary}[theorem]{Corollary}
\newtheorem{observation}[theorem]{Observation}
\newtheorem{claim}[theorem]{Claim}
\newtheorem{remark}[theorem]{Remark}
\theoremstyle{definition}
\newtheorem{definition}[theorem]{Definition}
\newtheorem{algspec}[theorem]{Algorithm} 
\newenvironment{compactitemize}{\begin{itemize}[noitemsep,topsep=0pt,parsep=0pt,partopsep=0pt]} {\end{itemize}}
\newenvironment{compactenumerate}{\begin{enumerate}[noitemsep,topsep=0pt,parsep=0pt,partopsep=0pt]} {\end{enumerate}}
\renewenvironment{proof}{{\bf Proof.}}{\hfill \proofbox \vskip0.2cm }
\newcommand{\proofbox}{\hbox{\vbox{\hrule\hbox{\vrule\phantom{\vrule height 8.6pt
                         width 6pt depth 0pt}\vrule}\hrule}\quad}}
\renewcommand{\epsilon}{\varepsilon}
\newcommand{\comment}[1]{}
\newcommand{\sncycle}{decomposition node cycle}
\newcommand{\separator}{separator}
\newcommand{\tric}{T}
\newcommand{\tricname}{root graph}
\newcommand{\Wlog}{without loss of generality}
\newcommand{\WLOG}{Without loss of generality}
\newcommand{\FAdjacent}{\textsc{Adjacent}}
\newcommand{\FConn}{\textsc{Conn}}
\newcommand{\FRedConn}{\textsc{RedConn}}
\newcommand{\FPart}[1]{\textsc{#1-Part}}
\newcommand{\FRed}{\textsc{Red}}
\newcommand{\FNonRed}{\textsc{NonRed}}
\newcommand{\FIsRedEdge}{\textsc{IsRedEdge}}
\newcommand{\FPlanar}[1]{\textsc{Planar}}
\newcommand{\FPlanarRed}{\textsc{PlanarRed}}
\newcommand{\FNonRedConn}{\textsc{NonRedConn}}
\newcommand{\FFirstComp}{\textsc{FirstComp}}
\newcommand{\FSecondComp}{\textsc{SecondComp}}
\newcommand{\FFirstCompConn}{\textsc{FirstComp}}
\newcommand{\FSecondCompConn}{\textsc{SecondComp}}
\newcommand{\FSeparate}{\textsc{Separate}}
\newcommand{\FClosest}{\textsc{Closest}}
\newcommand{\FFerocious}{\textsc{Ferocious}}
\newcommand{\FIsFerocious}{\textsc{IsFerocious}}
\DeclareMathSymbol{\lsb@l}{\mathalpha}{letters}{`l}
\title{Connectivity Preserving Iterative Compaction and Finding 2 Disjoint Rooted Paths in Linear Time}
\author{Ken-ichi Kawarabayashi, Zhentao Li, Bruce Reed}
\begin{document}
\maketitle
\begin{abstract}
In this paper we show how to combine two algorithmic techniques to obtain linear time algorithms for various optimization problems on graphs, and present a subroutine which will be useful in doing so.

The first technique is iterative shrinking. In the first phase of an iterative shrinking algorithm, we construct a sequence of graphs of decreasing size $G_1,\ldots,G_\ell$ where $G_1$ is the initial input, $G_\ell$ is a graph on which the problem is easy(often because it is small), and $G_i$ is obtained from $G_{i+1}$ via some shrinking algorithm. In the second phase we work through the sequence in reverse, repeatedly constructing a solution for a graph from the solution for its successor. In an iterative compaction algorithm, we insist that the shrinking algorithm is actually a compaction algorithm, i.e. for some constant $\delta>0$, for every consecutive pair, $G_i$, $G_{i+1}$ of the sequence we have $|V(G_{i+1})|+|E(G_{i+1}| \le (1-\delta)(|V(G_i)|+|E(G_i)|)$.

Another approach to solving optimization problems is to exploit the structural properties implied by the connectivity of the input graph. Thus, both isomorphism testing and planar embedding are easier for 3-connected planar graphs because these graphs have a unique embedding. This approach can be used on graphs which are not highly connected by decomposing an input graph into its highly connected pieces, solving subproblems on these specially structured pieces and then combining their solutions. This usually involves building, explicitly or implicitly, a tree decomposition of bounded adhesion and working with the pieces into which it splits the input.

We combine these two techniques by developing compaction algorithms which when applied to the highly connected pieces preserve their connectivity properties. The structural properties this connectivity implies can be helpful both in finding further compactions in later iterations and when we are manipulating solutions in the second phase of an iterative compaction algorithm which uses such a compaction algorithm as a subroutine. In particular, we show that for any $c>0$ there is a $d>0$ such that for any graph $G$ we can obtain a minor $J$ of $G$, which is three connected if $G$ is, satisfying one of: (i) $J=G-F$ for a set $F$ of edges such that for every edge $xy$ of $F$, $x$ and $y$ are joined by $c$ paths of $G-F$, (ii) $J=G-S$ for a stable set $S$ such that every vertex $v$ of $S$ has degree at most $d$ and every pair of its neighbours are joined by $c$ paths of $G-S$, (iii) $J=G/ N$ for an induced matching $N$ such that every vertex in $V(N)$ has degree at most $d$, or (iv) $J$ is obtained by contracting exactly two edges in each of a set $T$ of disjoint triangles, the vertices of which all have degree 3.

This is the first compaction algorithm for all graphs. Previous algorithm may return the fact that the input graph is not in the class of interest instead of the next graph $G_{i+1}$.

To illustrate how this compaction algorithm can be used as a subroutine, we present a linear time algorithm that given four vertices $\{s_1,s_2,t_1,t_2\}$ of a graph $G$, either finds a pair of disjoint paths $P_1$ and $P_2$ of $G$ such that $P_i$ has endpoints $s_i$ and $t_i$, or returns a planar embedding of an auxiliary graph which shows that no such pair exists. This is the first linear time algorithms for solving this problem.
\end{abstract}

\section{The Introduction}


In this paper, we show how to  combine two algorithmic techniques to obtain linear time algorithms for
various optimization problems on graphs, and present a subroutine which will be  useful in doing so.

{\it Iterative shrinking}  algorithms have two stages. We first construct a sequence of smaller and smaller inputs  until we obtain one on which the problem we are
interested in is easy to solve (often because this final input is very small).  We then work through our sequence of inputs in reverse, obtaining a solution to each problem by exploiting the solution we have already obtained to the next input in the sequence.
As we discuss more fully below, this is a very common paradigm in algorithm design.

An  {\it iterative compaction} algorithm\footnote{this term was coined  by Haralambides and Makedon \cite{compaction} to describe a particular instance of this paradigm, we extend it to general approach for the first time.}, is an iterative shrinking algorithm such that  for   some constant $\delta>0$, for every
consecutive pair, $G_i$, $G_{i+1}$ of the sequence we have $|V(G_{i+1})|+|E(G_{i+1}| \le (1-\delta)(|V(G_i)|+|E(G_i)|)$.

Many iterative compression algorithms\footnote{An \emph{iterative compression} algorithm is an algorithm which takes a solution of objective value $k$ to a minimization problem and get a solution of value $k-1$ or states no such solution exists} are also iterative shrinking algorithms.
For example, in \cite{rsv} which is often credited with introducing iterative compression, Reed, Smith and Vetta present an iterative shrinking algorithm for finding a set $S$ of at most $k$ vertices in an input graph $G$ such that $G-S$ is bipartite  or determining that no such set exists.
The algorithm presented there orders the vertices of $G$ as $v_1,\ldots,v_n$ and considers
the sequence of graphs $G_1,\ldots,G_n$ where $G_i$ is the subgraph of $G$ induced by $\{v_1,\ldots,v_{n+1-i}\}$. Thus, for $i<n$,  $G_i$  is obtained from
$G_{i+1}$ by adding the vertex $v_{n+1-i}$. Now, if the algorithm has determined that  $G_{i+1}$ cannot be made bipartite by deleting a set $S$  of at most $k$ vertices then neither can $G_i$ and the algorithm  simply records this fact. Otherwise, we have a set $S$ of at most $k$ vertices of $G_{i+1}$   such that $G_{i+1}-S$ is bipartite. But now we have a set  $S' =S+v_{n+1-i}$ of at most $k+1$ vertices  such that $G_i-S'$ is bipartite and this makes it much easier to look for the desired  set $S$ in $G_i$.

\paragraph{Iterative compaction for computing treewidth.} An earlier, more involved, and more interesting application of iterative shrinking (in fact, iterative compaction) is found in Bodlaender's linear time algorithm \cite{bod} which, for any fixed $k$, given  an input graph $G$ either determines that $G$ has  tree width exceeding  $k$ or finds a tree decomposition of  width at most $k$ for $G$\footnote{We omit definitions  related to tree width in this introductory chat}. The algorithm for  $k$ uses as a subroutine  a linear time algorithm which,  for some $c_k,d_k, \epsilon_k>0$, given an input graph $G$ either (i) finds a minor $F$ of $G$ with more than $k|V(F)|$ edges, (ii)   finds a matching of size $\epsilon_k |V(G)|$ in $G$ such that every vertex in the matching has degree at most $d_k$, or (iii) finds a stable set $S$  of size at least $\epsilon_k  |V(G)|$ in $G$  such that each vertex of this stable set has degree at most $d_k$ and has the same neighbourhood as more than  $c_k$ vertices of $G-S$.

Bodlaender's algorithm first applies this subroutine. If the subroutine  obtains output (i) then the algorithm  notes that this implies that $G$ has tree width exceeding $k$ simply returns this fact and stops.

If the subroutine obtains output (iii), then the algorithm  calls itself recursively on $G-S$. If $G-S$ has tree width exceeding  $k$ then so does $G$, so the algorithm simply returns this fact. Otherwise,
the recursive call returns a tree decomposition of $G-S$ of width at most $k$. Bodlaender showed that it is a straightforward matter to obtain a tree decomposition of $G$ of width $k$ from any such tree decomposition of $G-S$\footnote{For every $v$ in $S$ there will be a node $t=t(v)$ of the tree decomposition such that $W_t$ contains all the neighbours of $v$. For each such $v$ we will add  a leaf $s(v)$ to the tree incident to $t(v)$ and set $W_{s(v)}=N(v) \cup v$.}.

If the subroutine obtains output (ii) it contracts
each edge of the matching into a single vertex. It then calls itself on the resultant graph $G^*$. If $G^*$ has tree width exceeding $k$ then so does $G$ and Bodlaender's algorithm returns this fact.  Otherwise, the algorithm returns a tree decomposition of width $k$ for $G^*$. Now, it turns out that since each vertex of $G^*-G$ corresponds to the two endpoints of an edge of $G$, this tree decomposition immediately yields a tree
decomposition of $G$ of width at most $2k$. This tree decomposition of $G$ can be exploited by a dynamic programming algorithm which
determines if $G$ has a tree decomposition of width at most $k$, and if so constructs it.

\subsection{Connectivity properties}

Another approach to solving optimization problems is to exploit the structural properties implied by the connectivity of the input graph. Thus, both isomorphism testing and planar embedding are easier for 3-connected planar graphs because these graphs have a unique embedding. This approach can be used on graphs which are not highly connected by decomposing an input graph into its highly connected pieces, solving subproblems on these specially structured pieces and then combining their solutions. This usually involves building, explicitly or implicitly, a tree decomposition of bounded adhesion and working with the pieces into which it splits the input.

\subsection{Our result}

In this paper, we combine the above two techniques by developing compaction algorithms which when applied to the highly connected pieces preserve their connectivity properties.

Our algorithm was motivated by, and has the same flavour as Bodlaender's subroutine, but has two important advantages, firstly it can be applied to all graphs, 
not just those of bounded tree width, and secondly it maintains 3-connectivity and other 
connectivity properties in each step. In order to achieve these goals, we need a significantly
more complicated procedure and proof. 

To handle dense graphs in linear time,  given such a graph the algorithm applies an 
algorithm of Frank, Ibaraki, and Nagamochi~\cite{fin} to find a set of  edges to be deleted , such that doing so  does not  affect the 3-connectivity of the graph. Specifically, we delete a  large set $F$ of edges  such that the endpoints of every edge of $F$ are joined by $c$ paths in $G-F$ for some $c>>3$. This means   that the set of cutsets of size at most $c-1$ in $G$ and $G-F$ are the same. 

To handle sparse, unstructured graphs,  our  algorithm, if it returns a stable set $S$ of size at least $\epsilon|V|$  to be deleted, simply insists that for every vertex $v$ of $S$, every two neighbours of  $v$  are joined by $c$ disjoint paths of $G-S$, rather than that there are $c$ vertices of $G-S$ which are adjacent to all of the neighbours of $v$.  To see that this modification is necessary, consider a random bipartite graph
where one side $A$ has  $n^{4/5}$ vertices, and each of the vertices on the other side 
chooses 3 neighbours in $A$ uniformly at random. Almost surely, this  graph will be 3-connected,
but no two vertices will have the same neighbourhood. Furthermore, deleting any edge destroys 
3-connectivity and there is no matching of size exceeding $|A|$ which is $o(|V|)$. 

Given a sparse graph, our algorithm either finds a  large stable set of this type,
or  large matching the vertices of which have bounded degree. If  , our algorithm finds  such a large matching,  it then need to do a significant amount of  extra work to massage  this  matching  to ensure that if $G$ is 3-connected then so is the  minor we output. It turns out 
  this may require us to contract a set   of triangles into vertices (by contracting two of the 3 edges of each triangle) rather than the edges of a matching.To see that this is necessary, consider the 
graph obtained from $K_{n-3,3}$ by replacing each vertex  $v$ of degree three, by a triangle and a matching between the triangle and the neighbours of $v$. Again deleting any edge destroys
$3$-connectivity, as does deleting any stable set, or contracting the edges of any matching with more than 6 edges. Contracting each triangle down to a vertex, however, keeps the graph 3-connected. 

We will prove that for every $c$ there are $\Delta,n_0$ and $ \delta>0$  such that  an algorithm with the following specifications exists:

\begin{algspec}\label{spec:compactor}
\textsc{Compactor}
~\\
\emph{Input:} A 3-connected graph $H$ with at least $n_0$ vertices and a set $C$ of at most 5 vertices of $H$.
~\\
\emph{Output:} One of the following:
\begin{enumerate}
\item[(i)]  a set $F$ of  at least $\delta(|V(H)|+|E(H)|)$ edges disjoint from $C$  such that for every edge $xy$ of $F$, $x$ and $y$ are joined by $c$ paths of $H-F$,
\item[(ii)]  a stable set $S$ disjoint from $C$ of at least $\delta(|V(H)|+|E(H)|)$  vertices such that  every  vertex $v$ of $S$ has degree at most $\Delta$ and every pair of its neighbours is joined by $c$ paths of $H-S$,
\item[(iii)] a matching $N$ disjoint from $C$  with at least $\delta(|V(H)|+|E(H)|)$ edges such that every vertex of $N$ has degree at most $\Delta$ and contracting   the edges of
$N$ yields a 3-connected graph, or
\item[(iv)] a set $T$ of at least $\delta(|V(H)|+|E(H)|)$ disjoint  triangles disjoint from $C$  such that every vertex of each triangle has degree three  and contracting each triangle to a vertex yields  a 3-connected graph.
\end{enumerate}
\emph{Running time:} $O(|E(G)|+|V(G)|)$.
\end{algspec}

 We will also  show how  our compaction procedure   can be used as a subroutine in a linear time  iterative compaction algorithm for  the 2 Disjoint Rooted Paths problem (2-DRP).  In a forthcoming  paper we use this algorithm for 2-DRP as a subroutine in an algorithm which for any fixed $\ell$,  finds a separator of size at most $c_l\sqrt{n} $ in a graph with no $K_{\ell}$ minor.

\subsection{2-DRP}

\paragraph{2-DRP problem definition.} An instance of  2-DRP  consists of a graph $G$, and four specified  vertices  $\{s_1,s_2,t_1,t_2\}$  of $G$, called terminals.  We are asked to determine if there are two vertex disjoint paths $P_1$ and $P_2$ of $G$   such that $P_i$ has endpoints $s_i$ and $t_i$. Our algorithm  either finds the desired paths  or returns a planar  embedding of an auxiliary graph which shows that no two such paths exist.

 Our auxiliary  graph is derived from the graph $G'$ obtained from $G$ by adding a new vertex  $v^*$, edges from $v^*$ to each terminal, and the edges
$s_1s_2,s_2t_1,t_1t_2$ and $t_2s_1$ (see Fig. \ref{fig:auxgraph}).  We  note that  the desired two paths exist in $G$ precisely if $G'$ contains a subdivision of $K_5$ \footnote{A subdivision of a graph $H$ is obtained from $H$ by replacing an edge of $H$ by a path.}
whose set of centers is  $C=\{v^*,s_1,s_2,t_1, t_2\}$ (as such a subdivision contains the two desired disjoint paths, and given the paths we can obtain a subdivision whose edge set is the edges of the paths and the  edges in $E(G')-E(G)$).

\begin{figure}
\begin{center}
\epsfig{file=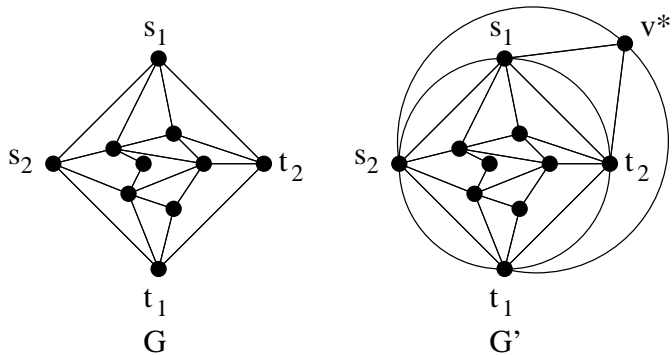}
\end{center}
\caption{An instance of 2-DRP and the corresponding graph $G'$}
\label{fig:auxgraph}
\end{figure}

\paragraph{Planarity and connectivity reductions.} So, if $G'$ is planar the desired two paths do not exist.  As discussed below,  if $G'$ is 4-connected then the desired two paths exist precisely if $G'$ is planar.  On the other hand, if $G'$ has small cutsets then  the desired paths may not exist even if $G'$ is non-planar.  Indeed the argument of the above paragraph  shows  the desired  paths cannot exist if the component of $G'$   containing $v^*$ is planar, even if all the other components are huge cliques (and therefore non-planar). In the same vein, since the   edges of a  $K_5$ subdivision form  a 2-connected graph, we see that if the block $B$ of $G'$ containing  $E(G')-E(G)$ is planar then the desired two paths do not exist. We can  reduce from $G'$ down to $B$ by repeatedly deleting  vertices separated from $v^*$ by a cutset of size 0 or 1. The auxiliary graph we are interested in, is obtained by  reducing $B$ further using    cutsets of size 2 and 3.

\paragraph{Reduction on 2-cuts} We consider first reducing on cutsets of size 2.\footnote{In this paper, when we say ``cut'', we always mean a vertex cut. So if we say ``$k$-cut'', this means a vertex cut of order $k$.} If  $X$ is a cutset of size 2 in $B$, then at most one component of $B-X$ intersects $C$.  Letting $U$ be any other component of $B-X$, we see that if the desired  subdivision exists and intersects $U$ then this intersection is
 exactly the interior of a path between the vertices of $X$.  Since $B$ is 2-connected, there is always a path with interior in  $U$ linking the two vertices of $X$. It follows that the desired subdivision exists in $B$ if and only if it exists in the graph obtained from $B$ by deleting $U$ and adding an edge
 between the vertices of $X$ if they are not adjacent in $G'$.

Iteratively applying this replacement until no further replacement are possible  we arrive at the unique triconnected component  $\tric$ of $G'$ containing the vertices of $C$. We call any graph $\tric$ that can be obtained this way (from an initial instance of 2-DRP) a \emph{\tricname} with \emph{root vertices} $C$.  Every component $U$ of $B-V(\tric)$ has edges to precisely two vertices $x_u$ and $y_u$ of $\tric$, which are joined by an edge of $\tric$. Furthermore, the edges of $T$ are precisely those edges of $B$ with both endpoints in $\tric$, along with an edge  between $x_U$ and $y_U$ for every component $U$ of $B-V(\tric)$. The remarks of the last paragraph show that the desired $K_5$ subdivision exists in $B$ precisely if it exists in $\tric$.

Hopcroft and Tarjan \cite{2conn} have shown that we can find $\tric$ in linear time\footnote{We discuss their work and prove that $\tric$ is unique below.}. It is a straightforward matter, given $\tric$ and the desired $K_5$ subdivision  within it, to replace the edges of $E(\tric)-E(G)$ in the subdivision by paths with their interiors in the components of $G-V(\tric)$ and hence obtain the desired $K_5$ subdivision in $G'$, in linear time.  Thus, we need only determine if the desired $K_5$ subdivision exists in $\tric$.

\paragraph{Reduction on 3-cuts} The other key to doing so is to  perform  reductions using the cutsets of size 3 in $\tric$.

\begin{definition}
If $H$ is a 3-connected graph containing the vertices of $C$ and the edges of $G'$ joining them, then a \emph{reduction} of $H$ is a 3-connected graph $F$ such that
\begin{enumerate}
\item[(i)] $V(F)$ is a subset of $V(H)$ containing $C$,
\item[(ii)] for every component $U$ of $H-V(F)$, the set $S(U)$ of vertices in $F$ with a neighbour in $U$ forms a triangle of $F$, and
\item[(iii)] $E(F)$ consists of the edges of $E(H)$ with both endpoints in $V(F)$ and for every component $U$ of $H-V(F)$, an edge between
every pair of vertices of $S(U)$ not adjacent in $H$.
\end{enumerate}

Given a component $U$ of $H-V(F)$ we say that $U$ \emph{attaches} at $S(U)$ and $S(U)$ is the \emph{\separator} of $U$ with respect to $(H,F)$ (as it separates $U$ from $F-S(U)$).
\end{definition}

\begin{remark}
We note that we do not need to impose the condition that $F$ is 3-connected here, as it is implied by (i), (ii), and (iii).
More strongly, any cutset $X$ of $F$ is a cutset of $G$ because for any component $U$ of $G-V(F)$ since $S(U)$ is a clique we have that $U$ has neighbours in only one component of $F-X$.
\end{remark}

\begin{observation}
For any reduction $F$ of a root graph $\tric$, the desired $K_5$ subdivision exists in $\tric$ precisely if a subdivision with the same centers exist in $F$.
\end{observation}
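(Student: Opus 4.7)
My plan is to prove both implications by setting up a correspondence between $U$-segments of arcs in $\tric$ --- maximal subpaths whose internal vertices lie in a component $U$ of $\tric - V(F)$ --- and the triangle edges on $S(U)$ in $F$.

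For the direction $\tric \Rightarrow F$, I would take a $K_5$-subdivision $K$ of $\tric$ with centers $C$ and observe that each arc of $K$ contains at most one $U$-segment per component $U$. Any such segment is bordered in the arc by two distinct vertices of $S(U) \subseteq V(F)$ (both borders exist because the arc endpoints are centers in $C \subseteq V(F)$, not in $U$), and since $|S(U)|=3$ and an arc is a simple path, the $2k$ distinct border vertices coming from $k$ segments force $k\le 1$. I then replace every $U$-segment by the corresponding triangle edge of $F$, which is present by clause (iii) of the definition of reduction, whether inherited from $\tric$ or newly added. This yields a subdivision in $F$ with the same centers.

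For the reverse direction, given a $K_5$-subdivision $K'$ in $F$, I would invert the process: for each added triangle edge $e=xy$ appearing in $K'$, replace it by a path with interior in $U$; such a path exists in $\tric[U \cup \{x,y\}]$ because $U$ is connected and both $x$ and $y$ have neighbours in $U$. When at most one added edge of the triangle $S(U)$ is used in $K'$, the replacement paths are internally disjoint by construction. The delicate case is when $K'$ uses two added edges $xy, xz$ of the same triangle, sharing a vertex $x$. If $x \notin C$, then $x$ has degree $2$ in $K'$, sitting on a single arc as $\ldots, y, x, z, \ldots$; I would collapse this entire subpath to a single path from $y$ to $z$ through $U$, dropping $x$ from the subdivision altogether.

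The main obstacle is the sub-case $x \in C$: the naive replacement would force two $U$-interior paths to cross at a vertex of $U$ (as illustrated by a small component $U$ consisting of a single vertex adjacent to $x$, $y$, $z$). To handle it, I would first modify $K'$ inside $F$ to a $K_5$-subdivision with the same centers in which at most one added triangle edge at $S(U)$ is used, by rerouting one of $x$'s four arcs to avoid one of the added edges $xy, xz$ and exploiting the $3$-connectedness of $F$ to supply an alternative internally-disjoint path from $x$ to the relevant center. Once this local modification is made, the replacement goes through as in the easy sub-cases. Applying the replacements component by component then produces the desired $K_5$-subdivision in $\tric$.
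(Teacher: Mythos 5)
Your forward direction is essentially the paper's, but the counting step is off: consecutive $U$-segments on a single arc share their boundary vertex of $S(U)$, so $k$ segments on one arc contribute only $k+1$ (not $2k$) distinct border vertices, and $|S(U)|=3$ gives $k\le 2$, not $k\le 1$. This does not sink the argument---distinct $U$-segments always give distinct boundary pairs (on one arc because a simple path cannot revisit a border vertex; across two arcs because two arcs containing the same pair $\{a,b\}$ would force $a,b$ to be centers and the arcs to coincide), so the segment-to-triangle-edge replacement is injective and produces a valid subdivision of $F$---but the intermediate claim as stated is false and should be repaired.

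The reverse direction has a genuine gap in the subcase $x\in C$. You write that you would reroute one of $x$'s arcs and ``exploit the 3-connectedness of $F$ to supply an alternative internally-disjoint path from $x$ to the relevant center.'' Three-connectedness gives you three internally disjoint paths between $x$ and one other vertex; it does not give you a path from $x$ to a center that avoids the rest of a $K_5$-subdivision, which is an arbitrary subgraph you have no control over. If 3-connectedness alone permitted such rerouting, the two-disjoint-paths problem would be trivial on 3-connected inputs, which it is not---that difficulty is the entire subject of the paper. The paper sidesteps the issue by a different route: it passes to the equivalent formulation of two vertex-disjoint paths $Q_1,Q_2$ in $F-v^*$ from $s_i$ to $t_i$, takes the pair of minimum total length (hence each $Q_i$ is induced), and observes that an induced $Q_i$ cannot use two triangle edges of any $S(U)$ (they would give a chord), while $Q_1,Q_2$ cannot each use one because any two triangle edges share a vertex. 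Thus at most one triangle edge per $S(U)$ is ever used, a single replacement path through $U$ suffices, and the conflict you identified never arises. Your arc-by-arc modification of $K'$ might be salvaged by first normalizing so that the eight arcs corresponding to edges of $E(G')-E(G)$ are single edges, leaving only the two long arcs $s_1t_1,s_2t_2$ to shorten---but that is precisely the paper's 2-paths reduction in disguise, and the 3-connectedness hand-wave does not substitute for it.
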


\begin{proof} If the desired subdivision  exists in $\tric$, then we can obtain  a $K_5$ subdivision in  $F$  with the same centers by replacing any path of the subdivision with its interior in  a component $U$ of $\tric-V(F)$ by  an edge between two vertices of $S(U)$. Conversely, if the desired subdivision  exists in $F$, then  $F-v^*$ contains two vertex disjoint paths $Q_1$ and $Q_2$ such that $Q_i$ has endpoints $s_i$ and $t_i$.   If we take the $Q_i$ as short as possible then they are induced and  hence for any $U$ they use at most one of the three edges joining the vertices of $S(U)$. So, we can find two vertex disjoint paths of $\tric-v^*$ with the same endpoints , and hence the desired  $K_5$ subdivision in  $\tric$,  by replacing any edge of $E(F)-E(\tric)$ joining two vertices in some $S(U)$ by a  path with its interior in  $U$.
\end{proof}

\begin{theorem}\cite{Sh, se, Th}
\label{seymourthm}
The desired two paths exist in $G$ or equivalently there is a subdivision of $K_5$ in $\tric$
 whose set of centers  is $C$  precisely if there is no planar reduction of $\tric$\footnote{By our observation,
 if the desired subdivision  exists in $G'$, then every reduction of $\tric$ is non-planar. The other direction is more difficult.}.
 \end{theorem}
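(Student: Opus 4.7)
The plan is to prove the non-trivial direction: if $\tric$ contains no $K_5$ subdivision centered at $C$, then $\tric$ has a planar reduction. The converse is already handled by the observation above, since a $K_5$ subdivision in $\tric$ lifts to every reduction and $K_5$-subdivisions are non-planar. I would proceed by induction on $|V(\tric)|$.

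If $\tric$ admits a 3-cut $X=\{x_1,x_2,x_3\}$ such that some component $U$ of $\tric-X$ is disjoint from $C$, I would form $F$ from $\tric$ by deleting $V(U)$ and adding any missing edges among $x_1,x_2,x_3$ so that $X$ spans a triangle in $F$. One checks directly from the definition that $F$ is a reduction of $\tric$: the unique component of $\tric-V(F)$ is $U$, its separator $S(U)=X$ is a triangle in $F$ by construction, and the remark following the definition of reduction forces $F$ to be 3-connected. By the observation, $F$ contains no $K_5$ subdivision centered at $C$. Since $|V(F)|<|V(\tric)|$, induction yields a planar reduction $F'$ of $F$, and a routine check that composing two reductions produces a reduction shows that $F'$ is itself a planar reduction of $\tric$.

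Otherwise every 3-cut of $\tric$ meets each component of its complement in a vertex of $C$, so $\tric$ is ``internally 4-connected relative to $C$''. At this point I would invoke the 4-connected form of the Seymour/Shiloach/Thomassen theorem: for a 4-connected graph containing our augmented structure (i.e., the vertex $v^*$ adjacent to each of $s_1,s_2,t_1,t_2$ together with the 4-cycle $s_1s_2t_1t_2$), a $K_5$ subdivision centered at $C$ exists unless the graph is planar. Since by hypothesis no such subdivision exists, $\tric$ must be planar, and $\tric$ itself is the desired (trivial) planar reduction.

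The main obstacle is this 4-connected theorem, which is the deep heart of the Seymour/Shiloach/Thomassen result being cited. Its standard proof uses Menger's theorem to produce five internally disjoint systems of paths between the five centers and then iteratively exchanges and reroutes them until either the desired subdivision emerges or a planar embedding appears in which the four terminals occupy the prescribed cyclic order on a common face. A secondary subtlety is that my second case only gives internal 4-connectivity relative to $C$ rather than genuine 4-connectivity; closing this gap requires either appealing to a mild strengthening of the 4-connected theorem that permits 3-cuts whose sides all touch $C$, or performing further rooted reductions which absorb each such ``trivial side'' into a triangle before invoking the 4-connected theorem.
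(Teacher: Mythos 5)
The paper does not prove this statement: Theorem~\ref{seymourthm} is cited from Shiloach, Seymour and Thomassen, and the footnote even says explicitly that only the easy direction (the observation just before it, plus non-planarity of $K_5$) is established in the text, while ``the other direction is more difficult.'' So there is no in-paper proof for your attempt to be compared against, and your proposal should be read as a sketch of the external result rather than as a replacement for it.

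Within those terms, your outline is the standard one and its structure is sound. The inductive step is correct: cutting off a component $U$ disjoint from $C$ along a $3$-cut $X$, adding the missing edges of $X$, and observing that the result is again $3$-connected (by the remark after the definition of reduction) and is in fact a root graph for a related instance, so the induction hypothesis applies; composing reductions is indeed again a reduction, as the paper itself uses in the ``Previous Work'' section. You are also right that the entire weight of the theorem sits in the base case. Two things there deserve to be flagged more sharply. First, the ``$4$-connected form'' you invoke is not a separate, simpler theorem you can cite and be done --- for Seymour/Shiloach/Thomassen that case \emph{is} the theorem, proved by a lengthy Menger-plus-rerouting argument; so the proposal has not actually discharged anything beyond setting up the reduction to it. Second, your ``internally $4$-connected relative to $C$'' base case does not eliminate $3$-cuts of the form $\{v^*,s_i,t_i\}$ that separate $s_{3-i}$ from $t_{3-i}$ while leaving both sides meeting $C$. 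The paper addresses exactly this in a footnote in Section~1.4: such a cut forces every component $U$ of $F-C$ to see only one of $s_{3-i},t_{3-i}$, hence to attach to at most four vertices of $C$, and the ones attaching to exactly three can themselves be reduced; in the end either $F$ is genuinely $4$-connected or $V(F)=C$, which is a constant-size check. Your ``further rooted reductions'' remark gestures at this, but the argument needs to be made, and it is not automatic from the induction as you set it up because these additional reductions cut off pieces that do touch $C$ (just not all of it). So: the plan is faithful to the standard proof, and correctly isolates the two hard points, but as written it is an outline that leaves both of them open.
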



 Our algorithm finds either the desired $K_5$ subdivision  in $\tric$, or a planar reduction of $\tric$.
 In doing so, it works with a sequence of 3-connected minors of $\tric$, each of which ``contains'' the
 vertices of $C$ and the edges of $G'$ between them.

\paragraph{Our Algorithm} To be more precise, we  construct and consider a sequence of graphs  $G_1=\tric,G_2,G_3,\ldots,G_\ell$ such that
\begin{enumerate}
\item[(a)] each $G_i$ is a  3-connected minor of $G_{i-1}$,
\item[(b)] the five vertices of $C$ have been contracted into different vertices of $G_i$ and none of the edges of $\tric$ with both endpoints in $C$ have been deleted,
\item[(c)] for some $\delta>0$ and all $i$ from $1$ to $\ell-1$, $|E(G_i)|+|V(G_i)|< (1-\delta)(|V(G_{i-1}|+|E(G_{i-1})|) $, and
\item[(d)]  $G_\ell$ has bounded size.
\end{enumerate}

We refer to  3-connected minors of $T$ satisfying (b) as    {\it compactions} of $\tric$. To simplify
exposition we abuse notation slightly and  for every  vertex $x$ of $C$, we also use $x$ to denote the vertex of  the compaction into which $C$  has been contracted,

After constructing this sequence, our algorithm finds and outputs, for each $i$ from $\ell$ down to 1,  either
a subdivision of $K_5$ in $G_i$ with set of centers   $C$\footnote{or equivalently two disjoint paths  $P_1$ and $P_2$ of $G_i-v^*$ such that
$P_i$ has endpoints $s_i$ and $t_i$},    or  a planar reduction $H_i$
of $G_i$.  If $i$ is $\ell$ this can be done in constant time  because $G_\ell$ has bounded size. For $i<\ell$, we need to exploit the output  we have obtained
for  $G_{i+1}$ when considering $G_i$. In order to be able to do so,  if we output a reduction $H_i$ for $G_i$ we insist that it  is of a certain special type   and  we record  some extra information for each component $U$ of $G_i-H_i$. We  provide  further details on the exact nature of the output below.

We will show that for each $i<\ell$ we can construct  $G_{i+1}$ in time which is linear in $|E(G_i)|+|V(G_i)|$ and
can obtain the output for $G_i$ from the output for $G_{i+1}$ in time which is linear in $|E(G_i)|+|V(G_i)|$. By property  (c)  of  the graphs in the sequence we construct above,  it follows that our algorithm runs in linear time in total.

\paragraph{Paper structure} It remains to describe our algorithm for compacting  $G_i$ down to  $G_{i+1}$, and our algorithm
for obtaining the output corresponding to $G_i$ by uncompacting  the output corresponding to $G_{i + 1}$.
We provide specifications and outlines  of the compaction algorithm in Section \ref{compactionoutline}, details can be found in Section \ref{compactiondetails}.
We provide specifications and outlines  of the algorithms which uncompacts the outputs  in Section
\ref{uncompactionoutline}, details can be found in Section \ref{uncompactiondetails}.
Before doing any of this, we present some necessary background. In Section \ref{preserving3conn} we examine certain properties ensuring that contracting a matching or a set of  triangles preserves 3-connectivity.
Once we've described our compaction algorithm, in Section \ref{reductionssec}, we will  investigate more closely the structure of the reductions we consider, and the extra information we need to output in order for this dynamic programming approach to work.
 In the rest of this introductory section, we survey previous work related to
solving 2-DRP.

\subsection{Previous Work on 2-DRP}

Theorem \ref{seymourthm} tells us that if we are given a reduction $F$ of a \tricname{} $\tric$ such that $F$  has no proper reduction (i.e. a reduction of $F$ which is not $F$
itself),
then to test if the desired $K_5$ subdivision exists in $F$ and  equivalently $\tric$, we need only
test if $F$ is planar, which can be done in linear time.
The classical approach to solving the 2-DRP problem is to
construct such  a reduction $F$ of $\tric$  and apply  a planarity testing algorithm to it.

It is easy to test if a given reduction $F$ of $\tric$ has a proper reduction in polynomial time. We are simply asking if
there is a vertex $v$ of $F$ which can be separated from $C$ by a cutset $X$ of size three\footnote{We remark that this is essentially equivalent to
testing if $F$ is 4-connected. If $F$ is a 3-cut of $F$ which does not separate any vertex $v$ from $C$ then it separates two vertices of $C$
and hence must be $v^*,s_i,t_i$ for some $i$ and separate $s_{3-i}$ from $t_{3-i}$. But then every component $U$  of $F-C$ can only see one of $s_{3-i}$ and $t_{3-i}$. Hence $F-C$ is non-empty or we  could reduce by  "cutting off"  such a $U$. So not permitting a reduction  is equivalent to 4-connectivity unless
our reduction $F$ has vertex set $C$.}.
To see this note first that  if $J$ is a proper reduction of $F$ then every vertex $v$ of
$V(F)-V(J)$ can be separated from $C$ by a 3-cut of $J$.
On the other hand if $v$ can be separated from $C$ by a 3-cut $X$ of $J$
then, letting $U$ be the component of $F-X$ containing $v$, the graph $J$ obtained from $F-V(U)$
by adding edges so that $X$ is a clique is a proper reduction of $F$; $J$ is 3-connected because $F$ was and we added edges so that
$X$ is a clique.

 It is easy to see that a reduction of a reduction of $\tric$ is also a reduction of $\tric$\footnote{Again this is because we put a clique on the neighbourhood of every component cut off by the first reduction.}. Thus, iteratively applying the above test, and creating a proper
reduction of the current reduction if one exists, we create a sequence of smaller and smaller reductions of $\tric$. This process must eventually terminate with a reduction which has no reduction to which we can apply our planarity testing algorithm.


 Jung\cite{jung} gave the first algorithm in the 4-connected case. There, no 3-cut exists so the input graph itself is a reduction with no reductions on which we may test planarity.

 If we successively find reductions of reductions as described here, we can determine if the desired two paths exist in $O(mn)$ time. Indeed, we need only test if each vertex can be separated  from $C$ by a 3-cut once, as our reductions  cannot introduce new small cuts. Each of these tests, and consequent reduction if any,  takes $O(m)$ time.
 Having tested and possibly reduced on each vertex, Jung's result tells us that we 
 can determine if the paths exist using a planarity testing algorithm. If we actually want to find the paths when they exist, then for each edge $e$ in turn, we can determine if the paths exist when $e$ is deleted and if so delete it. When we are done, the only edges remaining form the desired two paths. This takes $O(m^2n)$ time.

 Using their theorem, Thomassen, Seymour, and Shiloach independently gave the first polynomial time algorithm for 2-DRP. Tholey \cite{tholey} improved this to an $O(m\alpha(m,n))$ algorithm by using a more sophisticated approach for testing for 3-cuts. Later, Tholey \cite{tholeynew} further improved his algorithm to run in $O(m+n\alpha(m,n))$ time.

As an application of our iterative compaction algorithm, we describe the first $O(m+n)$ algorithm for 2-DRP.

\section{Preserving 3-Connectivity}
\label{preserving3conn}
A matching $N$ in a 3-connected graph $J$ is {\em 3-connectivity preserving}  if the graph $J/N$
obtained by contracting its edges is 3-connected. A set ${\cal T}$ of triangles is {\em 3-connectivity preserving} if the graph obtained by contracting two  edges of each triangle in ${\cal T}$ is 3-connected. In this section we set out some conditions ensuring that a matching or set of disjoint triangles
is  3-connectivity preserving.

The case of triangles is easy to handle.

\begin{theorem}\label{tri3con}
  Suppose $H$ is 3-connected and $\cal{T}$ is a set of disjoint triangles in $H$ where each vertex of a triangle has degree 3.
  Then ${\cal T}$ is 3-connectivity preserving.
\end{theorem}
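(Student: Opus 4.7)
The plan is to prove the statement by induction on $|\mathcal{T}|$, reducing everything to the single-triangle case. If contracting one triangle $T\in\mathcal{T}$ whose vertices all have degree three preserves 3-connectivity, then the remaining triangles of $\mathcal{T}\setminus\{T\}$ are still disjoint triangles of $H/T$ all of whose vertices have degree three (they lie in $V(H)\setminus T$ and contracting the disjoint triangle $T$ changes neither their internal edges nor their degrees), so the inductive hypothesis applies to the smaller instance $(H/T,\mathcal{T}\setminus\{T\})$.

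For the single-triangle case, write $T=\{u_1,u_2,u_3\}$ and note that each $u_i$ has a unique external neighbour $x_i$, since its degree is three and two of its edges go to the other $u_j$'s. As a warm-up I would first verify that $x_1,x_2,x_3$ are three distinct vertices: any coincidence among them would yield a $\le 2$-cut of $H$ that isolates $T$, contradicting 3-connectivity. Then, assuming for contradiction that $H/T$ has a cut $Y'$ of size at most $2$, and letting $t$ denote the contracted vertex, I would split on whether $t\in Y'$. If $t\notin Y'$ then $Y'\subseteq V(H)$ is itself a $\le 2$-cut of $H$, immediately contradicting 3-connectivity. Otherwise $Y'=\{t\}$ or $Y'=\{t,y\}$, so $T\cup (Y'\setminus\{t\})$ is a 3- or 4-element cut of $H$ separating nonempty sides $A,B$, and the task is to shrink this cut to size two.

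The heart of the argument, and the step I expect to be the main obstacle, is this shrinking: using the degree-three hypothesis, each $u_i$'s only neighbour outside $T\cup(Y'\setminus\{t\})$ is $x_i$, so whether a given $u_i$ is actually needed in the cut depends only on where its external neighbour $x_i$ lies. A short but careful case analysis on the distribution of $\{x_1,x_2,x_3\}$ among $A$, $B$ and $Y'\setminus\{t\}$ then shows that one can always discard enough of the $u_i$'s from the cut to leave a $\le 2$-cut of $H$, contradicting 3-connectivity. The trickiest subcase is $Y'=\{t,y\}$ with one $x_i$ in $A$, one in $B$, and the third equal to $y$: here no single $u_i$ alone can be removed, but the $u_i$ with $x_i=y$ has all of its neighbours already inside $T\cup\{y\}$ and so may be removed ``for free'' together with one of the other two $u_j$'s, leaving a cut of the form $\{u_\ell,y\}$ as required.
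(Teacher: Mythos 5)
Your proof is correct, and it takes a genuinely different route from the one in the paper. The paper argues directly on $H/\mathcal{T}$: it observes that a $\le 2$-cut of the fully contracted graph lifts to a cutset of $H$ that is either a single triangle, a triangle together with one other vertex, or the union of two triangles, and it rules out each of these three cut types by a counting argument on the (at most three) external edges per triangle. The two-triangle case is the most delicate step there: one has to argue that $H-X$ has exactly two components, each seeing exactly one vertex of one triangle and two of the other, and that the two ``lonely'' vertices form a $2$-cut. Your induction on $|\mathcal{T}|$ sidesteps that two-triangle case entirely: by contracting one triangle at a time and verifying the inductive hypothesis (the external neighbours $x_1,x_2,x_3$ are distinct, so degrees outside $T$ are preserved and the remaining triangles stay disjoint and degree-$3$), you only ever need the ``one triangle'' and ``one triangle plus one vertex'' cut analyses. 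When a later step produces a contracted vertex in the role of the extra vertex $y$, that is silently absorbed by the single-triangle-plus-vertex case, so you never reason about two uncontracted triangles simultaneously. What the paper's direct route buys is a self-contained argument with no induction overhead; what yours buys is a cleaner case split and a more modular proof that isolates the ``contract one degree-$3$ triangle'' operation as the reusable lemma. One small imprecision worth tidying: in the $Y'=\{t,y\}$ case you phrase the analysis as if $H-T-y$ has exactly two components $A,B$, whereas there could a priori be more; the same pigeonhole argument (each component must receive at least two of the $\le 3$ external edges of $T$, else $\{u_i,y\}$ or $\{y\}$ is a small cut) immediately shows there are at most one, giving the contradiction, so this is easily repaired. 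Both proofs also tacitly assume $H$ is large enough that contraction does not drop below the size threshold for $3$-connectivity (e.g.\ $H=K_4$ with $\mathcal{T}$ a single triangle is a degenerate counterexample), which is harmless in context since the theorem is applied only to graphs with at least $n_0$ vertices.
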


\begin{proof}

We need to show that $H$ contains no cutset which is  the union of two triangles  in ${\cal T}$ or of a triangle of ${\cal T}$ and possibly some other vertex.

  Since triangles in $\cal{T}$ only have degree 3 vertices, every triangle $T$ is incident to exactly three edges not in $T$.
  So, since $H$ is 3-connected, no triangle $T \in \cal{T}$ is a cutset of $H$ as otherwise, some component is adjacent to at most one vertex of the triangle  and that vertex is a cutvertex in $H$, a contradiction.

Next suppose we have a 4 cut $X$ in $H$ consisting of one of the triangles in ${\cal T}$ and some other vertex $z$. Then again there is a component  of $G-X$ seeing only one vertex of the triangle  and that vertex together with $z$  forms a 2-cut of $H$, a contradiction.

 Finally suppose we have a 6-cut $X$ in $H$ consisting of two of our triangles. There are only six  edges from these  triangles to $H-X$, so $H-X$ has 2 components each adjacent to 3 vertices in $X$: 1 in one triangle and 2 in the other. (Otherwise each  of the triangles would be a cutset which we've already shown is impossible).  For each component take the neighbour which is alone in its triangle and these two vertices form a 2-cut of $H$, a contradiction.
\end{proof}

Certain sets of edges are also easy to handle. We say that an edge $e$   is {\em sweet} if for one of its endpoints $x$ either the
neighbours of $x$ not in $e$ form a clique or are a $P_3$ whose midpoint sees no vertex outside $N(x)$. We call $x$ a \emph{sweet end} of $xy$.
The reason for this choice of notation is the following.

\begin{lemma}
\label{sweetie}
Any matching $N$  in a 3-connected graph $J$  with at least 6 vertices all of whose edges are sweet is 3-connectivity preserving.
\end{lemma}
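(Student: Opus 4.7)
The plan is to argue by contradiction: assume $J/N$ has a cut $Y$ of size at most $2$, and exhibit a cut of size at most $2$ in $J$ itself, contradicting $3$-connectivity. Let $X\subseteq V(J)$ be the preimage of $Y$ under the contraction map. Since the components of $J/N-Y$ lift to components of $J-X$, the set $X$ is a cut of $J$, and $|X|=|Y|+t$ where $t$ is the number of elements of $Y$ that are contracted edges. Because $|X|\ge 3$, we must have $|Y|=2$, leaving two cases: Case A, where $X=\{x,y,v\}$ with $xy\in N$ and $v\notin V(N)$, and Case B, where $X=\{x_1,y_1,x_2,y_2\}$ with $x_1y_1,x_2y_2\in N$.

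For Case A I let $x$ be a sweet end of $xy$. If $N(x)\setminus\{y\}$ is a clique, any two neighbours of $x$ other than $y$ are adjacent, so they cannot lie in different components of $J-X$; hence $x$'s neighbours in $J-X$ lie in a single component and $\{y,v\}$ is already a $2$-cut of $J$. Otherwise $N(x)\setminus\{y\}=\{p_1,p_2,p_3\}$ is an induced $P_3$ with midpoint $p_2$ satisfying $N(p_2)\subseteq\{x,y,p_1,p_3\}$. I split on where $v$ lies. If $v\notin\{p_1,p_2,p_3\}$, both edges $p_1p_2$ and $p_2p_3$ survive in $J-X$, so $\{p_1,p_2,p_3\}$ lies in a single component of $J-X$ and again $\{y,v\}$ is a $2$-cut; the cases $v=p_1$ and $v=p_3$ are analogous, using the surviving edge from $p_2$ to the other end of the path.

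The delicate case is $v=p_2$. Here $p_1\not\sim p_3$ (the $P_3$ is induced), so $p_1$ and $p_3$ lie in different components $A,B$ of $J-X$. Setting $Z:=V(J)\setminus\{x,y,p_1,p_2,p_3\}$, I use that $Z\ne\emptyset$ by $|V(J)|\ge 6$, and that $\{y,p_1,p_3\}$ separates $\{x,p_2\}$ from $Z$ in $J$ because $N(x)\cup N(p_2)\subseteq\{x,y,p_1,p_2,p_3\}$. Partitioning $Z=Z_A\cup Z_B$ according to which component each vertex lies in, I note that $Z_A$ has no edges to $\{x,p_2\}$ (by the sweetness bound on the neighbourhoods) and no edges to $B$ (since $A,B$ are different components of $J-X$); hence $Z_A\ne\emptyset$ gives the $2$-cut $\{y,p_1\}$ of $J$, and symmetrically $Z_B\ne\emptyset$ gives $\{y,p_3\}$. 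If both are empty then $V(J)=\{x,y,p_1,p_2,p_3\}$ has only five vertices, contradicting $|V(J)|\ge 6$.

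Case B is handled by the same template with $\{x_{3-i},y_{3-i}\}$ playing the role of $\{v\}$ and with $\{y_1,y_2\}$ as the natural candidate $2$-cut. In the double-clique subcase the clique argument places each $x_i$'s $J-X$ neighbours in a single component; if those components agree then $\{y_1,y_2\}$ is a $2$-cut, and if they disagree the clique condition on $x_1$ forbids the edge $x_1x_2$ (otherwise $x_2$ would lie in the $x_1$-clique and hence neighbour that same component, contradicting the assignment of $x_2$'s neighbours), which makes $\{y_1,y_2\}$ a $2$-cut in that case as well. The mixed and double-$P_3$ subcases reduce to the same template by importing the $P_3$ analysis from Case A. The main obstacle is the subcase $v=p_2$ of Case A (and its analogue in Case B): no single element of $X$ yields a $2$-cut on removal, and it is precisely the combination of the domination condition on $p_2$'s neighbourhood, the partition of $Z$ induced by the components of $J-X$, and the hypothesis $|V(J)|\ge 6$ that lets us pinpoint the $2$-cut among $\{y,p_1\}$ or $\{y,p_3\}$.
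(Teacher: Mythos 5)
Your proof takes a genuinely different route from the paper's: instead of choosing $X$ to be a \emph{minimal} cutset of $J$ (the paper's choice, combined with the observation that every vertex of a minimal cutset has a neighbour in every component, ultimately forces $|X|=3$), you take an arbitrary lift of the $2$-cut $Y$ and branch on $|X|\in\{3,4\}$. Case~A, in its delicate subcase $v=p_2$, locates essentially the same $2$-cut $\{y,p_i\}$ that the paper's final step produces. However, the step ``$p_1\not\sim p_3$, so $p_1$ and $p_3$ lie in different components $A,B$ of $J-X$'' is an invalid inference: non-adjacency does not imply separation, and the partition $Z=Z_A\cup Z_B$ implicitly assumes $J-X$ has exactly two components containing $p_1$ and $p_3$ respectively. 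This is fixable---if $p_1,p_3$ share a component, or if $J-X$ has a component disjoint from $\{p_1,p_3\}$, that component has neighbourhood in $\{y,p_2\}$ and $\{y,v\}$ is a $2$-cut exactly as in your clique subcase---but as written the ``so'' is wrong.

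The more serious gap is in Case~B. Your explicit argument covers only the double-clique subcase, and its disagree branch is tied to the clique property: ``the clique condition on $x_1$ forbids the edge $x_1x_2$.'' The one-sentence treatment of the mixed and double-$P_3$ subcases---``reduce to the same template by importing the $P_3$ analysis from Case~A''---is not a proof, for two reasons. First, the Case~A delicate analysis does not actually arise: if the $P_3$-midpoint $p_2$ lands in the four-element $X$ then $p_2\in\{x_2,y_2\}$, and $N(p_2)\subseteq N[x_1]$ together with the matching edge $x_2y_2$ forces one of $p_1,p_3$ into $X$ as well, so $x_1$'s $J-X$-neighbourhood collapses to a singleton and there is nothing delicate to import. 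Second, and crucially, the disagree branch ($A_1\neq A_2$) of your template needs $x_1\not\sim x_2$, and the clique argument you give simply does not apply when $x_1$ is $P_3$-sweet. What is missing is a separate verification that, if $x_1$ is $P_3$-sweet and $x_1\sim x_2$, then $x_2\in\{p_1,p_2,p_3\}$ and in each case the $P_3$ structure and the matching constraints force $A_1=A_2$. That verification does go through, but it is genuinely new casework rather than a re-use of the Case~A argument, and you have not supplied it. The paper's minimality device avoids this entire branch.
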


\begin{proof}
It is enough to show that there is no cutset of $J$ consisting of two edges of the matching or one edge of the
matching and another vertex.   Suppose the contrary is true and choose a minimal  cutset $X$ of this type.

$X$ is a minimal cutset of $J$ as otherwise $X$ has size 4 by 3-connectivity of $J$ and deleting a vertex leaves a cutset of the same type.
We examine the sweet end $x$ of an edge $xy$ of $N$ contained in $X$. By minimality of $X$, $x$ is adjacent to all components of $J-X$ (in fact, both endpoints of $xy$ are adjacent to all components). Since there are at least two components of $J-X$, $N(x)-y$ is not a clique and so $J[N(x)-y]$ is a $P_3$ with midpoint $z$. Since $z$ is adjacent to exactly $N(x)-y$ in $J-y$, $z\in X$. Since $z$ has no other neighbours except possibly $y$, $J-X$ has exactly two components and $x$ has exactly one neighbour in each. Again since $z$ is adjacent to exactly $N(x)-y$ in $J-y$, $X = \{x,y,z\}$ (otherwise, the second edge of $N$ in $X$ contains $z$ and a neighbour of $z$ not in $N(x)-y$).

Since $J$ has at least 6 vertices, one of the two components of $J-X$ has at least two vertices $u,v$ with, say $u$ the only vertex of that component adjacent to $x$ (and $z$). But now $J-u-y$ separates $v$ from $x$ (and $z$), a contradiction to $J$ being 3-connected.
\end{proof}

Given a 3-connected graph $J$, a cutset $X$ and a component $U$ of $J-X$ we say that an edge $xy$ of $U$  is {\it well-behaved} with respect to $[X,U]$ if
(i) for any clique $Z$ of size at most 2 in  $V(J)-X-U$, the vertices of $X$ lie in the same component of $J-x-y-Z$, and
(ii) for any vertex $z$ of $X \cup U$, $J-x-y-z$ is connected.

\begin{lemma}
\label{wellbehaved}
Suppose that we are given cutsets $X_1,\ldots,X_l$  of a 3-connected graph $J$ and a component $U_i$
of $G-X_i$ for each $i$ such that $U_i$ is disjoint from $X_j \cup U_j$ for all $j \neq i$. Then, if $e_i$ is well-behaved for
$[X_i,U_i]$, we have that  $F=\{e_1,\ldots,e_l\}$ is a 3-connectivity preserving matching.
\end{lemma}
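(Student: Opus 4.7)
The plan is to first verify that $F$ is a matching, then show by contradiction that $J/F$ is 3-connected. Each edge $e_i = x_iy_i$ has both endpoints in $U_i$ because it is an edge of $U_i$. Since $U_i$ is disjoint from $X_j \cup U_j$ whenever $j \neq i$, the sets $U_1, \ldots, U_l$ are pairwise disjoint, so the edges of $F$ are vertex-disjoint and $F$ is indeed a matching.

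For 3-connectivity preservation, suppose toward a contradiction that $J/F$ has a cut $Y$ with $|Y| \leq 2$. Expanding each contracted vertex $v_{e_k}$ in $Y$ to the pair $\{x_k, y_k\}$ produces a set $X \subseteq V(J)$ of size at most 4 whose removal disconnects $J$. A case split on how many contracted vertices lie in $Y$ reduces matters to two interesting cases, since if $Y$ contains no contracted vertex or equals a single $\{v_{e_i}\}$, then $X$ is already a cut of $J$ of size at most 2, contradicting 3-connectivity of $J$ directly.

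The first interesting case is $X = \{x_i, y_i, z\}$ for some $i$ and some uncontracted $z \in V(J)$. If $z \in X_i \cup U_i$, condition (ii) of well-behavedness applied to $e_i$ for $[X_i, U_i]$ gives that $J - x_i - y_i - z$ is connected, a contradiction. Otherwise $z \in V(J) - X_i - U_i$, and $\{z\}$ is a clique of size 1 there, so condition (i) places $X_i$ in a single component $A$ of $J - X$. The second interesting case is $X = \{x_i, y_i, x_j, y_j\}$ for some $i \neq j$. Since $U_j$ is disjoint from $X_i \cup U_i$, both $x_j$ and $y_j$ lie in $V(J) - X_i - U_i$, and $x_jy_j$ is an edge of $J$, so $\{x_j, y_j\}$ is a clique of size 2 in $V(J) - X_i - U_i$; condition (i) again places $X_i$ in a single component $A$ of $J - X$.

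To finish, in both of these subcases I rule out any other component $W$ of $J - X$. Such a $W$ lies either in $U_i - \{x_i, y_i\}$ or in $V(J) - X_i - U_i - (X \setminus \{x_i, y_i\})$. In the first situation the neighbors of $W$ in $J$ lie in $X_i \cup \{x_i, y_i\}$ since $U_i$ is a component of $J - X_i$; if none lie in $X_i$, then $\{x_i, y_i\}$ is a 2-cut of $J$. In the second situation the neighbors of $W$ lie in $X_i \cup (X \setminus \{x_i, y_i\})$, and if none lie in $X_i$, then $X \setminus \{x_i, y_i\}$, which has size 1 or 2, is a cut of $J$. Either alternative contradicts the 3-connectivity of $J$, forcing $W \subseteq A$ and hence $J - X$ connected, the final contradiction. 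The main obstacle is this last step: condition (i) only asserts that $X_i$ sits in a single component, so one must lean on the 3-connectivity of $J$ itself, together with the fact that $U_i$ is a component of $J - X_i$, to show that no other component of $J - X$ can survive without reattaching to $X_i$.
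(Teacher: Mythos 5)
Your proof is correct, and it reaches the conclusion by a genuinely different route from the paper. Both proofs invoke (i) and (ii) in the same spirit: (ii) immediately kills the cut $\{x_i,y_i,z\}$ when $z\in X_i\cup U_i$, and (i) pins $X_i$ into a single component of $J$ minus the expanded cut in the remaining cases. Where you diverge is in how the argument around (i) is closed off. The paper works with a \emph{minimal} cutset $W$ of $J/F$ and uses minimality to assert that every vertex of $W$ has a neighbour in every component of $(J/F)-W$; it then builds, in each such component, a walk from $v_{e_1}$ to $W-e_1$, which must cross $X_1$, so that every component contains a vertex of $X_1$ --- contradicting (i). You instead take an arbitrary cutset $Y$ of size at most two, expand it to $X$ in $J$, and examine the components of $J-X$ directly: any component $W'$ other than the one containing $X_i$ lies wholly inside $U_i-\{x_i,y_i\}$ or wholly inside $V(J)-X_i-U_i-(X\setminus\{x_i,y_i\})$, and in either case $N(W')\setminus W'$ is trapped inside a set of size at most two (namely $\{x_i,y_i\}$ or $X\setminus\{x_i,y_i\}$), contradicting the 3-connectivity of $J$. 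The trade-off is that the paper's version is slightly shorter once minimality is in hand, whereas yours dispenses with the minimality choice entirely at the cost of one extra appeal to $J$'s 3-connectivity together with the structural fact that $U_i$ is a full component of $J-X_i$. Both are sound; yours is the more elementary packaging.
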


\begin{proof}
Clearly $\{e_1,\ldots,e_l\}$ is a matching. Suppose that it is not 3-connectivity preserving. Then there are either two matching
edges or a matching edge and a vertex which form a minimal cutset  $W$ of $G = J/F$.

\WLOG\ we can assume that $e_1$ is in $W$.
Since $W$ is minimal, each vertex in $W$ has neighbours in all components of $G-W$.

So, if $W-e_1$ is disjoint from $X_1 \cup U_1$ then there is a path from $e_1$ to $W-e_1$ inside each component of $G-W$ and each of these paths must pass through $X_1$. By (i) of the definition of well-behaved, $X_1$ is in a single component of $G-W$, a contradiction as there are at least two components of $G-W$.

So, $W-e_1$ intersects $X_1 \cup U_1$ and so cannot be an edge of any $U_j$ for $j \neq 1$.
So it is not any $e_j$ and must be a vertex $z$ of $X_1 \cup U_1$.
Hence, $G-V(e_1)-z$ is disconnected, contradicting (ii) in the definition of well-behaved.
\end{proof}

\section{The Compaction Algorithm}
\label{compactionoutline}

In this section, we present an algorithm with the following specifications for  any  fixed $c$
and parameters $\delta$ and $\Delta$  which we define in terms of $c$ below:

\begin{algspec}\label{spec:itercompactor}
\textsc{Iterative Compactor}
~\\
\emph{Input:} A \tricname{} $\tric$.
~\\
\emph{Output:} A sequence of compactions $G_1=\tric,\ldots,G_l$ of $\tric$ so that $|V(G_l)|<n_0$ and for each $i$
$G_{i+1}$ is obtained from $G_i$ by one of:
\begin{enumerate}
\item[(a)] the deletion of a set $F$ of at least $\delta(|V(G_i)+|E(G_i)|)$ edges disjoint from $C$  such that for every edge $xy$ of $F$, $x$ and $y$ are joined by $c$ paths of $G_i-F$,
\item[(b)] the deletion of a set $S$ of  at least $\delta(|V(G_i)+|E(G_i)|)$ vertices disjoint from $C$ such that  every  vertex $v$ of $S$ has degree at most $\Delta$ and every pair of its neighbours  are joined by $c$ paths of $G_i-S$,
\item[(c)] the contraction of a set $N$ of edges not incident to $C$ such that
 $N$ is a matching of size at least $\delta(|V(G_i)+|E(G_i)|)$ in $G_i$ and
every vertex of $N$ has degree at most $\Delta$ and contracting the edges of
$N$ yields a 3-connected graph, or
\item[(d)] the contraction of two edge from each of a set $\mathcal{T}$ of at least $\delta(|V(G_i)+|E(G_i)|)$  disjoint triangles such that every triangle in $\mathcal{T}$ is disjoint from $C$ and every vertex of each triangle has degree three.
\end{enumerate}
\emph{Running time:} $O(|V|+|E|)$
~\\~\\
\emph{Description:}
Having found $G_1,\ldots,G_i$, our first  step is to  check if $|V(G_l)|<n_0$. If so, our compaction algorithm has done its job and it terminates.
Otherwise we  apply our compactor algorithm to $G_i$ (with the same parameters as specified here). According to the output it returns, we delete a set $F$ of edges, delete a set $S$ of vertices, contract the edges of a matching $N$, or two edges from each triangle in a set $\mathcal{T}$ of disjoint triangles to obtain $G_{i+1}$. We record the set of contractions or deletions we
performed to obtain the new compaction and go on to the next iteration.

It remains to describe our compactor algorithm.
\end{algspec}

We can assume $c \ge 10$ as  an algorithm for $c=j$ also works for $c<j$.

 For a given $c \ge 10$, we choose $d>1000$ big enough in terms of $c$,  and $\delta>0$ small enough in terms of $c$ and $d$  so that they satisfy various inequalities
given throughout the proof. We set $\epsilon$ to be $(2c+1)\delta$.    We set $\Delta= \frac{1}{\epsilon}+6$ and ensure that we have chosen
$\delta<\frac{1}{(2c+1)d}$ so $\Delta >d$.    We define $n_0$, the  bound on the number of vertices of $G_i$ below which we fall out of the iterative
procedure and solve the problem on $G_i$  by brute force as a function of $\epsilon$ and $d$.  We do so implicitly by insisting that it
is large enough so as to satisfy certain inequalities given below.

We determine if $G_i$ has average degree exceeding $4c$. If so, we  find a
 set $F'_i$ of   half of the edges of $G_i$  which satisfy the property of (a)  using an approach due to Frank, Ibaraki, and Nagamochi. Specifically, we apply the main result of \cite{fin} to obtain $F_i'$.
We set $F_i$ to be those edges of $F_i'$ not incident to a vertex of $C$. Since $c \ge 10$, $|F_i| \ge \frac{|E(G_i)|}{4} \ge \frac{|V_i|+|E_i|}{5}$.
We set  $G_{i+1}$ to be $G_i-F_i$, and go on to the next iteration.

 If $G_i$ has more than $n_0$ vertices and average degree less than $4c$ then
we   find output as in (ii), (iii), or (iv) of the compactor algorithm using an algorithm with the following specifications.

\begin{algspec}
\textsc{Low Density Compactor:}
~\\
\emph{Input:} A 3-connected graph $H$ with at least $n_0$ vertices and at most $2c|V(H)|$ edges   and a set $C=\{v^*,s_1,s_2,t_1,t_2\}$.
~\\
\emph{Output:}  one of the outputs (ii)-(iv) of \textsc{Compactor}.
~\\
\emph{Running Time:} $O(|E(H)|+|V(H)|)$.
\end{algspec}

We note that since $\epsilon=(2c+1)\delta$ and $|E(H)| \le 2c|V(H)|$, $\epsilon |V(H)| \ge \delta(|V(H)+|E(H)|)$.

Low Density Compactor  exploits the fact that our bound on the number of edges of $H$  implies that there are at most $2c|V(H)| \over d$
vertices  of degree exceeding $d$ in $H$. Thus, for large $d$, the set $\textsc{Big}_d$ of  vertices of $H$ that have degree at least $d$ contains
only a small proportion of the vertices.
The algorithm greedily   constructs a maximal matching $M$  within the subgraph of $H$ induced by the vertices of degree at most $d$ in $H$ except for those in $C$,
by considering these vertices in non-decreasing order of degree (in $H$) and picking an edge to an unmatched neighbour of each vertex in turn if possible, choosing one with the
lowest possible degree (in $H$).

If $M$ has size less than $\gamma  |V(H)|$ for some $\gamma$ which is much larger than $\epsilon$ but still very small  then the union $B$ of $\textsc{Big}_d,C$ and  $V(M)$ is a small set of vertices such that $V(H)-B$  is a stable set. In this case, LDC shows that we can find a set $S$ of vertices
within $V(H)-B$ satisfying (ii) of the Compactor output.

If  $M$  has  size $\gamma|V(H)|$   then Low Density Compactor will attempt
to massage this matching to obtain a  3-connectivity preserving matching $N$  of size at least  $\epsilon |V(H)|$ all of whose vertices have degree  at most $\Delta$. If LDC  succeeds in constructing such an $M$ then it  returns $M$. Essentially the only reason it can fail to output $M$ is that it finds a large set
of triangles consisting of an edge of $M$ and a third vertex such that all three of these vertices
have degree 3. If this occurs, then contracting all these triangles to vertices leaves a 3-connected graph
so we return this set $T$ of triangles.

\paragraph{Obtaining output (iii) or (iv) in the large matching case}
In massaging $M$ if $M$ is large, LDC first greedily obtains an induced submatching $M'$ of $M$ of size at least $|M| \over 2d-1$ by exploiting the fact that there are at most $2d-2$ edges of $H$ from an  edge $e$ of $M$ to the edges of $M-e$. Then  it greedily chooses a submatching $M^*$ of $M'$ of size at least $|M'|\over 24d$ such that no vertex of degree at most  $12$ is joined to vertices in two distinct edges of $M^*$. It then contracts each of the edges of $M^*$ to obtain a new graph $H^*$.

If $H^*$ has no cutset of size less than 3, then we can set $N=M^*$ and terminate. The key to massaging our matching when
$H^*$ is not 3-connected is to examine the cutsets of size at most 2 in  $H^*$,  and how they interact.

\paragraph{2-cuts in $H/M^*$}
If $v$ is a cutvertex  of $H^*$ then it is either  a cutvertex of $H$ or was obtained by contracting a matching  edge whose endpoints form a cutset in $H$.
So, since $H$ is 3-connected, $H^*$ is 2-connected.  In the same vein,  any 2-cut of $H^*$ corresponds either to (i)  two edges of  $M^*$ whose
endpoints form a 4-cut of $H$,  or  (ii) a three cut of $H$ which contains both endpoints of some edge of $M^*$ and a vertex in no edge of $M$.

\paragraph{2-cuts decomposition trees}
In order to examine how these 2-cuts  interact, it is helpful to discuss various types of tree decompositions  for $H^*$ which decompose it using
2-cuts, and which LDC constructs. These trees  are analogs of the block tree which is the unique tree showing how the cutvertices of $G$
decompose it into its maximal 2-connected components. We digress briefly to introduce their basic properties and then explain how
LDC will use them.

\paragraph{2-cuts decomposition uniqueness}
When we try to build such a tree for a 2-connected graph by considering the 2-cuts within it, complications arise.
Firstly, when we decompose on a 2-cut $\{x,y\}$ we will want to add the edge $xy$ to each piece of the  decomposition  to record the fact that these vertices are
connected by a path through the rest of the graph. However, even if we do this, we cannot construct a unique tree which decomposes a graph  into three-connected
pieces. To see this, it is enough to consider a long cycle. Every pair of non-adjacent vertices of the cycle correspond to a 2-cut and every triangulation of the
cycle corresponds to a decomposition of the cycle into three connected pieces(triangles) which can be represented by a tree--  the vertices are the triangles and chords of the
triangulation and a chord is adjacent to a triangle if it is an edge of the triangle. We will need to settle either for a unique tree decomposition or for 3-connected pieces,
we cannot have both.

It turns out that cycles are actually the only reason that we cannot get both a unique tree and 3-connected pieces. By a {\it strong 2-cut}
we mean a 2-cut whose vertices are joined by three internally vertex disjoint paths ( these paths need not be distinct if the vertices are joined by an edge, all three paths
could just be the edge; of course if the vertices in the cut are not adjacent vertex-disjointness implies distinctness). Every 2-connected graph $J$  has a unique strong 2-cut decomposition tree (we call the abstract tree of this decomposition ''strong 2-cuts tree'').
This tree has a bipartition where one side is the 2 cuts of $J$ and the other is a set of graphs. Each graph is either a cycle or is 3-connected, and is
joined to the set of strong 2-cuts whose endpoints it contains. The tree can be recursively defined as follows:

\begin{enumerate}
\item[(A)] if $J$ has no strong 2-cut then the tree has a single (graph) node and the node corresponding to this graph is $T$,

\item[(B)] if $x$ and $y$ form a strong 2-cut of $J$ then for each component $U$ of $J-x-y$, we let $J_U$ be the graph obtained from the subgraph of
$J$ induced by $V(U) \cup \{x,y\}$ by adding the edge $xy$ if it is not present. We construct a strong 2-block tree for each $J_U$ and then add a (cut)
node corresponding to  $\{x,y\}$ which, for each component $U$ of $J-x-y$,  has an edge to the unique (graph) node of the strong 2-cut tree for $J_U$
which contains the edge $xy$.
\end{enumerate}

\paragraph{2-cuts decomposition construction}
Hopcroft and Tarjan \cite{2conn} showed that this tree can be constructed in $O(|E(G)|)$ time, is unique, and every graph node is either 3-connected or a cycle.
These pieces are called the triconnected components of $J$.
Furthermore, the 2-cuts of $J$  which are not strong are precisely the pairs of non-adjacent vertices in those triconnected components which are cycles.

We can decompose further by choosing a triangulation of each triconnected component which is a cycle, adding the chords of this triangulation as 2-cut nodes,
and its triangles as graph nodes. More precisely, for each cycle triconnected component which is not a triangle, we choose a triangulation of it.
We construct a tree which has  nodes corresponding to the triangles and chords of this triangulation. We delete  the node of  the strong 2-cut
decomposition tree which corresponded to each triconnected component which is a cycle, and for each strong 2-cut which was joined by an arc of the tree to the
corresponding graph node, we add an edge to the node of the tree we have constructed which corresponds to the unique triangle of the triangulation
containing the cycle edge corresponding to the 2-cut (we call the abstract tree of this decomposition ''2-cuts tree'').

We note that since every cut node of the tree we construct has degree at least 2, 
a priori  the leaves of our new tree correspond  to either (i) triconnected components which contain the vertices of only one 2-cut and are separated from the rest of $J$ by this 2-cut, or  (ii) triangles coming from a triconnected component which is a cycle, such that the triangle contains one chord and two cyclic edges whose endpoints do not from a 2-cut. In the latter case, the cut actually cuts of a vertex of degree 2 in $J$.

\paragraph{Small 2-cuts tree}
Now, it is not hard to see that the number of nodes of a 2-cut tree for  a graph $J$
 is at least an eighth  of the number of its vertices which lie in 2-cuts.
So, if the number of nodes of  our 2-cut  decomposition tree for $H^*$   is less than $|M^*| \over 16$ then
there will be at least $|M^*| \over 2$  edges
corresponding to vertices of $H^*$ which are not in any 2-cut. Since $M^*$ is an induced matching and $H^*$ is 3-connected this set of edges
is a 3-connectivity preserving matching. So we let $N$ be this set and terminate.

\paragraph{Large 2-cuts tree}
Otherwise, our 2-cut  decomposition tree for $H^*$ has many nodes,
and like any large tree either contains many leaves or many disjoint  long paths all of whose nodes have degree 2 in the tree.

We could look for our matching by considering either these leaves or paths, and in an earlier version of our paper we did. However, it makes things easier if we consider
leaves instead of paths or if we  can focus on  cuts in $H$ which contain only one vertex coming from a contracted matching
edge rather than two. It turns out that with a little bit more work we can insist on one or the other.

\paragraph{Getting leaves or paths with properties}
Specifically, if we are careful about how we triangulate the cycles of the strong 2-cut tree to
obtain our 2-cut tree, we can show that either there are many leaves, or there is a submatching consisting of  at least $\frac{|M^*|}{30}$ edges of $M^*$ such that no two form a four cut of $H$. If we obtain the latter, we contract  just the submatching  edges and consider the resultant tree. All the two cuts in the resultant graph correspond to a 3-cut of $H$  consisting of one  vertex obtained by contracting a matching  edge and another which is a vertex of $H$.
Now we apply  the argument that shows that we can either obtain the desired matching or many leaves or many long paths.

It remains to sketch how we find $N$ given either many leaves of a 2-cut tree, or many long paths whose
vertices have degree 2 in the tree.

\paragraph{Many leaves case}
Assume first that we find many leaves in one of the 2-cut trees.
Each  such a leaf  $l$  corresponds  either to a triconnected component  of $H^*$ cut off from the rest of $H^*$ by a
2-cut of $H^*$ corresponding to a cutset $X_l$ of $H$ of size at most 4 or to a vertex of $H^*$ cut off by its neighbourhoods which consists of an edge of the matching and a third vertex. . By the \emph{interior} $U_l$  of the leaf, we mean  either the (non-empty) set of vertices of the triconnected component which are not in the cutset, or the vertex which is cut off. These interiors are disjoint so if we find many more than
$\epsilon n$ leaves, we can find  many more than $\epsilon n$ whose interiors have size at most
$\epsilon^{-1}$ and hence such that the vertices of the interior have degree at most $\Delta$.

We would like to find for each of these ``small'' leaves, either   a triangle of degree 3 vertices intersecting the interior,  or an edge  joining two vertices of the interior which is either sweet or well behaved with respect to $[X_l,U_l]$.
If we could do so, we could return  either a large set of triangles, a large  matching of sweet edges, or a large matching of well-behaved edges.
Because of the way we have chosen $M^*$, we will  be able to do  something quite close to this. The only slight complication, is that the sweet edge
may actually contain a vertex of $X_l$ of degree at most $d$. Thus it may intersect $d$ other  sweet edges. However, provided we have many more than
$d \epsilon n$ edges this is not a problem as we can simply chose a  disjoint subset of these sweet edges  to be our matching.

\paragraph{Many paths case}
So, we can assume that both  2-cut trees we constructed   have  few leaves and hence the second has  many disjoint long paths all of whose nodes have degree 2 in the tree.   In this case, we find the desired matching or set of triangles  by looking in the interiors of the subgraphs of $G$ corresponding to these long paths for sweet edges, well-behaved
edges or triangles.

Details are provided in Section \ref{compactiondetails}.

\comment{
\subsection{Overview of remaining details}

Section \ref{compactiondetails} is divided to handle the various technical details in the above discussion. In Section \ref{sect:2blocktree}, we show that we can either uncontract a small number of vertices to obtain 3-connectivity or that there is a way to triangulate cycles of the 2-cut tree so that the tree with triangulated cycles has many leaves or many long paths.

To do so, we start by building a submatching $M$ so that when contracted, each 2-cut contains exactly one vertex which is a contracted edges. (In particular, there are no cycles of length at least 5 in the pre-triangulated 2-cut tree as a cycle of length at least 5 contains two non-consecutive vertices of the cycle that are either both a contracted edges or both not in the submatching.)

Then, if there are many leaves in the triangulated 2-cut tree, we note that many of these leaves are small (of size at most $\epsilon^{-1}$). In Section \ref{sect:interioredges}, in each of the graph of these small leaves, we find an edge whose contraction (by itself) preserves 3-connectivity. Amongst the edges $N'$ found in small leaves, we can find a subset of size at least $|N'|/d$ that form a matching.
If we fail to find an edge in a leaf, we are guaranteed to find a triangle all of whose vertices have degree 3 instead.

We use a similar approach if we find many long paths in the triangulated 2-cut tree of $G/M$. That is, we also only look at paths whose union correspond to small graphs in $G/M$ and select an edge from each of these paths. And again, we can pick a $\frac{1}{d}$ fraction of the selected edges to form a 3-connectivity preserving matching.
If we fail to find an edge in the graph corresponding to a path, we are guaranteed to find a triangle all of whose vertices have degree 3 instead.
}

\section{The Compaction Details}
\label{compactiondetails}

As discussed in Section \ref{compactionoutline}, Compactor applies one of
 the algorithm presented in  \cite{fin} or
 Low Density Compactor so it is enough to give the details of  the latter.

As already noted, we construct a matching $M$ in the set of vertices of degree at most $d$ (other than those in $C$),
 by considering the vertices in nondecreasing order of degree in $H$  and picking an edge to an unmatched edge if possible,
 choosing one of lowest possible degree in $H$.It is easy to do this in linear time. 

Now, letting $\textsc{Big}_d$  be the set of  vertices of degree  at least $d$ in $H$ we see that
$X=\textsc{Big}_d \cup C \cup V(M)$ is a cover  of  the edges  of
$H$.
Since  $|\textsc{Big}_d|$    is  at most $\frac{4c|V|}{d}$,
provided $n_0>\frac{5d}{2c}$ we obtain that either  $M$ has at least  $\frac{2c|V|}{d}$  edges or
$X$ has size at most $\frac{10c|V|}{d}$.

We consider these two possibilities separately.

\subsection{Finding a highly connected subgraph from a small cover}\label{sect:highlyconnected}

If $X$ has size at most $\frac{10c|V|}{d}$ we apply the following algorithm, which
may be of independent interest and then set  $S$ to be
the vertices of $H$ which are not in the output subgraph $J$. We note that if $d$ is large  enough in terms
of $c$  then $S$ has at least  $\frac{|V(H)|}{2}$  vertices.

\begin{algspec}
\textsc{Embedding a Small Cover in a Highly Connected  Small Subgraph .}
~\\
\emph{Input:} An integer c, a  graph $F$ and a partition of $F$ into a cover $X$ and a stable set $S$.
~\\
\emph{Output:} A  subgraph $J$ (which will be 3-connected if $F$ is)  with $X \subseteq V(J)$ such that
 $|V(J)| \le 2c!|X|$ and for any vertex $v$  of $F-J$, every two neighbours of $v$ in $J$
 are joined by $c$ paths of $J$.
~\\
\emph{Running Time:} $O(c \Delta_S(|E(G)|+|V(G)|))$ where $\Delta_S$ is the maximum degree
of a vertex of $S$.
~\\~\\
\emph{Description:}
\paragraph{Ensuring 1-connectivity}
We first find a  subgraph $J_1$ of $F$ containing $X$ with at most $2|X|$ vertices such that for every
component $K$ of $F$, $J_1 \cap K$ is connected. To do so, we consider a  (first) auxiliary graph $F_1$ with vertex set $X$ which contains   all the edges of $F$ joining two vertices of $X$ as well as an edge between every two   neighbours of each vertex of  $S$. We label each edge of $F_1-E(F)$ by a vertex of $S$ adjacent to both its endpoints.  For any component  $K$ of $F$. Since $K$ is connected and $S$ is a stable set, $F_1 \cap K$ is connected. We find a spanning forest $T_1$  for $F_1$.   Now we consider the subgraph $J_1$  of  $G$ induced by the union of $X$ and the vertices of $S$ labelling the  elements of $E(T_1)-E(G)$.  $J_1$ has at most $2|X|$ vertices and
for every component $K$ of $F$, the intersection of  $J_1$ and $K$
 is connected.

\paragraph{Ensuring 2-connectivity}
We next find a subgraph $J_2$ of $F$ containing $X$ with at most $4|X|$ vertices such that for every block $B$ of $F$, $J_2 \cap B$ is 2-connected.
 To do so, we consider a second  auxiliary graph $F_2$ with vertex set $V(J_1)$  which contains   all the edges of $J_1$ as well as an edge between every two   neighbours of each vertex of $S-V(J_1)$. We label each edge of $F_2-E(J_1)$ by a vertex of $S-V(J_1)$  adjacent to both its endpoints. Since $S$ is a stable set,  for every block $B$ of $F$,
the subgraph of $F_2$ induced by $B$ is 2-connected.  As shown by Frank et al.,  we can, in linear time, find  a spanning  subgraph $F_2^*$ of $F_2$ with at most $2|V(F_2)|$  edges such that  the vertex sets of the blocks of  $F_2$ and $F^*_2$ are the same.   We let $J_2$ be the subgraph spanned by $V(J_1)$ and the vertices of $S$ labelling those edges of $F_2^*$  which are not in $J_1$. We claim that  for every 2-connected component $B$ of $F$, $J_2 \cap B$ is 2-connected.
Assume to the contrary that $x$ and $y$ in $B \cap J_2$ are separated
by some cutvertex $v$ of $J_2$.
Since  each vertex in  $S \cap B$  is adjacent to two vertices of $B$ and  $S$ is stable,  we can assume that $x$ and $y$ are
in $V(B) \cap J_1$. Since $F_2^* \cap V(B)$ is 2-connected, it follows that there must be an edge  $wz$ of $F_2^*-v$ such that $w$ and $z$ are in different
components of $J_2-v$.    Clearly, this edge must be labelled $v$. So $v$ is  not in $V(J_1)$. But now,  since $J_1 \cap K$ is  connected
for every component $K$ of $F$,
there is a path of $J_1$ and hence $J_2-v$ from
$x$ to $y$, which is a contradiction.  So $J_2$ is 2-connected.

\paragraph{Ensuring 3-connectivity}
We next find a subgraph $J_3$ of $F$ containing $X$ with at most $12|X|$ vertices such that  every two vertices of $X$ which are joined by 3 internally disjoint paths of
$F$ are joined by 3 internally vertex disjoint paths of $J_3$.    To do so, we consider a third  auxiliary graph $F_3$ with vertex set $V(J_2)$  which contains   all the edges of $J_2$ as well as an edge between every two   neighbours of each vertex of $S-V(J_2)$. We label each edge of $F_3-E(J_2)$ by a vertex of $S-V(J_2)$  adjacent to both its endpoints. Since  $S$ is a stable set,  every two vertices of $X$ joined by three internally vertex disjoint paths of  $F$ are joined by three  internally vertex disjoint paths of $F_3$. As shown by Frank et al., we can, in linear time, find  a spanning subgraph $F_3^*$ of $F_3$ with at most $3|V(F_3)|$ edges such that every two vertices joined by three internally vertex disjoint paths of $F_3$ are joined by three internally vertex disjoint paths of $F_3^*$. We let $J_3$ be the subgraph spanned by $V(J_2)$ and the vertices of $S$ labelling the edges of $F_3-E(J_2)$.We claim that every two vertices $x$ and $y$ of $X$ which are joined by three paths of $F$ are joined by three paths of $J_3$.
Assume to the contrary that this is false for some $x$ and $y$ and  let  $u$ and $v$ be   a cutset of $J_3$ which separates them.
Since $F_3^*$ is 3-connected, there is an edge $xy$ of $F_3^*-u-v$ whose endpoints are in different components of $J_3-u-v$.  \WLOG, this edge  is labelled
$v$. So $v$ is  not in $V(J_2)$. But now,  since  for every block $B$ of $F$, $J_2[V(B)]$ is a block of
$J_2$,   there is a path of $J_2-u$ and hence $J_3-u-v$  from $x$ to $y$,
which is a contradiction. So,   every two vertices of $X$ which are joined by three internally disjoint paths of
$F$ are joined by three internally vertex disjoint paths of $J_3$.We remark that if $F$ is 3-connected, then since  $S$ is stable, and every vertex of $S$ sees at least three vertices of $X$, this implies that $J_3$, and every  induced subgraph of $F$ containing $J_3$ will also be 3-connected

stable, i

\paragraph{Ensuring $c$-connectivity}
We continue in this fashion for $c$ iterations ensuring that for any two vertices of $S$ joined by  $i$ vertex disjoint paths
of $F$ are also joined by $i$ vertex disjoint paths of $J_i$. Further, the size of the
auxiliary graph increases by a factor of at most $i$ in   iteration $i$ for $i>1$.

Now, consider any vertex $v$  of $F-J_c$. Suppose for a contradiction that for  two of its neighbours
$u$ and $v$  the maximum size of a set of internally vertex disjoint paths from $u$ to $v$ in $F_c$  was at most
$i$ for  some $i<c$. Then $u$ and $v$ are joined by $i+1$ vertex disjoint paths of $F$ because we can
add the path $uvw$ to this set. But, since $J_{i+1} \subseteq J_c$ this contradicts the fact that  every two vertices joined by $i+1$ internally vertex disjoint paths of $F$ are joined by $i+1$ internally
vertex disjoint paths  in $J_{i+1}$.

Furthermore,
If $F$ is 3-connected then for any 2-cut $Y$ in $J_3$, some component
of $J_3 -Y$ contains only vertices of $S$. Since $S$ is stable, this component must be a vertex $v$,
Since $F$ is 3-connected and $S$ is stable, $v$ has three neighbours in $X$ and hence in $V(J_3)$ which is a
contradiction.

So, we can return $J=J_c$. This completes our description of the algorithm.
\end{algspec}

We can construct $F_i$ in $O(c \Delta_S|E(F)|+|V(F)|)$ time and it has at most this many edges.
Since there are $c$ iterations, and our treatment of an auxiliary graph takes time  linear in its size,
the  claimed running time bound for the algorithm is correct.

\subsection{Matchings}

The next few subsections detail what we do in case we find a matching $M$ in $H-\textsc{Big}_d$ with at least
$\frac{2c|V|}{d}$ edges.  As mentioned in Section \ref{compactionoutline} we first obtain  a submatching
$M^*$ with at least  $\frac{c|V|}{24d^3}$ edges which is induced and such that no vertex of degree at most  12 is adjacent to two edges of $M^*$.

Then  we contract $M^*$  and build  a special 2-block tree of the resulting graph $H^*$. We then  either find a  large  submatching  $M^+$  of $M^*$  no two edges of which form a 4-cut in $H$  or a large number of  leaves  of the tree.

If we find a large number of leaves, we will find our matching or set of triangles by considering the
interiors of these leaves separately.
If we obtain a matching $M^+$, we contract it and consider a strong 2-cut tree and a  special 2-cut tree
for the resultant graph.
We find either a  large submatching $N$ whose contraction preserves 3-connectivity or a large number  of leaves in the special 2-cut tree, or a large number of long paths whose vertices have degree 2. in the strong 2-cut tree.  In either of the latter cases, we look
for our matching or set of triangles in the interiors of the corresponding subgraphs of $H$.

Forthwith the details.

\subsection{2-cut tree properties}\label{sect:2blocktree}

In this section, we describe and prove properties of the 2-cut tree  needed in the next section.
Some of our results discuss a set $S$ of vertices, when we apply them these will be the vertices
obtained by contracting the edges of a matching.

Recall that a 2-connected graph may have different 2-cut trees.
For example, each triangulation of the 6-cycle gives a different
2-cut tree. We can obtain all 2-cut trees of a graph $J$ from its
strong 2-cut tree  by triangulating each cycle which corresponds to a node of the tree.  In other words, the 2-cut tree only changes depending on
the triangulation chosen.

There is only one triangulation for a cycle of length three so we focus on
 nodes containing cycles of length at least  four
in this section.

We refer to a cycle  of length at least four corresponding to a node of  the strong 2-cut tree  as a \emph{\sncycle}.

We note also that the number of leaves in the 2-cut tree changes
depending on this choice of triangulation. For example if $J$ is a cycle of length $2k$,
the resultant triangulation could result in a tree which is a path and hence has two leaves( if all the chords
come out of the same vertex) or  as in the case of the special decompositions we
consider,  a tree with $k$ leaves.

\begin{definition}
  An \emph{intermediate tree} is obtained from the strong 2-cut tree by triangulating each \sncycle\ $C$ of length at least six 
   corresponding to graph nodes within the strong 2-cut tree as follows. Let
  $C=c_1,\ldots,c_k$. Split $C$ along the $\lfloor k/2 \rfloor$ cuts
  $c_1c_3,c_3c_5,c_5c_7,\ldots$ (ending with either $c_{k-2}c_k$ if
  $k$ is odd or $c_{k-1}c_1$ if $k$ is even).

  A 2-cut tree is {\em special} if it is obtained from an intermediate tree by triangulating all cycle nodes (i.e., both remaining ``interior'' cycles and \sncycle s of length four or five) in any way.
\end{definition}

We can build a special 2-cut tree for $J$ in linear time by first
building the strong 2-cut tree  and then triangulating the \sncycle s as described
above. We note that in the strong 2-cut tree, the interior cycle corresponding
to a \sncycle\ of length $k>4$ has degree $\lfloor \frac{k}{2} \rfloor$.

Our interest in these trees is explained by the following lemma:

\begin{lemma}
\label{specialty}
If the sum of the number of nodes in the union of all the  decomposition node cycles of length at least 6
   of the strong 2-cut tree,  and the number of  strong-2 cuts which either  decompose $J$  into at least  three pieces, or are incident  in the strong 2-cut tree   to   a  non-cycle  node  incident to more than  two strong cuts nodes  is at least $7l$
then every special 2-cut tree for $J$ has at least $l$ leaves.
\end{lemma}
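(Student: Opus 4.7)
The plan is to lower-bound the leaf count $L(T^*)$ of any special 2-cut tree $T^*$ using the standard tree identity $L(T) = 2 + \sum_{v:\, d_T(v) \ge 3}(d_T(v)-2)$, valid for every tree on at least two nodes. Writing $A$ and $B$ for the two quantities appearing in the hypothesis, I will identify three pairwise disjoint families of nodes of $T^*$ whose branching contributions sum to at least $A/6 + B/3 \ge (A+B)/6$; this gives $L(T^*) \ge 2 + (A+B)/6 \ge l$ whenever $A + B \ge 7l$.

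The first family consists of the inner triangles produced by the prescribed triangulation of each long \sncycle. Fix such a cycle $C = c_1 \ldots c_k$ with $k \ge 6$: the intermediate tree replaces $C$ by $\lceil k/2 \rceil$ outer triangles $c_{2i-1}c_{2i}c_{2i+1}$ joined via new cut nodes $\{c_{2i-1}, c_{2i+1}\}$ to an inner cycle on $\lceil k/2 \rceil$ vertices, and any further triangulation of this inner cycle yields exactly $\lceil k/2 \rceil - 2$ inner triangles. The main structural point, which I expect to be the most delicate part to justify, is that since $C$ is a polygon triconnected component, every strong 2-cut attached to $C$ in the strong 2-cut tree corresponds to a virtual polygon edge and hence is a consecutive pair $\{c_j, c_{j+1}\}$; such an old cut re-attaches in $T^*$ to the outer triangle containing the edge $c_jc_{j+1}$ and never touches an inner triangle. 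The chord cuts added by the outer split and by any triangulation of the inner cycle cannot be pre-existing strong 2-cuts either, for otherwise Hopcroft--Tarjan would have refined $C$ further, so they appear as fresh cut nodes of $T^*$. Thus each of the three sides of every inner triangle is a cut-node neighbour in $T^*$, giving that inner triangle degree exactly $3$ and contribution $1$ to the branching sum. Summing over long cycles and using $\lceil k/2 \rceil - 2 \ge k/6$ for $k \ge 6$ yields a total of at least $A/6$.

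The other two families address $B$. Partition $B = B_1 \cup B_2$ according to whether the cut has property (i) or (ii), and let $G_3$ denote the set of 3-connected graph nodes whose degree in the strong 2-cut tree is at least $3$. An old cut preserves its degree when we pass to $T^*$ (any incidence to a long cycle node is merely redirected to the appropriate outer triangle), so every cut in $B_1$ contributes at least $d-2 \ge 1$ to branching. Each $g \in G_3$ is left untouched in $T^*$ and contributes $d_g - 2 \ge d_g/3$. Because every cut in $B_2 \setminus B_1$ is incident to some $g \in G_3$, $|B_2 \setminus B_1| \le \sum_{g \in G_3} d_g \le 3 \sum_{g \in G_3}(d_g - 2)$, so $G_3$-nodes contribute at least $|B_2 \setminus B_1|/3$. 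The inner triangles of long cycles, the $B_1$ cut nodes, and the $G_3$ graph nodes are pairwise disjoint subsets of $V(T^*)$, so these three branching contributions add, yielding total branching at least $A/6 + |B_1| + |B_2 \setminus B_1|/3 \ge A/6 + |B|/3$. The gap between this bound of $(A+B)/6$ and the hypothesis threshold $(A+B)/7$ comfortably absorbs the additive $+2$ of the leaf identity, finishing the plan.
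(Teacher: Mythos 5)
Your proposal follows essentially the paper's approach: both rest on the tree-leaf identity $L = 2 + \sum_{d_v \geq 3}(d_v-2)$ and both draw branching from the same three sources (nodes arising from triangulated long cycles, high-degree strong 2-cuts, high-degree 3-connected graph nodes). The paper packages this as a contrapositive argument via the intermediate tree; you argue directly on $T^*$, which is cleaner but yields the same branching contributions, since inserting degree-2 chord nodes does not change the branching sum.

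There is one genuine flaw in your first family, and it is precisely the point you flagged as ``delicate.'' You claim every strong 2-cut attached to $C$ re-attaches to an \emph{outer} triangle of the prescribed triangulation, so that all three sides of every inner triangle are chord cut-nodes and hence every inner triangle has degree exactly 3. This is correct when $k$ is even (the chords wrap around, so the last outer triangle $c_{k-1}c_kc_1$ absorbs the edge $c_kc_1$), but it fails when $k$ is odd: the chords stop at $c_{k-2}c_k$, so the original cycle edge $c_kc_1$ lies on the \emph{interior} cycle. If $\{c_k,c_1\}$ happens to be a strong 2-cut, that old cut re-attaches to an inner triangle, which is harmless. But if $c_kc_1$ is an actual edge of $J$ and not a 2-cut at all, then the unique inner triangle containing the side $c_kc_1$ has only two cut-node neighbours and degree 2, contributing nothing to branching. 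So for odd $k$ you can only guarantee $\lceil k/2\rceil - 3$ degree-3 inner triangles, not $\lceil k/2\rceil - 2$.

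The fix is routine but changes your constant. You have $\lceil k/2\rceil - 3 \geq k/7$ for odd $k \geq 7$ and $\lfloor k/2\rfloor - 2 \geq k/7$ for even $k \geq 6$, so the first family yields $A/7$ rather than $A/6$; together with your (correct) $|B_1| + |B_2\setminus B_1|/3 \geq B/3$ bound for the other two families, this gives $L(T^*) \geq 2 + A/7 + B/3 \geq 2 + (A+B)/7 \geq 2 + l > l$, so the hypothesis constant $7$ is exactly what is needed. This parity issue is the same one the paper handles with its $|C| - (|C| \bmod 2)$ term, which is the degree of the interior-cycle node and equals $(|C|$ minus a parity correction$)$.
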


\begin{proof}
Every leaf of the intermediate tree contains a leaf of each special tree so it is enough to show the intermediate tree has at least  $l$ leaves. Otherwise,
 it has at most $l$ vertices of degree three and hence the strong 2-cut tree contains at most $l$ nodes which are strong 2-cuts which decompose the graph into  three pieces or
are cycles of length at least 6, or are non-cycle nodes incident to more than 2 strong cuts.

So the sum of the number of  strong 2-cuts which  either decompose $G$ into three pieces,  or are incident to a non-cycle node incident to more that two 2-cuts and
$|C| - (|C| \mod 2)$ over all decomposition node cycles $C$ of length at least six,
is at least $6l$.

It follows that the total degree of the  nodes of the intermediate tree which are either
\begin{compactitemize}
\item
 strong 2-cuts which decompose $G$ into at least three pieces,
\item
 non-cycle nodes incident to more than two 2-cuts or
\item
  correspond to the interior cycle of a decomposition node cycle of length at least six,
\end{compactitemize}
is at least $3l$.
But since there are only $l$ nodes of these three types, we again obtain that there are $l$ leaves.
\end{proof}

This result implies:

\begin{lemma}
\label{specialty2}
For any set $S$ of nodes of $J$ either every special 2-cut tree for $J$ has at least $\frac{|S|}{15}$
nodes or we can find in linear time  a set of $\frac{|S|}{15}$ nodes of $S$ no two of which form a strong 2-cut or lie in
a decomposition node cycle.
\end{lemma}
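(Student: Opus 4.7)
The plan is to apply Lemma~\ref{specialty} with $l = \lceil |S|/15 \rceil$ and to split on whether its hypothesis holds. If the sum in that hypothesis is at least $7|S|/15$, then every special 2-cut tree for $J$ has at least $|S|/15$ leaves, and since leaves are in particular nodes of the tree, the first alternative is established.

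Otherwise the sum is strictly less than $7|S|/15$, which gives two useful bounds: fewer than $7|S|/15$ vertices of $J$ lie in decomposition node cycles of length at least six, and fewer than $7|S|/15$ strong 2-cuts are ``problematic'' in the sense of Lemma~\ref{specialty}. In particular, the set $S_1 \subseteq S$ of vertices that avoid all decomposition node cycles of length at least six satisfies $|S_1| \geq 8|S|/15$. I plan to build the desired set $T$ by a single greedy pass over $S_1$: process its vertices in an order induced by a linear-time traversal of the Hopcroft--Tarjan strong 2-cut tree, and add a vertex $v$ to $T$ unless either (a) some already-chosen $u \in T$ shares a decomposition node cycle with $v$ (necessarily of length $4$ or $5$, since $v \in S_1$), or (b) some already-chosen $u \in T$ forms a strong 2-cut with $v$. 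The lists of length-$4$ and length-$5$ cycles containing each vertex, and of strong 2-cuts containing each vertex, have linear total size and are immediately available from the tree, so this pass runs in linear time.

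The hard part will be lower-bounding $|T|$ by $|S|/15$. My plan is to charge each rejected vertex of $S_1$ to the feature that blocked it: a short cycle containing an already-chosen vertex, or a strong 2-cut whose other endpoint is already chosen. Strong 2-cuts contribute at most one rejection each, and each length-$4$ or length-$5$ cycle contributes at most four. The key step is bounding the number of short cycles and strong 2-cuts that actually act as conflicts in terms of $|T|$ itself. The structural observation driving this is that if many length-$4$ or length-$5$ cycle graph nodes (or many non-problematic strong 2-cuts incident to them) were interacting with $T$, then triangulating these cycles as in the definition of a special 2-cut tree would create many triangle graph nodes and chord cut nodes, and a careful inspection of the intermediate tree built in Lemma~\ref{specialty} would then push its number of leaves above $|S|/15$, contradicting the case assumption. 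Balancing the losses $\tfrac{7}{15} + \tfrac{7}{15} + \tfrac{1}{15}$ coming from long cycles, problematic strong 2-cuts, and residual short-cycle/2-cut blocking yields the promised ratio $|S|/15$.
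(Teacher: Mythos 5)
The first half of your plan is sound and matches the paper: apply Lemma~\ref{specialty} with $l = \lceil|S|/15\rceil$, and if the hypothesis holds you get $|S|/15$ leaves, hence nodes, in every special 2-cut tree. The gap is in the second half.

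Your plan for bounding $|T|$ is not a proof and, as sketched, the greedy does not provably achieve the required size. You restrict to $S_1$ (vertices of $S$ avoiding long decomposition-node cycles) and then run a maximal-independent-set style greedy over $S_1$ in the conflict graph, charging rejections to short cycles and strong 2-cuts. But a single vertex can lie in an unbounded number of strong 2-cuts --- none of them ``problematic'' in the sense of Lemma~\ref{specialty} (think of a fan-like structure where a hub vertex $v$ sits in cut nodes $\{v,a_1\},\{v,a_2\},\dots$, each splitting $J$ into exactly two pieces and incident only to triangle graph nodes). If the traversal order places such a vertex first, the greedy can reject a constant fraction of $S_1$ after a single choice, so the conflict graph has no degree bound and no per-feature charging with constant multiplicity closes the argument. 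Your attempt to rescue this via ``if many short cycles or non-problematic 2-cuts interact with $T$ then the intermediate tree has many leaves'' is not correct: a long path in the strong 2-cut tree alternating triangles (or 4-/5-cycles) with cut nodes has only two leaves yet linearly many short cycles and strong 2-cuts. The number of leaves comes from branching, not from abundance of short cycles or non-problematic cuts, so the contradiction you invoke does not materialize. Finally, your phrase ``balancing the losses $\tfrac{7}{15}+\tfrac{7}{15}+\tfrac{1}{15}$'' is a placeholder, not an argument.

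The paper sidesteps all of this by deleting from $S$ not only the vertices in long cycles but also the vertices lying in the problematic strong 2-cuts (those splitting $J$ into $\ge 3$ pieces or incident to a high-degree non-cycle node), then deleting the non-cycle nodes with more than two incident cuts from the tree. After these deletions, the remaining conflict constraints (shared cut node, shared short cycle node) live in a forest whose relevant bags have at most $5$ vertices, so the conflict graph is $5$-colourable in linear time; the largest colour class has size $\ge 8|S|/75 > |S|/15$. That 5-colouring step is what your greedy-and-charge plan is missing, and it is precisely what controls the many-2-cuts-through-one-vertex situation. To fix your proof you would need to either prune $S_1$ further to remove high-degree vertices in the 2-cut conflict graph and argue that this removal is small, or switch to a colouring argument as the paper does.
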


\begin{proof}
By Lemma \ref{specialty} we can assume that the union of
\begin{compactitemize}
\item
all the  decomposition node cycles of length at least 6 of the strong 2-cut tree and
\item
the strong 2-cuts which either
\begin{compactitemize}
\item
decompose $J$ into at least three pieces or
\item
are incident to a non cycle node incident to more than two 2-cuts
\end{compactitemize}
\end{compactitemize}
contain at most $\frac{7|S|}{15}$ vertices of $S$.

We delete the nodes corresponding to these decomposition node cycles and strong 2-cuts from the tree
and delete the vertices   they contain from $S$  to obtain a set $S'$ . We note that the non cycle nodes which are incident to more than two strong 2-cuts are now isolated vertices of the resultant forest
and contain no vertices of $S'$ which are in strong 2-cuts. We delete these nodes from the
tree although we do not delete the vertices they contain from $S'$.

Every vertex of $S'$ appears in the nodes of a  (possibly empty) tree of the forest we have obtained and no node of the tree
contains more that 5 vertices. Standard results about chordal graphs or graphs of bounded tree
width allow us to 5 colour the vertices of $S'$ in linear time
 so that no two vertices in the same colour class
lie in a node together. The largest colour class has  more than  $\frac{|S|}{15}$ vertices and has the property in (ii).
\end{proof}

Lemma \ref{specialty} also implies.

\begin{lemma}\label{2cutstructcombined}
  Let $J$ be a 2-connected graph  and suppose that the cut trees for $G$ have size $s$ then for     every special  2-cut tree $T$  for  $J$, either 
  \begin{enumerate}
  \item[(i)]
  $T$ has at least $\frac{s}{2000}$ leaves, or
  \item[(ii)]
    the strong 2-cut tree for $J$ contains at least $\frac{s}{2000}$ disjoint paths, each with 141 nodes, all of which begin and    end with a cutting node and contain only vertices which have degree    2 in the tree.
  \end{enumerate}
Furthermore,  we can  either find such a set of paths or determine that (i) holds in linear time.
\end{lemma}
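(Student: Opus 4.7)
The plan is to apply the contrapositive of Lemma \ref{specialty}. I would first build the strong 2-cut tree of $J$ in linear time via Hopcroft--Tarjan, then triangulate every \sncycle\ of length at least six by the prescribed alternating pattern, and triangulate the remaining cycle nodes arbitrarily, to obtain a special 2-cut tree $T$. Counting leaves of $T$ is immediate; if there are at least $s/2000$ of them we output case (i) and stop.

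Otherwise, Lemma \ref{specialty} applied with $l = \lceil s/2000 \rceil$ guarantees that the combined count of (a) nodes in \sncycle s of length at least six and (b) strong 2-cuts with the listed branching property is strictly less than $7s/2000$. From this I would bound the high-degree nodes of the strong 2-cut tree: cycle graph nodes of length $\ge 6$ number at most $7s/12000$ (each has size at least six and contributes its size to (a)); strong 2-cut nodes of tree-degree $\ge 3$ (those decomposing $J$ into three or more pieces) number at most $7s/2000$; and non-cycle graph nodes of tree-degree $\ge 3$ number at most $7s/6000$ (each forces at least three cuts into the count (b)). A short separate argument shows that leaves of the strong 2-cut tree also number at most $O(s/2000)$, since any leaf of the strong tree that is a triangle or a 3-connected piece survives as a leaf of $T$, and the remaining new leaves of $T$ introduced by triangulating longer \sncycle s can be charged back to (a). Thus fewer than $s/320$ nodes of the strong 2-cut tree are leaves or branching, leaving at least $s(1 - 1/320)$ degree-2 nodes organized into at most $s/320$ maximal degree-2 paths.

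Finally I would carve 141-node subpaths greedily from these maximal paths. Because the strong 2-cut tree is bipartite between cut and graph nodes, each maximal degree-2 path alternates types, and since 141 is odd, any 141-node subpath has endpoints in the same class; by choosing the correct starting parity I can force both endpoints to be cut nodes. Two disjoint 141-node subpaths with cut-node endpoints of the same parity must be separated by one extra node, so from a maximal degree-2 path of length $k$ I can extract at least $\lfloor (k+1)/142 \rfloor \ge (k-141)/142$ such subpaths. Summing over all maximal paths gives at least $(D - 141P)/142$ subpaths, where $D \ge s(1 - 1/320)$ and $P \le s/320$, which comfortably exceeds $s/2000$. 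All constructions and the carving proceed in a bounded number of linear-time tree traversals.

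The main obstacle is really just the arithmetic bookkeeping: verifying that the constants 141 and 2000 absorb both the parity overhead of the bipartite alternation trick and the losses from leaves and branching nodes, and checking that the new leaves introduced by cycle triangulation can be charged back cleanly to the (a) term so that leaves of the strong 2-cut tree are as few as leaves of $T$ up to a constant factor. Conceptually the argument is the contrapositive of Lemma \ref{specialty} combined with standard topological-path counting in sparse trees, and no deeper obstruction arises.
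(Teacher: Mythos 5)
Your overall strategy — take the contrapositive of Lemma~\ref{specialty}, bound the leaves and high-degree nodes of the strong 2-cut tree, and carve $141$-node paths from the remaining degree-2 segments, using the cut/graph bipartition and parity to force cut-node endpoints — is the same as the paper's. But there is a genuine hole in the bookkeeping, not just loose constants. The contrapositive of Lemma~\ref{specialty} controls exactly three quantities: vertices lying in decomposition node cycles of length at least $6$, strong $2$-cuts that split $J$ into three or more pieces, and strong $2$-cuts incident to a \emph{non-cycle} graph node of tree-degree at least $3$. A cycle node of length $4$ or $5$ whose tree-degree is at least $3$ contributes to none of these, yet each such node is a branching node that breaks up the degree-2 paths you want to carve, and nothing in your list bounds their number. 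Consequently the assertion that ``fewer than $s/320$ nodes of the strong 2-cut tree are leaves or branching'' is not established.

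The paper closes precisely this gap with a direct argument that bypasses Lemma~\ref{specialty} for those nodes: after deleting the $\geq 6$-cycle nodes, \emph{any} remaining node of degree $\geq 3$ — including a $4$- or $5$-cycle node, via pigeonholing the virtual edges among the two or three triangles of its triangulation — forces a branching node in $T$; since a tree's branching-node count is at most its leaf count minus two, having more than $s/2000$ of them would contradict the standing assumption that $T$ has fewer than $s/2000$ leaves, so ``(i) holds and we stop.'' The same direct observation also repairs your leaf count more cleanly: a $4$- or $5$-cycle leaf of the strong tree is not in term (a), so ``charging back to (a)'' does not cover it, but every leaf of the strong tree does produce a leaf of $T$ after triangulation, which is all you need. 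Two secondary issues worth flagging: the bound $7s/6000$ on branching non-cycle graph nodes ignores that a single counted cut can be incident to two such nodes, so you only get roughly $2\cdot 7s/6000$; and the estimate $D\geq s(1-1/320)$ presumes the strong tree has nearly $s$ nodes, whereas it has strictly fewer since each $4$- or $5$-cycle is one strong-tree node but contributes three to five nodes of $T$, so the final carving arithmetic must be redone against the actual strong-tree size.
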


\begin{proof}
We can assume $T$  has  fewer than $\frac{s}{2000}$ leaves or (i) holds and we are done. 
By Lemma \ref{specialty}, the number of nodes in the decomposition node cycles of $G$ of length at least six  is at most $\frac{7s}{2000}$, and hence their triangulations account for at most $\frac{7s}{2000}$ of the nodes of
$T$.  Every other decomposition node cycle of the strong 2-cut tree corresponds to at most 3 nodes
of $T$.

 So if we construct the strong 2-cut tree and  delete the decomposition cycles nodes  of size  at least six from it, either this leaves a forest with
at least $\frac{1993s}{2000}$ nodes or we can conclude that (i) holds and we stop. At most $\frac{s}{2000}$ of these nodes have degree exceeding 2  and  at most $\frac{s}{2000}$ are leaves or we obtain a contradiction to our assumption.  Deleting them  leaves a set of  at most $\frac{9s}{2000}$ paths  all of whose vertices have degree 2 in the strong 2-cut tree which contain in total  at least  $\frac{1991s}{2000}$  vertices or (i) holds and we stop. Deleting at most  141 vertices from each of these paths
so that they all have length 0 mod 142,  and then partitioning the remainder
we can obtain a set of more than $\frac{s}{2000}$ paths of length 142, each of which contains a path of length 141 beginning and ending with a cutting node. So, we can return (ii).
\end{proof}

\subsection{Looking Up at The Leaves or Down at The Paths}

The results of the last section are used to show that  we can find  either  a 3-connectivity preserving  matching $N$ or a 2-cut tree with many leaves, or a strong 2-cut
tree with lots of disjoint paths of length 141 all of whose vertices have degree 2 and whose endpoints are 2-cuts.  In the latter two cases we want to find
a 3-connectivity preserving matching or set of triangles.

We describe in detail how to do so, in this section. In the next subsection we put these pieces together to complete our description of Low Degree Compactor.

We look at each of these leaves or paths separately, and find a sweet or well-behaved edge or a triangle in each leaf or path. Theorem \ref{tri3con}, Lemma \ref{sweetie} and Lemma \ref{wellbehaved} tells us this will be sufficient for obtaining a 3-connectivity preserving matching or set of triangles

\begin{lemma}
\label{leafedge}
Let $H$ be a 3-connected graph and $M^*$ a matching in $H$ with at least $\frac{c|V(G)|}{24d^3}$ edges which is induced and such that no vertex of degree at most 12 is adjacent to two edges of $M^*$.

Suppose that $X$ is a minimal cutset consisting either of two edges of $M^*$ or an edge of $M^*$ and a vertex and $U$ is a component of $H-X$. Then there is either
\begin{compactenumerate}
\item
a triangle whose vertices have degree three intersecting  $U$, or
\item
an edge which is well-behaved for $[X,U]$, or
\item
a sweet edge which is either contained in $U$ or has one endpoint in $U$,  and the other a vertex of degree at most $d$ in $X$.
\end{compactenumerate}
Furthermore, we can find this triangle or edge simply by examining $X \cup U$ and the degree of each vertex of $X$.
\end{lemma}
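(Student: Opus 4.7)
The plan is a case analysis on $|U|$, in which I will assume that outputs (1) and (2) both fail---no triangle of degree-three vertices intersects $U$ and no edge of $H[U]$ is well-behaved for $[X,U]$---and from this produce a sweet edge of the required form. First I handle the base case $|U|=1$, say $U=\{v\}$. Minimality of $X$ forces $N(v)=X$, so $v$ has degree $|X|\in\{3,4\}$. Since $X$ contains at least one edge of $M^*$, two vertices $a,b\in X$ are joined by an edge of $H$, so $v$ lies in a triangle $T=\{v,a,b\}$ with these matching endpoints. I then split on which additional edges lie in $H[X]$. In each sub-case, either $T$ consists entirely of degree-three vertices (giving output (1)), or some edge from $v$ to a matching endpoint $u\in X$ is sweet at $v$, because $N(v)-u$ is either a clique on at most three vertices or the $P_3$ permitted by the definition; this gives (3), since every matching endpoint of $X$ was chosen from vertices of degree at most $d$. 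The induced-matching property of $M^*$, together with the exclusion that no vertex of degree at most $12$ is adjacent to two edges of $M^*$, is what rules out the awkward adjacency patterns in $H[X]$ when $|X|=4$.

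For $|U|\ge 2$, I search for a well-behaved edge inside $U$. Suppose none exists. Then for every edge $xy$ of $H[U]$, either (ii) fails---yielding a 3-cut $\{x,y,z\}$ of $H$ with $z\in X\cup U$---or (i) fails---yielding a clique $Z\subseteq V(H)-X-U$ of size at most two with $\{x,y\}\cup Z$ separating the vertices of $X$. The universal failure of (ii) across $H[U]$ forces, via a peeling argument in the 3-connected ambient graph $H$, the existence of a vertex $u\in U$ of small degree whose neighbourhood is tightly constrained; meanwhile, failures of (i) involve only a bounded family of external cliques $Z$, which are controlled by 3-connectedness of $H$ outside $U$. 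Combining these local restrictions forces some edge incident to $u$ to be sweet at $u$---its remaining neighbours either form a clique or the special $P_3$ described in the definition---yielding (3). If the $X$-endpoint of this edge is a matching vertex, its degree is automatically at most $d$; otherwise the edge lies entirely in $U$.

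The main obstacle will be this second case: reconciling the two distinct failure modes of well-behavedness into a single local structure at a vertex $u\in U$. Failures of (ii) are intrinsic to $H[X\cup U]$, whereas failures of (i) reach into the external components of $H-X$, so arguing that simultaneously avoiding all three outputs forces these obstructions to collapse into the neighbourhood of a single low-degree vertex requires a careful minimal-counterexample argument restricted to $U$. The conditions that $M^*$ is induced and that no vertex of degree at most $12$ touches two matching edges of $M^*$ are exactly what allow the final degree bound---the $X$-endpoint of a cross-boundary sweet edge having degree at most $d$---to be verified in every sub-case.
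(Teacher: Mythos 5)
Your sketch has two genuine gaps that prevent it from being a proof.

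\textbf{Base case $|U|=1$.} When $|X|=4$ (two matching edges), a degree-$4$ vertex $v$ with $N(v)=X$ would be adjacent to both edges of $M^*$, so by hypothesis this configuration simply cannot arise; there is nothing to split on. The substantive case is therefore $|X|=3$, say $X=\{x,y,z\}$ with $xy\in M^*$. Here $N(v)-z=\{x,y\}$ is the clique, so the sweet edge is $vz$ with sweet end $v$, \emph{not} an edge from $v$ to a matching endpoint: $N(v)-x=\{y,z\}$ has only two vertices and is neither a clique nor a $P_3$ unless $yz\in E$. But $z$ is precisely the one vertex of $X$ whose degree is \emph{not} bounded by $d$. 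Your justification (``every matching endpoint of $X$ was chosen from vertices of degree at most $d$'') does not apply to $z$, and you do not say what to do if $\deg(z)>d$. The paper handles this by invoking the order in which the greedy matching $M$ was built: if $z$ has degree exceeding $d$ and $v$ has degree $3$, then $x$ and $y$ were also considered at degree $3$, forcing $\{v,x,y\}$ to be a triangle of degree-$3$ vertices, i.e.\ output~(1). Without that fallback, the base case is not closed.

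\textbf{Main case $|U|\geq 2$.} You describe ``a peeling argument'' that somehow collapses the two failure modes of well-behavedness into the neighbourhood of a single low-degree vertex, and you explicitly flag that reconciling failure modes (i) and (ii) is the hard part and requires further work. That acknowledged difficulty is in fact where all the content of the lemma lives. The paper's argument does not proceed by assuming universal failure of well-behavedness and peeling; it is a directed construction under a minimal choice of $[X,U]$: (a) if two vertex-disjoint paths through $U$ joining the parts of $X$ do not exist, it locates a separating vertex $z\in U$ and recurses to a smaller pair $[X',U']$ of the $|X'|=3$ type, then lifts the answer; (b) if they do exist, the inducedness of $M^*$ and the degree-$13$ threshold guarantee an internal edge $ab$ for which condition~(i) holds automatically, and the verification of condition~(ii) is pushed onto a \emph{minimal} component $K$ of $H-a-b-c$ contained in $U$, where a single-vertex $K$ yields a sweet edge, a larger $K$ yields a second edge $de$ whose failure of well-behavedness leads to a contradiction, and a final delicate subcase (when $c=z$) produces a degree-$3$ triangle. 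None of this is reconstructed by your sketch, and the phrase ``bounded family of external cliques $Z$'' does not correspond to any step in the actual argument. As it stands the proposal is an outline of an approach, not a proof, and the outline diverges from the one that the paper shows actually works.
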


\begin{proof}
We consider a counterexample $[X,U]$ chosen so as to minimize $|U|$.

\paragraph{$X$ is two edges}
Suppose first that $X$ consists of two edges of $M^*$. Then $U$ is not a single vertex $u$ because then it would have degree at most 4, and the
3-connectivity of $H$ would imply that it has edges to both these edges of $M^*$ which contradicts our choice of $M^*$.

\paragraph{Separating vertex subcase}
Now, if there are not two vertex disjoint paths of $H[X \cup U]$ with interiors in $U$ joining the vertices of one edge to another then there is a vertex $z$ of $U$
separating  these two edges and a component $U'$ of $U-z$ adjacent to $z$ and the vertices $x,y$ of one of the two matching edges.

By the minimality of $U$, by examining $U'$ and $X' = \{x,y,z\}$ and the degree of the vertices in $X'$ we can find one of a triangle whose vertices have degree three intersecting  $U'$, an edge which is well-behaved for
$[X',U']$, or a sweet edge which is either contained in $U'$ or has one endpoint  in $U'$, and the other a vertex of degree at most $d$ in $X'$.
In the case of a triangle or a sweet edge, since $U'$ is contained in $U$, and the only vertex of $X'-X$ is contained in $U$, this triangle or sweet edge satisfies 1. or 3. for $[X, U]$, a contraction to $[X, U]$ being a counter-example.

Thus, we may assume there is a well-behaved edge $ab$ of $U'$ for $[X',U']$. We now show $ab$ is well-behaved for $[X,U]$ (which is a contradiction to $[X,U]$ being a counter-example).

To see that (i) in the definition of well-behaved  holds  with respect to $[X,U]$ we note that  because it holds for $[X',U']$ for any clique  $Z$ of size at most 2
in $H-X-U$, the vertices of $X'$  lie in the same component of $H-Z-a-b$. Since $X'$ separates $X$ from $a,b$ it follows that if two vertices of $X$ lie in different
components of $H-Z-a-b$ then they lie in different components of $H-Z$, which would contradict the 3-connectivity of $H$.
To see that (ii) holds in the definition of well-behaved,we note that because (ii) holds for $[X',U']$ we need only show that if $z$ is in $X \cup U-X'-U'$ then
$H-a-b-z$ is connected. Now, since (i) holds for $[X',U']$, X' lies in the same component of $H-a-b-z$. But since $a,b,z$ is a minimal cutset of $H$, both components
of $H-a-b-z$ must intersect both $N(a)$ and $N(z)$. Since $X'$ separates $N(a)$ from $N(z)$, these components also intersect $X'$, which is a contradiction.

\paragraph{Two vertex disjoint paths subcase}
So we can assume that there are two vertex disjoint paths of $U$ joining the two edges of $M^*$ in $X$.
Since $M^*$ is induced both of these have an internal vertex. If both  have only one internal vertex, then by our choice of $M^*$, these vertices
has degree at least 13, and hence each is joined to a vertex off of the two paths. Otherwise, there is an edge on the interior of one of the paths.
In either case, there is an edge $ab$ of $U$ such that there is a path of $U-a-b$ joining our two matching edges.
Thus, (i) in the definition of well-behaved holds for $ab$ (with respect to $[X, U]$).

If (ii) does not hold, then there is a vertex $c$ of $X \cup U$ such that $H-a-b-c$ is not connected.
Since $H$ is 3-connected, every component of $H-X-U$ sees all but at most one of the at least 3 vertices of $X-c$,
so $H-U-c$ is connected. Hence there must be a component $K$ of $H-a-b-c$ completely contained in $U$.

Pick a minimal $K$ (and $a,b,c$) in the following way.
We can determine if there is such a $c$ by examining $X \cup U$, and if not return the well behaved edge $ab$.
Otherwise, by examining $X \cup U$, we can find   an edge $ab$ of $U$  such that there is a path of $U-a-b$  between the two edges of $X$ and a vertex $c$
such that there is a component $K$ of $H-a-b-c$ completely contained in $U$, chosen so that $|V(K)|$ is minimal.

If $K$ is a single vertex $v$ then it has degree 3, and $vc$ is a sweet edge (with sweet end $v$), a contradiction to our choice of $[X, U]$ as this is 3.
So, by the minimality of $K$, there are two paths joining $c$ to $a,b$ that are disjoint except at $c$ and have non-empty interiors in $K$.  Suppose there is an edge $de$ of $K$ not intersecting both these paths. If $de$ is well-behaved, this is a contradiction to our choice of $[X,U]$.
So we know there must be a vertex $f$ of $U$ such that $H-d-e-f$ is not connected. Clearly $d,e,f$ is a minimal cutset of $H$.
If $f$ is not in $K \cup \{a,b,c\}$ then $a,b,c$ lie in the same component of $H-d-e-f$ and separate $f$ from $d,e$ which yields a contradiction.
But, now, every component of $H-K-a-b-c$ sees all of $a,b,c-f$ so there is a component of $H-d-e-f$ completely contained in $K-d-e-f$ which is a contradiction.

\paragraph{$X$ is an edge and a vertex}
So we can assume that $X$ consists of an edge $xy$ of $M^*$ and a vertex $z$.
Now, if $U$ is just a vertex $u$  then $u$  must have degree 3 and $zu$ is a sweet edge. So,  if $z$
has degree at most $d$ we can return it. Otherwise,  by the  order in which we considered the vertices when we chose $M$,
 $x$ and $y$ also have degree 3  and we have found our desired triangle and are done.

Otherwise, we attempt to find two  paths from
$x,y$ to $z$ disjoint except at $z$. If we cannot find two such disjoint paths,we can find a $z'$ and a component $U'$ of
$H-x-y-z'$ completely contained in $U$. Mimicking the argument above, we see that this contradicts the minimality of $U$.
So, we have two such disjoint paths.  If there is no edge $ab$ of $U$ such that there is a path from $z$ to $x$ or $y$  with interior in $U-a-b$
then both these paths have only one internal  vertex and these are the only two vertices of  $U$. Hence they must be adjacent. But now the edges
of these paths not containing $z$ are sweet and we are done. So by examining $U$ we can find  an edge $ab$ of $U$  such that there is a path from $z$ to $x$ or $y$  with interior in $U-a-b$.

We would be done if $ab$ is well behaved, so there is a vertex $c$ of $U+x+y+z$ and a component $K$ of $H-a-b-c$ completely contained in $U$.
Mimicking an argument above we consider an edge $ab$ of $U$  such that there is a path of $U-a-b$  between the two edges of $X$ and a vertex $c$
such that there is a component $K$ of $H-a-b-c$ completely contained in $U$, chosen so that $|V(K)|$ is minimal.

If $K$ is not a single vertex then  by the minimality of $K$, there are two   paths disjoint except at $c$ with non-empty interiors in $K$
joining $c$ to $a,b$.  Suppose there is an edge $de$ of $K$ not intersecting  both these two paths. We are done if $de$ is well-behaved for $[X,U]$.
so we know there must be a vertex $f$ of $U$ such that $H-d-e-f$ is not connected. Clearly  $d,e,f$ is a minimal cutset of $H$.
If $f$ is not in $K \cup \{a,b,c\}$ then  $a,b,c$ lie in the same component of $H-d-e-f$ and separate $f$ from $d,e$ which yields a contradiction.
But, now, every component of $H-K-a-b-c$ sees all of $a,b,c-f$ so there is a component of $H-d-e-f$ completely contained in $K-d-e-f$ which is a contradiction.
So, in this case, both paths have two edges and  $K$ is just the edge joining 
their midpoints.  But now the edges of these paths not containing $c$ are sweet and we are done.

If $K$ is  a single vertex $v$ which  has degree 3,  $vc$ is a sweet edge  Thus if $c$ is in $U$ or has degree at most $d$ we are done,
so it is $z$.   There are two paths of $H-c$ from $a,b$ to $x,y$  because $H$ is 3-connected. They obviously do not pass through $v$.
We are done if $av$ is well behaved for $[X,U]$. So there must be a vertex $e$ such that $a,v,e$ cuts off some part $U_1$ of the interior of $U$ from the rest of $H$. Since $H$ is 3-connected $U_1$ contains a neighbour of $v$ which can only be $b$. Similarly there is an $f$ such that $v,b,f$ cuts off some part $U_2$ of the interior of $U$ including $a$ from the rest of $H$.

 We consider the components of $H-a-b-v-e-f$. One such component $U_3$ contains $c$. so by the 3-connectivity of $H$ it contains all of $H-U-e-f$. There can be no edge from $U_3$ to $a$ because $b,v,f$ separates $a$ from this part of the graph. There can be no edge from $U_3$ to $b$ because $a,v,e$ separates $b$ from this part of the graph. So, since $H$ is 3-connected, it must have an edge from $U_3$ to $e$ and $U_3$ to $f$. Thus, there is no edge from $a$ to $e$ since $f,b,v$ separates $a$ from $U_3$. In the same vein,  if   there were a neighbour of $a$ other than $f,v,b$, then the component $K$ of $H-a-b-v-e-f$ it is in could not be $U_3$ and could not have edges to $e$. Since it is not $U_3$, $K$ does not have edges to $v$ either. So by 3-connectivity it must have edges to $b$ and $f$, contradicting the fact that $a,v,e$ separates $b$ from $U_3$. So no such neighbour exists. Symmetrically $b$ sees only $f,v$ and $a$. So we get a degree 3 triangles $a,b,v$, and are done.

\end{proof}

\begin{lemma}
\label{preprepathedge}
Suppose the following holds:
\begin{enumerate}
\item
$U$ is  the union of some  components of $H-X$
\item
 $X$  consists of the union of $X_1=\{x_1,y_1,z_1\}$ and $X_2=\{x_2,y_2,z_2\}$ for two (not necessarily distinct)  edges $x_1y_1$ and $x_2y_2$ of $M^*$  and two (not necessarily distinct)
 vertices $z_1$ and $z_2$ (but  either edges are distinct or vertices are).
\item   every component of $H-X-U$ either has an edge to all of  $X_1$ and none 
of $X_2-X_1$   or all of $X_2$ and none of $X_1-X_2$   
and  components with both these neighbourhoods exist,
\item  there is an edge $ab$ of $U $ such that every component of $H[X \cup U]-a-b$ intersects both $X_1$ and $X_2$,  and  there are two vertex disjoint paths   of $H-a-b$ from $X_1$ to $X_2$ with their interiors in $U$,
\end{enumerate}
 Then  there is an edge which is well-behaved for $[X,U]$ or there is a
sweet edge which is either contained in $U$ or has one endpoint  in $U$,  and  the other a vertex  in $X$ which is not $z_1$ if $z_1=z_2$.
Furthermore, we can find such an edge by examining just $X \cup U$ and the degree of the vertices of $X$.
\end{lemma}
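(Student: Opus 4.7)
The plan is to follow the template of Lemma~\ref{leafedge}: choose a counterexample $[X,U,ab]$ satisfying hypotheses~1--4 with $|U|$ minimum, and argue that the given edge $ab$ is itself well-behaved for $[X,U]$ unless we can locate a sweet edge (inside $U$ or crossing into $X$ at an allowed vertex) by descending to a strictly smaller configuration.

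First I would dispatch clause~(i) of the well-behaved definition. Hypothesis~4 supplies two internally vertex-disjoint paths of $H-\{a,b\}$ from $X_1$ to $X_2$ whose interiors lie in $U$; since any clique $Z$ of size at most~$2$ in $V(H)-X-U$ is disjoint from their interiors, both paths survive in $H-\{a,b\}-Z$, so at least one vertex of $X_1$ lies in the same component of $H-\{a,b\}-Z$ as at least one vertex of $X_2$. The edges $x_1y_1,x_2y_2\in E(H)$ then supply $\{x_i,y_i\}$-connectivity, so it remains to link $z_i$ to $\{x_i,y_i\}$. Here hypothesis~3 is used: an external component $U'$ of $H-X-U$ sees all three vertices of $X_i$, and a careful check using the $3$-connectivity of $H$ together with the ``none of $X_{3-i}-X_i$'' restriction in hypothesis~3 shows that deleting at most two vertices of $Z$ cannot simultaneously sever all paths inside the union of the external components from $z_i$ to $\{x_i,y_i\}$.

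Next I would check clause~(ii): that $H-a-b-c$ is connected for each $c \in X \cup U$. By clause~(i) applied with $Z = \{c\}$ when $c \in V(H)-X-U$, any disconnection must leave a whole component $K \subseteq U$ of $H-a-b-c$; choose the triple $(ab,c,K)$ with $|V(K)|$ minimum subject to the hypotheses still holding. When $K=\{v\}$, $v$ has degree $3$ with neighbours $a,b,c$ and $vc$ is sweet with sweet end $v$; this gives the conclusion unless $c=z_1=z_2$, which is ruled out as follows: $z_1=z_2$ forces by hypothesis~2 that the two edges $x_1y_1$ and $x_2y_2$ of $M^*$ are distinct, so the degree-$3$ neighbourhood $\{a,b,z\}$ of $v$ would conflict with the way $M^*$ was chosen to forbid low-degree vertices adjacent to two distinct edges of $M^*$. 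When $|V(K)|\ge 2$, minimality of $K$ gives two internally disjoint paths from $c$ into $K$ ending at $a$ and $b$; any edge $de$ of $K$ off both paths is either well-behaved for $[X,U]$ (and we return it) or $H-d-e-f$ disconnects for some $f$, producing a strictly smaller cut-off component of $U$ and contradicting minimality of $K$. If no such $de$ exists then $K$ is the union of the two paths, each of length~$2$, and the edges of the paths not containing $c$ are sweet with sweet ends their midpoints.

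The main obstacle is step~1: checking clause~(i) uniformly across the three allowed collapse patterns for $X$ (the generic case $|X|=6$, the case $z_1=z_2$ with $|X|=5$, and the case $x_1y_1=x_2y_2$ with $|X|=4$), since the external components supplied by hypothesis~3 may attach to shared vertices and the attachment counts needed to survive the deletion of $Z$ are tightest in those collapsed cases. A secondary obstacle is the forbidden sweet endpoint $c=z_1=z_2$ in step~2, which must be excluded by combining the $3$-connectivity of $H$ with the construction constraints used to extract $M^*$ from $M$.
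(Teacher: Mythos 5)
Your overall plan tracks the paper's: minimize the cut-off component $K$ over choices of edge $ab$ and cut vertex $c$, verify well-behavedness via clauses~(i) and~(ii), and split on whether $K$ is a single vertex. The treatment of clause~(ii) via minimality and the subcase where $K$ is the edge joining the midpoints of two two-edge paths from $c$ to $a$ and $b$ both match the paper.

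There is, however, a genuine gap at the critical subcase $K=\{v\}$ with $c=z_1=z_2$. You claim this case is impossible because $v$'s degree-3 neighbourhood $\{a,b,z\}$ ``would conflict with the way $M^*$ was chosen to forbid low-degree vertices adjacent to two distinct edges of $M^*$''. This is a non sequitur: the hypotheses make only $x_1,y_1,x_2,y_2$ into endpoints of $M^*$-edges --- not $z_1,z_2$ --- and $a,b$ are arbitrary vertices of $U$, not endpoints of $M^*$-edges. So $v$ need not be adjacent to even one $M^*$-edge, and the case cannot be ruled out. In fact this is exactly where the real work of the lemma is: the paper devotes a further paragraph to it, showing that if $av$ were not well-behaved there would be a 3-cut $\{a,v,e\}$, symmetrically $\{v,b,f\}$, and a component analysis of $H-a-b-v-e-f$ forces $a$ to have neighbourhood exactly $\{f,v,b\}$ and $b$ to have neighbourhood exactly $\{f,v,a\}$; at that point $a,b,v$ is a triangle of degree-3 vertices, and (since $\{b,f\}$ is a clique) $av$ is a sweet edge contained in $U$, giving one of the stated outcomes. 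Your proposal would declare victory where the argument actually begins. (The analogous subcase is also why Lemma~\ref{leafedge}'s output list explicitly includes a degree-3 triangle.)

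A milder issue: your clause-(i) argument is more complicated than necessary. Since any clique $Z$ of size at most~2 in $V(H)-X-U$ lies inside a single external component, hypothesis~3 gives another external component $C$ disjoint from $Z$, and $C$ alone ties together all of whichever $X_i$ it attaches to in $H-a-b-Z$; hypothesis~4 then links all of $X$ to that $X_i$. There is no need to separately route $z_i$ to $\{x_i,y_i\}$ or to case-split on $|X|\in\{4,5,6\}$, which you flag as your main obstacle when it is not the bottleneck.
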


\begin{proof}

If $Z$ is a clique of $H-X-U$ of size at most 2  then it is disjoint from some component $C$ of $H-X-U$. \WLOG\ this component has edges to $X_1$.
So, there is a connected component of $H-Z-a-b$ containing $X_1$ and hence, since  every component  $H[X \cup U-a-b]$ intersects both $X_1$ and $X_2$, all of  $X$.
Since $H$ is 3-connected it follows that $H-Z-a-b$ is connected. Now, $ab$ is not well-behaved or we would be done.
Thus, there is a vertex $c$ in $U \cup X$  such that  $H-a-b-c$
is disconnected. Since by hypothesis  there is a path of $H-a-b-c$  from $X_1$ to $X_2$, and every component of $H-U-X$ sees all of $X_1$ or $X_2$,
all of $H-U-c$ lies in one component of $H-a-b-c$. So there is a component of $H-a-b-c$ completely contained in $U$.

So either we can return the well-behaved edge $ab$, or by examining $H[X \cup U]$, we can choose 
 an edge  $ab$  and vertex $c$  of  $U \cup X$  such that  $X$ lies in one component of $H[X \cup U-a-b]$ there are two paths  from $X_1$ to $X_2$ with their interiors in $U-a-b$
 and there is a component  $K$ of $H-a-b-c$ completely contained in $U$,  with  $|V(K)|$ minimal over all possible choices of $ab,c$.

If $K$ is not a single vertex then by the minimality of $K$, there are two   paths disjoint except at $c$ with non-empty interiors in $K$
joining $c$ to $a,b$.  Suppose there is an edge $de$ of $K$ not intersecting  both these two paths. We are done if $de$ is well-behaved for $[X,U]$,
so we know there must be a vertex $f$ of $U \cup X$ such that $H-d-e-f$ is not connected. Clearly  $d,e,f$ is a minimal cutset of $H$.
If $f$ is not in $K \cup \{a,b,c\}$ then  $a,b,c$ lie in the same component of $H-d-e-f$ and separate $f$ from $d,e$ which yields a contradiction.
But, now, every component of $H-K-a-b-c$ sees all of $a,b,c-f$ so there is a component of $H-d-e-f$ completely contained in $K-d-e-f$ which  contradicts our choice of $a,b,c$.So, in this case both paths have two edges and $K$ is the edge 
joining their midpoints. In this case, the edges of this path which do not use $c$ are 
sweet and we are done. 

So $K$ is a single vertex $v$ which  has degree 3, and $vc$ is a sweet edge. Thus, we are already done unless $c=z_1=z_2$.
 There are two paths of $H-c$ from $a,b$ to $X-c$   because $H$ is 3-connected. They obviously do not pass through $v$.
We are done if $av$ is well behaved for $[X,U]$. So there must be a vertex $e$ such that $a,v,e$ cuts off some part $U_1$ of $U$ from the rest of $H$. Since $H$ is 3-connected $U_1$ contains a neighbour of $v$ which can only be $b$. Similarly there is an $f$ such that $v,b,f$ cuts off some part $U_2$ of $U$ including $a$ from the rest of $H$.

 We consider the components of $H-a-b-v-e-f$. One such component $U_3$ contains $c=z_1=z_2$, so    our condition on the components of $H-X-U$ imply
$U_3$  contains all of $H-U-e-f$. There can be no edge from $U_3$ to $a$ because $b,v,f$ separates $a$ from this part of the graph. There can be no edge from $U_3$ to $b$ because $a,v,e$ separates $b$ from this part of the graph. So, since $H$ is 3-connected, there  must be an edge from $U_3$ to $e$ and $U_3$ to $f$. Thus, there is no edge from $a$ to $e$ since $f,b,v$ separates $a$ from $U_3$. In the same vein,  if   there were a neighbour of $a$ other than $f,v,b$, then the component $K$ of $H-a-b-v-e-f$ it is in could not be $U_3$ and could not have edges to $e$. Since it is not $U_3$, $K$ does not have edges to $v$ either. So by 3-connectivity it must have edges to $b$ and $f$, contradicting the fact that $a,v,e$ separates $b$ from $U_3$. So no such neighbour exists. Symmetrically $b$ sees only $f,v$ and $a$. So we get a degree 3 triangles $a,b,v$, and are done. .

\end{proof}

\begin{corollary}
\label{prepathedge1}
Suppose the following holds:
\begin{enumerate}
\item
$U$ is  the union of some  components of $H-X$
\item
 $X$  consists of the union of $X_1=\{x_1,y_1,z_1\}$  and $X_2=\{x_2,y_2,z_2\}$ for two (not necessarily distinct) edges $x_1y_1$ and $x_2y_2$ of $M^*$  and  two (not necessarily distinct) vertices  $z_1$  and $z_2$  not in $M^*$.
\item   every component of $H-X-U$ either has an edge to all of $X_1$  and none of $X_2-X_1$ 
   or to all of $X_2$ and none of $X_1-X_2$
and  components with both these neighbourhoods exist,
\item
there are three vertex disjoint paths from $X_1$ to $X_2$ such that one of these paths $P$ links $x_1$ to $x_2$ and has an internal edge $ab$.
\end{enumerate}
Then  there is an edge which is well-behaved for $[X,U]$ or there is a
sweet edge which is either contained in $U$ or has one endpoint  in $U$,  and  the other a vertex  in $X$ which is not $z_1$ if $z_1=z_2$.
Furthermore, we can find such an edge by examining just $X \cup U$ and the degree of the vertices of $X$.
\end{corollary}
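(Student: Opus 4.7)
The plan is to reduce the corollary to Lemma~\ref{preprepathedge}: conditions (i), (ii), and (iii) in the two statements are identical, so it suffices to deduce the lemma's condition (iv) from the corollary's (iv), after which the lemma's conclusion transfers verbatim.

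For the setup, $ab$ is an internal edge of the path $P$ from $x_1$ to $x_2$, which is vertex-disjoint from $Q_1$ and $Q_2$. Thus $a$ and $b$ are interior vertices of a path from $X_1$ to $X_2$, so they avoid $X$, and (from the algorithmic context in which the three paths are produced inside $H[X\cup U]$) they lie in $U$. Removing $a$ and $b$ destroys $P$ but leaves $Q_1, Q_2$ intact, supplying the two vertex-disjoint paths from $X_1$ to $X_2$ in $H - a - b$ with interiors in $U$ that the lemma requires.

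The substantive point is showing every component of $H[X \cup U] - a - b$ meets both $X_1$ and $X_2$. Since $P$ uses $x_1, x_2$ as endpoints and is vertex-disjoint from $Q_1, Q_2$, the endpoints of $Q_1, Q_2$ in $X_1$ must be $\{y_1, z_1\}$ and in $X_2$ must be $\{y_2, z_2\}$. Combining the edges $x_1 y_1, x_2 y_2 \in M^*$ with the surviving paths $Q_1, Q_2$, every vertex of $X$ lies in a component of $H[X \cup U] - a - b$ that meets both $X_1$ and $X_2$. For a generic $v \in U - \{a,b\}$, 3-connectivity of $H$ yields a path from $v$ to $X$ in $H - a - b$; hypothesis (iii) lets us shortcut any detour through $H - X - U$ via a single $X_i$ (each component of $H - X - U$ attaches into only one $X_i$), so the path may be routed inside $H[X \cup U] - a - b$, placing $v$ in a component that meets both $X_1$ and $X_2$. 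The main obstacle I anticipate is the degenerate case $z_1 = z_2$, which is precisely the configuration flagged in the conclusion's exception clause and requires careful bookkeeping of which vertices of $X$ end up in which component; once condition (iv) of Lemma~\ref{preprepathedge} is verified in this case as well, the lemma's conclusion (exception clause and all) transfers directly.
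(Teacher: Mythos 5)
Your proof follows the paper's argument: take $ab$ to be an internal edge of $P$, note that the other two disjoint paths survive the deletion of $a,b$ and so supply the two disjoint paths required by condition~(4) of Lemma~\ref{preprepathedge}, and use 3-connectivity together with these paths and the matching edges $x_1y_1,x_2y_2$ to show every component of $H[X\cup U]-a-b$ meets both $X_1$ and $X_2$. The one worry you leave open --- the degenerate case $z_1=z_2$ --- is actually harmless: any component of $H[X\cup U]-a-b$ containing $z_1=z_2$ automatically meets both $X_1$ and $X_2$ since that vertex lies in $X_1\cap X_2$, so condition~(4) holds without any extra bookkeeping, which is why the paper's terse proof does not treat it specially.
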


\begin{proof}
We let $ab$ be an internal edge of $P$. Since $H$ is 3-connected every component of $H[X \cup U]-a-b$ intersects $X$ and hence,
because of the existence of $P_2$ and $P_3$, both $X_1$ and $X_2$.
We  apply Lemma \ref{preprepathedge}.
\end{proof}

\begin{corollary}
\label{prepathedge2}
Suppose the following holds:
\begin{enumerate}
\item
$U$ is  the union of some  components of $H-X$
\item
 $X$  consists of the union of   an edge $xy$  of $M^*$  and  two distinct  vertices  $z_1$  and $z_6$,
 \item   every component of $H-X-U$ either has an edge to all of $X_1=\{x,y,z_1\}$ but not $z_6$, 
   or to all of $X_2=\{x,y,z_6\}$ but not $z_1$.  
and  components with both these neighbourhoods exist,
\item  there is a path $P$ of $H-x-y$ containing 6 vertices $z_1,..,z_6$ such that for each $i,~ \{z_i,x,y\}$
separates $\{z_j,j<i\}$ from $\{z_j|j>i\}$.
\end{enumerate}
Then  there is an edge which is well-behaved for $[X,U]$ or there is a
sweet edge which is either contained in $U$ or has one endpoint  in $U$,  and  the other a vertex    in $X$.
Furthermore, we can find such an edge by examining just $X \cup U$ and the degree of the vertices of $X$.
\end{corollary}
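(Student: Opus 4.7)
The plan is to reduce Corollary~\ref{prepathedge2} to Lemma~\ref{preprepathedge} by producing an edge $ab$ on the path $P$ and verifying condition (iv) of that lemma.

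First, I show that every interior vertex of $P$ lies in $U$. If an interior vertex $w$ lay in a component $K$ of $H-X-U$, hypothesis~3 forces $K$'s attachments in $X$ to lie in exactly one of $\{x,y,z_1\}$ or $\{x,y,z_6\}$. Since $P$ lies in $H-x-y$ and meets $z_1,z_6$ only as endpoints, the subpath of $P$ inside $K$ could enter and exit $K$ only through a single vertex ($z_1$ or $z_6$), which is impossible. In particular $z_2,z_3,z_4,z_5 \in U$.

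Second, I take $ab$ to be an edge of $P$ strictly between $z_3$ and $z_4$ (the edge $z_3z_4$ itself if they are $P$-adjacent); its endpoints then lie in $U$. By 3-connectivity of $H$ together with the separations $\{z_i,x,y\}$ from hypothesis~4, there are three internally disjoint $z_1$-to-$z_6$ paths in $H$: one through $z_2$ (which, by applying the separations at $z_3,z_4,z_5$ in turn, must also pass through each of them and so be of the same form as $P$), and the other two passing through $x$ and $y$ respectively. After removing $a,b$, the last two paths are intact. Cutting each at $x$ (respectively $y$) produces four $X_1$-$X_2$ half-paths, and I pair up two vertex-disjoint ones, say the $z_1$-to-$x$ half of the $x$-path and the $y$-to-$z_6$ half of the $y$-path, to serve as the two disjoint $X_1$-$X_2$ paths required by Lemma~\ref{preprepathedge}(iv).

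Third, to enforce the ``interior in $U$'' requirement, I reroute any segment of these half-paths that travels through a component of $H-X-U$. By hypothesis~3, each such component attaches to only $X_1$ or only $X_2$, so any crossing between the two sides happens at $x$ or $y$. Letting $U_P$ be the component of $H-X$ containing the interior of $P$, the 3-connectivity of $H$ forces $U_P$ to attach to at least three vertices of $X$; since it already attaches to both $z_1$ and $z_6$ via $P$, it also attaches to at least one of $x,y$. This lets me substitute any sojourn of a half-path through $H-X-U$ with a detour through $U_P$ (avoiding $a,b$ by choosing detour endpoints on the appropriate side of the cut). Combined with the fact that $\{a,b\}$ is not a cut of $H$, so every component of $H[X \cup U]-a-b$ meets $X$, both halves of condition~(iv) hold, and Lemma~\ref{preprepathedge} applied to $ab$ yields the desired well-behaved or sweet edge.

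The hard part is the rerouting: if $U_P$ attaches only to, say, $x$, and no other component of $U$ attaches to $y$, then the second disjoint path cannot be carried inside $U$. In that case I would fall back to the style of argument used in Lemma~\ref{leafedge} and the proof of Lemma~\ref{preprepathedge}, examining the sparse local structure around the ``missing'' side ($y$ and its neighbours in $X \cup U$) to exhibit a sweet edge directly rather than invoking the lemma.
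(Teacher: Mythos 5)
The approach you take genuinely differs from the paper's, and the difference matters: the gap you flag in your last paragraph is a real gap, and the paper's route is designed to avoid it. You build three internally disjoint $z_1$--$z_6$ paths (through $P$, through $x$, through $y$) and cut the latter two at $x$ and $y$ to get $X_1$--$X_2$ half-paths; but nothing forces the interiors of those half-paths to lie in $U$ (the halves of $Q_x$, $Q_y$ can wander through components of $H-X-U$), so you need the rerouting step, which, as you note, can fail when the relevant component of $U$ attaches to only one of $x,y$. The paper instead works with the component $C_2$ of $H-\{z_1,z_3,x,y\}$ that contains $z_2$ (and symmetrically the component $C_5$ of $H-\{z_4,z_6,x,y\}$ containing $z_5$). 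Because components of $H-X-U$ see only $\{x,y,z_1\}$ or $\{x,y,z_6\}$, and the $z_i$-separations confine $z_4,\ldots,z_6$ and $z_6$-type components to the far side, $C_2$ is entirely contained in $U$ and is disjoint from $z_3,z_4$; by $3$-connectivity $C_2$ has at least three neighbours in $\{z_1,z_3,x,y\}$, hence at least one in $\{x,y\}$. Concatenating the $z_1$--$z_2$ segment of $P$ with a path in $C_2$ to that neighbour yields an $X_1$--$X_2$ path avoiding any edge $ab$ of $P$ between $z_3$ and $z_4$, with interior automatically in $U$; the symmetric construction from $z_5$ gives the second path. No rerouting (and hence no case analysis about which of $x,y$ is reachable) is needed, and the two constructed paths live in $C_2$, $C_5$, which are on opposite sides of the cut $\{z_3,x,y\}$ and hence disjoint.

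Concretely, the missing idea in your write-up is to not start from the three global $z_1$--$z_6$ paths at all, but to locate the two required $X_1$--$X_2$ paths inside the components $C_2$, $C_5$ described above; the containment $C_2,C_5 \subseteq U$ is exactly what your rerouting step was trying to manufacture after the fact. Your ``fall back to the style of Lemma~\ref{leafedge}'' is not an argument -- it describes a plan rather than exhibiting the sweet or well-behaved edge -- so as written the proof is incomplete in the case you yourself isolate. One point in your favour: you choose $ab$ to be an edge of $P$ strictly between $z_3$ and $z_4$, which is the right reading (the paper writes ``apply Lemma~\ref{preprepathedge} to $z_3z_4$'', but $z_3,z_4$ need not be adjacent; the construction only uses that $a,b$ lie on $P$ strictly between $z_3$ and $z_4$ and are therefore avoided by both $C_2$ and $C_5$).
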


\begin{proof}
Since $\{z_1,z_3,x,y \}$ separates $z_2$ from $z_4,z_6$ and hence $H-X-U$, and $H$ is 3-connected,
there is a path of $H-z_1-z_3-z_4$ from $z_2$ to one of $x$ or $y$ whose interior is in $U$.
Symmetrically, there is a    path from $z_5$ to one of $x$ or $y$ whose interior is in $U-z_3-z_4$.
We apply Lemma \ref{preprepathedge} to $z_3z_4$.
\end{proof}

\begin{corollary}
\label{pathedge}
Suppose the following holds:
\begin{enumerate}
\item
$W$ is  the union of some  components of $H-Y$
\item
 $Y$  consists of the union of $Y_1=\{x_1,y_1,z_1\}$ and $Y_{71}=\{x_{71},y_{71},z_{71}\}$ for   two (not necessarily distinct)  edge $x_1y_1$  and $x_{71}y_{71}$ of  $M^*$  and  two (not necessarily  distinct  vertices  $z_1$  and $z_{71}$,
\item   every component of $H-Y-W$ either has an edge to all of $Y_1$ and none of $Y_{71}-Y_1$   or to all of $Y_{71}$ and none of $Y_1-Y_{71}$ and  components with both these neighbourhoods exist,
\item  there are  distinct $Y_1,Y_2,..Y_{71}$ such that $Y_i=x_i,y_i,z_i$ for some edge $x_iy_i$ of $M^*$ and some vertex $z_i$ of $H$
and separates $\cup_{j<i} Y_j$ from $\cup_{j>i} Y_j$,
\end{enumerate}
Then  there is an edge which is well-behaved for $[Y,W]$ or there is a
sweet edge which is either contained in $W$ or has one endpoint  in $W$,  and  the other a vertex  of degree $d$  in $Y$.
Furthermore, we can find such an edge by examining just $Y \cup W$ and the degree of the vertices of $Y$.
\end{corollary}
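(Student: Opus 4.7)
The plan is to reduce directly to Lemma \ref{preprepathedge} applied with $U = W$, $X = Y$, $X_1 = Y_1$, $X_2 = Y_{71}$. Hypotheses (1), (2), (3) of that lemma coincide with (i), (ii), (iii) here — note that $Y_1$ and $Y_{71}$ are distinct sets, so either their matching edges or their $z$-vertices differ, matching the parenthetical in (2). It therefore suffices to produce an edge $ab$ of $W$ satisfying hypothesis (4), and to argue that this edge can be found by examining only $Y \cup W$ together with the degrees of vertices of $Y$.

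I first produce three internally vertex-disjoint paths $P_1, P_2, P_3$ from $Y_1$ to $Y_{71}$ in $H$ with interiors in $W$. Menger's theorem for the 3-element vertex sets $Y_1, Y_{71}$, using 3-connectivity of $H$, yields three such paths in $H$; a shortcut argument based on hypothesis (iii) forces interiors into $W$. Indeed, any incursion of a path into a component $K$ of $H - Y - W$ must exit via the same side of $Y$ that $K$ sees (say $Y_1$), so the path revisits $Y_1$; replacing the path with its suffix from the later $Y_1$-vertex deletes the visit to $K$ while preserving overall disjointness of the three paths.

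Each intermediate separator $Y_i$ (for $2 \le i \le 70$) is a 3-vertex set separating $Y_1$ from $Y_{71}$ in $H$, so each $P_k$ uses exactly one vertex of each $Y_i$ and the three paths together exhaust $Y_i$. I pick the middle separator $Y_{36}$, take $P$ to be a path crossing $Y_{36}$ at a matching endpoint (say $x_{36}$), and let $ab$ be an internal edge of $P$ strictly between its crossings of $Y_{35}$ and $Y_{37}$; since $P$'s interior is in $W$, this $ab$ lies in $W$. Because $ab$ lies on $P$ only, $P_2$ and $P_3$ survive as two vertex-disjoint paths from $Y_1$ to $Y_{71}$ in $H - a - b$ with interiors in $W$, giving the path part of hypothesis (4). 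The component part — that every component of $H[Y \cup W] - a - b$ meets both $Y_1$ and $Y_{71}$ — follows because any such component contains a $Y$-vertex (else $\{a,b\}$ would be a 2-cut of $H$, contradicting 3-connectivity); the surviving $P_2, P_3$ together with the matching edges $x_1 y_1$ and $x_{71} y_{71}$ place four $Y$-vertices in a common component; and the two leftover vertices are dragged in by the two long subpaths of $P$ on either side of $ab$, each of which traverses ${\approx} 34$ intermediate separators.

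The main obstacle is the last verification: ruling out that a leftover endpoint of the broken path $P$ is stranded when $\{a,b\}$ is removed. This is exactly where the length of the chain (71 rather than, say, 3) is used; the surplus of intermediate $Y_i$'s on both sides of $Y_{36}$ means that an isolating cut of a leftover $Y$-vertex in $H[Y\cup W]-a-b$ would, together with $\{a,b\}$, yield an impossible 2-cut in $H$. Once hypothesis (4) is verified, Lemma \ref{preprepathedge} returns either a well-behaved edge for $[Y,W]$ or a sweet edge with one endpoint in $W$ and the other in $Y$; the degree bound on the $Y$-endpoint comes from the construction of $M$ (matching endpoints automatically have degree at most $d$), and when the lemma would assign the sweet endpoint to a high-degree $z_i$, I reselect $ab$ further along the chain — which is possible precisely because there are so many intermediate separators — so that the eventual $Y$-endpoint of the sweet edge is a matching vertex. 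Since Lemma \ref{preprepathedge} finds its edge by examining only $X \cup U$ and the degrees on $X$, and all reselections above stay within $Y \cup W$, the same locality holds for the edge returned here.
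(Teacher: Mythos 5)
You apply Lemma \ref{preprepathedge} directly with $X = Y$, $X_1 = Y_1$, $X_2 = Y_{71}$, but this does not deliver the degree bound the corollary promises. In the sweet-edge branch, Lemma \ref{preprepathedge}'s conclusion only excludes the $X$-endpoint $z_1$ \emph{when} $z_1 = z_2$; when $z_1 \neq z_{71}$ the sweet edge may well land on $z_1$ or $z_{71}$, and these two vertices are exactly the ones that need not have degree at most $d$ (unlike $x_1,y_1,x_{71},y_{71}$, which are endpoints of $M^*$ and hence of bounded degree by construction). Your attempted remedy --- ``reselect $ab$ further along the chain so that the eventual $Y$-endpoint of the sweet edge is a matching vertex'' --- does not rescue this: the edge $ab$ you supply in hypothesis 4 is only an existence certificate, and the lemma's own proof immediately re-optimizes, choosing a fresh edge and cutting vertex $c$ minimizing $|V(K)|$; the sweet $X$-endpoint is that $c$, and you cannot steer it from outside by changing the hypothesis-4 witness.

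This is precisely where the $71$-separator chain is actually used, and it is used differently from what you propose. The paper never applies the machinery to the outer pair $Y_1, Y_{71}$. It case-splits: if $z_6 = z_1$ it applies Corollary \ref{prepathedge1} with $X_1 = Y_1$, $X_2 = Y_6$, so the exclusion clause fires and every remaining vertex of $X$ is either a matching endpoint (degree $\le d$) or an interior vertex already lying in $W$ (so the sweet edge is simply contained in $W$); otherwise all $z_i$ with $5 < i < 67$ lie in $W$, and the paper locates inner separators $Y_i, Y_j$ to which Corollary \ref{prepathedge1} or \ref{prepathedge2} applies, arranged so that the only $z$-vertices in the chosen $X$ are themselves in $W$ and hence harmless. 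The surplus of separators is what makes this inner selection possible; your proposal spends that surplus only on finding a midpoint edge $ab$, which is not where the strength of hypothesis (iv) is needed. The component-verification for hypothesis 4 (showing the leftover $Y$-vertex on $P$ is dragged into the common component) is also asserted rather than argued, but the degree gap above is the decisive flaw.
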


\begin{proof}

Suppose first that $z_6$=$z_1$.
Let $X_1=Y_1$ and $X_2=Y_6$. Let $U$ be the components of $H$ which have edges to both $X_1$ and $X_2$.
Because the $Y_i$ are cutsets, we know that $U \subset W$.

There are 3 vertex disjoint paths $P_1=z_1,P_2,P_3$ of $H[X \cup U']$ from $X_1$ to $X_2$.
Since each $P_j$ intersects each $Y_i$ we know that there total length is at least $8$.
So one of $P_2$ or $P_3$ has an internal edge $ab$. We apply Corollary \ref{prepathedge1}.
Since $z_1=z_2$ and every vertex of $X-z_1$ is either a vertex of  $W$ or a vertex of $Y$ which is in a matching edge and hence
has degree $d$ we are done.

So we can assume that $z_6 \neq z_1$ and $z_{66} \neq z_{71}$. Thus, for all $5<i<67$, $z_i$ is in $W$.
There are 3 vertex disjoint paths $P_1,P_2,P_3$ from $Y_6$ to $Y_{66}$. Suppose first that
one of these paths, $P$, contains 4 distinct vertices  which are an $x_i$ or $y_i$.
Let $X_1$ be $Y_i$ for the lowest indexed $i>5$ such that $P \cap Y_i \neq z_i$.
Let $X_2$ be $Y_i$ for the highest indexed $i<67$ such that $P \cap Y_i \neq z_i$.
Apply Corollary \ref{prepathedge1}. We are done.

It follows that there are at most 11 distinct vertices in $\cup_{5<i<67} \{x_i,y_i\}$ and hence at most
$9$ values of $i$  between $6$ and $66$ for which $\{x_i,y_i\} \neq \{x_{i-1},y_{i-1}\}$. Thus,there must be an $i$ such that
$\{y_i,x_i\}=\{y_{i+5},x_{i+5}\}$. We set $X_1=Y_i$ and $X_2=Y_{i+5}$. Applying Corollary \ref{prepathedge2}, we are done.

\end{proof}

\subsection{Finding a large 3-connectivity preserving matching or a large set of degree 3 triangles}

In this section, we describe how to massage the  matching $M^*$ of size $\frac{cn}{24d^3}$ within the vertices of degree at most $d$  constructed by Low Density Compactor  into either output (iii) for Compactor, i.e.    a 3-connectivity preserving matching of size at most $\epsilon n$  among vertices of degree at most $\Delta$, or output (iv) for Compactor, i.e.   a  set of  $\epsilon n$ disjoint triangles of degree 3 vertices. This completes our description of LDC, Compactor and Iterative Compactor.

To begin we contract the edges of $M^*$  to obtain a graph $H^*$. We let $S$ be the set of vertices to which the edges of $M^*$ were contracted. We construct a special 2-cut tree for $H^*$.
If it contains at least $\frac{|M^*|}{15}$ leaves then  since we are free to make $\delta$ and hence $\epsilon$ as small as we want in terms of $c$ and $d$,
there are  at least $4d\epsilon n$ leaf  nodes and at least $3d \epsilon n$ of these have   interiors with fewer than
 $\epsilon^{-1}$ vertices.  We note this implies that the  vertices of their interior have degree at most $\Delta$. We apply Lemma \ref{leafedge} to find  for each leaf either a degree 3 triangle intersecting its interior, a well-behaved edge of its interior, or a sweet edge with one endpoint in its interior.  This takes a constant time per leaf, as we are looking at a constant sized subgraph.

If there are at least $\epsilon n$  degree three triangles then  since they are distinct, the   3-connectivity of $H$ implies they are disjoint, so we can return this set.
If there are at least $\epsilon n$  interiors containing well-behaved edges  then Lemma \ref{wellbehaved} tells us that the matching consisting of these edges is 3-connectivity preserving, so we return it.

If there are at least $d\epsilon n$  interiors  for which there is  a sweet edge which is either contained in the interior or joins an edge of the interior
to a vertex of degree at most $d$ then we choose a matching from this set by choosing for each vertex not in an interior which is in some of these sweet edges, one of the
at most $d$ such edges it is in. Lemma \ref{sweetie} tells us that this matching  is 3-connectivity preserving, so we return it.

 Otherwise by Lemma \ref{specialty2} we can find a subset $S'$ of $S$ of size at least $\frac{|S|}{15}$ no two members of which lie in a strong 2-cut or in a decomposition node cycle.
Thus, no two of these vertices form a 2-cut of $H^*$. Letting $M^+$ be the edges of $M^*$
which were contracted to the vertices of $S'$, it follows that there are no two edges of $M^+$ whose
four endpoints form a cutset of $H$.

We contract the edges of $M^+$ in $H$ to obtain a new graph $H^+$.
We continue to use $S'$ to represent the vertices to which the edges of $M^+$
are contracted.
We build the strong 2-cut tree and a special 2-cut tree for $H^+$.

If the special  2-cut tree has at least $\frac{|S'|}{4000}$ leaves then there are  at least $3d\epsilon n$ leaf  nodes whose interiors have  at most $\epsilon^{-1} $ vertices.So again,  applying  Lemma  \ref{leafedge} we are done.

If the strong 2-cut tree has at least $\frac{|S'|}{4000}$ disjoint paths of length 141, each beginning
and ending at a cutnode and containing only degree 2 vertices,  then  there are at least ${4d\epsilon n}$ of these paths whose interior contains at most ${\epsilon^{ -1}}$ vertices. We  apply Corollary   \ref{pathedge} to each path to obtain one of:     a degree 3 triangle  intersecting its interior, a well-behaved edge of its interior, or a sweet edge which is either contained in its interior or joins a vertex of its interior to a vertex of degree at most $d$. Proceeding as before, we can find and return one of our desired outputs, in this case.

Otherwise, by Lemma \ref{2cutstructcombined}, the special 2-cut tree, and hence also the strong 2-cut tree
has at most $\frac{|S'|}{2}$ nodes. Now since no two vertices of $S'$ lie in a 2-cut together, there is a most 1 vertex of $S'$ in each node of the strong 2-cut decomposition which is a cycle. So, the number of vertices of
$S'$ in 2-cuts is at most the number of nodes of the strong 2-cut decomposition. Hence there is a set
$S_{last}$ of at least $\frac{|S'|}{2}$ vertices of $S'$ which lie in no 2-cut of $H^+$.

We let $N$ be the edges which were contracted  to obtain $S_{last}$. If there were a 2-cut in $H/N$, since it does not correspond to a 2-cut of $H^+$, it must consist of a 3 cut of $H$ which contains an edge of $N$ and a vertex of some $e \in M^+ - N$ and one of its components consists of the other endpoint $x$  of $E$.
But since $M$ is induced and $x$ has degree at least 3 (because $H$ is 3-connected), this is impossible.
So, we can return $N$.

\section{Reductions and Certificates}
 \label{reductionssec}

 Since the reductions of the \tricname{} $\tric$ we consider are 3-connected, any planar reduction has a unique embedding.
 By a \emph{strong planar reduction} we mean a reduction $F$ of a \tricname{} where every \separator\ is a face of this embedding. We note that the recursive procedure discussed above actually produces
 a strong planar embedding if the two paths do not exist since if some \separator\ is a triangle which is not a face then it is a
 3-cut which separates $C$ from some vertex $v$ of the reduction contradicting the fact that we have obtained a reduction which has no reductions.

 So, we obtain:

\begin{lemma}
 Either the desired 2 paths of an instance of 2-DRP exists or the \tricname{} of that instance has a strong planar reduction.
 \end{lemma}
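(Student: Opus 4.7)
Assume the desired two paths do not exist in $G$; I would exhibit a strong planar reduction of $\tric$. I first apply the iterative proper-reduction procedure described in the Previous Work section to $\tric$: while the current reduction $F$ admits a proper reduction (i.e.\ contains a 3-cut separating some vertex from $C$), replace $F$ by that proper reduction. Since reductions strictly shrink and a reduction of a reduction is a reduction of $\tric$, this terminates at a reduction $F^*$ of $\tric$ with no proper reduction. By the footnote classifying 3-cuts of reductions in the Previous Work section, every 3-cut of $F^*$ must have the trivial form $\{v^*,s_i,t_i\}$ for some $i\in\{1,2\}$. By the Observation preceding Theorem~\ref{seymourthm}, the desired $K_5$-subdivision does not exist in $F^*$; applying Theorem~\ref{seymourthm} to $F^*$ (whose only reduction is itself, so ``existence of a planar reduction of $F^*$'' is just ``$F^*$ is planar''), or equivalently invoking Jung's theorem in the essentially 4-connected case, forces $F^*$ to be planar.

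It remains to show that every separator of $F^*$ is a face of the unique planar embedding of $F^*$. Suppose for contradiction that $U$ is a component of $\tric-V(F^*)$ whose separator $S(U)$ is not a face. Since $S(U)$ is a triangle in a 3-connected planar graph, being non-facial makes it a separating triangle, and hence a 3-cut of $F^*$. Because $F^*$ has no proper reduction, $S(U)$ cannot separate any vertex of $F^*$ from $C$, so by the trivial-form classification $S(U)=\{v^*,s_i,t_i\}$ for some $i$. Then $v^*\in S(U)$ has a neighbour in $U$ by the definition of $S(U)$. However, the neighbours of $v^*$ in $\tric$ are exactly the four terminals $\{s_1,s_2,t_1,t_2\}$: they are defined that way in $G'$, and the 2-cut reductions building $\tric$ cannot alter this neighbourhood, since in a 2-connected block $B$ any 2-cut $\{v^*,y\}$ forces the cut-off side to contain a neighbour of $v^*$ (else $\{y\}$ alone separates it, contradicting 2-connectedness of $B$)—i.e.\ a terminal—which however is always kept in $\tric$. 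Since all four terminals lie in $C\subseteq V(F^*)$ and hence outside $U$, this contradicts $v^*$ having a neighbour in $U$. Thus every separator is a face and $F^*$ is a strong planar reduction.

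I expect the main obstacle to be the planarity claim in the first paragraph: Theorem~\ref{seymourthm} is phrased for the specific root graph $\tric$, so one must either justify applying it verbatim to $F^*$ (using that $F^*$ itself is the only reduction of $F^*$, collapsing the statement of the theorem to planarity of $F^*$) or invoke Jung's theorem after noting that every 3-cut of $F^*$ is trivial. The remainder is a short planar/3-connectedness contradiction supported by the observation that the 2-cut reductions constructing $\tric$ preserve $N_{\tric}(v^*)=\{s_1,s_2,t_1,t_2\}$.
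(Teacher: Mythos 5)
Your proof is correct and follows essentially the same path as the paper's own one-line argument: iterate the proper-reduction procedure until you reach a minimal reduction $F^*$, invoke Theorem~\ref{seymourthm}/Jung to conclude $F^*$ is planar when the paths do not exist, and then argue that a non-facial separator would be a separating 3-cut that yields a forbidden proper reduction. Your version is slightly more careful than the paper's prose in one place: the paper simply asserts that a non-facial separating triangle must separate $C$ from some vertex, which tacitly uses that the separator cannot be $\{v^*,s_i,t_i\}$, whereas you spell this case out explicitly via the observation that $v^*$ has no neighbour outside $C$ in $\tric$ (and that the 2-cut reductions building $\tric$ never add an edge at $v^*$) — a worthwhile clarification but not a different approach.
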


\subsection{Overview}

Both the classical approach, and our new algorithm attempt to construct strong planar reductions  (or the   the desired two disjoint paths).

 The classical approach attempts to construct reductions which cut off as much as possible,
 thereby obtaining a strong planar reduction if one exists. Our algorithm proceeds differently. It attempts to construct strong planar reductions which cut off as little as possible. The reason that we do this is that it turns out that
 the minimal pieces which we cut off in $G_i$ are uncontracted to pieces which will also be cut off in any strong planar reduction
 of $G_{i-1}$. This is the crucial fact which allows us  to  uncompact up the sequence to obtain a solution to  the  triconnected component $T$ of the auxiliary graph of the original 2-DRP instance.

 The following lemmas point out, some pieces which we know must be cut off in any reduction of a compaction of $\tric$.

\begin{lemma}
\label{K33lem}
 Suppose that we have a $K_{3,3}$ subdivision in a compaction  $F$  of a \tricname{} $\tric$. Then for any  strong planar reduction $J$ of $F$ there must be
 a set $X$ inducing a triangular face of $J$  such that five of the six centers of the subdivision are contained in  the union of $X$ and those   components of $F-V(J)$
 which attach at $X$.
\end{lemma}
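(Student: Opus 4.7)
The plan is to argue by contradiction: suppose no triangular face $X$ of $J$ has five of the six centers in the bubble $B_X := X \cup \bigcup\{U : S(U) = X\}$. I will derive a contradiction by producing a $K_{3,3}$ subdivision inside a planar augmentation of $J$.

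Under this assumption, a crossing-count against $|S(U)|=3$ establishes two structural facts. For any component $U$ of $F-V(J)$, any subdivision-path with one endpoint in $U$ and the other outside $U\cup S(U)$ must use an internal vertex of $S(U)$, and internal disjointness of the nine paths forces distinct such crossings to occupy distinct vertices of $S(U)$. If $U$ contained two centers, counting the forced crossings in the sub-cases either exceeds 3 (two centers on the same bipartition side give 6 forced cross-paths; two on opposite sides give 4 forced cross-paths plus at least 2 from the path between them) or forces the escape configuration $y=3$ in which the three opposite-side centers are exactly $S(U)$---but then $B_{S(U)}$ already contains the five centers $\{c_1,c_2\}\cup S(U)$, contradicting the standing assumption. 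Hence every component hosts at most one center. The identical count, applied to two center-components $U_1,U_2$ sharing a face $X=S(U_1)=S(U_2)$, again forces either a contradiction or the analogous escape with $B_X$ containing the five centers $\{c_1,c_2\}\cup X$. So under our assumption, at most one center-containing component attaches to each triangular face.

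Next, augment $J$ to a planar graph $J^*$ by placing, for each component $U$ with a center $c_U$, a new vertex $v_U$ inside the face bounded by $S(U)$ and joining $v_U$ to all three vertices of $S(U)$. Since the additions sit in distinct faces of the unique embedding of $J$, $J^*$ remains planar. Lift the $K_{3,3}$ subdivision from $F$ into $J^*$ as follows: each center $c\in V(J)$ stays put, while each center $c_U$ is replaced by $v_U$; the initial (or terminal) segment of a subdivision-path inside $U_c$ collapses to the single edge $v_{U_c}a$, where $a$ is the path's first-exit vertex in $S(U_c)$ (the three such first-exits from $c_U$ exhaust $S(U_c)$ by internal disjointness); each intermediate traversal of another component $U'$ is replaced by the triangle edge of $J$ between the entry and exit in $S(U')$. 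The earlier crossing count guarantees that each $U'$ hosts at most one non-incident subdivision-path traversal, and that a centered $U'$ admits only the three traversals from its own center, so no two modified paths clash on a triangle edge or on an edge incident to some $v_U$. Internal vertex-disjointness of the modified paths follows from that of the original subdivision (a shared interior vertex in $V(J)$ would already have been shared in $F$). The nine modified paths thus form a $K_{3,3}$ subdivision in the planar graph $J^*$, a contradiction.

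The main obstacle is the bookkeeping of the first step: one must carefully handle centers that happen to lie on $S(U)$ itself (such a center is a path endpoint rather than an internal crossing, and so is excluded from the crossing budget), and verify that the $y=3$ configuration---where the face exactly coincides with the three opposite-side centers---is the only non-contradictory outcome, which then delivers the 5-center bubble the lemma promises.
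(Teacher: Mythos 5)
Your proof is correct and reaches the same conclusion, but it reorganizes the paper's argument in a way that is worth comparing. The paper first shows (via a planar-embedding argument) that some triangular face $X$ has at least two centers in $X\cup U_X$, and then escalates $2\to3\to5$ by counting the subdivision paths that must cross the 3-vertex separator $X$. You instead take the contrapositive: assuming no 5-center bubble, you do the crossing count up front to conclude that every component hosts at most one center and every face hosts at most one centered component, and you then realize the planarity argument concretely by building an auxiliary planar supergraph $J^*$ (a new vertex $v_U$ in each relevant face, joined to $S(U)$) and lifting the $K_{3,3}$ subdivision into it. The core tools are the same --- counting forced crossings against $|S(U)|=3$ and reaching a planar $K_{3,3}$ --- but your ordering avoids the paper's incremental escalation, and the explicit $J^*$ makes the paper's assertion ``we can draw this part of the subdivision within the face bounded by $X$'' precise, at the cost of having to first prove the ``at most one centered component per face'' claim (which the paper never needs). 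One bookkeeping slip: in the opposite-sides sub-case you claim ``4 forced cross-paths plus at least 2 from the path between them,'' but the $c_1$--$c_2$ path lies entirely inside $U$ and need not cross $S(U)$ at all; the argument survives because $4>3$ already gives the contradiction. You also describe only the escape $y=3$ (opposite-side centers exactly filling $S(U)$), but there are other escape configurations --- e.g., opposite-side centers sitting inside $U$ rather than on $S(U)$, or three centered components attaching at the same face --- all of which likewise put at least five centers in the bubble and hence also contradict the standing assumption, so the conclusion stands; it would be cleaner to say explicitly that every case not yielding a crossing contradiction forces $\ge 5$ centers in some bubble.
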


\begin{proof}
For each set $X$ of vertices bounding a triangular face of the unique embedding of  $J$ let $U_X$ be the union of the components of $F-V(J)$  which
 attach at X. We claim that there is some $X$ such that $U_X$ contains at least 2 centers of the subdivision. Otherwise, for each  $X$ inducing a
 triangle  bounding a face, the part of the subdivision contained within $U_X$ consists either of  (i) the interior of a path between two vertices of $X$, (ii)  the interior of two paths from one vertex of $X$ to the other two,  or  (iii) three paths from the center of the subdivision contained in $U_X$ to the vertices of $X$. We can draw this part of the
 subdivision within the face bounded by $X$. But this, along with our embedding of $J$ yields a planar drawing of the subdivision, which is
 impossible.  This proves our claim.

 Let $X$ be such that $U_X$ contains two centers of the subdivision. Suppose  that $U_X$ does not  contain a center on each side of the subdivision.
 Then, for the two centers that $U_X$ does contain, $X$ must intersect all six paths from these centers to the centers on the other side of the subdivision. Now, the only way this can happen is if $X$ is the three centers on the other side of the subdivision, in which case the lemma is true.
  So, we can assume that this is not the  case and hence  there are two centers in $U_X$ which are  on opposite sides
 of the bipartition.

 Now, there are four disjoint paths of the subdivision joining neighbours of these  centers to the four other centers so it follows that $U_X$ must actually
 contain two centers from one side of the subdivision and one from the other.  There are five paths of the subdivision  corresponding to edges of $K_{3,3}$ from  neighbours of these centers to the other centers all of which are contained in  $U_X$ or intersect $X$. Since a vertex is only in two of these paths if it is one of the other
 centers of the subdivision, it follows that either there are  5 centers of the subdivision in $X \cup U_X$ or there are four centers of the
 subdivision in $U_X$. We can assume the latter. Now, if the other two centers of the subdivision are in $V-X-U_X$ then the  (disjoint) interiors of each of the
 at least 4 disjoint paths of the subdivision  between  these other two centers and the four centers in $U_X$ must intersect $X$ which is impossible.
\end{proof}

\begin{corollary}
\label{K33cor}
  Suppose that we have a $K_{3,3}$ subdivision in a reduction $F$ of a \tricname{} $\tric$  with one center $v^*$.
 Let $A$ and $B$ be two disjoint connected subgraphs  each of which  contains one of the centers of the subdivision
 on the same side of the $K_{3,3}$ as $v^*$ along with the interior of the three paths of the subdivision which have this center as an endpoint, but
 contains none of the rest of the subdivision. Then, in any strong planar reduction $J$ of $F$ there must be a triangular face with vertex set $X$ such that both $A$ and $B$ are contained in a component whose separator is $X$.
 \end{corollary}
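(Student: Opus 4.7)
The strategy is to apply Lemma \ref{K33lem} to the $K_{3,3}$ subdivision in $F$ and then use the connectedness of $A$ and $B$ to sharpen the conclusion. Lemma \ref{K33lem} yields a triangular face $X$ of the (unique) planar embedding of $J$ such that five of the six centers lie in $X \cup U_X$, where $U_X$ is the union of those components of $F-V(J)$ whose separator is $X$. Since $v^* \in C \subseteq V(J)$, we have $v^* \notin U_X$, so $v^*$ is either in $X$ or is the unique center missing from $X \cup U_X$.

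The main task is to show that both centers $a \in A$ and $b \in B$ lie in $U_X$. If $v^*$ is the missing center, then $\{a,b\} \cup \{c_1,c_2,c_3\} \subseteq X \cup U_X$, and the sub-case argument inside the proof of Lemma \ref{K33lem} (the one where $U_X$ contains two centers on the same side of the bipartition) forces $X = \{c_1,c_2,c_3\}$ and $\{a,b\} \subseteq U_X$. If instead $v^* \in X$, then $|X \setminus \{v^*\}| = 2$, so $X$ cannot simultaneously contain both $a$ and $b$ and still separate the remaining edges of the subdivision; a planarity argument analogous to the one in Lemma \ref{K33lem}, applied now to the paths from $a$ and $b$ to $c_1,c_2,c_3$ whose interiors live inside the connected subgraphs $A$ and $B$, rules out the remaining configurations in which one of $a, b$ sits in $X$ or is missing from $X \cup U_X$: roughly, if $a$ were on $X$ then the three $a$-to-$c_i$ path interiors (which lie in $A$ and so in $F-V(J)$) would all be pushed into a single component incident to $X$, and the rest of the subdivision could then be drawn inside the remaining faces of $J$, contradicting non-planarity of $K_{3,3}$.

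Once $a, b \in U_X$ is established, let $K_a$ be the component of $F-V(J)$ containing $a$, so $S(K_a) = X$. Because $A$ is connected and its only subdivision vertices are $a$ together with interiors of paths avoiding $v^*, b, c_1, c_2, c_3$, any vertex of $A \cap V(J)$ must lie in $S(K_a) = X$, and any vertex of $A \setminus V(J)$ must lie in the same component $K_a$; hence $A$ is contained in $K_a$ together with (possibly) some vertices of $X$, so $A$ sits inside a component whose separator is $X$. The symmetric argument for $b$ gives a component $K_b$ with $S(K_b) = X$ containing $B$. The principal obstacle will be the middle step: carefully ruling out that $a$ or $b$ lies on the face $X$ when $v^* \in X$, which requires extending the path-counting argument of Lemma \ref{K33lem} to the situation where entire connected subgraphs $A$ and $B$, rather than single vertices, play the role of centers.
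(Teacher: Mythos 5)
Your overall strategy — invoke Lemma \ref{K33lem} to get the triangular face $X$ with separator set $U_X$, then locate $a$, $b$, $A$, $B$ relative to $X$ and $U_X$ — is the right one, and matches the paper. But you have missed the single observation that collapses the case analysis and makes the argument short: \emph{$v^*$ cannot lie in $X$}. Recall that $v^*$ has neighbours only in $C$ (this survives every reduction, because $U$-components are disjoint from $C$ so $v^*$ never enters a separator), while every vertex of $X = S(K)$ has a neighbour in a component $K$ of $F - V(J)$ that is disjoint from $C$. Hence $v^* \notin X$, and since $v^* \notin U_X$ as you noted, $v^*$ is the unique missing center. Your Case 2 ($v^* \in X$) is therefore vacuous, and the somewhat hand-wavy planarity argument you sketch there is unnecessary.

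There are also two genuine gaps in the part you do need. First, in Case 1 you assert that ``the sub-case argument inside the proof of Lemma \ref{K33lem}\dots forces $X = \{c_1,c_2,c_3\}$ and $\{a,b\} \subseteq U_X$'' — but the lemma's proof has a second sub-case in which $U_X$ meets both sides of the bipartition, and there it concludes only that five centers lie in $X \cup U_X$, without pinning $X$ down. So the lemma alone does not give $a, b \in U_X$; you need the following argument, which the paper supplies: each of the three subdivision paths from $v^*$ to the opposite-side centers must cross $X$ (since $v^* \notin X \cup U_X$ but those centers lie in $X \cup U_X$), and because these three paths are pairwise disjoint off $v^*$, they occupy all three vertices of $X$. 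Second, your final paragraph only concludes that $A \subseteq K_a \cup X$, which is weaker than what the corollary asserts (that $A$ is \emph{contained in} a component attaching at $X$). The path-covering observation above immediately gives $A \cap X = \emptyset$: $A$ is disjoint from the three $v^*$-paths by hypothesis, and those paths cover $X$. The same holds for $B$. Once you know $A, B$ are connected, meet $U_X$, and avoid $X$, each must lie wholly inside a single component attaching at $X$, which is exactly the conclusion.
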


\begin{proof}
By Lemma \ref{K33lem}, there is a triangular face $X$ of $J$  such that letting $U$ be the union of the components of $F-V(J)$ which "attach"
 at $X$, we have that  $U \cup X$ contains 5 centers of the subdivision.  We know $v^*$ is not in $U$ because $U$ is disjoint from $C$. Furthermore, $v^*$ is not in  $X$ because it sees nothing not in  $C$ and hence sees none of $U$.  So the other 5 centers of the subdivision must be in $U \cup X$. Thus each of the paths of the
 subdivision corresponding to edges of the $K_{3,3}$ which have $v^*$ as an endpoint must intersect $X$. It follows that $A$ and $B$ are disjoint from $X$. Since they are connected
 and intersect $U$ it follows that they are contained in components of $U$.
\end{proof}

\begin{definition}
 A  {\it ferociously strong planar reduction} $J$ of $H$, is  a strong planar reduction that satisfies  the following property:

 Every \separator\ $X$ with respect to $(J,H)$ is either the separator of
 two different components or the vertices of the unique component with \separator\ $X$ can be partitioned into two so that each part induces a connected subgraph with edges to each vertex of $X$.
\end{definition}

An immediate corollary of Corollary \ref{K33cor} is the following.

\begin{corollary}\label{maximalplanar}
  If $J$ is a ferociously strong planar reduction of $H$ then any strong planar reduction of $H$ has its vertex set contained in  $V(J)$.
\end{corollary}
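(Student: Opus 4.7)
The plan is to argue by contradiction: suppose some strong planar reduction $J'$ of $H$ has a vertex $v \in V(J') \setminus V(J)$. Let $U$ be the component of $H - V(J)$ containing $v$, and let $X = \{x_1, x_2, x_3\}$ be its separator with respect to $(J,H)$. Applying the ferocious hypothesis to $X$, one of the following holds: (a) some other component $U_2$ of $H - V(J)$ also has separator $X$, or (b) $U$ can be partitioned into two connected subgraphs $P_1, P_2$ each having edges to every vertex of $X$.

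In either case, I would exhibit a $K_{3,3}$ subdivision of $H$ with one side of the bipartition being $\{v^*, a, b\}$ and the other being $\{x_1, x_2, x_3\}$, where the connected subgraphs $A \ni a$ and $B \ni b$ (each consisting of the center together with the interiors of the three legs to $X$) are disjoint, contained in $U$, and satisfy $A \cap V(J') \neq \emptyset$. In Case (a) I set $a = v$ and use Menger's theorem in $H$ to obtain three internally disjoint paths from $v$ to the three $x_i$; because $X$ separates $U$ from $V(H) \setminus (U \cup X)$ and the paths end at distinct $x_i$'s, their interiors lie in $U$. A symmetric construction inside $U_2$ supplies $b, B$, and three internally disjoint $v^*$-to-$X$ paths inside $V(J) \setminus U$ (using that $J$ is $3$-connected) give the $v^*$-legs; here $v \in A$ is immediate. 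In Case (b), taking $v \in P_1$ without loss of generality, I construct $A \subseteq P_1 \cup X$ as the subdivision of $K_{1,3}$ arising from a minimal Steiner tree in $H[P_1 \cup X]$ for $\{x_1, x_2, x_3\}$ (whose three leaves are the $x_i$ and which has a unique internal branch vertex $a \in P_1$); $B \subseteq P_2 \cup X$ is built analogously; the $v^*$-legs are as before.

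Applying Corollary \ref{K33cor} to this $K_{3,3}$ subdivision and to $J'$ then yields a triangular face $X^\star$ of $J'$ and a single component $W$ of $H - V(J')$ with separator $X^\star$ such that $A \cup B \subseteq W$. In particular $A \cap V(J') = \emptyset$, contradicting the construction; in Case (a) the contradicting vertex is simply $v \in A \cap V(J')$.

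The main obstacle is Case (b), where we cannot simply set $a = v$: the vertex $v$ need not admit three internally disjoint paths to $X$ inside $H[P_1 \cup X]$, so there is no a priori guarantee that $v$ itself lies in $A$. To secure $A \cap V(J') \neq \emptyset$ here, I would exploit that $v \in V(J') \cap P_1$ combined with the requirement that every component of $H - V(J')$ have a separator of exactly three vertices rigidly forces enough vertices of $P_1$ near $v$ into $V(J')$ — in particular, the Steiner branch vertex $a$, or a vertex interior to one of its legs, must itself lie in $V(J')$. Making this propagation precise, by tracking how the triangle-separator constraint on $J'$ bloats $V(J') \cap P_1$ outward from $v$ along the paths realizing $A$, is the most delicate part of the argument.
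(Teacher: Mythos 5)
Your Case~(a) argument is fine. But Case~(b) has a genuine gap, exactly where you flag it: taking $A$ to be a Steiner tree in $H[P_1\cup X]$ only tells you, via Corollary~\ref{K33cor}, that the tree itself lies in a component of $H-V(J')$, and there is no reason the stipulated vertex $v\in V(J')\cap P_1$ should be anywhere near the branch vertex $a$ or its three legs. The ``propagation'' you sketch --- that the triangle-separator constraint on $J'$ must sweep $V(J')$ from $v$ along $P_1$ until it reaches the Steiner tree --- does not follow from anything you have established, and as written it is the entire remaining argument.

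The missing observation, which makes the corollary genuinely immediate, is that Corollary~\ref{K33cor} does not require $A$ and $B$ to \emph{equal} the center together with its three leg-interiors; it only requires them to be disjoint connected subgraphs that \emph{contain} that center and those interiors and meet no other part of the subdivision. So in Case~(b) take $A=H[P_1]$ and $B=H[P_2]$: each is connected by ferocity; each contains a subdivided $K_{1,3}$ from a branch vertex to $X$ with interior inside $P_i$ (your Steiner construction exhibits it); and each avoids $X$, avoids $v^*$ and its three legs (these live in $V(J)$, hence off $U$), and avoids the other $P_j$. Corollary~\ref{K33cor} then puts all of $P_1\cup P_2=U$ inside a single component of $H-V(J')$, so $v\in U$ cannot be a vertex of $J'$. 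The same device ($A=U$, $B=U_2$) also handles Case~(a) for every vertex of $U$ simultaneously, with no need to pin $a=v$.
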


I.e. if $H$ has a ferociously strong planar reduction then this is the strong planar reduction which cuts off as little as possible of $H$.

 To complete this section we show below that  if a $K_5$ subdivision with set of centers $C$  does not exist in a compaction $H$ then $H$ has a ferociously strong planar reduction.

For each compaction it considers, our algorithm finds and returns either
the desired $K_5$ subdivision or a ferociously strong planar subdivision, along with for each \separator\ $X$ such that only one component $U_X$ attaches at $X$, two subgraphs of $U_X$ partitioning its vertex set both of which are connected and are joined by an edge to each vertex of $X$.


\begin{lemma}\label{posornegcert}
Every reduction $F$ containing all root vertices $C$ of a \tricname{} $\tric$  either
has a $K_5$ subdivision with set of centers $C$ or
has a ferociously strong planar subdivision.
\end{lemma}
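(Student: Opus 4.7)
Assume $F$ has no $K_5$ subdivision with center set $C$; I will construct a ferociously strong planar reduction of $F$. By Theorem~\ref{seymourthm} applied to $\tric$, and hence to $F$ by the Observation (since a reduction of $\tric$ has the desired $K_5$ subdivision iff $\tric$ does), $F$ has a planar reduction. Iteratively perform proper reductions: whenever the current reduction has a $3$-cut $X$ separating some vertex from $C$, cut off the side not containing $C$ and add the missing edges of the triangle on $X$. Each step preserves planarity (the $3$-cut is either a separating triangle in the current planar embedding or the neighbourhood of a degree-$3$ vertex, so removing the non-$C$ side yields a planar graph), and after finitely many steps we obtain a planar reduction $J_1$ of $F$ with no proper reduction.

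I next argue $J_1$ is already a strong planar reduction, adapting the short sketch from the unnamed lemma just above: if some separator $X=S(U)$ failed to bound a face of $J_1$'s embedding, then $X$ would be a separating triangle, and hence a $3$-cut of $J_1$. Since $J_1$ admits no proper reduction, $X$ cannot separate any vertex from $C$; the only remaining possibility is $X=\{v^*,s_i,t_i\}$ splitting $s_{3-i}$ from $t_{3-i}$. But combining $U$ (which is adjacent to all three vertices of $X$) with paths in $J_1$ on the two sides of $X$ yields a $K_5$ subdivision of $F$ centered at $C$, contradicting the hypothesis.

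To upgrade to ferociously strong, among all strong planar reductions of $F$ choose $J$ with $|V(J)|$ maximum. Suppose $J$ is not ferociously strong: some separator $X=\{a,b,c\}$ is the unique separator of a single component $U$, and $V(U)$ admits no partition into two connected subgraphs each adjacent to every vertex of $X$. If $|V(U)|=1$, say $V(U)=\{v\}$, then $v$ has degree exactly $3$ in $F$ with neighbourhood $X$, and adding $v$ to $V(J)$ and embedding it inside the face $X$ produces a $3$-connected strong planar reduction of strictly larger size, contradicting maximality. If $|V(U)|\ge 2$, I aim to absorb all of $V(U)$ into $V(J)$. The candidate $J'$ on vertex set $V(J)\cup V(U)$ is automatically $3$-connected and a valid reduction of $F$ by the Remark following the definition of reduction; the only thing requiring verification is that the subgraph $F[V(U)\cup X]$ together with the triangle on $X$ admits a planar embedding with $X$ on the outer face, so that $V(U)$ can be drawn inside the face $X$ of $J$'s embedding.

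If this local planar embedding did not exist, then $F[V(U)\cup X]$ plus the triangle on $X$ would contain a $K_5$ or $K_{3,3}$ subdivision. Using the $3$-connectivity of $J$, there are three internally disjoint paths in $J$ from $X$ to each of the five vertices of $C$; by routing these paths onto the local Kuratowski obstruction, and in the $K_{3,3}$ case invoking Corollary~\ref{K33cor} with $v^*$ as the designated center (obtained after routing one side of the $K_{3,3}$ through $J$ to $v^*$), or performing an analogous direct construction in the $K_5$ case, we extract a $K_5$ subdivision in $F$ with centers $C$, contradicting the hypothesis. Consequently $J'$ is a strong planar reduction of strictly larger size than $J$, contradicting maximality. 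The main obstacle will be this final global-lifting step: translating a local Kuratowski obstruction inside $F[V(U)\cup X]$ into a global $K_5$ subdivision of $F$ with center set $C$, by carefully routing disjoint paths from $X$ through $V(J)\setminus V(U)$ to each vertex of $C$ while avoiding interference with the local obstruction.
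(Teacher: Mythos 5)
Your approach is genuinely different from the paper's: you take the \emph{maximum}-vertex strong planar reduction $J$ and try to grow it by absorbing a bad component $U$ wholesale, whereas the paper works with a \emph{minimal counterexample} $F$, finds a $K_{3,3}$ subdivision, cuts at a $C$-side-maximizing $3$-cut $X$, and applies the inductive hypothesis to the smaller reduction $F^*$, with Claim~\ref{k33subdivisionclaim} supplying the required partition of the unique attaching component. Your first two phases (existence of a strong planar reduction via iterating proper reductions, and the observation that absorbing a degree-$3$ singleton contradicts maximality) are fine, and your reduction of the $|V(U)|\ge 2$ case to ``$H:=F[V(U)\cup X]$ plus the triangle on $X$ is planar'' is correct (since $H-X=U$ is connected, $X$ is non-separating, hence a face of the unique embedding of the $3$-connected $H$ when it is planar).

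The gap is the ``global-lifting'' step, and I want to flag that it is not merely hard but aimed at an unattainable target. Once $J$ exists, $F$ already has a planar reduction, so the easy direction of Theorem~\ref{seymourthm} (or simply: a planar graph has no $K_5$ subdivision, and the Observation transfers subdivisions between $F$ and $J$) tells you that $F$ has \emph{no} $K_5$ subdivision with centers $C$. So the implication ``$H$ non-planar $\Rightarrow$ $F$ has a $K_5$ centered at $C$'' cannot be established by exhibiting such a $K_5$: none exists, and trying to ``route three internally disjoint paths from $X$ to each of the five vertices of $C$'' onto the Kuratowski obstruction will necessarily fail (fifteen such paths cannot be internally disjoint, and Corollary~\ref{K33cor} constrains where $K_{3,3}$ subdivisions force cuts — it does not manufacture $K_5$ subdivisions). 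The correct derivation of a contradiction from ``$H$ non-planar'' must instead show that $J$ is not maximal, i.e., that one can absorb \emph{part} of $U$ — one finds a $3$-cut $X'$ inside $V(U)\cup X$ cutting off a smaller piece, and argues that the enlarged reduction is still strong planar. That is exactly the content the paper obtains by applying the inductive hypothesis to $F^*$ and then invoking Claim~\ref{k33subdivisionclaim} in the case where $X$ survives as a face; your maximal-$J$ framework has no inductive hypothesis available and would need a direct proof of something equivalent to Claim~\ref{k33subdivisionclaim}, which your sketch does not supply.
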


Our proof requires the following technical claim whose proof is deferred to the end of this section.

\begin{claim}\label{k33subdivisionclaim}
  Suppose that $F$ is a reduction of a \tricname{} $\tric$ with set of root vertices $C$ and $X$ is a cutset of size 3 separating 5 of the 6 centers of some  $K_{3,3}$ subdivision of $F$ from $C$, chosen so as to maximize the number of vertices in the component of $F-X$ which intersects $C$. Then either there are two components of $F-X$ disjoint from $C$ which attach at $X$, or there is a partition of the vertices of the  unique  such component  of $F-X$ attaching at $F$ into  two  connected subgraphs, each of which is joined to all three vertices of $X$ by an edge.
\end{claim}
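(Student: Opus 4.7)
The plan is to assume the first alternative fails --- that is, $F-X$ has exactly one component $U$ disjoint from $C$ (recall that 3-connectivity of $F$ forces every component of $F-X$ to be adjacent to all three vertices of $X$, so ``attaches at $X$'' is automatic for components of $F-X$) --- and to construct the desired partition of $V(U)$ directly.

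I would begin by labelling the centers of the given $K_{3,3}$ subdivision as $\{a_1, a_2, a_3\}$ and $\{b_1, b_2, b_3\}$ so that five of them lie in $V(U) \cup X$; after permuting indices, assume only $a_1$ fails to be separated from $C$ by $X$, so $a_1$ lies in $X$ or in the component $K$ of $F-X$ containing $C$. The nine internally disjoint subdivision paths interact with $X$ in a very constrained way: each of the three paths $a_1 b_j$ must cross $X$, and by disjointness together with $|X|=3$ they use the three vertices of $X$ exactly once each. After relabelling, the $a_1 b_j$-path meets $X$ at $x_j$, and the remaining six paths $a_i b_j$ for $i \in \{2,3\}$ avoid $X$ entirely and hence lie inside $V(U)$, witnessing a $K_{2,3}$-subdivision in $U$ with branch vertices $a_2, a_3, b_1, b_2, b_3$.

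From this $K_{2,3}$ I would build a candidate partition. Take $A_0$ to be $\{a_2\}$ together with the interiors of the three paths $a_2 \to b_j$, and $B_0$ to be $\{a_3\}$ together with the interiors of the paths $a_3 \to b_j$; both are connected by construction. I would then distribute the three $b_j$'s between $A_0$ and $B_0$ and attach each remaining vertex of $V(U)$ to whichever current side it has a neighbour in, maintaining connectivity throughout. This yields a partition $V(U) = A \cup B$ in which $U[A]$ and $U[B]$ are connected.

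The main obstacle is to guarantee that both $A$ and $B$ are adjacent to every $x_j$. The $V(U)$-tail of the $a_1 b_j$-path starts at $x_j$, so $x_j$ has at least one neighbour in $V(U)$, and the freedom to place the $b_j$'s and route the remaining vertices on either side can be used to hand each $x_j$ a neighbour on at least one side. The critical case is when some $x_j$ has all of its $V(U)$-neighbours on one side, say $A$. Here I would invoke the maximality of $X$: I would exhibit a vertex $v$ on the $B$-side (for instance one whose removal together with the two other vertices of $X$ cuts $x_j$ off from the $K_{3,3}$-heavy part of $U$) so that $X' = (X \setminus \{x_j\}) \cup \{v\}$ is a 3-cut of $F$ whose $C$-side properly contains that of $X$ (it gains at least $x_j$). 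Checking that $X'$ still separates five centres of some $K_{3,3}$ subdivision from $C$, possibly after rerouting the paths of the original subdivision through $x_j$ and $v$, then contradicts the choice of $X$. The case analysis required to handle the position of $a_1$ (in $K$ versus in $X$), the possibility that some $b_j$ or $a_i$ lies in $X$, and the correct selection of the replacement vertex $v$ is the technical heart of the proof.
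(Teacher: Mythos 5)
Your approach diverges substantially from the paper's, and there are genuine gaps in your argument that would be hard to repair.

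First, the structural assumption that drives your construction is not justified. You assert that the three paths $a_1 b_j$ cross $X$ using each vertex of $X$ exactly once, and that the remaining six paths $a_i b_j$ ($i \in \{2,3\}$) avoid $X$ entirely. Neither holds in general: a center among $\{a_2,a_3,b_1,b_2,b_3\}$ could itself lie in $X$ rather than in $U$; if $a_1 \in X$ then the $a_1 b_j$ paths start inside $X$ and may not use three distinct cut vertices; and even when both endpoints of an $a_i b_j$ path lie in $V(U) \cup X$, its interior may dip through $X$ into the $C$-side and back. So the $K_{2,3}$ you want inside $U$ need not exist where you say it does.

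Second, the partition step (``attach each remaining vertex of $V(U)$ to whichever current side it has a neighbour in'') does not by itself produce a well-defined partition or preserve connectivity, and more importantly it gives you no leverage on the hard requirement, which is adjacency of both parts to all of $X$. Your maximality argument to fix this is the genuine missing idea: you propose replacing $x_j$ by a vertex $v$ ``on the $B$-side'' so that the $C$-side grows, but since the partition $(A,B)$ is an arbitrary by-product of a greedy sweep, there is no reason such a $v$ exists, no reason the resulting $X'$ still cuts off five centers of a $K_{3,3}$ subdivision, and no control on how to re-route the subdivision through $x_j$ and $v$.

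The paper avoids all of this by never trying to partition $U$ directly. It proves the claim in two stages. First (Lemma~\ref{innerlemma1}) it extracts from six of the nine subdivision paths a theta subgraph $H$ (three internally disjoint $d_1$--$d_2$ paths through the three centers on one side), and takes three paths $Q$ from $X$ to the centers chosen to minimize non-subdivision edges; the first hitting points of $Q$ on $H$ form a set $Y$ whose deletion splits $H$ into two connected pieces $A,B$ each adjacent to all of $Y$, with three vertex-disjoint $X$--$Y$ paths outside $A \cup B$. Second (Lemma~\ref{innerlemma2}) it invokes the maximality of $X$ to perform a shifting step: while $Y \neq X$, there is a path $P$ from $A$ to $X$ outside $Y$, and the first vertex $z$ of $P$ on one of the three $X$--$Y$ paths $R$ lets you swap one vertex of $Y$ for $z$, absorb $P - z$ into $A$ and the inner segment of $R$ into $B$, strictly enlarging $A \cup B$ while preserving the invariants. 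Iterating until $Y = X$ gives the partition. This controlled ``grow $A \cup B$ / push $Y$ outward'' mechanism is precisely what your sketch is missing, and it is what makes the adjacency condition to all of $X$ fall out automatically rather than needing a separate swap argument on an arbitrary partition.
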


\begin{proof}(of Lemma \ref{posornegcert})
Consider a minimal counterexample $F$ to the statement.
Since $F$ is 3-connected and not $K_5$ either it is  planar or it has a $K_{3,3}$ subdivision \cite{asanok33}. If $F$ is planar then $F$ is a ferociously strong planar reduction of itself.
Otherwise, $F$ has a $K_{3,3}$ subdivision and we find a 3-cut $X$ separating 5 of the 6 centers of this subdivision from $C$ and subject to this maximizing the number of vertices contained in the component of $F-X$ intersecting $C$.

We let $F^*$ be the reduction of $F$ obtained by deleting the components of $F-X$ disjoint from $C$ and adding edges so $X$ is a clique. Now, $F$ has no $K_{5}$ subdivision with set of centers
$C$ which implies that $F^*$  has no such subdivision and is not $K_5$.   Thus, the minimality of $F$ implies that $F^*$ has
a ferociously strong planar reduction $J$.

If there is a vertex $x$  of $X$ which is not in $J$, then it is contained in some component $U$ of $F^*-V(J)$.
We add the vertices of $F-F^*$ to $U$ to obtain a  strong reduction of $F$. If there are two components
of $F^*-S(U)$ which are disjoint from $C$ then this is a ferociously strong reduction. Otherwise, since $J$
was ferociously strong,  $U$ can be partitioned into $A$ and $B$ as in the definition of ferociously strong.
One of these two connected graphs contains $x$, adding the vertices of $F-F^*$ to it, we see that our new
reduction of $F$ is ferociously strong.

If there is no such vertex then $X$ forms a triangle of $J$. This must be a face because $F^*-X$ is connected
and hence so is $J-X$. We note that in this case $J$ is also a strong reduction of $F$. The lemma follows by Claim \ref{k33subdivisionclaim}.
\end{proof}

\begin{proof}(of Claim \ref{k33subdivisionclaim})
We prove this claim in two steps. For a choice of $X$ and $K_{3,3}$
subdivision as in the statement of the claim, we can assume $S(X)$ is
connected or the claim follows by using its components.

Now, let $C$ and $D$ be the two sides of the subdivision.

\begin{lemma}\label{innerlemma1}
For any such $X$ and $K_{3,3}$ subdivision, $S(X)
\cup X$ contains a set $Y$ and two disjoint connected subgraphs $A$ and $B$
 such that
\begin{enumerate}
\item   there are edges from each of $A$ and $B$ to each vertex of $Y$,
\item   there are three vertex disjoint paths of $G-A-B$ from $X$ to $Y$
\item   five of the centers of the subdivision live in $A \cup B \cup Y$.
\end{enumerate}
\end{lemma}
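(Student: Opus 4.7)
\medskip

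\noindent\textbf{Proof plan.} The plan is as follows. First, since the $K_{3,3}$ subdivision has two sides $C$ and $D$ with three centers each, and at least five of its six centers lie in $S(X)\cup X$, the pigeonhole principle gives that one side---say $D$---has all three of its centers in $S(X)\cup X$, while the other has at least two. I denote the $D$-side centers in $S(X)\cup X$ by $d_1, d_2, d_3$ and two $C$-side centers in $S(X)\cup X$ by $c_1, c_2$, reserving $c_3$ for the remaining $C$-side center. I set $Y = \{d_1,d_2,d_3\}$; property (3) of the lemma will then follow as soon as I arrange $c_1 \in A$ and $c_2 \in B$.

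Next, I define $A$ to be $\{c_1\}$ together with the interiors of the three subdivision paths from $c_1$ to $d_1$, $d_2$, and $d_3$, and $B$ analogously using $c_2$. Because distinct subdivision paths share only their center endpoints, $A$ and $B$ are disjoint; each is a subdivided star and hence connected. The vertex immediately before $d_i$ on the subdivision path from $c_1$ lies in $A$ and is adjacent to $d_i$, and symmetrically for $B$, establishing property (1). Since $\{c_1, c_2, d_1, d_2, d_3\} \subseteq A \cup B \cup Y$, property (3) follows.

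The heart of the plan is property (2). To find three vertex disjoint paths from $X$ to $Y$ in $F-A-B$, I exploit $c_3$ and the three subdivision paths from it to the $d_i$: these are internally disjoint and internally disjoint from $A \cup B$ because subdivision paths meet only at centers. If $c_3$ lies in the component $T$ of $F-X$ containing the root vertices, then each of these subdivision paths must enter $S(X)\cup X$ through some vertex of $X$; letting $x_i$ denote the last vertex of $X$ visited by the path to $d_i$, the tail subpaths from $x_i$ to $d_i$ are pairwise disjoint, begin at three distinct vertices of $X$, end in $Y$, and lie in $S(X)\cup X$ and hence in $F-A-B$. If instead $c_3 \in S(X) \cup X$, I will use the 3-connectivity of $F$ to find three disjoint paths from $X$ to $c_3$ inside $F[S(X)\cup X]-(A\cup B)$, and splice each onto the corresponding subdivision path to obtain the three required paths.

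The main obstacle is that the subdivision paths used to define $A$ and $B$ could a priori dip into $T$ or use vertices of $X$ as interior vertices, either of which could push $A$ or $B$ outside $S(X)\cup X$ and interfere with the paths for property (2). I expect to handle this by rerouting any such excursion through $S(X)\cup X$, invoking the 3-connectivity of $F$ together with the maximality condition on $X$ (which forbids enlarging the root-vertex component via an alternative 3-cut). A subsidiary case is $c_3 \in X$, in which the three subdivision paths from $c_3$ share a starting vertex; there I would augment the construction by using the subdivision paths from $c_1$ and $c_2$ together with a Menger-type argument to produce two additional disjoint paths starting at the other two vertices of $X$.
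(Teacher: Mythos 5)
Your construction is genuinely different from the paper's, and while the easy case is handled cleanly, the hard cases contain unresolved gaps that you yourself flag.

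Your plan sets $Y$ to be the three $D$-side centers and builds $A$, $B$ as subdivided stars centered at $c_1$ and $c_2$. The paper instead starts from a set $Q$ of three vertex-disjoint paths from $X$ to the centers \emph{chosen extremally so as to minimize the number of non-subdivision edges used}. After rerouting so that all three $Q$-endpoints lie on one side (your $C$-side), it lets $H$ be the theta-graph formed by the six subdivision paths through the two $D$-vertices $d_1$, $d_2$ that are not $Q$-endpoints, takes $Y$ to be the three \emph{first intersection points} of the $Q$-paths with $H$, and lets $A$ and $B$ be the two components of $H-Y$, centered at $d_1$ and $d_2$. In that setup, the three initial segments of the $Q$-paths automatically give property (2) because the minimality of $Q$ guarantees that each $Q$-path meets $H$ cleanly (once it joins a path of the subdivision it stays on it). Your construction does not have access to any such extremal control, and this is where the gap lies.

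Concretely, you correctly dispose of the case $c_3 \in T$: the three subdivision paths from $c_3$ to the $d_i$ must each use a distinct vertex of $X$, so they use up all of $X$, which forces every other subdivision path---in particular those defining $A$ and $B$---to stay in $S(X) \cup X$, and the tails from $X$ to $Y$ are disjoint from $A \cup B$ by the internal disjointness of subdivision paths. That part is fine, and in fact is tidier than the paper's brief remark. But when $c_3 \in S(X)$ (or $c_3 \in X$), you propose to find three disjoint paths from $X$ to $c_3$ ``inside $F[S(X)\cup X]-(A\cup B)$'' and splice them onto the subdivision paths. There is no reason that $F[S(X)\cup X]-(A\cup B)$ retains enough connectivity for Menger to hand you such paths: deleting $A \cup B$ (which together contain $c_1$, $c_2$, and large chunks of the subdivision) can easily destroy the three disjoint $X$--$c_3$ paths that $3$-connectivity of $F$ provides. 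You acknowledge this with ``I expect to handle this by rerouting\ldots'' but no mechanism is supplied, and the one clean tool available---the maximality of the component of $F-X$ containing the roots---does not obviously repair a failure of connectivity inside $F[S(X)\cup X]-(A\cup B)$. Likewise the $c_3 \in X$ subcase is left as a sketch (``a Menger-type argument\ldots''). To close these cases you would essentially need to reconstruct the paper's idea of choosing the three $X$--to--subdivision paths extremally so that their initial segments are disjoint from the theta-graph until the designated points; without that, property (2) is not established.
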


\begin{lemma}\label{innerlemma2}
  In a structure as in Lemma \ref{innerlemma1} which maximizes the size of $A \cup B$, $Y$ is $X$.
\end{lemma}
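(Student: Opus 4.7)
The plan is to argue by contradiction: suppose $(A,B,Y)$ maximizes $|A \cup B|$ among structures as in Lemma \ref{innerlemma1} but $Y \neq X$. I will produce a structure $(A',B',Y')$ satisfying the same three conditions with $|A' \cup B'| > |A \cup B|$, contradicting maximality.

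Let $P_1,P_2,P_3$ denote the three vertex-disjoint paths from $X$ to $Y$ in $F-A-B$ supplied by condition (2); in particular $|Y| \ge 3$. Since $Y \neq X$ and each $P_i$ has one endpoint in $X$ and one in $Y$, at least one path --- call it $P_1$ --- has length at least one, with endpoints $x_1 \in X$ and $y_1 \in Y \setminus X \subseteq S(X)$. Let $u$ be the neighbor of $y_1$ along $P_1$.

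The key move is to push $Y$ one step toward $X$: set $A' = A \cup \{y_1\}$, $B' = B$, and $Y' = (Y \setminus \{y_1\}) \cup \{u\}$, replacing $P_1$ by $P_1 - y_1$ (now terminating at $u$) while keeping $P_2, P_3$. Most conditions are immediate: $A'$ is connected because $y_1$ is adjacent to $A$; the three new paths remain vertex-disjoint in $F - A' - B'$ because $y_1$ was only on $P_1$; condition (3) persists because $A' \cup B' \cup Y' \supseteq A \cup B \cup Y$, which contained five centers; and $A'$ is adjacent to $u$ via the edge $y_1 u$. The only delicate requirement is that $B$ is adjacent to $u$, which is not guaranteed from $B$ being adjacent merely to $y_1$.

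The main obstacle is therefore handling the case where $B$ is not adjacent to $u$. Here I plan to invoke the 3-connectivity of $F$ together with the structural richness of $F-A-B$: either there is an alternative path from $x_1$ to some vertex of $Y$ in $F-A-B-u$ internally disjoint from $P_2, P_3$ --- in which case we reroute $P_1$ around $u$ and simply absorb $u$ into $A$, strictly enlarging $A \cup B$ without altering $Y$; or no such reroute exists, so that $u$ separates $x_1$ from $Y$ in $F-A-B$. In the latter configuration, $u$ together with $A \cup B$ induces a small vertex separator that, together with the $K_{3,3}$ subdivision whose five centers lie in $A \cup B \cup Y$ and the choice of $X$ as maximizing the $C$-component in Claim \ref{k33subdivisionclaim}, yields a contradiction. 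To package the iteration cleanly, I would use the potential $\sum_i |P_i|$, which strictly decreases under every valid push; since the potential is bounded below by zero, the process terminates with all paths of length zero, i.e.\ $Y = X$, as required.
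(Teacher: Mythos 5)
Your proposal has a genuine gap, and it stems from under-using the one hypothesis that makes the lemma true: that $X$ is chosen in Claim~\ref{k33subdivisionclaim} to maximize the size of the component of $F-X$ meeting $C$. Without that hypothesis the statement is simply false (nothing would stop $Y$ itself from being a $3$-cut that completely encloses $A\cup B$, leaving no room to enlarge), yet your argument only gestures at this hypothesis as a last-ditch resort inside the vaguest branch of the case analysis. The paper's proof uses it up front: since $X$ is the closest cutter, $Y$ cannot separate $X$ from $A\cup B$, so there is a path $P$ from (say) $A$ to $X$ in $(X\cup S(X))-Y$ internally disjoint from $A\cup B$. Letting $z$ be the first vertex of $P$ (from the $A$ end) lying on one of the three $X$--$Y$ paths $R$, one sets $Y'=(Y\setminus\{y_R\})\cup\{z\}$, extends $A$ along $P$ up to $z$ and extends $B$ along $R$ from $z$ to $y_R$. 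Because $P$ avoids $Y$, $z\ne y_R$, so $y_R$ (a vertex not previously in $A\cup B$) is absorbed into $B$, giving a strict enlargement and the desired contradiction in a single step.

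Your alternative move --- absorbing $y_1$ into $A$ and replacing $y_1$ by its $P_1$-neighbour $u$ in $Y$ --- fails unless $B$ already sees $u$, as you note. The proposed repair, ``reroute $P_1$ around $u$ and absorb $u$ into $A$,'' does not work: $u$ is a neighbour of $y_1$, not necessarily of any vertex of $A$, so $A\cup\{u\}$ need not be connected, and condition (1) of Lemma~\ref{innerlemma1} (edges from $A$ to every vertex of $Y$) is not preserved either. Finally, the ``potential $\sum_i|P_i|$ decreases, so the process terminates'' framing is the wrong shape for this lemma: the statement is about a fixed structure that is already assumed to maximize $|A\cup B|$, so what is required is a single strict improvement yielding a contradiction, not a terminating iteration --- and your potential need not decrease in the reroute branch anyway. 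To salvage your approach you would need to (a) invoke the maximality of $X$'s $C$-side as the first step to guarantee an escape path from $A\cup B$ to $X$ avoiding $Y$, and (b) route the enlargement along that escape path rather than along $P_1$, at which point you have essentially reconstructed the paper's argument.
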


The claim follows immediately from these two lemmas because, since $S(X)$ is connected, if we choose $A,B$ as in Lemma \ref{innerlemma2} then not only is $Y=X$ but $V(A) \cup V(B)$ must span $S(X)$.

\begin{proof}(of Lemma \ref{innerlemma1})
  We can assume all 6  of the centers of the subdivision
lie in $X \cup S(X)$ or the subdivision itself yields   such a structure.

We consider a set $Q$ of 3 paths from $X$ to the six centers of the
$K_{3,3}$ minimizing the edges not in the subdivision which are  used.
This implies that for any path $P$  of the subdivision which corresponds
to an edge between centers $x$ and $y$ and intersects a path of $Q$,
either both $x$ and $y$ are endpoints of a member of $Q$, or only one is
and the intersection of $P$ with this element of $Q$ is exactly a common
subpath.  Two of the three centers
of the subdivision which are endpoints of elements of $Q$ must be on the
same side of the $K_{3,3}$ subdivision (with partite set $(C', D')$), \Wlog{} $C'$. If the third center endpoint is in $D'$, we can
reroute/extend the path of $Q$ containing it using the path of the
subdivision between this third center endpoint and the third center of $C'$.
By the above remark it stays disjoint from the other elements of $Q$.

Now consider  the six edges of the subdivision from the 2 centers $d_1$ and
$d_2$ in $D'$ which were not endpoints of a path in $Q$ originally. These six
 paths form a subgraph $H$ consisting of  3 vertex disjoint paths from
$d_1$ to $d_2$ each containing a center in $C'$. By our remarks above, each
path of $Q$ when traversed from $X$, first intersects $H$  in a vertex on
the path containing the center in $C'$ which is its endpoint. We let these
three intersection points be $Y$, and there, deletion splits $H$ into $A$ and
$B$.
\end{proof}

\begin{proof}(of Lemma \ref{innerlemma2})
Since $X$ was the closest cutter separating off 5
centers of this subdivision, $Y$ does not separate off $X$ from $A \cup
B$. So, \Wlog, there is a path $P$ from $A$ to $X$ in $X \cup S(X)-Y$ which is internally disjoint from $A \cup B$. We
let $z$ be the first vertex on this path when traversed from $A$ which is
on one of the three vertex disjoint paths from $X$ to $Y$, call this path
$R$. We add $z$ to $Y$ and delete the endpoint of $R$ in $Y$ from $Y$. We
add $P-z$ to $A$ and the component of $R-z$ intersecting $Y$ to $B$.
\end{proof}

\end{proof}

\section{The Uncompaction Procedure}
\label{uncompactionoutline}

Our  uncompaction algorithm  returns for each $i$ from $l$ down to $1$, in turn,  either (i)  vertex disjoint paths $P_1$ and $P_2$  of
$G_i -v^*$ such that $P_i$ joins $s_i$ to $t_i$ or (ii)   a ferociously strong planar reduction $H_i$   of $G_i$, together with, for each component $K$ of
$G_i-V(H_i)$  such that there is no other component of $G_i-V(H_i)$ with the same (three) attachment vertices in $V(H_i)$, a partition of $V(K)$ into two connected
subgraphs each of which has an edge to each attachment vertex of $K$ in $H_i$.

We store all the information needed when returning (ii)  in one coloured minor of $G_i$. The minor is obtained by contracting
 every component $K$ of $G_i-V(H_i)$ as follows.  If  there is another component with the same attachment vertices we contract $K$ to a vertex. Otherwise we contract
each of  the two subgraphs of $K$
in the partition of $K$ of the last paragraph to a vertex and delete the edge between them if it exists.  We colour the vertices of $V(H_i)$, green. For every set $X$
of three vertices of $V(H_I)$ which are the attachment vertices for some component of $G_i-V(H_i)$  there are at least two vertices of $F_i-V(H_i)$ which have neighbourhood $X$.
We colour one red and the rest  yellow. By a {\it twin set} in $F_{I+1}$ we mean a red vertex and all of its yellow twins.

We remark that $H_i$ can be recovered from our coloured $F_i$  by contracting each red  vertex to one of its neighbours and each yellow vertex to a neighbour to which
a red  twin of it has not been contracted. We also note that $F_i$ is obtained from the planar subgraph  $D_i$ consisting of the red and green vertices by adding
some  positive number of  yellow  twins for each vertex in the stable set of red vertices. Furthermore, each of these red vertices has degree 3. For any subgraph $D'_i$  of $D_i$,
there is a corresponding subgraph $F'_i$ of $F_i$ obtained in a similar fashion.

Uncompacting requires us to find a solution for  $G_i$, given a solution for $G_{i+1}$.
If our solution for $G_{i+1}$ is a pair of paths it is a simple matter to uncontract them to obtain a solution for $G_i$ which is a pair of paths.
Otherwise, we want to use $F_{i+1}$ in looking for our solution in $G_i$.

 Key  to doing so is the observation that Corollary \ref{K33cor}  implies that we  need to find a ferociously strong reduction of the minor $G'_i$ of $G_i$ where we perform the contractions into the yellow and red vertices of $F_{i+1}$.  To see this we note that for any red vertex   and yellow twin of it, there is a $K_{3,3}$ subdivision of $G_i$ with one center being $v^*$, another in the  subgraph $A$ contracted to the red vertex, a second in the subgraph $B$ contracted to the yellow vertex, and the remaining three in the subgraphs contracted to the neighbours of the red vertex. Applying Corollary \ref{K33cor} to  this subdivision, $A$ and $B$ we determine that any ferociously strong planar reduction  of
$G_i$ is also a ferociously strong planar reduction of $G'_i$.  We note that the same argument shows it is enough to find a reduction of the graph obtained from $G'_i$ by deleting any edges joining red vertices to yellow.

 We attempt to find a ferociously strong planar reduction  $H_i$
for $G'_i$ and build the corresponding coloured minor  $F_i$. If we  fail we will find a pair of paths we can return. The key to finding $H_i$  is to
find the 3-cuts it uses.

There are a number of facts that help us in finding these 3-cuts. Perhaps the most important is that we are looking for them in a relatively simple graph.
This may not appear to be the case when we add back a set $F$ of edges or a set $S$ of vertices, as the resultant graph could contain any graph as a subgraph. However, because
$c>3$, our property on the edges and vertices we deleted ensures that the 3-cuts of $G_i$ correspond to the 3-cuts in $G_{i+1}$.
This implies that the 3-cuts of $G'_i$ essentially\footnote{the statement  is not precisely true because of the red and yellow vertices but they pose no real problem as they are cut off from the rest of the graph by cutsets of size $3<c$} correspond to the cuts in $F_{i+1}$  so we can look for the cuts in the latter which is obtained from a  planar graph
by  duplicating  a stable set of degree 3 vertices.
When we uncontract a matching or a set of triangles, we will exploit the fact that we are looking at an {\it  expansion} of  such an extension of a    planar graph, i.e.
a graph from which we  can obtain  such an extension  by contracting connected subgraphs of size at most 3.

The reason this helps is that it allows us to look locally. Any 3-cut  in a 3-connected  planar graph must consist of three vertices every two of which lie together on a face.
In the same vein, any 3-cut in the 3-connected expansion of a planar graph must consist of three vertices coming from a set of at most three vertices of the planar graph every pair of
which lie together on a face.

So, we can look for our 3-cuts locally, specifically in subgraphs  of the planar graph (we will not need to consider the yellow vertices) whose face-vertex incidence graphs have bounded diameter( or their expansions).
This is fortuitous because such planar subgraphs   have bounded tree width (as do their expansions as expanding can at most triple the
tree width) which makes them easy to handle.

Now, in building a ferociously strong planar reduction, a priori we are required  to cut off 5 out of 6 centers of every
$K_{3,3}$ subdivision. However, we shall see that  we can focus on a set of $O(|V(G)|)$ subdivisions which
can be handled locally. This also makes our task much simpler.

We close this overview of the uncompaction algorithm with an example which illustrates why we can restrict our attention to a specific set of  ``local'' $K_{3,3}$
subdivisions.
First however, we explain  how to uncontract triangles, which are easy to handle,
and   make a simple  observation  about the connectivity of $D_{i+1}$.
We present the rest of the details in Section \ref{uncompactiondetails}.

We note that $H_{i+1}$ is 3-connected because it is obtained from $G_{i+1}$ by pruning some components of $G-X$ for some cutsets $Y$ (of size 3)
and adding a clique on  any such $Y$. Thus, any 2-cut  $X$  of $D_{i+1}$  must be laminar with one of these cliques and hence contain a red vertex $v$.
Now, since there are 3 paths of $D_{i+1}$ from $v^*$ to $v$, the component $U$ of $D_{i+1}$ containing $v^*$ contains two of the neighbours of $v$.
So replacing $v$ by its third neighbour $w$ yields another 2-cut unless $Y$ has only two components one of which is $w$. This implies
the following:

\begin{observation}
\label{Di3conobs}
$D_{i+1}$ is a subdivision of a 3-connected planar graph  and  so has a unique embedding.
Furthermore,  the only
2-cuts of $D_{i+1}$ are either (a) two red vertices separating a path with one or two green internal vertices joining them from the rest of the graph, or (b)
a red and a green vertex separating a common  green neighbour from the rest of the graph.
\end{observation}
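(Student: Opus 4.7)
The plan is to use two facts already established in the paragraph preceding the observation --- (F1) any 2-cut $X$ of $D_{i+1}$ contains a red vertex $v$, and (F2) letting $w$ be the third neighbour of $v$ (the one not in the component $U$ of $D_{i+1}-X$ containing $v^*$), either $D_{i+1}-X$ has exactly two components with the non-$U$ one being $\{w\}$, or $\{w,z\}$ is itself a 2-cut --- and push them into a complete classification of 2-cuts, after which the subdivision and unique-embedding claims follow.

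Take an arbitrary 2-cut $X=\{v,z\}$ with $v$ red. In the easy case, the non-$U$ component of $D_{i+1}-X$ is a singleton $\{w\}$: then $w\in V(H_{i+1})$ is green, its only $D_{i+1}$-neighbours lie in $\{v,z\}$, and (since $D_{i+1}$ has no cut vertex, a short argument using 3-connectivity of $H_{i+1}$) $w$ is actually adjacent to both $v$ and $z$. This yields case (b) if $z$ is green and case (a) with one green internal vertex if $z$ is red. In the remaining case, (F2) gives that $\{w,z\}$ is a 2-cut; by (F1) it contains a red vertex, and since $w$ is green, $z$ must be red --- so the original $X$ is RR. Now apply (F2) to the cut $\{w,z\}$ with $z$ playing the red role, obtaining a third neighbour $w^*$ of $z$: either $D_{i+1}-\{w,z\}$ has non-$U$ component $\{w^*\}$, or $\{w,w^*\}$ is yet another 2-cut. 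But $w, w^*\in V(H_{i+1})$ are both green, so the second possibility would contradict (F1). Hence $\{w^*\}$ is the singleton non-$U$ component of $D_{i+1}-\{w,z\}$. Tracing back, since $v$'s two remaining neighbours $a,b$ already sit in $U$, the vertex $v$ rejoins the $v^*$-side once $w$ is deleted, so the non-$U$ components of $D_{i+1}-\{w,z\}$ are exactly $W\setminus\{w\}$ together with any extra components $D_{i+1}-X$ might have had. Pinning this union down to the single vertex $w^*$ forces $W=\{w,w^*\}$ (joined by the edge $ww^*$) and rules out additional components; this is case (a) with two green internal vertices.

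For the structural claim, planarity of $D_{i+1}$ was noted in the preamble. Every degree-2 vertex of $D_{i+1}$ is green (red vertices have degree exactly 3) and its two neighbours form a 2-cut separating it from the rest; by the classification, the degree-2 vertices are precisely the green internal vertices of cases (a) and (b). Suppress every degree-2 vertex, replacing each maximal path of such vertices by an edge between its endpoints, to obtain a planar graph $D^*_{i+1}$. Any 2-cut of $D^*_{i+1}$ would lift to a 2-cut $\{x,y\}$ of $D_{i+1}$ whose two cut vertices are not degree-2 in $D_{i+1}$; but our classification shows every 2-cut of $D_{i+1}$ has its small side consisting entirely of degree-2 vertices, and these vanish in $D^*_{i+1}$, leaving $\{x,y\}$ with nothing to separate. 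So $D^*_{i+1}$ has no 2-cut and is a 3-connected planar graph, making $D_{i+1}$ its subdivision; uniqueness of embedding then follows from Whitney's theorem and lifts to $D_{i+1}$.

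The main obstacle I expect is the component bookkeeping in the RR case. A priori $D_{i+1}-X$ may have more than two components, so one must verify carefully that replacing $v$ by $w$ places $v$ back on the $v^*$-side and that the only surviving non-$U$ component in the new cut is $\{w^*\}$. The argument hinges on the facts that $v$'s two remaining neighbours $a,b$ already lie in $U$ and that $w^*$ has no neighbours outside $\{w,z\}$, which together force $W=\{w,w^*\}$ and preclude extra non-$U$ components of $D_{i+1}-X$.
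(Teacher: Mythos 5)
Your proof is correct and follows the paper's own (implicit) approach: the paper establishes (F1) and (F2) in the preceding paragraph and simply states that "this implies the following," leaving the case analysis to the reader. You carry it out carefully — in particular the bookkeeping in the RR case showing the non-$U$ side is exactly $\{w,w^*\}$, and the verification that suppressing the degree-2 internal vertices produces a simple 3-connected planar graph — which is exactly what is needed to justify the observation.
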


We will need to pay special attention to the degree 2 green vertices cut off by 2-cuts but they will not cause any real
problems.

\subsection{Uncontracting  A Set of  Triangles}

If $G_{i+1}$ was obtained from $G_i$ by contracting a set of triangles  whose vertices have degree 3,
 our job is straightforward. Any triangle which is contracted into the subgraph corresponding  to a yellow or red vertex simply stays in this vertex we do not uncontract it.
If a triangle was contracted to a green vertex adjacent to a yellow and red vertex, then there is an edge from the triangle to
each of a red vertex, a yellow vertex and a green vertex. We contract the vertex of the triangle
which sees a red vertex into the red vertex, the vertex which sees a yellow vertex into the yellow vertex and thereby obtain the same coloured minor as before.
We now simply uncontract all the remaining triangles into a green triangle,  and compute the unique planar embedding of the resultant graph consisting of the red
and green vertices\footnote{In fact one can simply rearrange the embedding locally, though easier  to do it takes longer to explain, so we avoid doing so.}.

\subsection{ Focusing on   A Few Local $K_{3,3}$ Subdivisions}
\label{uncompactionfocussing}
We consider an example which illustrates our approach of restricting our attention to $O(|V(G)|)$ local $K_{3,3}$ subdivisions.
We assume that $G_{i+1}$ was obtained from $G_i$ by contracting the edges of an induced matching all of whose vertices have degree at most $\Delta$  and (for simplicity in this introductory  example) there are no green vertices of degree 2, and every edge $e=xy$ has been contracted to a green vertex $v_e$ which has no non-green neighbours. We also assume (again for simplicity)  that we only want to find the desired two paths or a planar reduction in $G_i$, we do not need to find a ferociously strong planar reduction.

We note that since $D_{i+1}$ is 3-connected, $D_{i+1}-v_e$ is 2-connected so the boundary $B$ of the face  $f$ of $D_{i+1}-v_e$ containing $v_e$ is a cycle.

As we just observed, $D_{i+1}$ has a unique embedding and we can assume that we have the rotation scheme for it.That is we know
 the clockwise order of the neighbours of $D_{i+1}$ around $B$. We determine which of these neighbours see $x$ and which see $y$.

We check first whether  (A) there exist distinct $z_1,z_2,z_3,z_4$ appearing in the given order around
 $B$ such that $x$ sees $z_1$ and  $z_3$ and $y$ sees $z_2$ and $z_4$.
If so, there is a $K_{3,3}$ subdivision in $G_i$ with set of centers $x,y,z_1,z_2,z_3,z_4$ which uses only  edges
which are in   $B$ or incident to one of  $x$ or $y$.

If not, and (B)  $x$ and $y$ have three  common neighbours then they both must have exactly three neighbours.
Now  there are three paths from $v^*$ to $v$ in $D_{i+1}$. For any  three  such  paths $P_1,P_2,P_3$
 there is a $K_{3.3}$ subdivision whose set of centers is $N(v) \cup \{x,y,v^*\}$ and whose edges are those contained in the $P_i$ and those incident to
$x$ or $y$ other than $xy$.

If neither of these possibilities occur, then after adding the edges from $x$ to $B$, we see that the edges from $y$ must lie on the boundary of  one of the faces into
which this partitions $f$. In this case we can add $y$ and the edges from it into this face.
We can and do determine which of these possibilities occurs and embed $x$ and $y$ if possible in constant time per edge, simply by regarding the  order that  the set of neighbours of $x$ and $y$ appear n the rotation
scheme for $v$. We let $D^*_{i+1}$ be the resultant 3-connected planar graph
\footnote{This graph is 3-connected because since $D_{i+1}$ is 3-connected for any 2-cut one of the components would have to be one endpoint of an uncontracted edge but they all have degree 3
since $G_i$ is 3-connected.}.

We know that either the desired two paths exist or  we can obtain a   reduction  of $G_i$ which  separates off 5 of the 6 centers of every $K_{3,3}$ subdivision from $C$ by a 3-cut and hence is planar.
We claim that if we can find a reduction of $G_i$ cutting off  5 of the 6 centers  of the special $K_{3,3}$ subdivision we defined for each $xy$ which we could not embed
then this reduction is  planar,  proving the desired two paths do not exist in $G_i$. 
We can of curse assume that the only cuts used in the reduction separate off 5 of the 
6 centers of such a subdivision.   In order to prove our claim  we first  show that
for such a  reduction  and every edge $xy$ which we did not embed,  one of $x$ or $y$  is not in the 3-cut $X$  of the reduction separating off 5 of the 6 centers of the corresponding subdivision
but rather is  separated  from $C$ by $X$.
We then show that this implies that  the reduction   is a minor of $D_{i+1}$ and hence is planar.

Now, both $x$ and $y$ have edges to three other centers of the subdivision, and four  other centers, if (B) holds. So, without loss of generality,  if neither is  cut off by $X$
then either both $x$ and $y$ are in $X$ or  (A) holds  and $X=x,z_2,z_4$.
In the latter case, there must be a path joining $y$ to $v^*$ outside of $X$. This path passes through $B$ and hence intersects
the component of $B-X$ containing either $z_1$ or $z_3$. So, at least one of these centers is also not cut off by $X$, which is a
contradiction. So, $x$ and $y$ are in $X$.
But, there are three vertex disjoint paths of $D_{i+1}-v$ from $v^*$ to $N(v)$ and hence from $v^*$ to $B$. As  $X$ contains both $x$ and $y$
there are 2 paths of $D_{i+1}-X$ from $v^*$ to $B$. Furthermore for some vertex $z$, $B-z$ is disjoint from $X$ and hence there are two paths of $D_{i+1}-X$
from $v^*$ to any two vertices of $B-z$. But $B$ contains at least 3 centers of the subdivision so $X$ separates off at most four of the centers from $v^*$.
Thus we see that, without loss of generality, $y$ is not in $X$ but separated from $v^*$ by $X$.

Now, letting $U$ be the  union of the components cut off from $C$ by  the 3-cuts of our reduction,  if we contract all the edges of our matching incident to a vertex of $U$
 then we are left with the three-connected planar $D^*_{i+1}$.
 We claim  that for every 3-cut $X$  of the reduction   there is a component of $U$  which was cut of by $X$ which has a non-empty  intersection with $D^*_{i+1}$. It follows that our reduction is obtained from $D^*_{i+1}$ by  putting cliques on some laminar 3-cuts
 and deleting the components they cut off. So it is also planar.

 To prove our claim, we note that if some component cut off by $U$  is to completely disappear  it consists of a set of vertices each of which is  matched by  our matching to a vertex of $X$.
 Since the contracted matching is induced, it is a vertex, and sees all three vertices of $X$. Once again applying the fact that the matching is induced, we obtain that there is
 only one such component.This is impossible since  $X$ cuts five of  centers of a 
 subdivision from $v^*$.

So, rather than looking for a set of $3$-cuts separating off 5 of the 6 centers of every $K_{3,3}$ subdivision we only need to consider one specially defined local
subdivision per uncontracted edge. The fact that they are locally defined means we can deal with them using algorithms for graphs of bounded tree width, as we discuss
in Section \ref{uncompactiondetails}.

\section{Uncompaction: The Details}
\label{uncompactiondetails}

 In this section, we solve the following problem which as discussed in Section \ref{uncompactionoutline}, is the core of our uncompaction procedure.

Given a 3-connected graph $G_{i}$ compacted to a 3-connected  graph $G_{i+1}$ for which we have found  a  ferociously strong planar reduction and associated coloured minor $F_{i+1}$, we wish to find either a  ferociously strong  planar reduction of $G_i$ and associated coloured minor $F_i$ or  vertex disjoint paths $P_1$ and $P_2$ of $G_i-v^*$ such that $P_i$ joins $s_i$ to $t_i$.

We note that we fully described how to solve this problem in linear time  if $G_{i+1}$ is obtained from $G_i$ by
contracting a set of triangles in Section \ref{uncompactionoutline}. So we assume this is not the case.

Our algorithm has to do two things. It has to identify which parts of the uncompaction can be performed in such a way
that we retain a ferociously strong planar reduction using the same cuts, and which lead to non-planarity which needs to be cut
off by new cuts.  It also has to either find the desired 2 paths or a new ferociously strong planar reduction cutting off the
non-planar bits it has identified.

We can do the first part by a simple local analysis. In doing the second, as mentioned in Section \ref{uncompactionoutline},
we exploit the fact that every  3-cut exists  in a local area of the graph, which has bounded tree width. So, we can
look for these cuts  or  the desired 2 paths using algorithms for graphs of bounded tree width.

In the next section, we define tree decompositions and tree width and present the specification of some algorithms
which will be useful in looking for 3-cuts or the two desired paths.

Then we describe exactly which graphs we decompose and how we handle the pieces. This will vary depending on whether we
obtained $G_{i+1}$ from $G_i$ by contracting a matching, deleting some edges or deleting some vertices.

Finally, we present the details of the algorithms  whose specifications are given in the next section.

We remark, that  along with the minor $F_{i+1}$ we  store  a three colouring of $V(G_{i+1})$ with red,yellow, and green.
The vertices of $F_{i+1}$ are the green vertices and the components of the red and yellow graphs.
This  naturally leads to a three colouring of $V(G_i)$ with red, yellow and green. By doing a breadth first search on the
red and yellow graphs, we can determine the vertex of $F_{i+1}$ to which each non-green vertex of $G_i$ has been
contracted. If we are uncontracting a matching $M$ , it is also an easy matter to determine which green vertex of
$F_{i+1}$ corresponds to a green edge of $M$.
We can store this in an array indexed by the vertices of $G_i$. So,  after this linear time preprocessing,
we can  determine the pair of vertices of $F_{i+1}$ joined by an edge of $G_i$ in constant time.

\subsection{Building and Exploiting Tree Decompositions}

A {\it tree decomposition}  $[T,{\cal S}]$ of a graph $J$ consists of a
tree $T$ and a subtree $S_v$ of $T$ for each vertex $v$ of $J$ such that
if $uv$ is an edge then $S_u \cap S_v \neq \emptyset$. For each node $t$
of $T$, we set $W_t =\{v|t \in S_v\}$ and define the torso of $T$ to be the
graph obtained from $G[W_t]$ by adding edges so that $W_s \cap W_t$ is a clique
for each neighbour $s$ of $t$.

The {\it adhesion} of a tree decomposition is the maximum of $|W_s \cap W_t|$.
The {\it torso} corresponding a node $t$ of  a tree decomposition of  is the graph
obtained from $J[W_t]$ by adding edges so that $W_s \cap W_t$ is a clique

Many results in graph theory state that a given class of graphs has a tree decomposition
of bounded adhesion where the torsos have certain properties. Indeed our result on 2-DRP
is that  the desired two paths do not exist precisely if the auxiliary graph has a tree decomposition of adhesion
three  where the tree is a star and the torso of  its  centre is planar and contains the four terminals.

We can exploit a decomposition of bounded adhesion with simple torsos   to solve a specific problem  on $J$ as follows.
We root the tree $T$ of the tree decomposition and for each node
$t$ of  $T$ we let $J_t$ be the subgraph induced by those
$v$ for which $S_v$ contains $t$ or one of its descendants. Our dynamic programming
algorithms work by processing the nodes of $T$ in post-order and  for each $t$ in turn solving a problem on $J_t$ using the solutions on $\{J_c |c~is~a~child~of~t\}$.
This allows us to restrict our attention to subproblems on the simple torsos.

This is especially effective if the torsos are small.
The {\it width} of a tree decomposition is $max_t ~|W_t|-1$.
The {\it tree width} of $J$ is the minimum of the width of its decompositions.

 We  give here the specifications  of the (dynamic programming) algorithms which we will
apply to graphs of bounded tree width, to find either a reduction or a 2-path.
We also present a lemma which shows that the graphs we will be considering have bounded tree width.
First however, we point out some useful properties  of  the type of cuts which  we will be looking for.

\subsubsection{Some Special Types of Cuts}

Our reductions use some  special types of 3-cuts.

\begin{definition}
For a vertex $v$ in a 3-connected graph $H$, we say a 3-cut $Z$ disjoint from $v$ is {\it $v$-non-shiftable}
if  every vertex of $Z$ has two neighbours which are not in the component
of $G-Z$ containing $v$.
\end{definition}

\begin{definition}
For a set $W$  of vertices  in a 3-connected graph $H$, we say a 3-cut $Z$ disjoint from $W$ is {\it $W$-ferocious}
if  there are two connected subgraphs with an edge to every vertex of $Z$ contained in components of $H-Z$
not intersecting $W$.
\end{definition}

We note that $Z$  is $W$-ferocious precisely if  we can colour the vertices it cuts off from $W$ using red and yellow
so that there is one component in the red graph, this component and every component  of the yellow graph has an edge
to all the vertices of $Z$ and either there is only one yellow component or there is no edge from the red component
to any of the yellow components.
Clearly, in a ferociously strong reduction, we are using $v^*$-non-shiftable $\{v^*\}$-ferocious cuts. This explains why our algorithms for graphs of bounded tree width  will focus on such cuts.

We present here three   useful auxiliary lemmas. The first tells us that if we have chosen some $v^*$-non-shiftable
cuts which cut off some of the non-planarity we are trying to deal with, this does not affect whether some other specific
bit of non planarity can be cut off.  The second presents a similar result for ferocious cuts we consider. The third  is a technical lemma which allows us to  reduce the problem of looking for a reduction  or the  desired 2-paths  to similar problems on each of the  torsos of any tree decomposition where the $W_s  \cap W_t$ correspond to  $v^*$-non-shiftable cuts.

\begin{definition} Minimal
cutsets $Y$ and $Z$ of $J$  are laminar if there do not exist two distinct components of $J-Z$
containing vertices of $Y$.
\end{definition}

\begin{remark} Since the cutsets are minimal there  do not exist two distinct components of $J-Z$
containing vertices of $Y$ precisely if there  do not exist two distinct components of $J-Y$
containing vertices of $Z$.
\end{remark}

\begin{lemma}
\label{lamlem}
Suppose that  $J$ is a root graph   and that
for some $K_{3,3}$ subdivision $Z$ is a 3-cut separating five of the six centers  of the subdivision  from $C$  in $J$ and maximizing the size of the components of $J-Z$ intersecting $C$. Then $Z$ is laminar with all the $v^*$-non-shiftable  3-cuts of $J$.
\end{lemma}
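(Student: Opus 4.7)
The plan is to argue by contradiction via an uncrossing: assume $Y$ is a $v^*$-non-shiftable $3$-cut not laminar with $Z$, and construct a $3$-cut $Z'$ which also separates five of the six centres from $C$ and whose $C$-side strictly contains the $C$-side of $Z$; this contradicts the maximality in the choice of $Z$. Write $B$ for the component of $J-Z$ containing $C$ and $A=V(J)\setminus(B\cup Z)$ for the union of the other components (so $A$ contains the five separated centres), and write $C_Y$ for the component of $J-Y$ containing $v^*$ and $D=V(J)\setminus(C_Y\cup Y)$; $v^*$-non-shiftability of $Y$ means every $y\in Y$ has at least two neighbours in $D$. By the remark, non-laminarity is equivalent to $Y$ meeting at least two components of $J-Z$ and to $Z$ meeting at least two components of $J-Y$.

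The key construction is $Z':=N(B\cup C_Y)$, the set of vertices outside $B\cup C_Y$ with a neighbour in it. A direct computation—using that edges leaving $B$ land in $Z$ and edges leaving $C_Y$ land in $Y$—gives $Z'=(Y\cap(A\cup Z))\cup(Z\cap D)$. Standard submodularity of the vertex-neighbourhood function (easily verified vertex by vertex) yields
\[
|N(B\cup C_Y)|+|N(B\cap C_Y)|\;\le\;|N(B)|+|N(C_Y)|\;=\;|Z|+|Y|\;=\;6.
\]
Since $v^*\in B\cap C_Y$ and no vertex of the non-empty set $A$ has a neighbour in $B$, the set $A$ avoids $B\cap C_Y\cup N(B\cap C_Y)$, so $N(B\cap C_Y)$ is a proper cut of size at least $3$ by $3$-connectivity; hence $|Z'|\le 3$. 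A short case analysis on where the vertices of $Z$ lie relative to the partition $V(J)=C_Y\cup Y\cup D$ shows non-laminarity forces $Z\cap C_Y\ne\emptyset$ (the remaining configurations, e.g.\ $Z\subseteq D$ spread across two $D$-components, either collapse to a laminar pair or are handled by the symmetric construction $N(B\cup D_i)$ for an appropriate component $D_i$), so $B\cup C_Y\supsetneq B$ and the $C$-side of $Z'$ strictly exceeds $B$.

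The main obstacle is verifying that $Z'$ still separates five of the six centres from $C$, i.e., that no separated centre lies in $A\cap C_Y$ (the $v^*$-side of $Z'$) or in $A\cap Y\subseteq Z'$ (absorbed into the new cut). Here $v^*$-non-shiftability of $Y$ is indispensable. If a separated centre $c$ lay in one of these bad positions, consider the three pairwise internally vertex-disjoint paths of the $K_{3,3}$ subdivision starting at $c$: of their three other endpoints on the opposite bipartition side, at most one is $v^*$ and the remaining two also lie in $A$. Tracking the transitions these paths make between $C_Y\cup Y$ and $D$—each such transition passing through a vertex of $Y$—and applying pigeonhole to the (at least six) crossings distributed over the three vertices of $Y$, one finds some $y\in Y$ with at most one neighbour in $D$, contradicting $v^*$-non-shiftability. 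The same reasoning rules out the degenerate subcase $A\cap D=\emptyset$ (which would prevent $Z'$ from being a proper cut), since it would force every centre into $A\cap C_Y\cup A\cap Y$. Having excluded the bad positions, $Z'$ witnesses the contradiction, and $Y$ and $Z$ must be laminar.
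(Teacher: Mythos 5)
Your uncrossing strategy is genuinely different from the paper's: the paper applies Claim~\ref{k33subdivisionclaim} to conclude that $Z$ is ``ferocious'' (two connected subgraphs of $J-Z$ disjoint from the $C$-component, each with edges to all of $Z$), intersects $Y$ with this structure, and finishes using the root-graph hypothesis, whereas you uncross directly. Your preliminary steps are sound: the submodular inequality $|N(B\cup C_Y)|+|N(B\cap C_Y)|\le|N(B)|+|N(C_Y)|=6$ does hold for the external vertex boundary, $N(B\cap C_Y)$ does separate $A$ from $B\cap C_Y\ni v^*$ so has size at least $3$ and therefore $|Z'|\le 3$, and the identity $Z'=(Y\cap(A\cup Z))\cup(Z\cap D)$ is correct. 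But the step you flag as the main obstacle is where the proof breaks. You need all five separated centres to lie in $A\cap D$, and you argue that a misplaced centre $c\in A\cap(C_Y\cup Y)$ would produce ``at least six crossings'' of $Y$ by the three subdivision paths from $c$, forcing some $y\in Y$ to have at most one neighbour in $D$. This does not hold up: the three paths from $c$ are internally vertex-disjoint, so collectively they can touch each vertex of $Y$ at most once and make at most three transitions between $C_Y\cup Y$ and $D$, not six. Worse, the logic points the wrong way — a path crossing $y$ into $D$ exhibits a neighbour of $y$ in $D$, which is consistent with (not contradictory to) $v^*$-non-shiftability. So the count neither reaches six nor yields the claimed contradiction even if it did.

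There are two further unaddressed points. First, the case $Z\cap C_Y=\emptyset$: non-laminarity only tells you $Z$ meets two components of $J-Y$, which may both lie in $D$, and you dismiss this in one clause with an unargued appeal to ``the symmetric construction $N(B\cup D_i)$'' — but that construction does not obviously enlarge the $C$-side of $J-Z$, which is what the maximality contradiction requires. Second, the paper's own proof invokes the hypothesis that $J$ is a root graph (and not merely a 3-connected graph) at its final step, while your argument never uses this hypothesis; since there is no evident reason the lemma should hold in that extra generality, this is a strong signal that the purely metric uncrossing is not enough and that some structural input specific to root graphs (or the ferocity of $Z$) must enter the argument for the ``main obstacle'' step.
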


\begin{proof}
Suppose for a contradiction that there is a $v^*$ non-shiftable  3-cut  $Y$  of $J$  which is not laminar with $Z$.
Clearly $Z$ does not contain $v^*$. We let $K$ be the component of $J-Z$  containing $v^*$. By Claim \ref{k33subdivisionclaim},  there are two connected subgraphs of $J-Z$ disjoint from $K$ which have edges
to all of $Z$. Since $Y$ is not laminar with $Z$, it must intersect both these subgraphs and $K$, and hence is disjoint from $Z$. So $K$  contains exactly one vertex $y$ of $Y$.

Since $J$ is 3-connected, the component $U$  of $J-Z-y$ containing $v^*$ must have edge to two vertices of $Z$.
Since $Y$ is $v^*$-shiftable, $y$ has edges to two vertices outside  the component of  $J[V(K) \cup Z]$ containing $U$
and  its neighbours. One of these $a$ must be in $K$, and hence in a component of $J-Z-y$ which has edges only to
$y$ and at most one vertex of $Z$. But this contradicts the fact that $J$ is a root graph.
\end{proof}

In the same vein, we have:

\begin{lemma}
\label{lamlem2}
Every pair of $W$-ferocious 3-cuts of $J$ are laminar.
\end{lemma}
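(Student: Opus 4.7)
My plan is to argue by contradiction and to parallel the strategy of Lemma \ref{lamlem}: I will unpack the $W$-ferocity of both $Y$ and $Z$, use non-laminarity to force the witness subgraphs to cross the opposite cut, and then combine these crossings with the 3-connectivity of $J$ to produce a vertex separator of size at most $2$, yielding a contradiction. Using the $W$-ferocity of $Z$, I would fix disjoint connected subgraphs $A_Z, B_Z$ of $J$, each contained in a component of $J-Z$ disjoint from $W$ and each having an edge to every vertex of $Z$; similarly $A_Y, B_Y$ for $Y$. The remark preceding the lemma together with minimality of the cuts immediately yields $|Y \cap Z| \le 1$: if $|Y \cap Z| = 2$, then $Y \setminus Z$ is a single vertex, lying in one component of $J-Z$, contradicting non-laminarity.

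Next I would show that all four cross-intersections $A_Y \cap Z$, $B_Y \cap Z$, $A_Z \cap Y$, $B_Z \cap Y$ are non-empty. For $A_Y \cap Z$: non-laminarity places some $y, y' \in Y \setminus Z$ in distinct components of $J-Z$; since $A_Y$ is connected, disjoint from $Y$, and contains a neighbour of each of $y$ and $y'$, if $A_Y$ missed $Z$ then $A_Y \subseteq J-Z$ would lie in a single component of $J-Z$, yet the neighbours of $y$ and $y'$ in $A_Y$ would have to lie in different components --- a contradiction. The same reasoning handles $B_Y$ (using $A_Y \cap B_Y = \emptyset$) and, by symmetry, $A_Z \cap Y$ and $B_Z \cap Y$. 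I fix witnesses $z_A \in A_Y \cap Z$, $z_B \in B_Y \cap Z$, $y_A \in A_Z \cap Y$, $y_B \in B_Z \cap Y$; these are distinct, with $z_A, z_B \in Z \setminus Y$ and $y_A, y_B \in Y \setminus Z$.

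Finally I would derive the contradiction by considering the ``shells'' $H_Z := A_Z \cup B_Z \cup Z$ and $H_Y := A_Y \cup B_Y \cup Y$, each of which is connected and disjoint from $W$. The disjointness of the relevant subgraphs yields the joint cardinality bounds
$$|A_Z \cap Y| + |B_Z \cap Y| + |Y \cap Z| \le 3, \qquad |A_Y \cap Z| + |B_Y \cap Z| + |Y \cap Z| \le 3,$$
so both gateway sets $Y \cap H_Z$ and $Z \cap H_Y$ have size at most $3$. For any $v \in A_Z$, the 3-connectivity of $J$ supplies three internally disjoint paths from $v$ to $W$; each must leave $H_Z$, and --- using the ferocity of $Y$, which cuts $A_Y \cup B_Y$ off from $W$ --- can be arranged to pass through the gateway $Y \cap H_Z$. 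When $|Y \cap H_Z| \le 2$ this already contradicts three internally disjoint paths; the residual configurations with $|Y \cap H_Z| = 3$ are handled by swapping to the $Y$-side and bounding $|Z \cap H_Y|$. A short case analysis on the value of $|Y \cap Z|$ shows that at least one of the two gateways has size $\le 2$, delivering the contradiction.

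The principal obstacle is this last step: making precise that each of the three internally disjoint paths must actually use a gateway vertex, and handling the delicate sub-case $|Y \cap Z| = 1$, where the shared vertex $y_* := Y \cap Z$ acts as a gateway from both sides and forces a finer analysis of the crossing 2-cuts $\{y_A, y_B\}$ and $\{z_A, z_B\}$ inside the 2-connected graph $J - y_*$.
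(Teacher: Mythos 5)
Your setup is sound: the observation that $|Y\cap Z|\le 1$ (via non-laminarity and minimality) and the proof that all four cross-intersections $A_Y\cap Z$, $B_Y\cap Z$, $A_Z\cap Y$, $B_Z\cap Y$ are nonempty are both correct and carefully argued. But the step you flag as the ``principal obstacle'' is a genuine gap, not a delicate detail to be filled in, and it is precisely where your proof diverges from a working one. The ``gateway'' claim --- that three internally disjoint paths from $v\in A_Z$ to $W$ can be routed through $Y\cap H_Z$ --- does not follow from what you have established. The boundary of $H_Z = A_Z\cup B_Z\cup Z$ in $J$ is exactly $Z$ (the vertices cut off from $W$ by $Z$ have all their neighbours inside $H_Z$), so each path exits $H_Z$ through $Z$; nothing forces a crossing with $Y$ to occur \emph{inside} $H_Z$ rather than after the path has already escaped. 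Moreover your displayed cardinality bounds are vacuous: each merely says $|Y\cap H_Z|\le |Y|=3$, and step~4 already gives $|Y\cap H_Z|\ge 2$, so the ``case analysis on $|Y\cap Z|$ shows one gateway has size $\le 2$'' is not supported by any counting you have done. In fact both gateways can have size exactly $3$ simultaneously when $|Y\cap Z|=1$, and the configuration where $y_*:=Y\cap Z$ plays a double role is the one you acknowledge you cannot resolve.

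The paper's proof avoids this symmetric set-counting altogether. It fixes a component $K$ of $J-Z$ meeting $W$, observes that $Y$ must place exactly one vertex $y$ in $K$ (and one in each of the two $Z$-witness subgraphs, forcing $Y\cap Z=\emptyset$), and then uses the ferocity of $Y$ asymmetrically: $y$ has two neighbours on the far side of $Y$ from $W$, and tracking one such neighbour $a\in K$ into the components of $J-Z-y$ produces a set with too few attachments, contradicting 3-connectivity. So the contradiction comes from analysing a single vertex $y$ and its neighbourhood relative to the two cuts, not from a global count of gateway sizes. If you want to rescue your approach you would need a reason why the $Y$-crossing can be forced into $H_Z$ --- for instance by choosing $v$ carefully inside $A_Z\cap(A_Y\cup B_Y)$ and then arguing about where the paths can escape --- but even then the size-$3$ gateway case remains open and would need a separate argument, which is essentially the asymmetric analysis the paper does.
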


\begin{proof}
Suppose for a contradiction that there are non-laminar $W$-ferocious cuts $Y$ and $Z$.
Let $K$ be a component of $J-Z$  intersecting $W$.  there are two connected subgraphs of $J-Z$ disjoint from $K$ which have edges  to all of $Z$. Since $Y$ is not laminar with $Z$, it must intersect both these subgraphs and $K$, and hence is disjoint from $Z$. So $K$  contains exactly one vertex $y$ of $Y$.

Since $J$ is 3-connected, there is a  component $U$  of $J-Z-y$ containing a white vertex of $K$
which  must have edge to two vertices of $Z$.
We know  that $y$ has two neighbours in the components of $J-Y$ disjoint from $W$ and hence
outside  the component of  $J[V(K) \cup Z]$ containing $U$.
and  its neighbours. One of these $a$ must be in $K$, and hence in a component of $J-Z-y$ which has edges only to
$y$ and at most one vertex of $Z$. But this contradicts the fact that $J$ is 3-connected.

\end{proof}

Finally, we need:

\begin{definition}
A  (ferociously strong) quasi-reduction of a graph $J$ with respect to a triangle $Tri$ within it, is a   three  colouring of $J$ using  green, yellow, and red  such that the vertices of $Tri$ are coloured green,  the  minor obtained from the  subgraph formed by the non yellow vertices   by contracting its red components to vertices   is planar
and for every component $K$ of the
graph $J^*$  formed by  the  non-green vertices  there are edges from $K$ to exactly 3 green vertices and either (a)   $K$ is coloured yellow and  exactly one  of the components
of $J^*$  with the same three green attachment vertices  is coloured red, or     (b)   $K$ is coloured red   and there is   at least one  other component
of $J^*$ with the same three green attachment vertices  all of which are coloured yellow, or (c) there is no other component of $J^*$ with the same attachments as $K$ and both the red and yellow subgraphs of $K$ are connected and have edges to each of the three green  attachment vertices of $K$.
\end{definition}

\begin{lemma}
\label{cavredlem}
Suppose that  we are given a rooted  tree decomposition $[T,\{S_v | v \in V(J)\}]$  of adhesion 3  of a root graph $J$  such that
$C$ is contained in $W_r$ for the root $r$, and for every arc from a node $t$ of the tree to its parent $p(t)$,  each vertex of  $W_{p(t)} \cap W_t$, has at least two neighbours in $J_t$. Then given a ferociously strong reduction of the torso of the root  and a quasi-reduction
of the torso  with respect to $W_t \cap W_{p(t)}$ of every other node, we can find a ferociously  strong reduction of $J$ in linear time.
\end{lemma}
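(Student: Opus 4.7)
The plan is to process the tree decomposition in post-order, proving by induction on subtree size that for each non-root node $t$ we can build a quasi-reduction of the full induced subgraph $J_t$ with respect to $Tri_t := W_t \cap W_{p(t)}$ (not merely of the torso of $t$, which is the given data); at the root $r$ the analogous combination, starting from the given ferociously strong reduction of the torso of $r$, yields the ferociously strong reduction of $J$. Note that the adhesion-3 hypothesis plus the two-neighbours condition make each $Tri_t$ a $v^*$-non-shiftable 3-cut of $J$, so Lemmas~\ref{lamlem} and~\ref{lamlem2} will apply to all cuts we manipulate.

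For the base case $t$ a leaf, $J_t$ coincides with the torso of $t$, so the given quasi-reduction suffices. For the inductive step, $t$ has children $t_1,\ldots,t_k$ and by induction I have a quasi-reduction of each $J_{t_i}$ with respect to $Tri_i := W_t \cap W_{t_i}$. I glue each $J_{t_i}$ into the torso of $t$ by identifying the three vertices of $Tri_i$ on each side --- they are green on the child side by definition of quasi-reduction and keep whatever colour they have on $t$'s side --- and discarding the clique edges added to the torso along $Tri_i$. Planarity of the non-yellow part (after contracting reds) is preserved because each gluing is a clique sum of planar graphs along a triangle that bounds a face on both sides, using the uniqueness of the embedding of a 3-connected planar graph together with Lemma~\ref{lamlem} applied to each $Tri_i$.

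The subtle step is colour reconciliation. If some $v \in Tri_i$ is non-green in $t$'s quasi-reduction, then $v$ lies in a cut-off component $K$ of $t$'s quasi-reduction attaching at some green triangle $X \subseteq W_t$; after gluing, $K$ absorbs whichever yellow/red components of $J_{t_i}$ attach in the child to a triangle containing $v$. The merged component still attaches only at $X$, because Lemma~\ref{lamlem2} forces laminarity of every ferocious cut with $Tri_i$, so no spurious attachments arise outside $X$. I then examine the number of components attaching at $X$ in the combined structure: if at least two, I recolour one red and the rest yellow to satisfy cases (a)/(b) of the quasi-reduction definition; if exactly one, I combine the partition supplied by case (c) of the child's quasi-reduction(s) with the connectivity of $K$ to produce the required partition of the merged component into two connected pieces each joined by an edge to every vertex of $X$. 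At the root $r$, the same procedure starting from the ferociously strong reduction of $r$'s torso yields a ferociously strong reduction of $J$: ferocity at separators internal to $r$'s torso is inherited from the given reduction, and at each child separator it is inherited from the child's quasi-reduction via exactly this reconciliation.

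Each local gluing at a node $t$ takes time $O(|W_t| + \sum_i |Tri_i|)$, which sums to linear total time in the size of the input. The main obstacle is the colour reconciliation step: verifying that after absorbing child components through non-green separator vertices, the merged object still satisfies exactly one of (a), (b), (c) at every attachment triangle, and at the root satisfies the stronger ferociously-strong condition. This is precisely where the two-neighbours hypothesis pays off, through the laminarity given by Lemmas~\ref{lamlem} and~\ref{lamlem2}, which ensures that no cut produced on one side of a boundary triangle can cross into the other.
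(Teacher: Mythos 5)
Your proposal goes bottom-up (genuine post-order): you inductively build a quasi-reduction of each full subgraph $J_{t_i}$ and then try to merge children into the parent's torso colouring. The paper's proof goes the other way: it starts from the ferociously strong reduction of the root torso and extends the colouring downward, so that when a node $t$ is processed the colours of $W_t\cap W_{p(t)}$ are already fixed by the parent. (The paper says ``post order,'' but its invariant --- that the considered nodes' torsos carry a ferociously strong reduction with cliques on separators to unconsidered nodes --- only makes sense growing from the root down.) That directional choice is not cosmetic; it is what makes the colour-reconciliation trivial.

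The gap in your argument is precisely the reconciliation when some $v\in Tri_i$ is non-green on $t$'s side. You say the cut-off component $K$ of $t$'s torso colouring ``absorbs whichever yellow/red components of $J_{t_i}$ attach in the child to a triangle containing $v$,'' and that Lemma~\ref{lamlem2} prevents spurious attachments. But the child's quasi-reduction of $J_{t_i}$ has a (potentially large, nontrivial) \emph{green} region beyond $Tri_i$, together with its own internal cut-off components attaching at various internal triangles that have nothing to do with $v$. None of that is absorbed by your rule. After gluing, those vertices remain green yet hang entirely below the now-red/yellow vertices $v$ of $Tri_i$; they violate the quasi-reduction structure at $t$'s level, and you give no recipe for recolouring them. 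The paper sidesteps this entirely: if $W_t\cap W_{p(t)}$ is not all green it \emph{discards} the torso quasi-reduction for $t$ and simply paints all of $W_t-W_{p(t)}$ (and, inductively, everything beneath) with the majority colour of the separator, falling back to a two-path partition in the balanced $\{$green, red, yellow$\}$ case to preserve ferocity. In your bottom-up formulation the inductive hypothesis only gives you a colouring in which $Tri_i$ is green, which is exactly the wrong object when the parent needs it coloured otherwise; propagating the correction down through the child would cost more than constant time per node and you have supplied no argument that it can be done while preserving (a)/(b)/(c) at every attachment triangle. A secondary point: your invocation of Lemmas~\ref{lamlem} and~\ref{lamlem2} is not needed here --- the paper's proof of Lemma~\ref{cavredlem} never uses them, since the top-down recipe is purely local and constructive.
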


\begin{proof}
We will take the 3-colouring of the torso and extend it to a 3-colouring of the whole graph.
We do so by considering the nodes in post order. At all times we have a 3-colouring of the set $S$ of  vertices in the torsos
of the nodes we have considered which is a ferociously strong reduction of the graph obtained from $J[S]$ by adding a
clique on $W_s \cap W_t$ for every arc of the tree between a considered and unconsidered vertex.

When considering $t$, if $W_t \cap W_{p(t)}$ is green in our colouring for $S$ then we simply add $t$ to the set
of considered nodes and leave the colouring on its torso unchanged. If  one of red or yellow appears more often then the
other on $W_t \cap W_{p(t)}$,
then  we colour every vertex of $W_t-W_{p(t)}$ with this colour. Otherwise, $W_t \cap W_{p(t)}$ contains  exactly one red vertex $r$ and one yellow vertex $y$,  and one green vertex $g$. We  let $r'$ and $y'$ be
two neighbours of $g$  in $G_t$  with $y=y'$ if possible  and $r=r'$ if possible. We find two internally vertex disjoint paths
from $y$ to $y'$ and $r$ to $r'$ avoiding $g$ in the (3-connected) torso for $t$. We colour the first  yellow and the other red, and then extend this, using breadth first search  to obtain a partition of the torso  into two connected  coloured subgraphs.
\end{proof}

\subsubsection{Some graphs of bounded tree width}

The {\it face vertex} incidence graph of a subdivision $J$  of a 3-connected planar graph is the bipartite
graph whose vertex set is the union  of the vertices of $J$ and the faces in its unique embedding in which a face is joined
precisely to the vertices on its boundary.  Its diameter is the maximum of the minimum distances between the
pairs of vertices within it. Robertson and Seymour\cite[Theorem 2.7]{gmiii} proved:

\begin{lemma}
\label{brbtw}
A planar graph of diameter $d$ has tree width at most $3d+1$.
\end{lemma}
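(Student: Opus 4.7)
The plan is to build an explicit tree decomposition of width at most $3d+1$ directly from a BFS tree and a planar embedding. Pick any vertex $r$ and run BFS from $r$; since $\mathrm{diam}(G)\le d$, every vertex is within distance $d$ of $r$, so in the resulting BFS tree $T$ each $r$-to-$v$ path $P_v$ contains at most $d+1$ vertices. After fixing a planar embedding, I would first triangulate $G$ by adding edges so that every face is a triangle. This can be done without increasing tree-width and in a way that preserves $T$ as a BFS tree of the augmented graph, since new chords can be routed inside existing faces and $T$'s distances are untouched.

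Next, form the planar dual $G^*$ and contract every dual edge corresponding to a tree edge of $T$. A standard planarity fact (the BFS tree and the cotree are duals) gives that the resulting graph is a tree $T^*$ whose nodes correspond bijectively to the faces of $G$. For each triangular face $f$ with vertex set $\{u,v,w\}$, define the bag $W_f=V(P_u)\cup V(P_v)\cup V(P_w)$. Since the three paths all share the root $r$ and each has at most $d$ edges, $|W_f|\le 3d+1$, giving the claimed width bound.

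It remains to verify the tree-decomposition axioms for $(T^*,\{W_f\})$. Edge coverage is immediate: for any edge $xy$ of $G$, there is a face $f$ whose boundary contains both $x$ and $y$, and then $\{x,y\}\subseteq W_f$ by construction. The delicate axiom is the subtree property: for each vertex $v$, the faces $f$ with $v\in W_f$ must induce a subtree of $T^*$. The idea is to show by induction on $\mathrm{dist}_T(r,v)$ that $v\in W_f$ iff $f$ lies in a particular disc of the plane bounded by the tree edges incident with $v$ and certain non-tree edges leaving $\bigcup_{u\text{ descendant of }v}P_u$; since such a region of the plane is simply connected, its face set is connected in the dual, and after contracting tree edges of $T$ it remains connected in $T^*$.

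The main obstacle is precisely this subtree verification: one must argue that $v \in W_f$ is controlled by the rotational scheme around $v$ together with the BFS layering, so that the collection of faces putting $v$ in their bag forms a contiguous fan around $v$ glued along descendant subtrees. All other parts of the argument are routine combinatorics of BFS trees and planar duality, but the subtree property is where the planarity hypothesis is genuinely used, and spelling it out carefully—as is done in Robertson and Seymour's original treatment—is the real content of the lemma.
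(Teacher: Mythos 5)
The paper does not prove this lemma; it cites Robertson and Seymour's Graph Minors III, Theorem~2.7, and uses the result as a black box, so there is no argument in the paper to compare against. Your outline is the classical proof: BFS from a root, triangulate while only shortening distances from the root, take the dual spanning tree $T^*$ given by the interdigitating-trees theorem, and bag each triangular face with the union of the three root-paths to its corners; the count $3(d+1)-2$ gives $|W_f| \le 3d+1$, hence width $3d$, one better than the stated bound. One small correction: to obtain $T^*$ you should \emph{delete} the duals of the $T$-edges from $G^*$, not contract them --- the duals of the non-tree edges already form a spanning tree of $G^*$ (they number $|E|-(n-1)=f-1$, one fewer than the faces), whereas contracting the duals of tree edges does not cleanly yield a tree on the faces.

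As you say yourself, the whole content of the lemma is the subtree axiom: that for each vertex $v$ the faces $f$ with $v \in W_f$ --- equivalently, those with a corner in the subtree $T_v$ of $T$ rooted at $v$ --- induce a connected subgraph of $T^*$. Your picture of a disc around $T_v$ is the right intuition, but the paragraph as written is a plan for a proof, not a proof. To close the gap one walks the rotational fan of faces around each vertex $u \in T_v$: consecutive faces in the fan are joined by an edge of $T^*$ unless the primal edge separating them is a tree edge of $T$, and every tree edge at $u$ other than the parent edge of $v$ itself has its other endpoint inside $T_v$, so the two faces astride it can be reconnected through the fundamental cycle of that edge's dual, which again lies entirely among faces meeting $T_v$. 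Chaining fans along the tree $T_v$ then gives the required connectivity in $T^*$. That verification is what remains to be written out; the rest of your sketch is the standard argument and is sound.
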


We also have:

\begin{observation}
If contracting the edges of a matching $M$ in $J$ yields a graph of tree width $w$ then $J$ has tree
width at most $2w+1$.
\end{observation}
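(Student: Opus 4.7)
The plan is to build a tree decomposition of $J$ by ``uncontracting'' each bag of a tree decomposition of $J/M$. Let $[T,\{W_t\}_{t\in V(T)}]$ be a tree decomposition of $J/M$ of width $w$, and for each edge $e=xy\in M$ let $v_e$ denote the vertex of $J/M$ obtained by contracting $e$. For each node $t$, define the new bag
\[
W'_t \;=\; \bigl(W_t \cap V(J)\bigr)\;\cup\;\bigcup_{e=xy\in M,\; v_e\in W_t}\{x,y\}.
\]
That is, each contracted vertex $v_e$ appearing in $W_t$ is replaced by the two endpoints of $e$, while every other vertex of $W_t$ is kept as is. I claim that $[T,\{W'_t\}]$ is a tree decomposition of $J$ of width at most $2w+1$.

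I would verify the three required properties in turn. First, since each $W_t$ has at most $w+1$ elements and every element contributes at most two vertices to $W'_t$, we have $|W'_t|\le 2(w+1)$, so the width is at most $2w+1$. Second, I check the subtree condition for each $u\in V(J)$: if $u\notin V(M)$ then the set of nodes $t$ with $u\in W'_t$ equals the set of $t$ with $u\in W_t$, which is a subtree; if $u$ is an endpoint of some $e\in M$, then the set of $t$ with $u\in W'_t$ equals $\{t:v_e\in W_t\}$, again a subtree.

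Third, I must show that every edge of $J$ is covered by some bag $W'_t$. Take any edge $uv\in E(J)$. If $uv$ is itself an edge of $M$, say $uv=e$, then $v_e$ lies in some bag $W_t$, and the corresponding $W'_t$ contains both endpoints of $e$. Otherwise, $uv$ survives in $J/M$ as an edge between the (possibly contracted) images of $u$ and $v$; pick a bag $W_t$ containing both of these images, then $W'_t$ contains $u$ and $v$ (along with the mate of each endpoint that happens to lie in $V(M)$). This exhausts all cases.

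There is no real obstacle: the only minor subtlety is that edges of $M$ themselves are ``lost'' in $J/M$ (they become loops or disappear), but this is handled automatically because $v_e$ must appear in at least one bag of the decomposition of $J/M$, and that bag's replacement contains both endpoints of $e$. The construction is clearly computable in time linear in the size of the input decomposition.
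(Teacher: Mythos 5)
Your proof is correct and is essentially the paper's own argument, just phrased in the bag formulation rather than the subtree formulation: the paper says, for each contracted edge $xy$ with image $v$, ``deleting $S_v$ and adding $S_x$ and $S_y$ with $S_v=S_x=S_y$,'' which is exactly your replacement of $v_e$ by $\{x,y\}$ in every bag containing $v_e$. Both give the same decomposition of width at most $2w+1$.
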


We simply modify our tree decomposition by, for each edge $xy$ of $M$ contracted to a vertex $v$,
deleting $S_v$ and adding $S_x$ and $S_y$ with $S_v=S_x=S_y$. Making copies of trees, corresponding
to vertices we duplicate we obtain.

\begin{lemma}
If $J$ is a graph obtained from a graph of  tree width at most $w$  by taking up to $k$ twins of some of its
vertices and adding any set of edges within the set of vertices formed by a vertex and its twins then the
tree width of $J$ is at most $kw+k-1$.
\end{lemma}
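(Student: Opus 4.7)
The plan is to take any tree decomposition $[T,\{S_v : v \in V(H)\}]$ of width $\le w$ for the underlying graph $H$ from which $J$ is obtained, and blow it up into a tree decomposition of $J$ by replacing each occurrence of a vertex of $H$ in a bag by the set of all its twins in $J$. Since every twin set has size at most $k$, each new bag will have size at most $k(w+1) = kw+k$, giving width at most $kw+k-1$ as desired.

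Concretely, for each vertex $v$ of $H$ let $T(v)$ denote the set consisting of $v$ together with its twins in $J$, so $|T(v)| \le k$. Keep the same underlying tree $T$, and set
\[
W'_t \;=\; \bigcup_{v \in W_t} T(v), \qquad S'_u \;=\; S_{v(u)},
\]
where $v(u)$ is the vertex of $H$ of which $u \in V(J)$ is a twin. Each $S'_u$ is a subtree of $T$ because $S_{v(u)}$ already was, and the bag-size bound follows immediately.

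What remains is to verify that every edge $uw$ of $J$ is covered by some new bag. There are two cases. If $u$ and $w$ lie in the same twin set, i.e.\ $v(u)=v(w)=v$, then any node $t$ with $v \in W_t$ (which exists since $S_v$ is a non-empty subtree) now has $T(v) \subseteq W'_t$, so $W'_t$ contains both $u$ and $w$. If $v(u) \ne v(w)$, then, since twins share the same neighbourhood outside their twin set, the edge $uw$ of $J$ forces $v(u)v(w)$ to be an edge of $H$; hence some original bag contains both $v(u)$ and $v(w)$, and after expansion it contains both $u$ and $w$.

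This completes the construction; there is no real obstacle, the only step that requires a moment's thought is the observation that edges between distinct twin sets of $J$ lift to edges of $H$, which is immediate from the defining property of twins.
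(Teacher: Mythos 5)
Your proposal is correct, and it is exactly the construction the paper has in mind: the paper's one-line justification ("Making copies of trees, corresponding to vertices we duplicate we obtain") is precisely your step of setting $S'_u = S_{v(u)}$ for each twin $u$, which inflates every bag by a factor of at most $k$. You simply fill in the routine checks (subtree condition, edge coverage, and the observation that inter-twin-set edges of $J$ descend to edges of $H$) that the paper leaves implicit.
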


A slightly more sophisticated argument shows:

\begin{lemma}
If $J^*$ is a graph obtained from a graph $J$ of  tree width at most $w$  by taking up to $k$ twins of  a set $S$ of vertices
all of degree at most $d$ which form a stable set  such that every set of vertices formed by a vertex and its twins is stable
then the tree width of $J$ is at most $dw+d$.
\end{lemma}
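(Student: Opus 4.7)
The plan is to start from any tree decomposition $[T,\{S_u : u\in V(J)\}]$ of $J$ of width at most $w$ and modify it into a tree decomposition of $J^*$ by attaching one fresh leaf node per twin (treating each $v\in S$ itself as its own twin $v_0 := v$) and slightly enlarging the subtrees of vertices in $V(J)\setminus S$. The key point is that since $\{v\}\cup\operatorname{twins}(v)$ is stable, twins never need to share a bag with each other, so each twin can be parked in its own singleton subtree and the number of twins $k$ will not enter the final width bound.

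Concretely, for each $v\in S$ I fix an arbitrary anchor $t_v\in S_v$. For every twin $v_j$ of $v$ I introduce a new leaf node $t(v_j)$ adjacent to $t_v$ in the tree, set $S'_{v_j}:=\{t(v_j)\}$, and declare the bag at $t(v_j)$ to be $N(v)\cup\{v_j\}$; I also delete $v$ from every original bag of $T$. For each $u\in V(J)\setminus S$ and each $v\in N(u)\cap S$ I then extend $S_u$, first along the path inside $S_v$ from a node of $S_u\cap S_v$ (nonempty since $uv\in E(J)$) up to $t_v$, and then on to each leaf $t(v_j)$; each extension attaches a connected piece to the existing subtree, so the resulting $S'_u$ remains a subtree of the new tree.

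I would then verify this is a tree decomposition of $J^*$. Every edge of $J$ not incident to $S$ is covered exactly as before since the $S'_u$ only grow. Every edge $v_ju$ with $u\in N(v)$ is covered at the leaf bag $W_{t(v_j)}=N(v)\cup\{v_j\}$, and there are no edges within a twin family or between $S$ and its twins (by the stability hypotheses), so nothing else needs covering. To bound bag sizes: the new leaf bags have size at most $d+1$. At an original node $t\in T$, a non-$S$ vertex $u$ ends up in the new bag only if either $u\in W_t\setminus S$ or $t$ was added to $S_u$ on behalf of some $v\in N(u)\cap S$, and the latter forces $t\in S_v$, i.e.\ $v\in W_t\cap S$ and $u\in N(v)$. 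Hence the new bag at $t$ is contained in $(W_t\setminus S)\cup\bigcup_{v\in W_t\cap S}N(v)$, of size at most $|W_t\setminus S|+d\,|W_t\cap S|\le d\,|W_t|\le d(w+1)=dw+d$.

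The step that needs care is precisely this bag-size bookkeeping: the saving over the previous lemma comes from noting that each $v\in W_t\cap S$ contributes only its at most $d$ neighbours (not $k$ twins) and that $v$ itself has been deleted from $W_t$, both of which use the stability of the twin families. Combining the two bag bounds gives tree width at most $\max(dw+d-1,d)=dw+d-1\le dw+d$, as claimed.
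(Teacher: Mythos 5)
Your proof is correct and takes a genuinely different route from the paper. The paper argues indirectly: it first invokes the preceding lemma with $k=d$ to bound the tree width of $J$ with $d$ twins per vertex of $S$, then contracts those twins onto the $\le d$ distinct neighbours of each $v$ to obtain a minor equal to $J$ with a clique added on each $N(v)$, and finally adds back $k$ simplicial twins per $v$ into the cliques, noting that attaching simplicial vertices to a clique of size at most $d$ cannot raise the tree width past the stated bound. Your argument instead builds the required tree decomposition of $J^*$ directly: you park $v$ and each of its twins at a fresh leaf with bag $N(v)\cup\{v_j\}$, delete the $S$-vertices from the old bags, and reroute each $u\in N(v)$'s subtree through $S_v$ to reach those leaves, then bound the new bags by $(W_t\setminus S)\cup\bigcup_{v\in W_t\cap S}N(v)$. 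Both are sound and give essentially the same bound (yours in fact gives $dw+d-1$, the paper's $dw+d$). What the paper's approach buys is brevity and reuse of earlier machinery (minor closure, the simplicial-vertex fact, and the preceding lemma); what yours buys is self-containment and an explicit tree decomposition. One small point worth stating carefully in your version: every neighbour of a vertex of $S$ lies in $V(J)\setminus S$ because $S$ is stable, which is exactly what guarantees the $S'_u$'s you grow belong to vertices whose subtree you are allowed to enlarge and that twins across different $v\in S$ are non-adjacent; you use both tacitly.
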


\begin{proof}
This is true for the graph obtained from $J$ by taking $d$ twins of each vertex $v$ in $S$.
Since the tree width of a minor of a graph is at most the tree width of the graph,
contracting each such twin to a different neighbour of the corresponding $v$ shows that the
graph obtained from $J$ by adding cliques on the neighbourhoods of the vertices of $S$
has tree width at most $dw+d$.  Finally, adding a vertex whose neighbourhood is a clique
to a graph cannot increase the tree width unless it is less than the number of neighbours of the
vertex. So, adding, for each $v$ in $S$,  $k$ vertices seeing exactly the neighbourhood of $v$, we obtain a supergraph
of $J^*$ with tree width at most $dw+d$.
\end{proof}

\subsubsection{Algorithms for  Graphs of Bounded Tree Width}

We will need four algorithms specifications which follow.  We show here that the first three are easy using the fourth as a subroutine. We delay presenting the details of the fourth.

\begin{algspec}\label{fsrtwk}
\textsc{Ferociously Strong Reductions For Tree Width $k$}
~\\
\emph{Input:} A   root  graph $J$ with  tree width at most $k$.
~\\
\emph{Output:}
Either
\begin{enumerate}
\item
A  ferociously strong planar reduction $L$   of $J$ and corresponding 3-coloured minor, or
\item
two vertex disjoint paths $P_1$ and $P_2$  of $J-v^*$ such that $P_i$ contains $s_i$ and $t_i$.
\end{enumerate}
\emph{Running time:} $O(|V(G)|)$.
\end{algspec}

\begin{algspec}\label{qsrtwk}
\textsc{Quasi-Reduction For Tree Width $k$}
~\\
\emph{Input:} A   3-connected graph $J$ with  tree width at most $k$,  containing a triangle $Tri$. .
~\\
\emph{Output:}
A quasi reduction of $J$ with respect to $Tri$.
~\\
\emph{Running time:} $O(|V(G)|)$.
\end{algspec}

\begin{algspec}\label{wfctwk}
\textsc{$N$-Constrained  $W$-Ferocious  Cutsets For Tree Width $k$}
~\\
\emph{Input:} A   3-connected graph $J$ with  tree width at most $k$ and disjoint subsets $W$ and $N$ of its vertices.
~\\
\emph{Output:}
A  set  ${\cal F}$ of $W$-ferocious 3-cuts each of which is
not cut off from $W$ by another $W$-ferocious 3-cuts,
contains a vertex not in  $N$, and
subject to this maximizes the number of vertices cut off from $W$ by the cutsets.

A 3-colouring of the vertices of $J$  such that those vertices not cut off  from $W$ by an element of ${\cal F}$ are coloured green and for each cut $Z$ of ${\cal F}$ those cut off from $W$ by $Z$  are 2-coloured red and yellow so as to show that $Z$ is $W$-ferocious.
~\\
\emph{Running time:} $O(|V(G)|)$.
\end{algspec}

We remark that the cuts of ${\cal F}$ are laminar.

\begin{definition}
For a vertex $v$ and sets $N$ and $W$ of a  3-connected graph $G$,  a closest  $N$-constrained 3-cut   for
$v$ from $W$  exists if $v$ can be cut off from $W$ by a cutset of size 3, which is not contained in $N$.
In this case, it is such a 3-cut which minimizes the size of the component containing $v$.
\end{definition}

\begin{algspec}\label{wqfctwk}
\textsc{$N$-Constrained  $W$-Ferocious  or $Q$-closest  Cutsets For Tree Width $k$}
~\\
\emph{Input:} A   3-connected graph $J$ with  tree width at most $k$ and  subsets $W$, $N$ and $Q$
of its vertices   such that $W$ and $N$ are disjoint  and the $N$-constrained  closest 3-cuts for  $v$ from $W$ for the $v \in Q$
are laminar with each other and with the $W$-ferocious 3-cuts which are not contained in $N$.
~\\
\emph{Output:}
A  set  ${\cal F}$ of 3-cuts each of which contains a vertex not in  $N$ and is either
\begin{compactenumerate}
\item $W$-ferocious or
\item a closest $N$-constrained $3$-cut for $v$ from $W$ for some $v \in Q$,
\end{compactenumerate}
such that no cut is separated from $W$ by another $W$-ferocious or closest $N$-constrained cut and subject to this maximizes the number
of vertices cut off from $W$ by the cutsets.
For each cut $Z$ in ${\cal F}$, either
a 2-colouring of the vertices of $J$ cut off from $W$ by $Z$ showing that it is $W$-ferocious or a vertex $v$ of $Q$ for which it is
a closest $N$ constrained 3-cut for $v$ from $W$.
~\\
\emph{Running time:} $O(|V(G)|)$.
\end{algspec}

Again, the cuts of ${\cal F}$ are laminar.

Now,  Algorithm  \ref{wfctwk} simply applies Algorithm \ref{wqfctwk}  with $Q$ empty.

Algorithm \ref{fsrtwk}  first determines if the
two paths exist using the algorithm in \cite{gmii}. If they do it returns them. Otherwise, it applies
Algorithm \ref{wfctwk} with $W=\{v^*\}$ and $N$ empty.  We then add edges so that each of the cutsets in the set ${\cal F}$ returned is a clique.

Algorithm \ref{qsrtwk}  applies
Algorithm \ref{wfctwk} with $W$ the vertices of $Tri$,  and $N$ empty.  We then add edges so that each of the cutsets in the set ${\cal F}$ returned is a clique.

\subsubsection{Algorithms using MSO formulas}

We describe a linear extended MSO formulations of the output and apply Arnborg et al's result to obtain linear time algorithms \cite[Theorem 5.6]{bar:ALS}.

\textbf{Algorithm \ref{fsrtwk}.} We search for the two outputs independently.

\textbf{Ferociously strong planar reduction}

\begin{lemma}
  The problem of finding a ferociously strong reduction in a \tricname\ with center $C$ can be stated in LinEMSOL.
\end{lemma}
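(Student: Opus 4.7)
The plan is to exhibit an explicit EMSO formula $\phi$ with free set variables $V_F, R, Y_1, Y_2, \ldots$ over the vertices of the \tricname\ $\tric$ such that satisfying assignments of $\phi$ are exactly the ferociously strong planar reductions of $\tric$ (together with witness colourings). Once such a $\phi$ is written, Arnborg--Lagergren--Seese \cite[Theorem 5.6]{bar:ALS} guarantees that any LinEMSOL optimisation problem built on $\phi$ can be solved in linear time on graphs of bounded tree width, which is all we need since our application of Algorithm \ref{fsrtwk} is restricted to such inputs. We do not actually need to optimise anything, so a constant objective (or, equivalently, a pure existence statement) suffices; the "Lin" is only used to report the witness sets.

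First I would set up the variables. The set $V_F$ will be the vertex set of the reduction $F$, and $R$ will be the set of vertices of $\tric - V_F$ coloured red in a quasi-reduction style witness (the complement of $R$ inside each component will play the role of the yellow part, or the entire component when two or more components share a separator). The MSO clauses to include in $\phi$ are then:
(a) $C \subseteq V_F$; (b) for every vertex $v \notin V_F$ the connected component $U_v$ of $\tric - V_F$ containing $v$ has exactly three neighbours in $V_F$ and every pair of those neighbours is adjacent in $\tric$ or is shared by another component (so that the "virtual triangle" on $S(U)$ is legitimate); (c) the auxiliary edge-relation $E_F$ obtained from $\tric[V_F]$ by adding, for every component $U$ of $\tric - V_F$, the (at most three) missing edges on $S(U)$ is 3-connected; (d) $E_F$ is planar; and (e) every $S(U)$ is a face of the (unique) embedding of $E_F$, which I would encode combinatorially as "$V_F \setminus S(U)$ is connected in $E_F$", since in a $3$-connected planar graph a triangle is a face iff its removal does not disconnect the graph.

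All of these clauses are MSO: 3-connectivity is a universal quantification over pairs of vertices plus a connectivity statement in MSO, planarity is expressed via the absence of a $K_5$ or $K_{3,3}$ minor (quantify over six or five disjoint connected vertex sets with the appropriate adjacency pattern and negate), and components are definable in MSO as maximal connected subsets. The only subtlety is that all of (c), (d), (e) refer to the \emph{virtual} graph with edge relation $E_F$ rather than $\tric$; but $E_F$ is a Boolean combination of the atomic relations of $\tric$ together with the free variable $V_F$, so every quantifier we would normally write against $E_F$ (connectivity, adjacency, existence of minor models) can be unfolded into an MSO formula on $\tric$ by replacing "edge $xy$" by the disjunction "$xy \in E(\tric) \wedge x,y \in V_F$, or there is a component of $\tric - V_F$ with both $x$ and $y$ in its attachment triangle". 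This bookkeeping is the main obstacle but is routine.

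Finally I would encode the ferocious condition. For each component $U$ of $\tric - V_F$, the formula needs to say: if $U$ is the unique component attaching at its triangle $S(U)$, then $R \cap U$ and $U \setminus R$ are both non-empty, both induce connected subgraphs of $\tric$, and both have an edge to each of the three vertices of $S(U)$; otherwise (there is a twin component with the same attachment triangle) no constraint on $R \cap U$ is needed. Uniqueness of the attaching component is MSO-expressible by saying "there is no other connected component of $\tric - V_F$ with exactly the same three neighbours in $V_F$". Conjoining all of (a)--(e) with this ferocious clause yields the desired LinEMSOL formula $\phi$, completing the statement of the problem in LinEMSOL.
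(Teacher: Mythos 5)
Your approach follows the paper's essentially verbatim: the same virtual edge relation $E_F$ built from $\tric[V_F]$ plus the separator triangles, the same MSO translation of planarity via excluded minors, the same use of an additional free set variable to witness the ferocious partition, and the same appeal to Arnborg--Lagergren--Seese for the linear-time algorithm. Two small departures are worth flagging. First, your clause~(b) requires each pair of attachment vertices of a component $U$ to be adjacent in $\tric$ or shared by another component; this is not part of the definition of a reduction, which always adds the missing edges on $S(U)$ regardless of any sharing, so the only thing (b) should enforce is $|S(U)|=3$ — the extra adjacency-or-sharing requirement would wrongly reject valid reductions and should be dropped. Second, you encode the ``strong'' face condition~(e) explicitly (via the correct observation that, in a $3$-connected planar graph, a triangle is a face iff its removal leaves a connected graph) and therefore need no objective, since by Corollary~\ref{maximalplanar} the ferociously strong reduction is unique; the paper instead omits~(e) and maximises $|V_F|$ in the LinEMSOL objective, relying implicitly on that same corollary to land on the ferociously strong reduction. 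Once (b) is corrected, both routes are sound and substantively the same.
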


\begin{proof}
  The idea is to attempt to find a 3-colouring of $G$ so the first colour is the vertices in the reduction has colour 1 and each remaining component is 2-coloured so that vertices of colour 2 within that component are connected and adjacent to all of its separator.

  We use 3 predicates: $\FRed$ which are true for vertices in the reduction and $\FFirstComp, \FSecondComp$ for the partition of remaining vertices.

  We use some well known formulas such as $\FConn(X)$ to determine if a set of vertices is connected, $\FRedConn(X)$ to test if $X$ is connected using only $\FRed$ vertices and $\FPart{k}(X,X_1,\ldots,X_k)$ to determine $X_1,\ldots,X_k$ is a partition of $X$ into $k$ sets. We use $\FPlanar(X)$ to test if the graph induced by $X$ is planar (we can do this, for example, by testing if it contains a $K_5$ or $K_{3,3}$ minor).
  \[
  \FIsRedEdge(x,y) = ~xy \in E(G) \lor \left(\exists x',y',~\neg \FRed(x') \and \neg \FRed(y') \land \FNonRedConn(\{x',y'\}) \right)
  \]
  This allows us to test the graph of $\FRed$ vertices (that should be in the reduction) with edges defined by $\FIsRedEdge$ for planarity using a formula $\FPlanarRed$.

  We need two more sub-formulas we will use often.
  \[
  \begin{array}{l}
  \textsc{Maximal}(X) = \FNonRedConn(X) \land (\forall y \neg \FNonRedConn(X \cup \{y\})) \\
  \textsc{Boundary}(X, Z) = \exists Z=\{z_1,z_2,z_3\} \land (z_1 \not \in N \lor z_2 \not \in N \lor z_3 \not \in N) \\
  \land \FAdjacent(z_i, X) \forall i \land \forall y \in V(G) - Z \neg \FAdjacent(y, X)
  \end{array}
  \]

  We abuse notation and write $\exists Z=\textsc{Boundary}(X)$ for $\textsc{Boundary}(X, Z)$ and write $\exists X_1,X_2 = \FPart{2}(X)$ for $\FPart{2}(X,X_1,X_2)$.

  The following formula ensures every component not in the reduction is partitionable into two with the desired adjacencies.
  \[
  \begin{array}{l}
  \textsc{Paritionable} = \forall X, \textsc{Maximal}(X) \rightarrow \\
  (\exists X_1, X_2 = \FPart{2}(X))\\
  (\exists Z = \textsc{Boundary}(X)) \\
  \land \FFirstComp(X_1) \land \FSecondComp(X_2)
  \land \FFirstCompConn(X_1) \land \FSecondCompConn(X_2)\\
  \land \left(\forall x \in X, y \in Z,
  \FAdjacent(X_1,\{y\}) \land \FAdjacent(X_2,\{y\}) \right)
  \end{array}
  \]

  Now we simply wish to find a colouring satisfying all these conditions (i.e., $\FPlanarRed \land \textsc{Paritionable}$ reduction planarity and partitionability of non-reduction components) which maximizes the number of reduction vertices (a clearly linear objective).
\end{proof}

\textbf{2-DRP}

Suppose we want to determine if there are two vertex disjoint paths in $G$, one from $s_1$ to $t_1$ and the other from $s_2$ to $t_2$.
  \[
  \begin{array}{l}
    (\exists~X_1,X_2,X_3 = \FPart{3}(V(G)))\\
  \land \FFirstComp(X_1) \land \FSecondComp(X_2)
  \land \FFirstCompConn(X_1) \land \FSecondCompConn(X_2)\\
  \land \FFirstComp(s_1) \land \FFirstComp(t_1) \land \FSecondComp(s_2) \land \FSecondComp(t_2)
  \end{array}
  \]

We only need to check if this formula is satisfiable.

\textbf{Algorithm \ref{qsrtwk}}. This is almost the same formula as for ferociously strong planar reduction but the vertices of $Tri$ are forced to be reduction vertices. I.e., it is the conjunction of the above formula and $\FRed{Tri}$ for the conditions (the objective remains the same).

\textbf{Algorithm \ref{wfctwk}.}

The constraint is $\FRed(W) \land \FPlanarRed \land \FFerocious$ where $\FFerocious$ is defined below.
\[
  \begin{array}{l}
  \textsc{BoundaryN}(X, N, Z)\exists Z=\{z_1,z_2,z_3\} \land (z_1 \not \in N \lor z_2 \not \in N \lor z_3 \not \in N) \\
    \land \FAdjacent(z_i, X) \forall i \land \forall y \in V(G) - Z \neg \FAdjacent(y, X) \\
  \end{array}
  \]
  We again abuse notation and write $\exists Z=\textsc{Boundary}(X, N)$ for $\textsc{Boundary}(X, N, Z)$. We need the following two sub-formulas.
  \[
  \begin{array}{l}
  \FIsFerocious(X, Z) = \exists X_1,X_2 = \FPart{2}(X)\\
  \land \FFirstComp(X_1) \land \FSecondComp(X_2)
  \land \FFirstCompConn(X_1) \land \FSecondCompConn(X_2)\\
  \land \left(\forall x \in X, y \in Z, \FAdjacent(X_1,\{y\}) \land \FAdjacent(X_2,\{y\}) \right)
  \end{array}
  \]
  So our new constraint is the following.
  \[
  \begin{array}{l}
  \FFerocious = \forall X, (\exists Z=\textsc{BoundaryN}(X, N) \land \textsc{Maximal}(X)) \rightarrow \\
  \FFerocious(X, Z) \land \\ (\neg \exists X' \supset X, (\exists Z'=\textsc{BoundaryN}(X', N) \land \textsc{Maximal}(X') \land \FFerocious(X', Z')))\\
  \end{array}
  \]
  
  The objective is maximizes the number of vertices $v$ with $v \in N \land \FNonRed(v)$.

  From the solution obtained, we can easily deduce the 3-cuts from the set of non-reduction components.

\textbf{Algorithm \ref{wqfctwk}.}
The constraint is $\FRed(W) \land \FPlanarRed \land \textsc{FerociousOrClosest}$ where $\textsc{FerociousOrClosest}$ is defined below.

We use the $\FIsFerocious(X, Z)$ from the above along with
  \[
  \begin{array}{l}
    \FClosest(X, Z) = (\exists Z=\textsc{Boundary}(X), \exists v \in Q, \\
    \FSeparate(X, v) \land \forall Y \subseteq W \neg \FSeparate(X, v))\\
  \end{array}
  \]

  where $\FSeparate(X, v) = \textsc{BoundaryN}(X, N, Z) \land v \in X$, Finally,

  \[
  \begin{array}{l}
    \textsc{FerociousOrClosest} = \forall X, (\exists Z=\textsc{BoundaryN}(X, N) \land \textsc{Maximal}(X)) \rightarrow \\
  (\FFerocious(X, Z) \lor \FClosest(X, Z)) \land \\
  (\neg \exists X' \supset X, (\exists Z'=\textsc{BoundaryN}(X', N) \land \textsc{Maximal}(X') \land \\ (\FFerocious(X', Z') \lor \FClosest(X', Z'))))
  \end{array}
  \]

\subsection{Adding Back Edges}

If $G_{i+1}$ was obtained from $G_i$ by deleting a set $X$ of edges,  we want to either find the desired two paths in $G_{i}-v^*$ or find a reduction $F_i$ of $G_i$.
We know that any such reduction is also a reduction of the graph $G'_i$ in which we maintain the contractions which yielded the red and yellow  vertices of $F_{i+1}$.  We need to add the edges of $X$  to $F_{i+1}$, determine which edges create
non-planarity and then either use the non-planarity to find the desired two paths or find the cuts which
separate the non-planarity from $C$ thereby yielding a ferociously strong reduction.

In a preprocessing step, we show we can reduce to the  graph $D'_{i+1}$ formed by the  planar graph $D_{i+1}$ (obtained from the green subgraph of $F_{i+1}$  by adding a triangle on the neighbourhood of each red vertex) and the subset $X'$ of $X$
with both endpoints in this set.

We then  use a linear time algorithm of Eppstein and Reed\cite{ERplanarlaminar} which  given a  vertex $v$ of a   3-connected planar graph finds a maximal set  of   $v$-non-shiftable laminar 3-cuts within it  We apply this with $v=v^*$ to $D_{i+1}$.
Because of the connectivity conditions on the edges of $X$, this also yields a decomposition of $D'_{i+1}$.
By Lemma \ref{cavredlem}, we have reduced our problem to that of finding (i) for each non-root node $t$  a quasi-reduction
of the torso with respect to $W_t \cap W_{p(t)}$, and (ii) either the desired two vertex disjoint paths
or  a ferociously strong reduction in the torso of the root.

By Lemma \ref{lamlem} we know that all of the cuts used by a ferociously strong reduction will be $W_t \cap W_{p(t)}$
for some arc of the resultant  tree decomposition. Thus if the  torso for $t$  is non-planar then if $t$ is the root the desired two
paths exist while otherwise the only quasi-reduction of the torso with respect to $W_t \cap W_{p(t)}$  has no green vertices except these three.If the torso is planar then we
simply colour it all green. Otherwise working locally and using our algorithms for graphs of bounded tree width, we find the
desired two paths or quasi-reduction. Forthwith the details.

We know that  for each edge $xy$ of $X$ there are $c$ internally vertex disjoint paths
of $G_i$ between $x$ and $y$. Thus,  if  the edge has one endpoint which is  in the subgraph contracted to a  red or yellow vertex of $F_{i+1}$ then the other
endpoint is either in the same vertex of $F_{i+1}$, in one of its neighbours, or in  a twin to which it is already joined by an edge.  So, we can add  such edges. Next, we add any edge of $X$  joining two of the three neighbours of a yellow vertex, as all these edges  lie in a face and no pair of them cross.
We let $X'$ be the set of edges of $X$ we have not yet added back.

Now, for each red vertex $w$, we delete $w$ and its twin set and add edges so $N(w)$ is a clique to obtain
 the   3-connected  planar graph $J=D_{i+1}$. It is an easy matter to do this in linear time.  Lemma \ref{cavredlem} implies that, letting $J'$ be the graph obtained from $J$ by adding the edges  in $X'$,  given the desired ferociously strong reduction or two disjoint paths in $J'$ we can find the same object in $G'_i$.
For every edge $xy$ of $X'$,    any cutset
$Z$ of $J'$ separating $x$ from $y$ is also  a cutset in $G_{i+1}$. So, $x$ and $y$ are joined by at least $c>24$
internally disjoint paths of $J$.
Furthermore, since $x$ and $y$ do not have a common yellow neighbour, none of these paths is an edge.

Now, we find a maximal set of $v^*$ non-shiftable laminar  3-cuts of $J$ and corresponding tree decomposition
of adhesion 3  for $J$ using the algorithm of Eppstein and Reed\cite{ERplanarlaminar}.
Because  the endpoints of every edge we want to add are joined by $24$ paths of $J$,  for every edge $xy$ of $X'$  there is a node of the tree decomposition containing  $x$ and $y$. For any such $x$ and $y$ which lie in a 3-cut $X$ of the tree decomposition we add the edge $xy$ to $J'$, maintaining planarity since these
cuts  are laminar and any two vertices in a minimum cut  lie in a face.  We also delete such edges from $X'$.  So we can assume each edge we need to add  lies in the torso of a unique node of the tree decomposition.

We traverse the tree starting at the root. For each node $t$,  we try to add the set of edges $X_t$  it remains to add  with both endpoints in the torso  to the torso $J_t$ for $J$ so as to maintain planarity. If we are able to do so, we add them and move on to consider
the children of $t$. If we are unable to do so  then  (i) if $t$ is the root we will find the desired two paths, if (i) $t$ is not the root
then for the parent $p$ of $t$, we will find a partition of the torso of $T-W_{p(t)}$  into  a red
 vertex and a non-empty set of  yellow vertices each with an edge to all of $W_{p(t)} \cap W_t$.

We first add any edge of $X_t$ which is parallel to an edge of $J_t$ (we can do this in linear time by lexicographically sorting
both edge sets)
We note next that  because $J_t$  is 3-connected, every two non-adjacent vertices lie together on at most one face.
Our first step is to determine for each edge of $X_t$, whether or not its endpoints  lie on the same face,
and if so which.  This can be done in linear time as follows.

We take an ordering of the nodes of the  face-vertex incidence graph  for $J_t$ so that each
node sees only 5 nodes earlier in the order. We have an array indexed by the nodes listing the other nodes which it is adjacent to  which appear earlier in the order.
Using  two queries to the entries in this array for  the endpoints of an edge of $X_t$, we can determine in constant time whether they  lie on the boundary of  a face which appears before
 both  the  endpoints in the order.
We then traverse the order again, and using a query into the array for a vertex and then a query for each face on the list returned, we can
 construct, in linear time,  another  array containing for each vertex $v$  a list of the  at most 25 vertices which lie on a face with $v$ which lies
after the other vertex but before $v$  in the order.
Using this
array, in constant time per edge, we can determine  which edges of $X_t$  lies on a face which appears between its endpoints in the order.  Finally, for each face, we determine all of the at most 25 pairs of vertices on its boundary both of which appear earlier in the order, We create a list  which contains for each face and pair of vertices on its boundary and before it in the order,
of the pair and the face.
We sort  the union of $X_t$ and this list (ignoring the face names in the sorting)
 using lexicographic bucket sort in linear time. An edge  of $X_t$ which is a pair of vertices on a face which lies after both vertices  will be consecutive
with the corresponding triple and so we can determine all such edges.

We let $X^1_t$  consist of those edges in $X$ whose endpoints lie
together on a face, and $X^2_t$ be $X_t-X^1_t$.

If $X^2_t$ is non-empty
then we let $a_1b_1$ be one of its edges labelled so $d(a_1)>d(b_1)$.

Otherwise, for each face $f$ of $J_t$ , we let $X_f$ be the set of edges of $X^1_t$ both of whose endpoints are in  the boundary of $f$.  We  partition $X^1_t$ into these sets. For each $f$ we determine if the auxiliary  graph  consisting of the boundary of $f$, a vertex adjacent to all
of  this  boundary, and the edges of $X_f$ is planar and if so find an embedding of it. Since every edge of the torso  is in at most two faces,
and the number of auxiliary edges in an auxiliary  graph is  at most the number of edges of the torso it contains, this takes  total time linear in the size of the torso.
If  the auxiliary graph is planar for every $f$,  we  can add the edges of $X_f$  to our embedding of $J_t$ as suggested by the auxiliary  embeddings, and move on to the children of $t$.
Otherwise, we let  $f$ be a face for which the corresponding graph was non-planar.
We note that  there must be two edges  $a_1a_2,b_1b_2$ of $X_f$ with distinct endpoints which  appear in the order $a_1,b_1,a_2,b_2$ around $f$.

We define $B$ to be $C$ if $t$ is the root, and to be $W_p \cap W_t$ otherwise. For each vertex  $w$ of the torso,
We let  $d_w$ be the distance in the face vertex incidence graph from $w$ to  some specific vertex $v'$
of $W_p(t) \cap W_t$ (or $v^*$ if $t$ is the root). It is an easy matter to construct the face vertex incidence graph and determine
these values in linear time using breadth first search.

\begin{lemma}
For some constant $k$, in linear time, we can find a  minor  $M$  of  $J_t$  of tree width at most $k$ which contains the vertices of $B$ (so we did not contract edges incident to these vertices) , such that
$M+a_1b_1$  contains a $K_{3,3}$ subdivision and  we cannot separate five of the six centers  of the subdivision from $B$ by a 3-cut of $M$ other than $B$ itself.
\end{lemma}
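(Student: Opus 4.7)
The plan is to extract a bounded-size witness $R \subseteq V(J_t)$ with $a_1, b_1 \in R$ for the Kuratowski obstruction that prevents $a_1b_1$ from being embedded, and then to attach the vertices of $B$ to $R$ via a constant number of internally vertex-disjoint paths of $J_t$, each contracted to a single edge. The resulting minor $M$ will have constant size, hence constant tree width by Lemma \ref{brbtw}.

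First I would build the witness $R$. When $X^2_t = \emptyset$, the four vertices $a_1, b_1, a_2, b_2$ appear in this cyclic order around a common face $f$ of the 3-connected planar graph $J_t$; the boundary cycle of $f$ together with the crossing chord pattern $a_1a_2, b_1b_2$ already gives a $K_4$-like skeleton which, combined with three internally disjoint paths from $\{a_1, a_2\}$ to $\{b_1, b_2\}$ through $J_t$ (obtained from 3-connectivity of $J_t$), yields a $K_{3,3}$ subdivision of $J_t + a_1b_1$ lying inside a bounded-depth ball around $f$ in the face-vertex incidence graph. The precomputed $d_w$ array lets us identify this ball in time linear in its size. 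When $X^2_t \ne \emptyset$, by definition $a_1, b_1$ lie on no common face of $J_t$, and the standard characterization of non-cofacial vertices in a 3-connected planar graph yields three internally disjoint $a_1b_1$-paths whose cyclic rotations at $a_1$ and at $b_1$ interleave in a manner incompatible with planarity of $J_t + a_1b_1$; a short local rerouting inside the bounded face-vertex ball around $a_1$ converts this theta structure into the desired $K_{3,3}$ subdivision.

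Next I would attach $B$ to $R$. Since $J_t$ is 3-connected and $|B| \le 5$, for each $b \in B \setminus R$ I can find four internally vertex-disjoint paths from $b$ to $R$ in $J_t$, enlarging $R$ by a constant number of vertices if necessary so that $R$ presents at least four distinct attachment points for each $b$. Contracting each such path to a single edge and taking the union with $J_t[R]$ produces a minor $M$ of $J_t$ of constant size, which contains $B$ and still carries the six centers of the $K_{3,3}$ subdivision. Linear running time follows because the construction requires only $O(1)$ flow computations on the linear-sized torso.

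The main obstacle is the final condition, that no 3-cut of $M$ other than $B$ itself separates five of the six centers of the subdivision from $B$. The strategy is to choose the four paths from each $b \in B$ so that their endpoints in $R$ are spread over at least four distinct vertices; then any 3-cut $Z$ of $M$ with $Z \cap B \subsetneq B$ which separates the centers from $B \setminus Z$ would have to absorb all $4 |B \setminus Z|$ path-endpoints into its $\le 3$ vertices lying in $V(R)$, a contradiction as soon as $|B \setminus Z| \ge 1$. To rule out 3-cuts lying entirely inside $R$, I would enlarge $R$ by one further constant-depth BFS layer so that $R$ becomes 3-connected as an induced subgraph of $J_t$; this uses 3-connectivity of $J_t$ together with the fact that any local ``pinch'' in $R$ can be absorbed by adding at most a constant number of extra vertices lying within the bounded face-vertex ball around $a_1$. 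Making this absorption precise, and checking that the path-contractions never create spurious 3-cuts isolating the centers from $B$, is the most delicate step of the argument.
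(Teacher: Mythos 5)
There is a genuine gap, and the approach diverges from the paper's in a way that matters. Your core idea is to carve out a \emph{constant-size} subset $R$ around $a_1,b_1$ witnessing the Kuratowski obstruction and then hook $B$ onto $R$ via a few contracted paths. The lemma, however, does not promise a constant-size minor: it promises bounded \emph{tree width}, and the paper exploits this distinction in an essential way. The $K_{3,3}$ subdivision the paper builds uses the entire boundary $Bd$ of the face of $M-a_1$ containing $a_1$ (respectively, the whole cycle $bd(f)$ in the crossing-chord case), and this boundary can be arbitrarily large; it does not sit inside a ``bounded-depth ball'' of the face-vertex incidence graph in the sense of having bounded cardinality. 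What the paper actually controls is the \emph{diameter} of the face-vertex incidence graph of the minor: it contracts all components formed by vertices $w$ with $d_w$ beyond a fixed threshold, then invokes Lemma~\ref{brbtw} to bound tree width. So the first part of your plan already assumes something stronger than what is available.

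The second, more serious gap is in the ``no bad 3-cut'' condition. Your argument — that a 3-cut $Z$ with $Z\cap B\subsetneq B$ cannot absorb all $4|B\setminus Z|$ path-endpoints into its $\le 3$ $R$-vertices — does not go through: $Z$ only needs to separate some $b\in B\setminus Z$ from five of the six centers, and the four endpoints of $b$'s paths in $R$ could all lie on $b$'s side of $Z$ without contradiction, as nothing forces those endpoints to be near the centers. Similarly, enlarging $R$ until it is 3-connected would merely forbid cuts of size $\le 2$; 3-cuts of $M$ lying inside $R$ (or straddling $R$ and the contracted paths) would remain and are exactly what the lemma needs to exclude. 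The paper handles this with a fundamentally different mechanism: it layers the graph by $d_w$-levels, observes that any 3-cut of $M$ separating five centers from $B$ must miss a ``thick'' annulus of levels and hence corresponds to a 3-cut of $J_t$ itself, and then shows that such a cut can be ``shifted'' into a $v^*$-non-shiftable cut, contradicting the maximality of the laminar family used to build the tree decomposition. Your proposal never brings this non-shiftability/maximality structure into play, and without it there is no way to certify that $M$'s 3-cuts behave correctly. I would suggest reworking the argument around the level-contraction idea rather than a constant-size witness: keep an annulus of at least a dozen consecutive levels intact so that every potential bad 3-cut transfers back to $J_t$, and only then appeal to the absence of $v^*$-non-shiftable cuts in the torso.
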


Having found $M$ ,  if  $t$ is the root, we can find the desired two paths of $G_i-v^*$ within it and are done.
Otherwise we  apply Algorithm \ref{fsrtwk} to $M$ and the triangle
on $B$.  In the output, the only green vertices will be in $B$.Since $J$ is planar, $J_t-W_{p(t)}$ must be connected.
So we can extend the red-yellow colouring of $V(M)-B$ to a red-yellow
colouring of $J_t+X_t-B$  using breadth first search and are done. So to complete this case, it remains to prove the lemma.

\begin{proof}

A cutset of $J_t$ separating  $a_1$ and $b_1$ also separates them in $G_{i+1}$ Thus, there must be $c$ vertex disjoint paths from $a_!$ to $b_1$ in the torso.

If $d(a_1)$ is  at most $600$ then we contract the components of  $J_t$   formed by  $w$ with $d_w>620$ into vertices
(it is easy to do this in linear time).
We can create no 2-cuts by doing so, the resultant graph $M$   is still 3-connected  and inherits a unique embedding.
It has bounded tree width.

There are  four internally vertex disjoint  paths from $a_1$ to $b_1$ in  $M$ as  any  three cut
of $M$  is disjoint from the vertices with $|d_w-i| \le 1$
for some $i$ between $606$ and $618$. Hence, any minimal such cutset  $Z$   of $M$ separating $a_1$ and $b_1$ is also a cutset of $J_t$ which is impossible.

If $a_1b_1$ is in $X^t_2$, these paths, together with the edge $a_1b_1$ and the boundary $Bd$  of the face of $M-a_1$ containing $a_1$ contain a $K_{3,3}$ subdivision
of $M+a_1b_1$, two of whose centers are
$a_1$ and $b_1$ and the other  four  of which lie in $Bd$.
If $a_1b_1$ lies in a face and crosses $a_2b_2$ then there are two internally disjoint $a_i$ to $b_i$ paths in $M-bd(f)$.
There is a $K_{3,3}$ subdivision in the union of these four paths, $bd(f)$ and the two crossing edges four of whose centers are $\{a_1,b_1,a_2,b_2\}$ and the other two of which are off $bd(f)$.

We claim $M$
contains   no 3-cut other than $B$ separating off $B$ from five of the six centers of our subdivision.
Any such cut $Z$   is disjoint from the vertices with $|d_w -i| \le 1$
for some $i$ between $604$ and $618$. Hence, it is also a cutset of the torso.

If $a_1b_1$ is in $X^2_t$ then since there are five disjoint paths from each of $a_1$ and $b_1$ 
to the other five centers, both $a_1$ and $b_1$ are separated from $B$ by $Z$. Thus, 
since there are three paths of $J_t$ from $a_1$  to $B$ all of which pass through $Bd$, and $b_1$ is not on $Bd$ but four of the centers are,
 it is easy to see that one of $a_1$ or $b_1$ is  in a component  $K$ of $J_t-Z$  disjoint from $B$.
 
 Otherwise if some $a_i$ or $b_i$ is not separated from $B$ by $Z$ then the path from $B$ to it 
 must go through $12$ cycles which are disjoint except at $a_{3-i}$ and $b_{3-i}$. 
 Hence $Z$ must contain both $a_{3-i}$ and $b_{3-i}$. But there are three paths from 
 $B$ to $bd(f)$ each of which must pass through $12$ cycles which are disjoint except 
 at $a_i$ and $b_I$. so $b_i$ must also be in $Z$. But now $Z$ is in $bd(f)$ and $J_t-bd(f)$ 
 is connected because $J_t$ is 3-connected. So, there are two centers of the subdivision in
 the same component of $J_t-Z$ as $C$, a contradiction.  So every $a_i$ and $b_i$ is separated from $B$ by $Z$ and so one must be cut off by $Z$.

 Hence, since the $a_i$ and $b_i$  have degree  at least 24, the component of $J_t-Z$ containing
 such a vertex 
has more than  $4$ vertices.  Since the torso is 3-connected, by shifting from a vertex in the cut that sees
none of the rest of the cut  and only one vertex in $K$ to its neighbour on $K$, up to 3 times,
we obtain a $v^*$-non-shiftable  cut of $J_t$ which contradicts our choice of a  maximal laminar set of
such cuts(we use this argument below repeatedly).

If $d_{a_1}>600$ and the distance in the face incidence graph between $a_1$ and $b_1$ is at most 150
then we let $J^+_t$ be the subgraph obtained by contracting the components of  all the vertices  at distance more than $300$  from $a_1$ in the face-vertex incidence graph for  the torso. It is an easy matter to construct this graph in linear time.

There are  four internally vertex disjoint  paths from $a_i$ to $b_i$ in  the subgraph $K^+$ of $J^+_t$
consisting of those vertices at distance less than  200 from $a_1$  as  any  three cut
of this subgraph is disjoint from the vertices  at distance $\{i-1,i,i+1\}$ from $a_1$
for some $i$ between $160$ and $190$. Hence, any minimal such cutset  separating $a_i$ and $b_i$ is also a cutset of the torso  which is impossible.

If $a_1b_1$ is in $X^t_2$, these paths, together with the edge $a_1b_1$ and the boundary $B$  of the face of $J^+_t-a_1$ containing $a_1$ contain a $K_{3,3}$ subdivision
of $K^++a_1b_1$, two of whose centers are
$a_1$ and $b_1$ and the other  four  of which lie in $B$.
If $a_1b_1$ lies in a face and crosses $a_2b_2$ then there are two internally disjoint $a_i$ to $b_i$ paths in  $K^+$
disjoint from $bd(f)$.
There is a $K_{3,3}$ subdivision in the union of these four paths, $bd(f)$ and the two crossing edges.

Since there are no $v$-non-shiftable 3-cuts  other than $B$  in the torso,
$J^+_t$ contains a subgraph $J'_t$ consisting of (i) disjoint  concentric cycles $C_1,...,C_{15}$  such that  all $w$
with   $|d_w- d_{a_1}<200$  lie within
$C_{15}$, $v^*$  lies outside $C_1$,  and all the vertices of any $C_i$  are at distance between 200 and  280 from $a_1$
together with 4  vertex disjoint paths
$P_1,P_2,P_3,P_4$   from $C_1$ to $C_{15}$ whose interior lies between  them such that the intersection of each $P_i$ with a $C_j$ is a path. We  can find such cycles  and paths and contract  $C_5$ into   a set of vertices  $X=\{x_1,...,x_4\}$ so that  $P_i \cap C_5$ is in $x_i$, in linear time.

In the same vein, letting $J^-_t$ be the subgraph obtained by contracting  the components in the graph induced by $w$ such that $d_w>300$, we can find in linear time
 a subgraph $J^*_t$  of $J^-_t$ consisting of disjoint  concentric cycles $C'_1,...,C'_{15}$  such that  all the vertices at distance at most 200 from $B$ lie within
$C'_{15},~~a_1$  lies outside $C'_1$, every vertex$w$  in a $C_i$ satisfies $200<d_w<280$,
together with 4  vertex disjoint paths
$P'_1,P'_2,P'_3,P'_4$   from $C'_1$ to $C'_{15}$ whose interior lies between  them such that the intersection of each $P'_i$ with a
$C'_j$ is a path. We contract  $C'_5$ into a set of vertices  $X'\{x'_1,...,x'_4\}$ so that  $P'_i \cap C'_5$ is in $x'_i$.

We claim that  in the graph obtained from the torso by contracting into these 8 vertices, there are four  vertex disjoint  paths from
$X$ to $X'$. If there is a 3-cut $Z$ showing this to be false
it must be disjoint from some $P_i$ and some $C_j$ with $j>5$. as well as some $P'_k$ and $C'_l$ for $l>5$.  so, $Z$ must separate  $C_j$ from  $C'_l$ and hence separates $B$ from the centers of the subdivisions
which is impossible. Now we consider, the graph $M$ obtained from this contraction of the torso by contracting these four paths to edges and deleting all the other vertices outside of both $C_5$ and $C'_5$, contracting the  components  of vertices inside $C_5$ whose distance  from $a_1$ exceeds $300$ and those within $C'_5$ whose distance from $v'$ exceeds 300.
Twice mimicking the argument  we just used, we see that there   can be no 3-cut of $M$  separating the centres of the subdivision from $B$ other than $B$  itself. Furthermore $M$ has  radius at most $1200$ and hence bounded tree
width so we are done.

If $d_{a_1}>400$ and $d_{a_1,b_1}>150$,  we proceed in almost the same way.
Now our concentric circles around $a_1$ will lie at distance between $60$ and $120$ from it.
Then having obtained a graph as we did above, we add an induced path of the torso  from $b_1$  to it.
It is easy to see that this increases the tree width by at most 2 and that this graph together with $a_1b_1$  has a $K_{3,3}$ subdivision which
cannot be separated from $B$ by any 3-cut other than $B$. we are done.

\end{proof}

\subsection{Adding Back Vertices}

The approach we take when $G_{i+1}$ is obtained from $G_i$ by deleting a set $S$ of vertices is quite similar to that
when it is obtained by deleting a set of edges. Again, we know that we will not uncontract any red or yellow vertex.

If a vertex  $s$ of $S$ sees a yellow or red vertex $v$ , then  it can only add a loop or parallel edge to $F_{i+1}$, since every other non-neighbour  of $F_{i+1}$
is separated from $v$ by a 3-cut of $G_{i+1}$ and $3<c$.  So, we can contract every such vertex into
the corresponding  $v$.  We can also add as a yellow twin for a red $w$  any vertex of $S$  which has the same neighbourhood
as $w$. We let $S'$ be the remainder of $S$.

We once again, for each yellow vertex $w$, delete $w$ and its twin set and add a clique on $N(w)$,
to obtain a 3-connected planar graph $J$ and corresponding graph $J'$ obtained by adding the vertices of
$S'$ and the edges from them. It is enough to find the desired  ferociously strong planar reduction or
two paths in $J'$.

We again build a tree decomposition using a maximal set of laminar $v^*$-non-shiftable 3-cuts in $J$,
and consider each torso. Once again, we know that all the neighbours of a vertex $v$
 in $S$ lie in $W_t$ for some $t$. If this $t$ is not unique then $v$ has three neighbours which correspond to an edge
 of our tree decomposition. But then the neighbours of $v$ form a 3-cut, separating  $v$ and  the at least one other component of
 $J'-N(v)$ from $v^*$, hence we can reduce our problem by simply deleting $v$ and everything lying below the cut in the tree decomposition (and keeping a triangle on the cut- which does not change the torso above).
 To do this in linear time, we lexicographically sort the neighbourhoods of vertices of $S$ of degree 3, and the $W_s \cap W_t$
 for the edges $st$ of the tree. We determine which edges of the tree correspond to neighbourhoods, and by traversing the tree in postorder
 find those which are not cut off from the root by any other such edge. We then perform a reduction for each of these
 edges.
After this preprocessing, all of remaining vertices of $S$ have neighbours in exactly one torso. We let $S_t$ be those
whose attachments are to $t$.

For each torso, we will either determine that we can extend the planar embedding of the torso
by adding the relevant vertices of $S$ to it, or find the desired 2-paths or ferociously strong reduction (if we are dealing with the root), or a quasi-reduction with respect to $W_p \cap W_t$ if $t$ is not a root.

Now each vertex of $S_t$ has degree at most $d$ and so, mimicking our approach when we added back edges, we can determine
in linear time, for each vertex of $S_t$ and each pair of its neighbours, the set of  at most 2 faces whose boundary  contains both of the elements of the pair (there may be two faces here because the neighbours of $S_t$ can be adjacent).
If there is any pair which do not lie  together in a face then, as in the case of adding back a matching
it follows that  there is a $K_{3,3}$ subdivision which we can use to, in linear time,  either find the torso (if $t$ is the root)
or obtain a reduction on $W_t \cap W_{p(t)}$ otherwise.

For each face $f$ we let $S_f$ be the subset of $S$ which have at least two neighbours on  $bd(f)$.
For each  $S_f$ we first preprocess by  partitioning the vertices  with 3 neighbours on $bd(f)$  according to their neighbourhoods.
We only take one element from each partition class to obtain $S'_f$.

Suppose that some element  of this partition contains a vertex $v$  of degree bigger than 3. Then  there is a $K_{3,3}$ subdivision
with  $v$ ,  another vertex of the partition element,  their common neighbours on $bd(f)$, and $v^*$. Now, any $3$-cut of the torso separating 5 of the 6 centres  from $B$
 cannot use $v$. so it cuts off a neighbour of $v$ and   all but three of the at least  $24$ 
neighbours of  this neighbour. So,    our set of $v^*$-non-shiftable cuts was not maximal.  As in the case of adding
back a matching  we can use this $K_{3,3}$  to, in linear time, either  find the desired two paths (if $t$ is the root) or obtain a quasi-reduction of the torso with respect to  $W_t \cap W_{p(t)}$ with only three green vertices otherwise.

We consider the graph with vertex set $V(bd(f)) \cup S'_f \cup x^*$ and edge set  the union of $E(bd(f))$,
the edges from $S'_f$ to $bd(f)$ and an edge from $x^*$ to every vertex of $bd(f)$. As in the  case of adding back a set of edges, we can determine in linear time  the set of $f$ such that we can add  $S'_f$  in $F$ whilst maintaining planarity.
If $S'_f$ was not planar, then we can find $x$ and $y$ in $S'_f$ and  distinct neighbours
$a,c$ of $x$ and $b,d$ of $y$ such that $a,b,c,d$ appear in the given order around the face. As in the case of adding
back a matching it follows that there is a $K_{3,3}$ subdivision which we can use, in linear time, to either  find the two paths (if $t$ is the root) or obtain a quasi-reduction of the torso with respect to  $W_t \cap W_{p(t)}$ otherwise.

So, we can add the vertices of  $S'_f$ all of whose neighbours are on $bd(f)$ to the planar embedding of the
torso. Any element of $S_f-S'_f$  all of whose neighbours are in $bd(f)$ must have 3 neighbours  all of which lie in $bd(f)$ and we can do a reduction
on these neighbours and add these vertices in also.

If the torso is $K_5$, we immediately  find the desired two paths or quasi-reduction.
Otherwise, for any other $v \in S$, there are
three neighbours $y_1,y_2,y_3$ of $v$  every two of which are on a common face but such that no face contains all three.
In this case we have a $K_{3,3}$ subdivision five of whose centres are $v,v^*, y_1,y_2,y_3$ and the sixth of which is in a component of $J_t-y_1,y_2,y_3$ not containing $v^*$. It is not hard to see that any 3-cut separating 5 of the 6 centres
of this subdivision from $v^*$ in $J$ cannot contain $v$ and so separates $v^*$ from $v$.
If there is such a 3-cut
then  since there is no   $v^*$-non-shiftable 3-cut in the torso, it has one component not containing $v^*$ which is a triangular face,
a vertex,or an edge,  Furthermore, each vertex in the component has at most three edges out of the component.
 Now, $v$ must have  at most 6 neighbours, 3 of which form such a cut.  It is an easy matter find all such cuts  in linear time. They cannot cross because every two neighbours of each vertex of $S$
are joined by $c$ paths. If  for some $v$ there is no such 3-cut then as in the case of adding
back a matching  we can use this $K_{3,3}$  to, in linear time, either  find the torso (if $t$ is the root) or obtain a reduction on
$W_t \cap W_{p(t)}$. Otherwise, it is an easy matter to reduce on all the cuts in linear time.

\subsection{Uncontracting A Matching}

Suppose  that $G_{i+1}$ was obtained from $G_i$ by contracting the edges of a matching $M$.
We recall that $M$ is   induced  which has the advantage that when we
uncontract a vertex, we do not uncontract any of its neighbours.

 We define $G'_i$ to be the graph obtained from $F_{i+1}$ by uncontracting
all the green vertices corresponding an edge of $M$.
We want to  both uncontract these green vertices to obtain a submatching $N$ of $M$
 and   either do some contracting  to obtain $F_i$ or find the desired two paths.  We will interleave these processes,
 as we now sketch.

In a first phase, we uncontract  some of the green vertices   and  determine for some of the others that
at least one endpoint  of the corresponding edge of $M$ will have to be cut off by a 3-cut of the reduction for $F_i$ which
does not contain it. We delete the corresponding edges from $N$.
For edges of the first type,we may then contract  one or both of their endpoints into a twin set.
We  do not uncontract a vertex  of the second type,   but rather add it to either to a set $Y^*$  of vertices which will be uncontracted later    and   both of whose endpoints must be cut off by some 3-cut in any reduction, or to a set $Y^+$ which will be uncontracted at a later point to an edge at least one of whose endpoints  must be cut  off by some 3-cut  in any reduction (which may contain the other endpoint).

We will ensure that at the end of the first phase, the green and red vertices still form a planar graph, the  red vertices still form a stable set of degree three nodes
with distinct neighbourhoods  and that the yellow vertices each are a twin  of some red vertex. Furthermore,
each set of a red vertex and its yellow twins is either stable or an edge.

Furthermore, we will ensure that for every  vertex which remains to be uncontracted  to a matching edge in the second phase,
each endpoint either sees all or none of every twin set and no twin set has an edge to both endpoints.
In  the last part of our uncontraction process,  we choose a set of  these vertices  to uncontract  which leaves the  graph formed by the red and green vertices  planar. Note that the properties of the last paragraph are also maintained.


Now, uncontracting  the vertex corresponding to any remaining edge of $N$  yields a nonplanar graph.
This allows us to show that for every vertex  which we chose  not to  uncontract during the second phase,
there is a local $K_{3,3}$ subdivision containing the endpoints of the edge five of whose centres
must be cut off from $v^*$  in any planar reduction.
So we add the corresponding vertex to $Y^+$. We now uncontract all the vertices  of $Y^+$ to obtain a matching $N^+$ and the vertices  of $Y^*$ to a matching $N^*$ with set of endpoints $S^*$.
We have  that our  desired reduction is a  set of  strongly ferocious 3-cuts which cuts off all of the red,  and yellow vertices, all of $S^*$, and at least one endpoint of each edge of $S^+$. We will  show that any set of cuts which does this is in fact a reduction.
This allows us to find the desired reduction or pair of paths by applying our algorithms
for graphs of  bounded tree width  locally in an auxiliary 3-connected graph.
Forthwith the details.


\subsubsection{Phase I}

In the first phase, we consider the edges of $N$ in turn, repeatedly uncontracting some, and not uncontracting others.
We will ensure that throughout the  first phase, (i) the green and red vertices still form a planar graph $J$ which is the subdivision of a planar graph where the vertices of degree 2 all 
have a red neighbour, (ii) the  red vertices still form a stable set of degree three nodes
with distinct neighbourhoods, (iii) we have a graph $J'$ obtained by adding at least one  yellow twin of each red vertex  such that  each set of a red vertex and its yellow twins is either stable or an edge, and  (iv) a graph  $J^*_i$  obtained by uncontracting some of the green vertices, such that Uncontracting any subset of these edges leaves a 3-connected graph. 

The modifications we describe can easily be performed in constant  time  per edge examined since  the vertices of $N$ all have degree at most $d$ and red and yellow vertices have degree at most four. There are three cases in which we deal with the edge in this first phase, as set out below.

We examine the edges  contracted into green vertices which have 2 non-yellow neighbours,  first. For each of these either Case 1 holds and we contract  one of its endpoints into a non-green neighbour, or Case 2 holds and we either contract both of its
endpoints into a non-green twin set, or we uncontract it into two green vertices, or we decide not to uncontract, remove it from
$N$ and do not consider it again. Furthermore, because our matching is induced we do not need to reexamine an edge,
our uncontractions and contractions cannot change the number of non yellow neighbours  of the vertex it has been contracted into.
We next check which vertices of degree three satisfy Case 1. Again since the matching is induced we do not need to reexamine
edges. Finally we check for edges satisfying Case 3. Since, our decision on these vertices is not to uncontract them, again we
do to need to reexamine any edges.

Case 1: $xy$ is an edge contracted to a green vertex $v$ for which
there is some red vertex $w$ such that (i) $y$  has only edges
to the twin set for $w$, the neighbours of $w$,   and $x$
 and (ii)  $x$ also  sees a vertex of the twin set.

We contract all the edges from $y$ to the twin set, except  for that to one twin adjacent to $x$ if $y$ sees all the twins. We obtain a bigger twin containing  $y$ which we colour red if it contains $w$ and yellow otherwise. We  obtain an embedding of our new $J$ 
from our old embedding by relabelling $v$ as $x$. Apart form this, $J$ remains unchanged.
\vskip0.2cm

Case 2: Not Case 1, and $xy$ is an edge contracted to a green vertex $v$ with  exactly two non-yellow neighbours
$w$    and  $z$.

If  $w$  is red and  $z$ is green   then
since we contracted an induced matching   $z$ is not to be uncontracted.
Since  both  $x$ and $y$ have degree 3 in $G_{i+1}$ they  both see $w$ or one of its twins.
Thus,  since we are not in Case 1,  both  $x$ and $y$ see $z$.
We contract each of the twin set  into one of $x$ or $y$, where each of $x$ and $y$ have at least one twin contracted  into them.
We colour red and leave labelled $w$ the new vertex containing $w$.
We colour the other  new vertex yellow. We modify  $J$ by suppressing the degree 2 vertex $v$,  so we do not create new cutsets of size 2.

Suppose next that both  $z$ and $w$ are red  and $x$ sees only $z$ and its twins  while  $y$ sees only $w$ and its twins.
Then, $x$ sees all the twins of $z$ and $y$ sees
all the twins of $w$, so we  uncontract $xy$ and modify $J$  by subdividing the edge $vz$,  labelling the new vertex $x$,  and relabelling $v$ as $y$.

Otherwise, since we have not already uncontracted $xy$ and every twin of $w$ or $z$ sees one of $x$ or $y$,
 we can find a matching from $xy$ into both twin sets. We let $A$ and $B$ be the two resulting three edge paths.
 We know there are three   paths  disjoint except at $v^*$, from $v^*$
to $N(z) \cup N(w)$ in $F_{i+1}$ disjoint from $\{z,w\}$ and hence also $v$. For any set of three such paths there is a $K_{3,3}$
subdivision in the union of the edge set of paths, A,B,and the edges incident to $A$ and $B$ with four centers
the endpoint of the paths, and a center in each of $A$ and $B$. Any 3-cut separating $v^*$ from the other  centers of this subdivision must intersect each of the three paths, so contains none of $A$ or $B$ or the twin sets of $w$ or $z$.  We do not uncontract $xy$ but simply record this fact by adding $v$ to $Y^*$ whilst  deleting $xy$ from $N$ and adding it to $N^*$. We associate the vertices $x$ and $y$ with this subdivision. \

Case 3:  $xy$ is an edge of $N$  corresponding to a green vertex $v$  of degree at least three such that there is a  red vertex $w$ for which both $x$ and $y$,  see $w$ or one of its twins.

We note that both $x$ and $y$ see a vertex which is neither a twin nor a neighbour of $w$ or we have
already uncontracted $xy$. We can and do  choose a twin $w'$ of $w$ so that there is a matching from $x,y$ to $w,w'$.
 For any  three paths from $v^*$ to $N(w)$ in $D_{i+1}$, there is a $K_{3,3}$ subdivision in the uncontraction of the edges of these paths
and the edges from $w$ and $w'$ whose centers are $v^*,w,w'$ and a vertex in the uncontraction of each element of $N(w)$. Any 3-cut $X$ separating the other 5 centres
of such a subdivision from $v^*$ contains a vertex of each of the three subdivision paths from $v^*$ and hence contains neither $w$ nor $w'$.  If it does not separate all the twins of $w$
from $v^*$, then it must see  a vertex in the uncontraction of each vertex of $N_{F_{i+1}} (w)$. Now, $F_{I+1}-N(v)$ contains a component containing $v^*$ with edges to all three of $N(w)$. Since $w$ also sees all three vertices of $N(w)$ and $F_{i+1}$ is 2-connected and a subdivision of a 3-connected  planar graph obtained by subdividing some edges with
at least one yellow endpoint, we see that $F_{i+1}-N(v)-v$ has a unique component. Both $x$ and $y$ have endpoints to this component and edges to one of $w$ or $w'$. So,
$X$  must separate all of the twins from $v^*$.  Again exploiting the connectivity of $F_{i+1}$ we see  that at least one of $x$ or $y$ is also cut off by $X$. So we do not uncontract $xy$ rather we add $v$ to $Y^+$ and remove $xy$ from $N$ and add it to $N^+$.
For one of the $w$s for which both $x$ and $y$ see $w$ or one of its twins,  we associate the $K_{3,3}$ discussed above
with  this vertex of  $Y^+$.

This completes the description of Phase 1.

\subsubsection{Phase II}

Now, for any edge $xy$ still in $N$, and red neighbour  $w$, either
$x$ sees all the twin set of $w$ and $y$ sees none, or $y$ sees all the twins and $x$ sees none of them.
 In considering  the remaining uncontractions, we  focus on the graph  without the yellow  twins as they will remain
twins of the red vertices no matter which edges  of $N$ we uncontract.
We thereby   obtain a  subdivision of a 3-connected  planar graph $D_{i+1}$.

We note further that for any green vertex of degree 2, seeing a red vertex $w$ and a green vertex $v$,
$v$ cannot see $w$ as then $v$ along with the
 neighbour of $w$ which is not $z$ would be a 2-cut separating $w$ from $v^*$ in $F_{i+1}$.
 We  temporarily contract all  green vertices of degree two into one of their  red neighbours (if this green vertex does
 not have a green neighbour which is ti be uncontracted we think of this as simply suppressing one or two vertices
 on an edge one endpoint of which is red).
 thereby obtaining  a 3-connected  graph $D^*_{i+1}$.  We then determine for each edge in $N$ if  an uncontraction  is possible

Following  the approach discussed in Section \ref{uncompactionfocussing}, we can determine for any $xy$ in $N$ contracted
to a vertex $v$  whether  an uncontraction leaves a  planar graph and uncontract all these edges.  We simply need to examine the rotation scheme on their neighbourhood and see how it  is split between edges incident to $x$ and $y$. Since $x$ and $y$ have degree at most $d$, this takes constant time per matching edge and linear time in total.
Furthermore, if the uncontraction does maintain planarity it is an easy matter to reinsert  any vertex $z$  of degree 2 which saw $v$  and some red vertex $w$ whilst maintaining planarity. If $z$ is adjacent to only one of $x$ or $y$ we simply subdivide the edge between this vertex and  $w$.  Otherwise, $wxy$ forms a triangular face of  the embedding,  so we can replace the edges $wx$ and $wy$  with $wz,zx,zy$   whist maintaining  planarity. We do all these uncontractions  and delete the corresponding edges from $N$.

As shown   in Section \ref{uncompactionfocussing}  for any  remaining edge $xy$
of $N$ contracted to a vertex $v$   either  (i)   $x$ and $y$  both have  exactly 3 neighbours in $D^*_{i+1}$ which are common, or (ii)  there  exist distinct $z_1,z_2,z_3,z_4$ appearing in the given order around
the boundary of the face $f$  of $D^*_{i+1}-v$ containing $v$ for which  $x$ sees $z_1$ and  $z_3$ and $y$ sees $z_2$ and $z_4$.

If (i) holds, then for each red neighbour of $v$ since we did not handle $xy$ in Case 3 of Phase 1, and by our observation
on the nonadjacency of neighbours of a degree 2 green vertex,   neither $x$ nor $y$ actually sees $w$, rather there there is a vertex $z$ of degree 2 adjacent to all of $x,y,w$. So, as in    \ref{uncompactionfocussing},
For any three paths $P_1,P_2,P_3$  from $v^*$ to the three common (green) neighbours of $x$ and $y$ avoiding $x$ and $y$ there is a $K_{3,3}$ subdivision with set of centers $x,y,v^*$ and the three common neighbours of $x$ and $y$  contained in the three paths and the edges incident to $x$ and $y$.  Any three cut separating $v^*$ from five of the six centres of this subdivision cannot use either  $x$ or $y$. So we can add $v$ to $Y^*$ and associate this subdivision with it. We add $xy$ to $N^*$. We do not uncontract $v$ and hence for each vertex of degree 2 adjacent to $v$ and $w$ we can  subdivide the edge from $v$ to $w$ to replace $z$ in the embedding.

If (ii) holds then because there  are 3 paths from $v^*$ to $v$ in $J$, we can actually choose
$z_1,z_2,z_3$ and $z_4$ so that there is  a $K_{3,3}$ subdivision   with set of centers $x,y,z_1,z_2,z_3,z_4$ which uses only  those edges
on the boundary of $f$   or from   one of  $x$ or $y$ to one of the other centers.
For a subdivision of this type,  and  any cutset  $X$ of size 3 separating $v^*$ from 5 of its 6 centers, one of $x$ or $y$ is not in $X$ and  is separated from $v^*$ by $X$. We add $v$ to $Y^+$ and associate this subdivision with it. We add $xy$ to $N^+$.
We can subdivide the edges of $v$ from its red neighbours to replace the degree 2 vertices we temporarily removed.

This completes the description of Phase 2 and the planar part of the uncontraction process.
After completing all these uncontractions, we can also unsuppress  any green vertices of degree 2 we have not yet uncontracted.

We consider  next the planar graph $J$ obtained from  all these uncontractions.
We let $J^*$ be the graph obtained from $J$ by adding back the yellow twins of the red vertices and  then
uncontracting  every vertex of $Y^+$ and $Y^*$.
We enumerate the edges of $N^+$ as $a_1b_1,...,a_lb_l$. We let $S^*$ be the set of
endpoints of the edges in $N^*$. To each vertex $v$  of $S^*$, we have associated a $K_{3,3}$ subdivision. To each red or yellow vertex $v$ , we associate a $K_{3,3}$ subdivision which has centres, $v^*$, v, one of the twins of $v$, and three vertices in the uncontractions of the neighbours of $v$ in $F_{i+1}$.  To any edge $a_ib_i$ of $N^+$ we have associated a $K_{3,3}$ subdivision. 

We let $R$ consist  of the red vertices, the yellow vertices,  and $S^*$
  We have seen that if a 3-cut is to separate 5 centres of one of the subdivisions of the last paragraph 
 from $v^*$ then it must be $v^*$-ferocious  and cut off the corresponding vertex of $R$ or one of the endpoints of the corresponding edge of $N^+$.

\subsubsection{ Strongly Ferocious 3-Cuts   Minimally Cut Off  Local Non-Planarity}

We know that if there is a ferociously strong  planar reduction then there must be a
set of $\{v^*\}$-ferocious cuts  cutting off  all of $R$
along with at least one  endpoint of each edge  of $N^+$ from $C$.

We now prove that the converse is true.

\begin{lemma}
Suppose that ${\cal F}$ is a family of  $\{v^*\}$-ferocious 3-cuts of $J^*$  such that no cut of ${\cal F}$ separates another from
$C$, every vertex of $R$  is separated from $C$ by a cut of ${\cal F}$ and at least one endpoint
of every edge of  $N^+$ is separated from $C$ by one of the cuts of ${\cal F}$. Then the graph  $J^+$ obtained from $J^*$ by deleting
the vertices it cuts off and adding edges so each cut is a clique is planar, and the desired two paths do not exist.
\end{lemma}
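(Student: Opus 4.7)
The plan is to prove planarity of $J^+$ and then observe that the $\{v^*\}$-ferocious condition on each cut in ${\cal F}$ makes $J^+$ a ferociously strong planar reduction of $G_i$, so that Theorem \ref{seymourthm} (together with the earlier Observation that the $K_5$-subdivision question is preserved under reductions and compactions) yields the non-existence of the two desired paths.

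First I would invoke Lemma \ref{lamlem2} to conclude the cuts in ${\cal F}$ are pairwise laminar, hence processable outside-in via their laminar tree. The core of the argument is to show that each cut $Z \in {\cal F}$ ``projects'' to a 3-set $Z'$ in the planar graph $J$ that lies on a common face of $J$'s planar embedding. Because every vertex of $R$ (all reds, yellows, and $S^*$) must be cut off by ${\cal F}$ by hypothesis, no such vertex can lie in $Z$ itself, so each vertex of $Z$ is either a green vertex of $J$ or the surviving endpoint $b_i$ of a $Y^+$ uncontracted edge $a_ib_i$ whose partner $a_i$ is cut off. In the latter case I replace $b_i$ in $Z$ by the pre-uncontraction green vertex $v \in J$ to form $Z'$. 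I would then verify that $Z'$ is a 3-separator of the subdivision-of-a-3-connected-planar-graph underlying $J$, so its three vertices share a face, and hence adding the clique edges on $Z$ after deletion of the cut-off region can be realized without destroying planarity.

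It then remains to embed each surviving endpoint $b_i$ back into the embedding of $J$. By case (ii) of Phase 2, when we chose not to uncontract $v$ we identified $z_1, z_2, z_3, z_4$ around a face boundary with $a_i$ seeing $z_1, z_3$ and $b_i$ seeing $z_2, z_4$; since $a_i$ lies in the cut-off region and the clique $Z'$ bounds a face adjacent to this, the vertex $b_i$ along with its edges to $z_2, z_4$ may be drawn inside the face in place of $v$ without any crossing. Iterating this construction over all cuts and all surviving endpoints yields the required planar embedding of $J^+$; the ferocious two-coloring of each cut-off region provided by the $\{v^*\}$-ferocious hypothesis then certifies that $J^+$ is a ferociously strong reduction, completing the proof.

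The main obstacle I anticipate is rigorously justifying the projection $Z \mapsto Z'$: specifically, showing that the cut-off region in $J^*$ corresponds under the projection to a connected region in $J$ attached only at $Z'$, and that $Z'$ really is a 3-separator of $J$. This rests on the fact that $M$ is induced, that yellow and red vertices form stable twin sets with tightly controlled neighborhoods, and on the structural information recorded in Phases 1 and 2 (for instance, that no twin set has edges to both endpoints of a remaining $N^+$ edge). A secondary technicality is handling cases where a single cut $Z$ contains both endpoints of a $Y^+$ edge (so $|Z'| \le 2$ in $J$) or contains endpoints from several uncontractions; in these degenerate situations the added clique is a sub-triangle or even a single edge, and planarity is preserved by a simpler direct argument.
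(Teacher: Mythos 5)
You correctly identify the high-level strategy the paper uses: project each cut $Z\in{\cal F}$ to a set $Z'$ of three vertices of the planar graph $J$, then argue that $J^+$ is obtained from $J$ by a planarity-preserving reduction on each $Z'$. However, you flag ``showing that $Z'$ really is a 3-separator of $J$'' as the main obstacle and then only gesture at the ingredients (``rests on the fact that $M$ is induced\ldots'') without giving the argument. That is precisely the crux of the paper's proof, and omitting it leaves the proof incomplete. The paper closes this gap with a short but essential contradiction: if nothing of $J$ survives to be cut off by $Z'$ --- i.e.\ if every vertex cut off by $Z$ in $J^*$ gets contracted into a vertex of $Z'$ --- then, since the matching is induced, each component cut off by $Z$ is a single vertex adjacent to all three vertices of $Z$ with one of those adjacencies being a matching edge. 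But $Z$ being $\{v^*\}$-ferocious forces \emph{two} such components (one to colour red, one yellow), and the existence of two single matched vertices each joined to a common matching-edge endpoint violates the induced-matching property. That argument is what actually shows each $Z'$ genuinely separates $J$.

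Two smaller issues. First, you treat the case where $Z$ contains both endpoints of an $N^+$ (or $N^*$) edge as a ``secondary technicality'' giving $|Z'|\le 2$. The paper rules this case out rather than handling it: if both endpoints lay in $Z$ then contracting that edge would give a 2-cut of $J$ containing no red vertex, contradicting Observation~\ref{Di3conobs}-type structure of $J$. You should make that exclusion explicit rather than leaving it as a degenerate case to patch. Second, the step of ``embedding each surviving endpoint $b_i$ back into the embedding of $J$'' is an unnecessary detour. Once you view $J^+$ as the result of reducing $J$ on the cuts $Z'$, the surviving endpoint of an $N^+$ edge whose partner is cut off simply \emph{is} the vertex of $Z'$ into which the edge was contracted; its residual edge set in $J^+$ matches automatically, so no separate re-drawing argument is needed. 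Finally, your appeal to Lemma~\ref{lamlem2} is not used by the paper here and is not needed --- the hypothesis that no cut of ${\cal F}$ separates another from $C$ already gives the structure the argument uses.
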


\begin{proof}
The graph $J$ obtained from $J^*$ by deleting all the yellow vertices and  contracting every edge of $N^+$,
and every edge of $N^*$ is planar.  We want to use this fact to show that  $J^+$ is planar.

We note that no element of ${\cal F}$ contains a red or yellow vertex. Hence no such cut can contain
the endpoints of an edge of $N^* \cup N^-$ as otherwise its contraction would correspond to a 2-cut of
$J$ without a red vertex. Thus, if  for each cut $Z$ of ${\cal F}$
we let $Z'$ be the vertices of $J$ which contain a vertex of $Z$.   $J^+$ comes from $J$  by  adding a clique on
each such $Z'$ and deleting every vertex
it cuts off.

So, to show that $J^+$ is planar, it is enough to show that for each $Z'$ there is a component of $J-Z'$ cut off by
$Z'$ which has edges to all of $Z'$.  Actually it is enough to show that there is a vertex of  $J$ cut off by $Z'$.
If this vertex is not a vertex of degree 2 then we know any cut separating it from $C$ has at least three vertices.
If it is a vertex of degree 2 then it is adjacent to a red vertex which is not in $Z'$ and we obtain the same result.
Thus, we are done unless every vertex cut off by $Z$ is contracted into $Z'$. Since our matching is induced,
this implies that each component cut off by $Z$ is a single vertex with an edge to each vertex of  $Z$, one of which is a matching edge. Since $Z$ is $\{v^*\}$ferocious there must be two such   components, and we again contradict the fact that we
contracted an induced matching.
\end{proof}

In the same vein, we have the following result.

\begin{lemma}
\label{ferociouscutlem}
Suppose that $Z$  is a  $\{v^*\}$-ferocious 3-cut  of $J^*$   then there must  either be a vertex  $v$ which is in $R$  such that $Z$ is the   $\{v^*\}$-ferocious 3-cut separating $v$ from $\{v^*\}$  maximizing the size of the component containing $v^*$, or an edge
$e$ of $N^+$ such that $Z$ is the   $\{v^*\}$-ferocious 3-cut separating at least one endpoint  from $\{v^*\}$  maximizing the size of the component containing $v^*$,\end{lemma}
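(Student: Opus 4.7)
The plan is, for any given $\{v^*\}$-ferocious 3-cut $Z$ of $J^*$, to locate an ``interesting'' vertex $v$ -- meaning an element of $R$ or an endpoint of an edge of $N^+$ -- that is cut off by $Z$ but by no strictly inner $\{v^*\}$-ferocious cut. By laminarity of $\{v^*\}$-ferocious cuts (Lemma \ref{lamlem2}), such a $v$ certifies that $Z$ is the innermost $\{v^*\}$-ferocious cut separating $v$ from $\{v^*\}$, equivalently the one maximizing the size of the component containing $v^*$, which is exactly the conclusion of the lemma.

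The key structural fact is that ferocity of $Z$, combined with 3-connectivity of $J^*$, forces a $K_{3,3}$ subdivision with centers $v^*$, one vertex of each witnessing subgraph $A,B$ in the cut-off side of $Z$, and the three vertices of $Z$; this uses three internally disjoint $v^*$-to-$Z$ paths, which exist by 3-connectivity of $J^*$. Suppose for contradiction that no interesting vertex lies in the cut-off side of $Z$. Then $A \cup B \subseteq V(J)$, and I would reroute the three $v^*$-to-$Z$ paths to lie in $J$: every non-$J$ vertex of $J^*$ is either a yellow twin of a red vertex or arose by uncontracting a vertex of $Y^+$ or $Y^*$, and every such gadget is attached to $J$ only through a small, structured set of three green vertices (the green neighborhood of the red vertex, respectively the triangular attachment of the $Y^+$/$Y^*$ gadget). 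Using these attachment sets as waypoints to replace any excursion of a path through a gadget by a path inside $J$ would exhibit a $K_{3,3}$ subdivision in the planar graph $J$, a contradiction. Hence some interesting vertex lies in the cut-off side of $Z$.

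Now, if every interesting vertex cut off by $Z$ were also cut off by some strictly inner $\{v^*\}$-ferocious cut, then by Lemma \ref{lamlem2} we could extract a laminar family $Z_1,\ldots,Z_k$ of strictly inner $\{v^*\}$-ferocious cuts whose cut-off regions collectively cover all the interesting vertices on the cut-off side of $Z$. The ``shell'' between $Z$ and the $Z_i$ then contains only plain green vertices and so is a subgraph of $J$. Contracting each inner cut-off region together with its separator $Z_i$ to a single vertex is a minor operation that preserves the $K_{3,3}$ subdivision witnessing ferocity of $Z$, yet the resulting graph is obtained from planar $J$ by attaching single vertices to three-vertex attachment sets and is therefore planar, a contradiction. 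So some $v \in R$ or some endpoint of some $e \in N^+$ on the cut-off side of $Z$ escapes all strictly inner $\{v^*\}$-ferocious cuts, and for this $v$, $Z$ is exactly the cut demanded by the lemma. The main obstacle I anticipate is the rerouting step in the preceding paragraph: simultaneously redirecting three paths through $J$ without destroying their internal disjointness requires careful use of the three-vertex attachment structure of each gadget, and a similar care is needed to verify that the contraction step preserves enough of the $K_{3,3}$ subdivision to land planarly inside $J$.
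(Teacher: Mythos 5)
Your high-level plan matches the paper's: by laminarity (Lemma \ref{lamlem2}), find an interesting object cut off by $Z$ but by no strictly inner $\{v^*\}$-ferocious cut, and derive a planarity contradiction otherwise using the maximal laminar family $\mathcal{F}$ of inner ferocious cuts. But the crucial $N^+$ step is wrong as stated. Your closing sentence asserts that if some \emph{endpoint} of some $e=ab\in N^+$ escapes all inner ferocious cuts, then $Z$ is the cut demanded. The lemma, however, wants an edge $e$ for which $Z$ is, among all ferocious cuts separating \emph{at least one} endpoint of $e$ from $v^*$, the one maximizing the $v^*$-component (i.e.\ the innermost). If $a$ escapes but $b$ is cut off by some strictly inner cut $Y$, then $Y$ also separates an endpoint of $e$ and has a larger $v^*$-component than $Z$, so $Z$ is not the maximizer for $e$. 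This is why the paper phrases the ``done'' condition as: there is an edge of $N^+\cap H^*$ \emph{neither} of whose endpoints is cut off by a cut of $\mathcal{F}$ --- both endpoints must survive. Correspondingly, your contradiction hypothesis (``every interesting vertex cut off by $Z$ is cut off by a strictly inner cut'') is too strong, and your claim that the shell ``contains only plain green vertices and so is a subgraph of $J$'' can fail precisely in the scenario above: the surviving endpoint $a$ is green in $J^*$ but is not a vertex of $J$.

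The planarity step also needs more care. ``Contracting each inner cut-off region together with its separator to a single vertex'' is not a well-defined contraction, since the cut-off side of a $\{v^*\}$-ferocious cut can be disconnected; and the result is not obviously a minor of $J$ because of yellow twins and uncontracted $N^+/N^*$ endpoints that may lie on $Z$ or in the shell. The paper instead works with $H^-$, obtained from $H^*$ by \emph{deleting} the yellow vertices and \emph{contracting} every $N^+$ and $N^*$ edge in $H^*$; this lands inside $J$ (plus the clique on the image of $Z$ and the auxiliary vertex) and is therefore planar. It then shows $H^+$ arises from $H^-$ by cliquing up the images $Y'$ of the $\mathcal{F}$-cuts and deleting what they cut off, and that each $Y'$ is still a genuine 3-cut cutting off something nonempty --- argued via the fact that ferocious cuts of $J^*$ contain only green vertices, plus the inducedness of the contracted matching. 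That ``green-only'' fact is proved nontrivially in the paper's opening paragraph; you are implicitly assuming it, but you cannot skip it, since it is what ensures the $Y'$ remain 3-element cutsets and also makes your rerouting of the three $v^*$-to-$Z$ paths superfluous: once you pass to $J$, those paths are already there.
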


\begin{proof}
We note that if a 3-cut  $Y$ of $J^*$  contains a yellow vertex and one of its twins, then   it corresponds  to a 2-cut of $J$. Because of our uncontractions in
Phase 1. there is only one (degree 2 green) vertex not  in the  component of $J^*-Y$ containing $v^*$ so $Y$ is not $\{v^*\}$-ferocious.
So,  because every red vertex of$J$ is joined to $v^*$ by 3 paths of $J$,  if a $\{v^*\}$-ferocious 3-cut $Y$ contains a yellow or red vertex, $v$  then the component  of $J^*-Y$ containing $v^*$
contains all of the twins of $v$. Thus, each of the two connected graphs which show that $Y$ is $\{v^*\}$-ferocious contain a
green neighbour of $v$ separated from these twins by $Y$.  So, for two neighbours of $w$  the uncontraction of  the neighbour contains both a vertex of $Y$  which sees a twin of   $w$,  and a vertex separated 
from $v^*$ by $Y$. 
But  deleting these $w$ and two of its neighbours from  $J$  leaves a connected graph containing $v^*$, and 
because of our uncontractions in Phase I Case 1,there are edges from this graph to the two
vertices int he uncontraction of the neighbours of $w$ supposedly cut off from $v6*$ by $Y$.
This is  a contradiction.

Thus, every $\{v^*\}$-ferocious cut $Z$  contains only green vertices.
We let $H^*$ be the  graph obtained from the subgraph of $J^*$ induced by $Z$ and the vertices it cuts off
from $v^*$ by adding a clique on $Z$ and an auxiliary vertex $v^+$ adjacent to the three
vertices of this clique.  We let ${\cal F}$ be a family of $\{v^*\}$-ferocious cuts which are not $Z$,
each of which is separated   from $v^*$ by $Z$ but not separated from $Z$ by any other $\{v^*\}$-ferocious 3-cut chosen so as
to maximize the vertices cut off from $Z$ by the elements of ${\cal F}$.

We are done unless  there is neither  a vertex of $H^* \cap R$ which is not cut off by a cut of ${\cal F}$ nor an edge of $N^+ \cap H^*$ neither of whose endpoints is cut off by such a cut. So, we can assume this is the case.
We claim that this implies that the graph $H^+$ obtained from $H^*$ by for each cut $Y$
in ${\cal F}$, putting  a clique on
$Y$ and deleting the parts of $H^*$ cut off  $Y$ is planar. This implies that it has no $K_{3,3}$ subdivision with
$Z$ the centres of one side. It is easy to see that this implies that $H^*$ has no $K_{3,3}$ subdivision  with
$Z$ the centres of one side. This is a contradiction.

Since there are three paths from $v^*$ to $Z$ in $J$,
the graph $H^-$ obtained from $H^*$ by deleting all the yellow vertices and  contracting every edge of $N^+ \cap H^*$,
and every edge of $N^* \cap H^*$ is planar.  We want to use this fact to show that  $H^+$ is planar.

We note that no cut of ${\cal F}$ contains a red or yellow vertex. Hence no such cut  can contain
the endpoints of an edge of $N^* \cup N^-$ as otherwise its contraction would correspond to a 2-cut of
$J$ without a red vertex. Thus, if  for each cut $Y$ of ${\cal F}$
we let $Y'$ be the vertices of $H^-$ which contain a vertex of $Y$.   $H^+$ comes from $H^-$  by  adding a clique on
each such $Y'$ and deleting every vertex
it cuts off.

So, to show that $H^+$ is planar, it is enough to show that for each $Y'$ there is a component of $H^--Y'$ cut off by
$Y'$ which has edges to all of $Y'$.  We proceed as in the proof of the last lemma.
\end{proof}.

\subsubsection{Finding the reduction}

We want to find a ferociously strong planar reduction or the desired 2-paths.
If $J^*$ has bounded tree width, then we can simply solve the problem directly.
Otherwise we proceed as follows:

We find   a set of $\{v^*\}$-ferocious reductions cutting off every vertex which is cut off by such a reduction.
If they cut off all of $R$ and one endpoint of every edge of $N^+$ then   they yield  a ferociously strong planar reduction.
Otherwise we find a vertex of $R$ or  an edge $a_ib_i$  of $N^+$
such that there is no $\{v^*\}$- ferocious 3-cut separating five  of the  centres
of the associated $K_{3,3}$ subdivision from $v^*$, and hence no 3-cut whatsoever doing so.

Having identified a local $K_{3,3}$ subdivision  five of whose  centres  cannot be separated  from
$v^*$ by a 3-cut, we follow the approach from the adding edges case, to find
a minor $M$  of bounded tree width which contains the desired 2-paths, and then find the
paths themselves. We omit the details of this part of the proof, and just describe how to find  our set of
$\{v^*\}$-ferocious cuts and corresponding reduction if it exists.

Our approach is to
look for our cuts locally in graphs of bounded tree width.
We start with the part of the graph ``farthest away"  from $v^*$
and work towards $v^*$.

In particular, for each vertex $w$ of $J$, we let $d_w$ be the distance of $w$ from
$v^*$ in the face vertex incidence graph of $J$. For each yellow twin $w$ of a
red vertex $v$, $d_w=d_v$. For each endpoint $w$ of an edge contracted down to
a green vertex $v$, $d_w=d_v$.

We make  the following:

\begin{observation}
For any $\{v^*\}$-ferocious 3-cut $Z$  of $J^*$, the set $Z'$ of at most  three vertices of
$J^*$ which contain vertices of $Z$ is a cutset of $J$.
\end{observation}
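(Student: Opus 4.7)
The plan is to interpret $Z'$ as the image of $Z$ under the natural quotient map $\pi \colon V(J^*) \to V(J)$ that identifies each yellow vertex with its red twin and each pair of endpoints of an uncontracted edge from $N^+ \cup N^*$ with the corresponding green vertex. Then $|Z'| \le 3$, and the task is to show $J - Z'$ is disconnected.

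Because $Z$ is $\{v^*\}$-ferocious in $J^*$, the set $W$ of vertices of $J^* - Z$ cut off from $v^*$ is nonempty and contains two vertex-disjoint connected subgraphs each having an edge to every vertex of $Z$. The first, and main, claim is that $W$ cannot map entirely into $Z'$. Suppose for contradiction $\pi(W) \subseteq Z'$. Then every vertex of $W$ is paired (via a twin relation or via a matching edge of $N^+ \cup N^*$) with some vertex of $Z$, which forces $|W| \le 3$. I would split into two sub-cases. First, if some $u \in W$ is a yellow twin of a red $z \in Z$, then since red vertices in $J$ have exactly three neighbors, the neighborhood of $u$ in $J^*$ is contained in $\{z\} \cup N_J(z)$ and has size at most four; the requirement that the two disjoint connected subgraphs in $W$ each send an edge to every vertex of $Z$ pins this structure down so tightly that $u$'s entire external neighborhood must lie in $Z$, isolating $u$ in $J^* - Z$ — but then producing the second disjoint subgraph adjacent to all of $Z$ requires a further vertex of $W$ outside $\pi^{-1}(Z)$, a contradiction. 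Second, if every vertex of $W$ is a matching partner of some element of $Z$, then any two $u_1, u_2 \in W$ matched to distinct $z_1, z_2 \in Z$ which are both adjacent to all of $Z$ would produce an edge $u_1 z_2$ between distinct matched pairs of the induced matching $M$, contradicting the induced-matching property.

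Granted the claim, pick $w \in W$ with $\pi(w) \notin Z'$, and suppose toward a contradiction there is a path $\pi(w) = x_0, x_1, \ldots, x_k = v^*$ in $J - Z'$. Each edge $x_i x_{i+1}$ is the image of an edge $a_i b_i$ of $J^*$ with $a_i, b_i \notin Z$. At each intermediate vertex $x_{i+1}$, if the two pre-images $b_i$ and $a_{i+1}$ differ, I bridge them: if $x_{i+1}$ arose from a pair in $N^+ \cup N^*$, the uncontracted matching edge is present in $J^*$ and provides the bridge; if $x_{i+1}$ arose from a red-yellow twin pair, the shared external neighborhood of the twin pair lets me uniformly pick the red representative throughout and eliminate the need to bridge. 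The resulting walk lies in $J^* - Z$, starts at a pre-image of $\pi(w)$ that itself lies in $W$ (any twin or matching partner of $w$ not in $Z$ is adjacent to $w$ or shares enough of its external neighborhood to be in the same component), and ends at $v^*$, contradicting the definition of $W$. Hence $\pi(w)$ is separated from $v^*$ in $J - Z'$, so $Z'$ is a cutset of $J$.

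The main obstacle is the first step: excluding the degenerate possibility that the whole of $W$ collapses into $Z'$. This is the place where the three hypotheses — ferocity of $Z$ (providing two disjoint subgraphs each with full attachment to $Z$), the equal-neighborhood property of twin pairs, and the inducedness of $M$ (forbidding edges between distinct matched pairs) — must be used simultaneously. The lifting step afterwards is routine once the right pre-images are chosen consistently.
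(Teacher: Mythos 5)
Your overall architecture matches the paper's: assume $Z'$ is not a cutset, argue that then every vertex of $W$ (the union of the components of $J^*-Z$ that miss $v^*$) must be ``close to $Z$'', and derive a contradiction from the ferocity of $Z$ together with the inducedness of the matching; the lifting argument that pulls a path of $J-Z'$ up to a walk of $J^*-Z$ via matching edges and the shared external neighbourhoods of twins is exactly the implicit content of the paper's opening sentence, and your sub-case (b) reproduces the paper's closing step (ferocity forces two single-vertex components each seeing all of $Z$, and an edge from one of them to the other's match violates inducedness of $M$).

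There is, however, a genuine gap in sub-case (a). The assertion that ``$u$'s entire external neighbourhood must lie in $Z$, isolating $u$'' cannot be correct as stated: $\{z\}\cup N_J(z)$ consists of four distinct vertices (the red $z$ and its three green neighbours), and these cannot all fit into a cut of size three, so at least one neighbour of $u$ is in $W$ rather than $Z$. The follow-up claim that ``producing the second disjoint subgraph\dots requires a further vertex of $W$ outside $\pi^{-1}(Z)$'' is also asserted without justification --- a priori that second subgraph could be another yellow twin or another matched green vertex, neither of which leaves $\pi^{-1}(Z')$. Your two sub-cases are not exhaustive either: you never treat a red $u\in W$ whose yellow twin lies in $Z$, nor a yellow $u$ whose red twin is outside $Z$ while a different twin lies inside. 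These degenerate configurations happen to be vacuous because a $\{v^*\}$-ferocious cut contains only green vertices (this is established inside the proof of the lemma immediately preceding this observation in the paper), but you never invoke that fact, and without it the case analysis for the Claim has holes. The paper sidesteps the whole difficulty by handling a yellow vertex of $W$ \emph{constructively}: it shows that $Z'$ must then equal the full $J$-neighbourhood of that yellow vertex's red twin, which is manifestly a cutset --- no contradiction argument is needed in that branch at all.
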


\begin{proof}
Otherwise the    components of $J^*-Z$ not containing $v^*$  contain no red or green vertices which are
not joined to vertices of $Z$ by edges of $N$.   If some such
component contains a yellow vertex  $v$ then $Z$ must contain a vertex in the uncontraction of each neighbour
of $v$ and $Z'$ is the neighbourhood of the red twin of $v$ in $J$ and we are done.  So, since $N$ is induced,
any such   component has only one vertex. which sees all three vertices of $Z$ and is joined to one by an edge of the matching.
Since $Z$ is ferocious, there must be two such components. This again contradicts
the fact that  $N$ is induced.
\end{proof}

\begin{corollary}
For any  $\{v^*\}$-ferocious 3-cut $Z$of $J^*$, there is an $i$ such that for every
$z$ in $Z$, $d_z \in \{i,i+2\}$.
\end{corollary}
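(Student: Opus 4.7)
The plan is to deduce this as an almost immediate corollary of the preceding Observation, using the pairwise face-sharing property of small cutsets in $3$-connected planar graphs together with the bipartiteness of the face-vertex incidence graph.

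First, I would pass from $Z \subseteq V(J^*)$ to its image in $J$. By the preceding Observation, the set $Z'$ of (at most three) vertices of $J$ whose uncontractions in $J^*$ meet $Z$ is a cutset of $J$. Writing $f(z)$ for the vertex of $J$ whose uncontraction contains $z \in Z$, the definition of the distances gives $d_z = d_{f(z)}$ in all three cases (green vertex of $J$, yellow twin of a red vertex, endpoint of a contracted matching edge). Thus it suffices to exhibit some $i$ with $d_w \in \{i, i+2\}$ for every $w \in Z'$, since this immediately lifts to the required statement for $Z$.

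Second, I would apply the structural fact already used in the ``Adding Back Edges'' section: in a subdivision of a $3$-connected planar graph any cutset of size at most three consists of vertices that pairwise lie on a common face of the unique embedding. Applied to $Z'$, any two of its vertices therefore share a face, and so in the face-vertex incidence graph of $J$ they have a common neighbour and lie at distance exactly $2$. Because this incidence graph is bipartite with $v^* \in V(J)$ on the vertex side, the distance $d_w$ from $v^*$ to any vertex $w \in V(J)$ is even; together with the triangle inequality this forces every pairwise difference $|d_w - d_{w'}|$ with $w, w' \in Z'$ to be an even integer at most $2$, hence $0$ or $2$. Setting $i = \min_{w \in Z'} d_w$, all values $d_w$ for $w \in Z'$ therefore lie in $\{i, i+2\}$, which is exactly what is needed.

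The only delicate point is that $J$ is only a subdivision of a $3$-connected planar graph, with degree-$2$ vertices that are always adjacent to a red vertex. One must check that the pairwise-face property still applies to a cutset of size at most three of $J$, and that degenerate cases where $|Z'| < 3$ or where $Z'$ meets subdivision vertices do not cause problems. This is handled by the structural control on small cuts of such subdivisions that the invariants preserved through Phases I and II give us (in the spirit of Observation~\ref{Di3conobs}): after absorbing each relevant subdivision vertex into its red neighbour one is reduced to a genuine $3$-cut in a $3$-connected planar graph, where the face-sharing property is standard. This is the main, but still very minor, obstacle; everything else is a short combinatorial calculation.
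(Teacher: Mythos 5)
Your proof is correct and follows the route the paper intends: use the preceding Observation to project $Z$ to a cutset $Z'$ of the (red--green) planar graph $J$ with the same distance profile, then invoke the pairwise face-sharing property of cutsets of size at most three in (a subdivision of) a $3$-connected planar graph, and finally combine the resulting pairwise distance-$2$ bound in the face--vertex incidence graph with bipartiteness to conclude that all $d_z$ lie in $\{i,i+2\}$. The one place I would tighten the argument is your handling of the subdivision issue: ``absorbing each relevant subdivision vertex into its red neighbour'' changes both the cutset and the incidence-graph distances, so it is not really the clean reduction you suggest. What makes the argument go through is that the pairwise-face property holds directly for cutsets of size at most $3$ in subdivisions of $3$-connected planar graphs: subdivision does not destroy face-sharing (a degree-$2$ subdivision vertex of an edge lies on exactly the two faces of that edge, and the original endpoints remain on those faces), the $2$-cut case is handled because the two vertices of a $2$-cut of such a subdivision bound a subdivided edge and hence share both of its faces, and the $|Z'|=1$ case cannot occur since $J$ is $2$-connected. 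With that observation substituted for your ``absorb into red neighbour'' remark, the proof is complete and matches the paper's intended derivation.
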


We remark that any such cut, clearly only separates from $v^*$,   vertices $w$ with $d_w \le i$.

Now, we let $max$ be the maximum of the $d_w$.
If $max<200$ then since $J^*$  arises from a planar graph whose
face-vertex incidence graph has radius 300, by making twins of some vertices of
degree three and then uncontracting vertices to edges,  it has  tree width less than
$k=500000$.We apply Algorithm \ref{fsrtwk},
and are done.

Otherwise, we proceed in a sequence of $c=\lfloor \frac{max}{100} \rfloor-1$ iterations to create
a list  ${\cal S}$ of $\{v^*\}$-ferocious 3-cuts.

In the first iteration, we consider the subgraph  $L'_0$ of $J^*$ consisting of those vertices
with $d_w>max-200$. Again,  this graph has tree width at most $k$.
We  Apply Algorithm \ref{wfctwk}   where $W$ is the
set of vertices with $d_w \le max-150$ and $N$ is empty.We return the list of  resultant cuts and 3-colouring .

We note that a cut $Z$ of $L'_0$ disjoint from $W$ is $W$-ferocious if and only if it is a $\{v^*\}$-ferocious in $J^*$.
Furthermore, the components   of $L'_0-Z$ disjoint from $W$   are exactly the components of $J^*-Z$ not
containing   $v^*$.

We set ${\cal S}_1$ to be the  output ${\cal F}$. Since $\{v^*\}$-ferocious cuts are laminar, these cuts  cut off
every vertex which   can be cut off by a  $\{v^*\}$-ferocious cutset every element $w$ of which satisfies $d_w>max-150$.

For each cut  $Y$ of  ${\cal F}$, we  delete what it cut offs,  make a doubly linked list of these vertices,   and add two
auxiliary vertices adjacent to the three vertices of the cut. For each auxiliary vertex $w$ we set $d_w =max \{d_y |y \in Y\}$.
We have an array indexed by vertex triples and for each cut  we store a pointer to the list in this array, along with a 1 to indicate in which iteration this cut was formed.
We contract every component  $K$ of the subgraph of the resultant graph induced by the set  $\{w |~d_w>max-100\}$,  into a new vertex $v_K$.  We make a doubly linked list of the  vertices contracted into $v_K$ (this does not contain the elements cut off by any element of ${\cal F}$).

 In the resultant graph, the only cutsets of size  at most two  cutting off some vertices from $W$ contain one of these new vertices.  We can find all these cutsets and the components they cut off from $W$ in this new graph in linear time (in the size of $L'_0$) using the algorithm of Lipton and Tarjan.
 We delete them and add the corresponding edge as in that algorithm. Note that this may cause some of the contracted vertices
to be deleted. This yields a 3-connected minor, $L_1$,  of $J^*$.

 For each vertex $v_K$, for each  vertex $v$  (not cut off by a cut of ${\cal F}$)  which  was cut off from $W$ by    $v_K$,  we append to the  list for $v_k$, $v$ if it is a real vertex and the list
 corresponding to $v$ if it is a contracted vertex. If $v$ was cut off by a 2-cut we choose one of the contracted vertices
 in the cut and do the same.  We note that our auxiliary vertices cannot be involved in such cuts.

We let $R_1$ be the  union of (i) the set of those   $v_K$  which remain to which we can associate   a vertex of $R$ which was not cut off  by a cut of
${\cal F}$  and is one the list for $v_K$, or an    endpoint of an edge of $N^+$ neither of whose endpoints was cut off by such a cut which  is on the list for $v_K$,  (ii) any vertex $w$ of $R$ with $d_w >max-120$,  and (iii)  any endpoint  $w$ of an edge of $N^+$ neither of whose endpoints were cut off by a cut of ${\cal F}$ with $d_w >max-120$. 
In a natural way this  also associates a $K_{3,3}$-subdivision  with each vertex $v$ of $R_1$ such that
the  3-cut of $L_1$
separating $v$ from $v^*$, and containing an element $w$ with $d_w \ge max-150$ and closest to $v$ with this property is
the 3-cut  of $J^*$ separating  five of the centres of the $K_{3,3}$ subdivision from $v^*$ which maximizes the size of the component  containing $v^*$ (one of these cuts exists if and only if the other one does).

 Because of the way we construct $L_1$, every $\{v^*\}$-ferocious cut of $L_1$ which uses a vertex $w$ with $d_w \le max-150$ is   a $\{v^*\}$ ferocious cut of $J^*$. Also for any vertex $v$  of $R_1$, because of the associated $K_{3,3}$-subdivision,
 and our choice of ${\cal F}$, if there is a cutset of size three containing a vertex $w$ with $d_w \le max-150$ which separates
 $v$ from $v^*$, then such a cutset minimizing the size of the component containing $v$ is a $\{v^*\}$-ferocious cutset.

 We show now  that the converse holds.

 \begin{lemma}
 \label{converselemma}
 Any  $\{v^*\}$-ferocious cut  $Z$ of $J^*$ which uses a vertex $w$
 with $d_w \le max-150$ is either a $\{v^*\}$ ferocious cut of $L_1$ or is a  3-cut of  $L_1$   separating some
 $v$ in $R_1$  which minimizes the size of the component containing $v$ over all such cuts which
 use a vertex $w$ with $d_w \le max-150$.
 \end{lemma}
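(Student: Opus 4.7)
The plan is to trace $Z$ through the three stages of the $L_1$-construction (apply the cuts in ${\cal F}$, contract the far components into vertices $v_K$, remove low-order cuts), and identify the image $Z^*$ as a 3-cut of $L_1$ of one of the two claimed types.

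I would first apply Lemma \ref{lamlem2} to deduce that $Z$ is laminar with every $Y \in {\cal F}$. Since $Y$ was produced by Algorithm \ref{wfctwk} with $W = \{v : d_v \le max-150\}$, both $Y$ and the cut-off component $B'$ of $J^* - Y$ lie in $\{v : d_v > max-150\}$. Let $A,B$ denote the components of $J^* - Z$, with $v^* \in A$, and $A',B'$ those of $J^* - Y$. Since $w \in Z$ satisfies $d_w \le max-150$, we have $w \notin Y \cup B'$, and laminarity forces $Z - Y$ to lie in $A'$, so $Z \subseteq A' \cup Y$. The key subtlety is ruling out the configuration where $Y - Z \subseteq A$ but $B \subseteq B'$: in this case the $\{v^*\}$-ferocity witnesses for $Z$, which sit in $B \subseteq B'$ and have edges to every vertex of $Z$, would produce an edge from $B'$ to $Z - Y \subseteq A'$ in $J^* - Y$, contradicting that $A'$ and $B'$ are distinct components of $J^* - Y$.

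Consequently each $Y \in {\cal F}$ is either \emph{parallel} to $Z$ (with $Y \cup B' \subseteq A \cup Z$) or \emph{inner} to $Z$ (with $Y \cup B' \subseteq B \cup Z$), so that the $Y$-surgery is confined to one side of $Z$; in the inner case the auxiliary vertices attached to $Y$ keep the $B$-side of $Z$ non-empty in $L_1$. I would then define $Z^* \subseteq V(L_1)$ as the image of $Z$: each $z \in Z$ with $d_z \le max-100$ survives, and each $z$ with $d_z > max-100$ is mapped to the contracted vertex $v_K$ of its component. Since $d_w \le max-150 < max-100$, $w$ survives in $Z^*$, so $Z^* \ne \emptyset$ and $|Z^*| \le 3$. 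The analysis above shows both sides of $Z$ have non-empty images in $L_1 - Z^*$, so $Z^*$ is a cut, and 3-connectivity of $L_1$ then forces $|Z^*| = 3$.

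Finally I would analyze ferocity. If the two connected witnesses to the $\{v^*\}$-ferocity of $Z$ both descend to distinct connected witnesses for $Z^*$ in $L_1$, then $Z^*$ is $\{v^*\}$-ferocious in $L_1$ and we are in the first conclusion of the lemma. Otherwise some witness has been absorbed into a contracted vertex $v_K$ or swallowed by an inner-case deletion, and in such a case I would argue that $v_K$ must contain a red vertex, a yellow vertex, or an endpoint of an $N^+$-edge --- precisely the elements of $R$ whose associated local $K_{3,3}$-subdivisions cannot be cut off by cuts in ${\cal F}$ alone --- placing $v_K$ in $R_1$ by clause (i) of its definition. The laminarity of $\{v^*\}$-ferocious cuts, combined with the hypothesized laminarity of the closest 3-cuts for vertices of $R_1$ which Algorithm \ref{wqfctwk} assumes of its input, would then identify $Z^*$ as the closest cut in $L_1$ separating $v_K$ from $v^*$ among those using a vertex with $d \le max-150$. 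The main obstacle lies in this last step: verifying that every destruction of a ferocity witness corresponds to a $v_K \in R_1$, which requires careful bookkeeping of how the local $K_{3,3}$-subdivisions associated to $R$-vertices and $N^+$-edges interact with the contracted regions.
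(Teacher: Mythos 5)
Your approach---tracing $Z$ forward through the three surgical stages that produce $L_1$ and then arguing about the image $Z^*$---is genuinely different from the paper's proof, which argues by minimal counterexample and works from inside the region cut off by $Z$. The paper lets $Z$ be a counterexample minimizing the union $U$ of components of $J^*-Z$ cut off by $Z$, builds the auxiliary graph $H_Z$ (a clique on $Z$ plus $U$ with a maximal family ${\cal F}_Z$ of inner $\{v^*\}$-ferocious cuts cut away and replaced by auxiliary vertices), and then invokes Lemma \ref{ferociouscutlem} to produce a vertex $v$ of $R$ or an endpoint of an $N^+$-edge inside $H_Z-Z$. Everything downstream turns on a case analysis of the local $K_{3,3}$ subdivision associated with $v$, together with Claim \ref{k33subdivisionclaim}, to exhibit two connected witnesses to ferocity and then transport them back into $L_1$.

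The obstacle you flag in your final sentence is exactly where your proof has a real gap, and it is not a bookkeeping matter. You need to show that when some ferocity witness of $Z$ in $J^*$ fails to descend to $L_1$, the swallowing vertex $v_K$ necessarily belongs to $R_1$ (or the cut is eaten in some controlled way by a member of ${\cal F}$), and you give no mechanism for this. The paper's Lemma \ref{ferociouscutlem} is precisely that mechanism: it is the structural fact guaranteeing that \emph{every} $\{v^*\}$-ferocious cut is already the closest such cut for some vertex of $R$ or endpoint of some edge of $N^+$. Your outline never states or uses this lemma, and without it there is nothing forcing a $v_K$ that absorbs a witness to carry such a vertex. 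The later appeal to ``the hypothesized laminarity of the closest 3-cuts for vertices of $R_1$ which Algorithm \ref{wqfctwk} assumes of its input'' is also suspect: that laminarity is a precondition the iteration machinery must establish, and this lemma is part of establishing it, so you cannot assume it here without risking circularity.

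Two smaller points. First, by the Corollary immediately following the Observation on $\{v^*\}$-ferocious cuts, the three vertices of $Z$ all satisfy $d_z\in\{i,i+2\}$ for a common $i$; since $w\in Z$ has $d_w\le \max-150$, every vertex of $Z$ has $d_z\le \max-148<\max-100$, so the contraction step cannot touch any vertex of $Z$ at all and the image is simply $Z$ itself. Your case distinction ``each $z$ with $d_z>\max-100$ is mapped to $v_K$'' is vacuous for this lemma. Second, you assert that ``both sides of $Z$ have non-empty images in $L_1-Z^*$,'' but you have not ruled out that the non-$v^*$ side of $Z$ vanishes entirely under the 2-cut deletions in step~3 of the $L_1$ construction; that claim also needs an argument, which the paper sidesteps by working inside $H_Z$ rather than in $L_1$ directly.
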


 \begin{proof}
We assume for a contradiction that the claim is false and consider a counterexample $Z$ chosen
so as to minimize the union $U$ of the components of $J^*-Z$ cut off by $Z$.

We let ${\cal F}_Z$ be a maximal family of $v^*$-ferocious cuts of $J^*$ which are separated from $v^*$ by $Z$
but are not separated from $Z$ by any other $v^*$-ferocious cut . By our choice of $Z$, every cut in ${\cal F}_Z$ is either
a cut of $S_1$, or a $W$-ferocious cut of $L_1$ or separates some vertex
of $R_1$ from $Z$ and contains a vertex $w$ with $d_w \le max-149$ and is closest to $v$ with this property.

We let $H_Z$ be the graph obtained from $J^*[Z \cup U]$
by putting a clique on $Z$, adding an auxiliary vertex  $v^+$ adjacent to $Z$, for each cut $Y$ in ${\cal F}_Z$
deleting what is
cut off by $Y$ and adding two auxiliary vertices adjacent to  $Y$.

By hypothesis, there is no vertex of $R_1$ in $H_Z$. Now, $H_Z
-Z$ must contain a vertex $v$ which is either in  $R$ or the endpoint of an edge of $N^+$ whose other endpoint is in $H_Z$,  or we contradict Lemma \ref{ferociouscutlem}.

The $K_{3,3}$-subdivision  of $J^*$  associated with $v$ is either completely contained in vertices with $d_w<max-100i-16$, or has one centre $v^*$ and consists of a $K_{2,3}$ formed by such vertices together with three paths from $v^*$  to the centres on the large side of the $K_{2,3}$ disjoint from the rest of the  $K_{2,3}$.

Now   $v$ either is a centre  of this subdivision or is adjacent to a  non-green vertex which is a centre which is   cut off from $v^*$ by $Z$
 Suppose next that only one centre of the subdivision is cut off from $v^*$ by $Z$.

Now, $v$ is not the endpoint of an edge  uncontracted in Phase II, whose endpoints had exactly three other common neighbours, as  there are four paths of $J^*$  from $v$ and otherwise disjoint to four of the other centres.
If $v$ were an endpoint of the other type of edge uncontracted in the second phase,
then $Z$ would have to consist of the other endpoint of the edge, and a vertex from each of the paths   from $v$ to 2 other centers in the subdivision.Both these paths are edges or paths of length 2 whose internal vertex has degree 2. So the component of 
$J^*-Z$ containing $v$  cannot be partitioned into two disjoint connected subgraphs 
each with a neighbour to every vertex of $Z$. Thus $J^*-Z$ has a second component disjoint 
from $v^*$. Since $J^*$ is 3-connected this component contains an edge to all of $Z$ and hence intersects the boundary of $f$ (defined in Phase II), hence it contains a centre of the 
subdivision other than $v$, and we are done.

If  $v$ is a red vertex then $Z$ contains a vertex in the uncontraction of each neighbour
of $v$. So it contains no other vertex, and  because of Case 1 Phase 1, every other vertex in the uncontraction of the
neighbourhood is joined to $v^*$ by a path disjoint from $Z$. So, $v$ only sees $Z$ and perhaps a twin which also sees all of $Z$.
In the latter case, we can switch to the  subdivision which has $Z$ as the centres on one side and these two twins and $v^*$ as centers in the other. This subdivision clearly appears in $L_1$ as it cannot have been cut off by any cut of ${\cal S}_1$. In this case, we are finished proving the lemma, using the two twins to show $Z$ is $\{W\}$-ferocious.   If  $v$ sees only $Z$ then  the part of $L_1-Z-v$ cut off from $W$ by $Z$ is a second connected subgraph seeing all of $Z$  and again we are done with the lemma.  The same argument works if $v$ is yellow.

If $v$ were an endpoint of an edge $vw$ from Case 2 of Phase I, then  since $Z$ contains no non-green
vertices, one of $A$ or $B$, \Wlog\ $A$,  contains a centre not cut off by $Z$, and $Z$  must contain $w$
or both would. But now there are paths  from $A$ to  three other centres disjoint from  $w$, so at least two
centres are cut off.

If $v$ were an endpoint of an edge from Case 3 of phase I, then  as set out there, $Z$ must contain a vertex which is not
in the  uncontraction of a neighbour in $J$ of  the red vertex $w$, and hence there must be some such uncontraction which is
disjoint from $Z$. Since a $\{v^*\}$-ferocious cut has no non green vertices, this means all the twins of $w$  are in the same component of $J^*-Z$, and hence cut off by $Z$ because one of them is adjacent to $v$. But two of these, $w$ and $w'$,
are  centres of the subdivision and we are done.

So, at most one centre of the subdivision is not separated from $v^*$ by $Z$ (at least two  centres on each side is impossible as we have seen). Furthermore, at most four of the six centres of the subdivision are separated from $\{v^*\}$ by any cut
$Y$ of ${\cal F}_Z$, as $v$ is not cut off by $Y$.  So, we can (i) replace the part of the subdivision cut off from $v$ by $Z$ using edges through $v^+$, and (ii) the part cut off by a cut of ${\cal F}_Z$ by edges from the auxiliary vertices we have added to it.

Now,  five of the centres of this subdivision are  not cut off by a  3-cut $Z^*$  of $H_Z$ from $Z$ as otherwise this would be a
3-cut separating off the centres of a $K_{3,3}$ subdivision of $J^*$ from $v^*$ as close to the subdivision as possible
and should have been added to ${\cal F}_Z$. So, by Claim 12, there are two disjoint connected subgraphs $C_1$ and $C_2$
of $H_Z-Z-v^+$ both of which have edges to all of $Z$. We want to transform these so that they are connected subgraphs
of $L_1$. To begin we consider their  restriction  to the non-auxiliary vertices of $H_Z$

For every cut  $Y$ of ${\cal F}_Z$, If $Y$ has more vertices in $C_i$ then in $C_{3-i}$ then we simply put  all the
vertices of the components of $L_1$ cut off by $Y$ into $C_i$. Otherwise, $Y$ has a vertex $y_1$ in $C_1$, a vertex $y_2$
in $C_2$ and a vertex $y$ in $Z$. If $Y$ is a cut of ${\cal F}$ then $L_1$ contains two auxiliary vertices cut off by $Y$ adjacent
to all of $Y$ and we add one to $C_1$ and the other to $C_2$. If $Y$ is a $W^*$-ferocious cut of $L_1$ then it cuts off two
connected subgraphs of $L_1$ adjacent to all of $Y$,  we add one to $C_1$ and the other to $C_2$

So,  $Y$ separates some vertex $v$
of $R_1$ from $Z$ and contains a vertex $w$ with $d_w \le max-149$ and is closest to $v$ with this property.
If $y$ had only one neighbour $y'$  in the component of $L_1-Y$ containing $v$  then $Y-y+y'$ would be the
3-cut separating $v$ from $v^*$ closest to $v$, so it contains no vertex with $dw \le max-150$. Furthermore,
it clearly also separates the subdivision associated with $v$ from $v^*$ so is strongly ferocious in $J^*$ and contradicts
our choice of ${\cal F}$.

Thus, $y$ has two neighbours $y'_1$ and$y'_2$ in this component. Now,  since no
2-cut of $L_!$ separates a vertex from $W$, there are disjoint paths $P_1$ and $P_2$
of $L_1-Z$ such that $P_i$ has endpoints $y_i$ and $y'_i$. We add $P_i$ to $C_i$.

So, $Z$ is a $W$-ferocious cut of $L_1$.
\end{proof}

More generally, we want to   have constructed at the end of every iteration between $i$ and $c-1$,

(A) a list  ${\cal S}_i$ of  $\{v^*\}$-ferocious cuts  such that  every element $w$ of each cut
satisfies $d_w>max-100i-50$,   which cuts off every vertex  which can be cut off by such a cut, together
with a list  for each cut $Z$ in the list of all the vertices cut off by $Z$ but not by any other 3-cut of ${\cal S}_i$,

(B) A  3-connected minor $L_i$ of $J^*$ obtained from $J^*$ by:
\begin{enumerate}
\item
first, for every element  $Y$ of ${\cal S}_i$ every vertex of which
satisfies $max-100i-50 < d_w \le max -100i+2$ which is not cut off from $v^*$ by another cut of ${\cal S}_i$,
deleting what is cut off by $Y$ and adding two auxiliary vertices of the cut such that for each auxiliary $w$,
$d_w=min \{d_y | y \in Y\}$,
\item
next contracting each  component $K$  of the graph induced  by $w$ with $d_w >max -100i$ which is left into a vertex $v_K$ ,
\item
finally finding all the 2-cuts in the resultant graph  and deleting all the components they
cut off from $v^*$ and adding edges on them (NB, any such edge must have one endpoint
in a contracted vertex).
\end{enumerate}

\vskip0.2cm

(C) the set $R_i$ of vertices of $L_i$ which either (i) were obtained by contracting some $K$ and which were not cut off by a 2-cut  in 3.  such that  there is some vertex of $R$ which is not cutoff by any element of ${\cal S_i}$ or some endpoint of an edge of $N^+$,
neither of whose endpoints are cut off by an element of ${\cal S}_i$ which is either in $K$ or in
a component cutoff by a 2-cut containing $v_K$, or (ii) are a vertex $w$  of $R$ not cut off by any element of ${\cal S}_I$,  or an endpoint of an edge of $N^+$ neither of whose endpoints was cut off by any element of ${\cal S}_i$ 
with  $d_w \ge max-100i-20$, we associate such an edge or vertex to the element of $R_i$
\vskip0.2cm

and

(D) for every non-auxiliary  vertex a list which if the vertex is not a $v_K$ of (B) is simply the vertex itself and otherwise
is the list of all vertices contracted into $v_K$  or cut off by a 2-cut having $v_K$ as an endpoint
which were not cut off by a cut of ${\cal S}_i$.

such that:

(E)  the   $\{v^*\}$-ferocious 3-cuts  of $J^*$ containing a vertex $w$ with
$d_w \le max-100i-50$ correspond to precisely the union of the $\{v^*\}$-ferocious  3-cuts in
$L_i$  containing such a vertex and the closest  3-cuts of $L_i$  containing such a vertex  to every $v$ in $R_i$.

We note that for a vertex$v$  in $R_i$ and associated vertex of $R$ or  edge of $N^+$,
the closest $\{v^*\}$-ferocious cut separating the vertex or an endpoint of the edge from $v^*$
must actually contain a vertex with $d_w \le max-100i-50$, must also separate $v$ from $v^*$,
and must be the closest such cut to $v$ which does so.

In iteration $2\le i<c$, we consider the subgraph  $L'_{i-1}$  of $L_{i-1}$  induced by those $w$ with
$d_w >max-(i+1)100$. We let $N_{i-1}$ be the  union of $R_i$ and the set of vertices with $d_w > max-100i+50$.

It is not hard to see that   this graph has tree width at most $k$.
We set $W_{i-1}$ to be those vertices of $L'_{i-1}$ with $d_w \le max-100i-50$.
If $i=c$ then we set $L'_{i-1}=L_{i-1}$ which again has tree width at most $k$, and set $W_{i-1}=\{v^*\}$.
We apply an algorithm  with the following specifications to  $L'_{i-1}$.

\begin{algspec}\label{technical}
\textsc{Technical Proof-Embedded Algorithm}
~\\
\emph{Input:}  Subgraph   $L'_{i-1}$ of a minor $L_{i-1}$ of  $J^*$ and associated sets $W_{i-1}$ and $N_{i-1}$
of its vertices,  a list ${\cal S}_i$  of cuts  and associated lists  and a   set $R_i$ of vertices of $L_i$ associated to each of which we have both a  list of vertices and a vertex of $R$ or edge of $N^+$ such that    (A), (B), (C), (D), (E)  are satisfied.
~\\
\emph{Output:}
(I) A list  ${\cal F}_i$  of   cuts each member $Z$  of which contains a vertex not in  $N_{i-1}$ and is either $W_{i-1}$-ferocious in $L'_{i-1}$ or is the 3-cut not contained in  $N_{i-1}$ which separate some $v$  of $R_{i-1}$ from $W$  minimizing  the number of vertices in the component of $L'_{i-1}-Z$ containing  $v$,  which separates off all vertices which can be separated by such cuts and then minimizes the size of ${\cal F_i}$,

(II) For every cut $Z$  of ${\cal F}_i$ which is $W_{i-1}$-ferocious in $L'_{i-1}$ two connected subgraphs  of
$L'_{i-1}$ which are in components of $Z-L'_{i-1}$ disjoint from $Z$ and have edges to every vertex of $Z$,

(III) For every cut $Z$  of ${\cal F}_i$ which separate some $v$  of $R_{i-1}$ from $W$  minimizing  the
number of vertices in the component of $L'_{i-1}-Z$ containing  $v$, the name of such a vertex $v$,

(IV) The graph $L^*_i$  obtained from $L'_{i-1}$  by deleting all the vertices  cut off from $W_{i-1}$ by a cut $Y$ of ${\cal F}_i$,
adding two auxiliary vertices $w^1_Y$ and $w^2_Y$ setting  $d_{w^i_Y}= min \{d_y | y \in Y\}$ adjacent to all of $Y$, contracting the components of the graph
induced  by $w$ with $d_w >max -100i$ which are left, then  finding all the 2-cuts in the resultant graph
one element of which is one of these contracted vertices  and deleting all the components they
cut off from $v^*$ and adding edges on them,

(V) the set $R_i$ of those vertices $w$ of  $L_i$  which either (i) are $v_K$ for some $K$  and which were not cut off by a 2-cut  such that  either there is some vertex of $R \cup R_{i-1}$ which is not cutoff by any element of ${\cal S}_i$ or some endpoint of an edge of $N^+$, neither of whose endpoints are cut off by an element of ${\cal S}_I$ which is either in $K$ or in
a component cutoff by a 2-cut containing $v_K$, or (ii) are in $R$ and were not cut off by any  cut  of${\cal F}_i$  or   are an endpoint of an edge of $N^+$  neither of whose  endpoints was cut off by such a cut and satisfy  $d_w >max-100i-20$.
With each such vertex, we associate either the corresponding vertex of $R$,
edge of $N^+$ or elements of $R \cup N^+$ which was associated with the element of $R_{i-1}$,

(VI) for every cut of ${\cal F}_i$ a list of the vertices it cuts off not cut off by any  cut of ${\cal S}_{i-1}$, and

(VII) for every vertex $v$ , a list of the vertices which have been contracted into it or have been cut off by a 2-cut containing it
and which were not cut off by any cut of ${\cal S}_i$(this may just be the vertex itself).
~\\
\emph{Running time:} $O(|L'_{i-1}|)$.
\end{algspec}

For the algorithm to be linear,we cannot copy ${\cal S}_i$ out each time, nor can we copy out the list corresponding
to each vertex. We store these as doubly linked lists and  input  them is a pointer to the first and last element on the list.
Furthermore, we cannot
 look at  all of each $L_{i-1}$ as there  may be a linear number of them,
all of which have size linear in  the size of $J^*$. Rather we must generate $L'_{i-1}$ without doing so.
This is straightforward if we have a list of  $\{w ~| ~d_w=i\}$ for every $i$ as we just want to add in the elements for the
next level of 100 values of $i$ at each step and then just focus on the graph we obtain. It is also
straightforward to make such a list in linear time.

The algorithm applies Algorithm \ref{wqfctwk} with $J=L'_{i-1}$, $W=W_{i-1}$, $N=N_{i-1}$, and $Q=R_{i-1}$.
It then determines all the components of the red and yellow graph and the vertices they attach to. This determine the
set ${\cal F}_i$ to be returned, and allows us to construct the lists as in (VI),  It also finds subgraphs as in (II). The algorithm returns vertices as in (III).

By using lexicographic bucket sort on the elements of ${\cal F}_i$ and the 3 green
attachment vertices we can then determine the set of vertices  of $L'_{i-1}$ cut off by each element  $Z$ of ${\cal F}$ and append their lists to get a list of those elements of $J^*$ cut off by $Z$. It is also   a simple matter to delete this set of vertices
and replace it with  two auxiliary vertices adjacent to all of $Z$.  We then contract the components of the graph
induced  by $w$ with $d_w >max -100i$ which are left appending their lists into a list for the new vertex $v_K$. We add
$v_K$ to $R_i$, if any of these vertices are in $R$ or $R_{i-1}$ or if both endpoints of some edge of $N^+$ are in this set.
If so, we associate the corresponding vertex $R$  or edge  of $N^+$ to $v_K$ (this may be the the one associated to an element of $R_{i-1}$)
We then    find all the cuts of size at most two  in the graph,one element of which is one of these contracted vertices
by applying the Lipton-Tarjan algorithm to the graph  obtained from this graph by adding three vertices adjacent
to all of $W_{i-1}$. We delete such components, and add an edge between any relevant 2-cut to obtain $L^*_{i-1}$.
 We  update the lists corresponding to the contracted vertices,  to obtain those required by (VII),  and
whether these vertices are in $R_i$ accordingly. It is then a simple matter to examine all the other vertices to determine
whether or not they are in $R_i$.

This completes the description of the algorithm.

We append ${\cal F}_i$ to $S_{i-1}$,  and define  $L_i$ to  be the graph obtained from $L_{i-1}$ by
replacing $L'_{i-1}$ by $L^*_{i-1}$, we have that (A),(B),(C),(D), and (E)  hold. So, we can move on to the next iteration.

Now in each iteration, we consider a graph which is obtained from a subgraph of  two consecutive levels by adding two triangles on  cutsets which cut off something in these levels, or contracting some part of some lower levels into a vertex. It is not hard to
see that the size of this graph is linear in the size of the two levels concerned and the level below them. So, the total size
of all the graphs we consider is linear.

As mentioned earlier, If $R_c$ is non empty, or there is a vertex or edge in $R \cup N^+$ in $L^*_C$,then we have a
local $K_{3,3}$  subdivision which is not cut off from $C$ by a 3-cut and hence we can find the desired two paths
in linear time.

Otherwise, as we show momentarily, there is a ferociously strong planar reduction using as cuts
the subset  ${\cal F^*}$ of those elements of  ${\cal S}_c$ which  are not separated  from $C$ by another cut in the
family. We now describe how to find these cuts, the components each cuts off from $C$, and for each $Z$ in the cut  a colouring of
the components cut off by $Z$ with red and yellow which yields $F_i$.

Finding the cuts is relatively easy, We append all the lists for all the vertices in $L^*_c$.
We find the components of the graph obtained from $J^*$ by deleting these vertices. The cuts are precisely the sets of three
vertices onto which these components attach. For each such cut $Z$ we let $H_Z$ be the subgraph of $J^*$ induced by $Z$ and the vertices it cuts off.

Each such Z  is in ${\cal S}_c$ and hence in some ${\cal F}_i$. We can determine which $i$ for each cut by using, e.g.
bucket sort on the list of these cuts and the list of the elements of all  the ${\cal F}_i$ each  with an $i$ attached. For each such cut
$Z$ we can find the corresponding list of all the vertices cut off by $Z$ and no  cut of ${\cal S}_{i-1}$. We repeat the
trick of the last paragraph to find the cuts of ${\cal S}_{i-1}$ which are cut off by $Z$ but by no cut of ${\cal S}_{i-1}$
and the set of vertices each of these cuts cut off. Putting triangles on these cuts and deleting what they cut off
from $H_Z$ yields a graph $H'_Z$. We need to find two connected subgraphs of $H_Z$ disjoint from $Z$ each of which has an edge to every vertex of $Z$.

If $Z$ is strongly ferocious in $L'_{i-1}$ then this points out a 2-colouring of the components of  $L'_{i-1}-Z$ separated from $W$ by $Z$  which naturally yields a 2-colour of $H'_Z$ with this property,by giving every vertex on the list of the contracted vertex the
same colour as a contracted vertex. We can extend this to a two colouring of $H_Z$ as follows. For any cut $Z'$ cutting off some
of $H_Z-H'_Z$, if one colour appears more than the other in $Z'$ we use it on all the vertices cut off by $Z'$. Otherwise,
$Z'$ consists of a red vertex $r$ , a yellow vertex  $y$ and a third vertex $z$.  Now, $z$ has two neighbours cut off by $Z'$
because $Z'$ is $\{v^*\}$-ferocious. There are two paths of $J^*$ from these vertices to $r$ and $y$ disjoint from $z$ because
$J^8$ is 3 connected. We colour the path containing $r$ red and the other yellow. We are done.

If $Z$ is the closest 3-cut to some vertex  $v$ of $R_i$ then it cuts off the associated local $K_{3,3}$  subdivision and we can
proceed as we did in the adding back edges case to find the desired 2-colouring by first finding a bounded tree width minor $M$
containing the subdivision but no 3-cut other than $Z$ separating it from $Z$.

\subsection{The Details of Algorithm $N$-Constrained $W$-Ferocious or $Q$-Closest Cutsets for Tree Width k}

It remains to give the details of Algorithm \ref{wqfctwk}, which we do now.

We assume that we have applied
Bodlaender's algorithm to the input and have obtained a tree decomposition of width
at most $k$ for it.

The periphery of $Z$ denoted $per(Z)$  or
$per(J(Z))$ is the set of vertices of $Z$ which have edges to $att(Z)$. We may sometimes abuse
notation by replacing Z here by the subgraph it induces.

 We will four colour the graph, red, yellow, green, and, blue so that the yellow,red, and blue vertices are those cut
 off from $W$ by the cuts of ${\cal F}$,  for any $W$-ferocious cut $Z$ of ${\cal F}$, the vertices cut off by $Z$ are coloured
 red and yellow so that there is a component of each colour, and all such components attach to all the vertices of $Z$,
 and the vertices cut off by the other cuts of ${\cal F}$ are coloured blue.
 To each $Z$ in ${\cal F}$ which is not $W$-ferocious we associate a vertex $v_Z$ of $Q$ such that $Z$ is a
 closest $N$-constrained 3-cut for $v_Z$ from $W$.

After contracting the red and yellow components in this colouring, it is  an easy matter to contract further to obtain the colouring we really want,as follows.
For any component $K$ of the red-yellow graph, if some other component $K'$
of the red-yellow graph has the same set of attachment vertices we contract each
red-yellow component with these attachments to a vertex, colouring  one red, and the rest yellow.
Otherwise, we let $l$ be a leaf of a spanning tree for $K$, we contract $K-l$
into a single vertex and colour it red and $l$ yellow.
Since we can lexicographically sort triples in linear time, it is an easy matter to do
this recolouring in linear time.

Now, for every blue component $K$ we must have a vertex, $v_K$ of $Q \cap K$ such that
$v_k$ cannot be separated from the set $Z_K$ of green vertices attached to $K$, by a cutset of size 3
other than $Z_K$ itself. Equivalently, we most have for every vertex $z$ of $Z_K$, four paths from $v_K$
to $Z_k$ disjoint except that they all have $v_K$ as an endpoint and two of them  have $z$ as an endpoint. We will
return such a vertex and corresponding set of 12 paths for each blue component.

Consider now an induced  subgraph  $J'$  of $J$.  If a 4-colouring of $J'$
could possibly  extend to  the kind of
colouring we desire then:
\begin{enumerate}
\item
Every vertex of $W \cap J'$ must be green
\item
no blue vertex can be adjacent to a red or yellow vertex,
\item
Every component of the  nongreen  subgraph of $J'$   must attach to at most   3 green vertices and exactly 3 if it is disjoint from $per(J')$,  as must every subgraph of the yellow graph  and every subgraph of the red graph.
\item
For every  component of the yellow  subgraph of $J'$  disjoint from $per(J')$
there must be either   a component of the red subgraph with the same  three green attachment vertices or a  red  component
intersecting per(J') with the same set of green
attachment vertices off $per(J')$  and no attachment vertices to which the yellow component is not attached, or all of the attachment
vertices of the component must be  in $per(J')$,
\item
 For every  component of the red  subgraph of $J'$  disjoint from $per(J')$
there must  either be a component of the yellow  subgraph with the same attachment vertices or a  yellow   component intersecting per(J')  with the same set of
attachment vertices off $per(J')$ and no attachment vertices to which the red component is not attached, or all of the attachment
vertices of the component must be  in $per(J')$,
\item
For every component $K$ of the blue subgraph disjoint from $per(J')$,  there must be a vertex $v$ of $K \cap Q$
such that letting $Z$ be the 3 green vertices to which $K$ attaches,  for every vertex $z$ of $Z$ there are four
paths as discussed above.
\end{enumerate}

We call such 3-colourings {\it valid}.

The intersection pattern for  a valid  3-colouring  consists of
\begin{enumerate}
\item
The colouring of $per(J')$
\item
The partition of  the red, yellow,and blue vertices   of $per(J') $ given by the components of the yellow, red and non-green   subgraphs of
$J'$

\item
The bipartite graph defined as follows. One side $A$  is the set of green vertices  which are either in $per(J')$ or are attachments of a component  of the non-green graph intersecting $per(J')$,
The other side consists of the  components of the yellow, red, yellow-red, and blue   graphs  which  intersect
$per(J')$  together with (i) for every triple of $A$  such  there is a component of yellow  (resp. red, yellow-red,blue) vertices disjoint  from $per(J')$
with these attachments,  a vertex corresponding to this component labelled yellow(resp. red, yellow-red,blue).
The attachment vertices in $per(J')$ are labelled with their names, the rest are labelled with numbers between $1$ and $3k$. The vertices corresponding to components intersecting $per(J')$ are labelled by
the set of vertices they contain, the rest are labelled yellow,red, yellow-red, or blue. We note that this graph has at most
$3|per(J')|$ vertices on the attachment side and at most  $4{3|per(J')| \choose 3}$ vertices on the other side.
\item
to  each blue component $K$ intersecting per(J'), we may associate a vertex  $v_K$ of $K \cap Q$.
We also have  of all possible labelled minors with labels in the green vertices to which $K$ attaches, $v_K$ if it exists,
ant the intersection of $K$ with $per(J')$ which have at most $k+8$ edges.
\end{enumerate}

For each node $t$ of the tree decomposition we  will  construct a table recording for
each intersection pattern for $G_t$,  the minimum number of green  vertices over all  valid 3-colourings
with the given intersection pattern along with pointers to the table entries  for the children of $t$ pointing out
 the  intersection patterns for  valid 3-colourings of the subgraphs cut off by these
children whose combination yields such a valid  3-colouring.  A value of $-1$ indicates no valid 3-colouring with the given intersection
pattern exists. It is not hard to see that we can  construct the table for a node given tables for its children,
in constant time. When we reach the root, we can therefore determine if there is a valid colouring,and the minimum
number of green nodes in such a colouring if one exists.

We now can use our pointers to construct such a colouring (and associate a vertex of $Q$ to each blue component) by
traversing the tree in post order. We omit the details.

\bibliographystyle{plain}
\bibliography{2drp}

\end{document}